\newcommand{\setpath}{}
\newcommand{\setpathfigs}{}
\newcommand{\hidecalculustheorems}[2]{#1}
\newcounter{mylabelcounter}
\newcommand{\customlabel}[2]{%
   \protected@write \@auxout {}{\string \newlabel {#1}{{#2}{\thepage}{#2}{#1}{}} }%
   \hypertarget{#1}{#2}
}
\newcommand{\labelText}[2]{%
  \refstepcounter{mylabelcounter}%
  \immediate\write\@auxout{%
    \string\newlabel{#1}{{#2}{\thepage}{{\unexpanded{#2}}}{mylabelcounter.\number\value{mylabelcounter}}{}}
  }%
}
\newcommand*{\storeuniversalcountertag}[2]{%
\labelText{universal:#1}{#2}
\labelText{store:universal:#1}{#2}
}
\newcommand*{\storecounter}[2]{%
  \edef\@currentlabel{\the\value{#1}}
  \label{#2}
}
\newcommand*{\storesubequationcountera}[1]{%
\setcounter{equationsection}{\value{section}}%
\setcounter{equationbuffer}{\value{subequationbuffera}}%
  \edef\@currentlabel{\theequationbuffer}
  \label{store:equation:#1}
  \edef\@currentlabel{\theequationbuffer}
  \label{equation:#1}
  \edef\@currentlabel{\theequationsection}
  \label{store:section:#1}
  \edef\@currentlabel{\theequationsection}
  \label{section:#1}
}
\newcommand*{\storesubequationdouble}[2]{%
\setcounter{equationsection}{\value{section}}%
\setcounter{equationbuffer}{\value{equation}}%
  \edef\@currentlabel{\theequationbuffer}
  \label{store:subequation:parent:#1}
  \edef\@currentlabel{\theequationbuffer}
  \label{subequation:parent:#1}
  \edef\@currentlabel{1}
  \label{store:subequation:#1}
  \edef\@currentlabel{1}
  \label{subequation:#1}
  \edef\@currentlabel{\theequationsection}
  \label{store:section:#1}
  \edef\@currentlabel{\theequationsection}
  \label{section:#1}
  \edef\@currentlabel{\theequationbuffer}
  \label{store:subequation:parent:#2}
  \edef\@currentlabel{\theequationbuffer}
  \label{subequation:parent:#2}
  \edef\@currentlabel{2}
  \label{store:subequation:#2}
  \edef\@currentlabel{2}
  \label{subequation:#2}
  \edef\@currentlabel{\theequationsection}
  \label{store:section:#2}
  \edef\@currentlabel{\theequationsection}
  \label{section:#2}
}
\newcommand*{\storesubequationtriple}[3]{%
\setcounter{equationsection}{\value{section}}%
\setcounter{equationbuffer}{\value{equation}}%
  \edef\@currentlabel{\theequationbuffer}
  \label{store:subequation:parent:#1}
  \edef\@currentlabel{\theequationbuffer}
  \label{subequation:parent:#1}
  \edef\@currentlabel{1}
  \label{store:subequation:#1}
  \edef\@currentlabel{1}
  \label{subequation:#1}
  \edef\@currentlabel{\theequationsection}
  \label{store:section:#1}
  \edef\@currentlabel{\theequationsection}
  \label{section:#1}
  \edef\@currentlabel{\theequationbuffer}
  \label{store:subequation:parent:#2}
  \edef\@currentlabel{\theequationbuffer}
  \label{subequation:parent:#2}
  \edef\@currentlabel{2}
  \label{store:subequation:#2}
  \edef\@currentlabel{2}
  \label{subequation:#2}
  \edef\@currentlabel{\theequationsection}
  \label{store:section:#2}
  \edef\@currentlabel{\theequationsection}
  \label{section:#2}
  \edef\@currentlabel{\theequationbuffer}
  \label{store:subequation:parent:#3}
  \edef\@currentlabel{\theequationbuffer}
  \label{subequation:parent:#3}
  \edef\@currentlabel{3}
  \label{store:subequation:#3}
  \edef\@currentlabel{3}
  \label{subequation:#3}
  \edef\@currentlabel{\theequationsection}
  \label{store:section:#3}
  \edef\@currentlabel{\theequationsection}
  \label{section:#3}
}
\newcommand*{\storeequationcounter}[1]{%
\setcounter{equationsection}{\value{section}}%
\setcounter{equationbuffer}{\value{equation}}%
  \edef\@currentlabel{\theequationbuffer}
  \label{store:equation:#1}
  \edef\@currentlabel{\theequationbuffer}
  \label{equation:#1}
  \edef\@currentlabel{\theequationsection}
  \label{store:section:#1}
  \edef\@currentlabel{\theequationsection}
  \label{section:#1}
}
\newcommand*{\storecompoundcounter}[2]{%
\setcounter{equationsection}{\value{section}}%
\setcounter{equationbuffer}{\value{#1}}%
  \edef\@currentlabel{\theequationbuffer}
  \label{store:#1:#2}
  \edef\@currentlabel{\theequationbuffer}
  \label{#1:#2}
  \edef\@currentlabel{\theequationsection}
  \label{store:section:#2}
  \edef\@currentlabel{\theequationsection}
  \label{section:#2}
}
\newcommand{\removelatexerror}{\let\@latex@error\@gobble}
\pgfplotsset{compat=1.5}  
\newtheorem{assumption}{Assumption}
\newtheorem*{model}{Server Model}
\newtheorem{notation}{Notation}
\newenvironment{inlinenotation}[1]{\textbf{\textup{Notation~\notationlabel{#1}\thenotation{}.}}}{}
\newenvironment{inlinenotationwithtitle}[2]{\textbf{\textup{Notation~\notationlabel{#2}\thenotation{} (#1).}}}{}
\newcommand{\notationlabel}[1]{{\refstepcounter{notation}\label{#1}}}
\renewcommand{\@algocf@capt@boxed}{above}
\newcommand{\obsolete}[1]{}
\newtheoremstyle{plainbigtitle}  
  {0.5\topsep}   
  {0.5\topsep}   
  {\itshape}  
  {0pt}       
  {\bfseries} 
  {.}         
  {5pt plus 1pt minus 1pt}  
  {\bfseries\boldmath#1 #2\thmnote{ (#3)}}          
\newtheoremstyle{plainbigtitlenonumber}  
  {0.5\topsep}   
  {0.5\topsep}   
  {\itshape}  
  {0pt}       
  {\bfseries} 
  {.}         
  {5pt plus 1pt minus 1pt}  
  {\bfseries\boldmath#1\thmnote{ (#3)}}          
\newtheoremstyle{definitionbigtitle}  
  {0.5\topsep}   
  {0.5\topsep}   
  {\upshape}  
  {0pt}       
  {\bfseries} 
  {.}         
  {5pt plus 1pt minus 1pt}  
  {\bfseries\boldmath#1 #2\thmnote{ (#3)}}          
\newtheoremstyle{definitionbigtitleseparator}  
  {0.5\topsep}   
  {1.5\topsep}   
  {\upshape}  
  {0pt}       
  {\bfseries} 
  {.}         
  {5pt plus 1pt minus 1pt}  
  {\bfseries\boldmath#1 #2\thmnote{ (#3)}}          
\newtheoremstyle{definitionbigtitleseparator}  
  {0.5\topsep}   
  {1.5\topsep}   
  {\upshape}  
  {0pt}       
  {\bfseries} 
  {.}         
  {5pt plus 1pt minus 1pt}  
  {\bfseries\boldmath#1 #2\thmnote{ (#3)}}          
\declaretheoremstyle[
  spaceabove=0.5\topsep, spacebelow=1.5\topsep,
  headfont=\bfseries,
  headpunct={.},
  bodyfont=\upshape,
  headspace={5pt plus 1pt minus 1pt},
  indent=0pt,
  custom-head-spec={\bfseries\boldmath#1 #2\thmnote{ (#3)}},
  qed=\diamond
]{examplethmstyle}
\newcommand{\newtheoremwithoutindent}[2]{\newtheorem{#1}{#2}\AfterEndEnvironment{#1}{\noindent\ignorespaces}}
\newcommand{\unnumberednewtheoremwithoutindent}[2]{\newtheorem*{#1}{#2}\AfterEndEnvironment{#1}{\noindent\ignorespaces}}
\newcommand{\newtheoremwithseparator}[2]{\newtheorem{#1}{#2}\AfterEndEnvironment{#1}{\indent}}
\theoremstyle{plainbigtitle}
\theoremstyle{plainbigtitlenonumber}
\theoremstyle{definitionbigtitle}
\newenvironment{inlinenotationwithtitle}[2]{\textbf{\textup{Notation~\notationlabel{#2}\thenotation{} (#1).}}}{}
\newcommand{\notationlabel}[1]{{\refstepcounter{notation}\label{#1}}}
\theoremstyle{definitionbigtitleseparator}
\theoremstyle{examplethmstyle}
\newtheorem{example}{Example}
\theoremstyle{comment}
\renewcommand{\@algocf@capt@boxed}{above}
\newcommand{\acli}[1]{\textit{\acl{#1}}}		
\newcommand{\acdef}[1]{\textit{\acl{#1}} \textup{(\acs{#1})}\acused{#1}}		
\newacro{wfunction}[CVF]{core function}
\newacro{Wfunction}[CVF]{Core function}
\newacro{extwfunction}[$\extpurevaluefunction$-F]{extended $\purevaluefunction$-function}
\newacro{ROC}{region of convergence}
\newacro{PI}{policy iteration}
\newacro{FPI}{first-policy improvement}
\newacro{MDP}{Markov decision process}
     \newcounter{equationbuffer}
     \newcounter{equationsection}
     \newcounter{subequationbuffera}
     \newcounter{subequationbufferb}
\numberwithin{subsubsection}{subsection}
\renewcommand{\qed}{\hfill%
$\square$%
}
\newcommand{\remarkmarker}{\hfill%
$\circ$%
}
\newcommand{\verticalshiftby}[2][0pt]{%
  \mathrel{\raisebox{#1}{${#2}$}}%
  }
\DeclareMathOperator\exponential{exp}
\newcommand{\makedot}[1]{\fill[fill=black]  {#1} ++ (0:\dotwidth*0.5) arc (0:360:\dotwidth*0.5) -- cycle ;}%
\newcommand{\makewhitedot}[1]{\draw[black,fill=white, very thin]  {#1} ++ (0:\dotwidth*0.5) arc (0:360:\dotwidth*0.5) -- cycle ;}%
\newcommand{\lengthmark}[6]{
\draw[very thin] (#1 - \gap*0.5 * cos #3 - \tickwidth/2 * sin #3,#2 - \gap*0.5 * sin #3  + \tickwidth/2 *cos #3) -- (#1 - \gap*0.5 * cos #3 + \tickwidth/2 * sin #3,#2 - \gap*0.5 * sin #3- \tickwidth/2 *cos #3);
\draw[very thin] (#1 + #4 * cos #3 + \gap*0.5 * cos #3 - \tickwidth/2 * sin #3,#2 + #4 * sin #3 + \gap*0.5 * sin #3  + \tickwidth/2 *cos #3) -- (#1 + #4 * cos #3 + \gap*0.5 * cos #3 + \tickwidth/2 * sin #3,#2 + #4 * sin #3 + \gap*0.5 * sin #3- \tickwidth/2 *cos #3);
\draw[densely dotted,very thin] (#1 - \gap*0.5 * cos #3 ,#2 - \gap*0.5 * sin #3  ) -- (#1 + #4 * cos #3 /2 ,#2 + #4 * sin #3  /2 )  -- (#1 + #4 * cos #3 + \gap*0.5 * cos #3 ,#2 + #4 * sin #3 + \gap*0.5 * sin #3 ) ;
\node  at (#1 + #4 * cos #3 /2 + #5 *  sin #3  ,#2 + #4 * sin #3  /2 - #5  * cos #3 ) {#6};
}%
\newcommand{\showhideproposition}[2]{#1}
\newcommand{\switchtoload}[2]{#2}
\newcommand{\scaley}[2]{#2}
\newcommand{\invariantmeasurefunction}{\nu}%
\newcommand{\invariantmeasure}[1]{\invariantmeasurefunction({#1})}%
\newcommand{\moveperiod}[2]{#2}
\newcommand{\separator}{\scaley{}{}&\scaley{}{}}
\newcommand{\beginline}{\,}
\renewcommand{\newSing}{\mathcal{P}_{\Wt}}
\renewcommand{\binomialcoef}[2]{\big({}^{#1}_{#2}\big)}
\renewcommand{\inflexionpoint}[1]{\bl^\star(#1)}
\renewcommand{\admissionicostux}[4]{\admissionioperator{#1}\left(#2;#3,#4\right)}
\renewcommand{\decisionset}{\pi}
\renewcommand{\Valuetfunction}[1]{\Valuefunction}
\renewcommand{\polypurevaluenfunction}[1]{\polypurevaluefunction^{(#1)}}
\renewcommand{\Costix}[2]{\costfunction\left(#1,#2\right)}
\renewcommand{\tim}{n}
\renewcommand{\Tim}{N}
\renewcommand{\altbl}{t}
\renewcommand{\load}{\switchtoload{\ar\expectation{\stochst}}{\rho}}
\renewcommand{\loadzero}{\switchtoload{\ar\expectation{\stochstzero}}{\tilde{\load}}}
\renewcommand{\i}{i}
\renewcommand{\fLSTWtxs}[2]{\fLSTWtxfunction{#1}(#2)}
\renewcommand{\iLSTWtxs}[2]{\iLSTWtxfunction{#1}(#2)}
\renewcommand{\stochst}{X}
\renewcommand{\st}{x}
\renewcommand{\extpurevaluefunction}{\purevaluefunction}
\renewcommand{\Upurevaluefunction}{\purevaluefunction}
\begin{document}

\title{Dispatching to Parallel Servers 
\thanks{%
The author 
acknowledges 
support from  the French National Research Agency (project ORA\-CLESS, ANR\textendash 16\textendash CE33\textendash 0004\textendash 01).
Part of this work was completed 
at the Department of Communications and Networking, Aalto University, Espoo, Finland,  with support from the Academy of Finland in the project FQ4BD (Grant No. 296206).
}
}
\subtitle{Solutions of Poisson's Equation for First-Policy Improvement}

\titlerunning{Solutions of Poisson's Equation for First-Policy Improvement}        

\author{Olivier Bilenne
}

\authorrunning{Olivier Bilenne} 

\institute{%
Olivier Bilenne 
\at
Univ. Grenoble Alpes, CNRS, Inria, Grenoble INP, LIG, 38000 Grenoble, France\\
\email{olivier.bilenne@inria.fr}
}

\date{}

\maketitle

\begin{abstract}
Policy iteration techniques for multiple-server dispatching rely on the computation of value functions. In this context, we consider the continuous-space M/G/1-FCFS queue endowed with an arbitrarily-designed cost function for the waiting times of the incoming jobs. The associated value function is a solution of Poisson's equation for Markov chains, which in this work we solve in the Laplace transform domain by considering an ancillary, underlying stochastic process extended to (imaginary) negative backlog states. This construction enables us to issue closed-form value functions for polynomial and exponential cost functions and for piecewise compositions of the latter, in turn permitting the derivation of interval bounds for the value function in the form of power series or trigonometric sums. We review various cost approximation schemes and assess the convergence of the interval bounds these induce on the value function. Namely: Taylor expansions (divergent, except for a narrow class of entire functions with low orders of growth), and uniform approximation schemes (polynomials, trigonometric), which achieve optimal convergence rates over finite intervals. This study addresses all the steps to implementing dispatching policies for systems of parallel servers, from the specification of \emph{general} cost functions towards the computation of interval bounds for the value functions and the exact implementation of the first-policy improvement step.
\keywords{Dispatching \and Policy iteration \and First-policy improvement \and Poisson equation \and M/G/1 queue}
 \PACS{%
 02.30.Lt \and 
 02.30.Mv \and 
 02.30.Uu \and 
 02.50.Ga \and 
 02.50.Le 
 }
 \subclass{%
40A30 \and
41A25 \and
41A50 \and
41A10 \and
42A10 \and
44A10 \and
60K20 \and 
60K30 \and 
62E20 \and 
90B22 
 }
%
%
\end{abstract}


\section{Introduction}

An essential design aspect for systems of parallel servers resides in the allocation of the processing resources to the impending workload. 
In the allocation problem, commonly referred to as \emph{dispatching} (also: \emph{task assignment} or \emph{routing}), one server must be assigned to each incoming job in a way so as to minimize a performance metric of interest: 
%
parallel computing (mobile cloud computing, server clusters, supercomputers), industrial logistics (
customer service systems), and traffic congestion management (visitor queues, road tolls).

\begin{figure}
\centerline{
  \begin{tikzpicture}[xscale=0.51,yscale=0.51,font=\scriptsize]
    \draw[thick,->] (-3.0,2.65)--(-2,2.65) node [above,xshift=-30,yshift=5] {\parbox{20mm}{Incoming jobs}} node [below,xshift=-30,yshift=-8] {$ \vect{\stochstk{\tim};\stochTk{\tim}} $};
    \node at (0.8,2.65)  {Dispatcher $\initialpolicyxy{\bl}{\st}$};
    \foreach[evaluate=\x using 3.0-\j/4] \j in {1,...,8} 
                 \draw[thick] (-6,2.1) +(\x,0.1)--+(\x,0.9); 
    \foreach \position/\jobs in {{(5,0.2)/5},{(5,1.5)/3},{(5,3)/4},{(5,4.5)/3}}
             { 
               \draw[fill,color=gray!29] \position +(0.9,0) rectangle +(3,1);
               \draw \position +(0.9,0)--+(3,0)--+(3,1)--+(0.9,1);
               \foreach[evaluate=\x using 3.0-\j/4] \j in {1,...,\jobs} 
                 \draw[thick] \position +(\x,0.1)--+(\x,0.9); 
               \draw[fill,color=gray!29] \position +(3.55,0.5) circle (0.5) ;
               \draw \position +(3.55,0.5) circle (0.5) ;
               \draw[thick,<-] \position +(0.8,0.5) -- +(-0.1,0.5) -- (3.5,2.65);
             }
	\node at (11.5,5.2)  {FIFO Server $1$};
	\node at (11.5,4.7)  {(backlog $\bli{1}$)};
	\node at (11.5,3.7)  {FIFO Server $2$};
	\node at (11.5,3.2)  {(backlog $\bli{2}$)};
	\node at (11.5,2.2)  { $\vdots$};
	\node at (11.5,0.8)  {FIFO Server $\nbservers$};
	\node at (11.5,0.3)  {(backlog $\bli{\nbservers}$)};
     \draw[thick,fill] (3.35,2.45) rectangle +(0.4,0.4);
  \end{tikzpicture}
  } 
\caption{Size-aware dispatching with i.i.d. service times ($\stochstk{\tim}\eqdistribution \stochst$ for all~$\tim$)  and i.i.d. inter-arrival times ($\stochTk{\tim}\sim\exponentialdistri{\ar} $ for all~$\tim$) to~$\nbservers$ 
M/G/1-FCFS servers with backlog $\bl=\vect{\bli{1},\dots,\bli{\nbservers}}$.
\label{figure:dispatcher}}
\end{figure} 

  We are interested in sytems composed of several first-come, first-served (FCFS) queueing servers 
  operated in parallel, and fed with a sequence of jobs with Markovian arrival times. 
  In our model, illustrated in Figure~\ref{figure:dispatcher}, every new job turning up at the dispatcher is instantly forwarded towards one of the servers, where a penalty is incurred
  as a function of the backlog (uncompleted work) at the 
  server upon job arrival---server backlog thus coinciding with the waiting time of the job until processing begins. Our objective is to minimize the average cost experienced by the system over an infinite time horizon. 

A standard approach for solving this problem is through \acdef{PI}, \cite{howard60,bertsekas07}. Starting with an inital dispatching policy, PI proceeds in two steps, repeated in turn until a fixed policy is reached: (i) \emph{policy evaluation}, where the mean cost of the considered policy is computed, together with a \emph{value function} expressing state sensitivity with respect to the steady-state costs induced by the policy; followed by (ii) \emph{policy improvement}, where the value function is exploited to improve the current policy and derive a new, more cost-effective dispatching policy.
The policy evaluation step is difficult to implement  in continuous state spaces without extensive Monte Carlo simulations. 
Only the first PI iteration on a tractable, random initial  policy is easier to carry out, because
the job flow then decomposes into independent Poisson processes for the individual queueing servers, and the value function takes a separable form, solution of the so-called Poisson equation. 
The \acdef{FPI} approach (also known as \emph{one-step policy iteration}, and variants) consists of cutting short the policy iteration algorithm after the first iteration. 
The motivation behind \ac{FPI} is twofold: it is known that a single iteration of the PI algorithm may produce fine heuristics (see e.g.~\cite{krishnan87,wijngaard79,ott92,sassen97,bhulai06} or \cite[\S{}7.5]{tijms03}) and, besides, the Poisson equation for Markov chains admits explicit solutions readily available for effortless~\ac{PI}.

\emph{Related work and our contribution.}
The existence of explicit solutions to the Poisson equation for the waiting times of the M/G/1 queue was pointed out in~\cite{glynn94}, 
where a general solution to Poisson's equation was proposed in the form of a fundamental kernel, whose application to the cost function produces solutions of the equation.
These solutions proved, in particular, to take closed forms for cost functions given as moments of the waiting time, $\costx{\bl}=\bl^{\nn}$.  
There followed a list of derivations of explicit value functions for Markovian queueing systems: both in discrete-space settings where only the number of yet unprocessed jobs at the servers is known to the dispatcher and (typically) the expected sojourn times of the incoming jobs are penalized, \cite{krishnan87,sassen97,bhulai03,bhulai06}; and in `size-aware' continuous-space settings where the service times of the jobs become available to the dispatcher upon arrival and 
the actual waiting or sojourn times are penalized, \cite{Aalto96NTS,
hyytia11MM1PS,hyytia12OnTheVF,hyytia-ejor-2012,
hyytia14TaskAssignment
}.
%
Recent studies on size-aware dispatching renewed the interest in explicit Poisson solutions, extending their class in~\cite{hyytia-peva-2017} to the fixed-deadline cost functions $\costx{\bl}= \indicator{[\taufunction,\infty)}{\bl}$, 
%
and to exponential costs  in~\cite{
hyytia-peva-2020}, with views on polynomials.
In the discrete space setting, the forms~$\costn{\bl}=\bl^{\nn} \sing^{-\bl}$ and~$\costn{\bl}=\dirack{\bl-\sing}$ were identified in~\cite{turck12} as candidates for closed-form value functions, via transform-domain analysis (based on generating functions) of the general solution of Poisson's equation---a methodology in spirit similar to the approach we will use in our study.

In this work we extend the collection of explicit solutions of the continuous-space Poisson equation to $\costx{\bl}=\bl^{\nn} \exp{-\sing \bl}$, and we
%
develop a methodology based on complex analysis for solving Poisson's equation that covers a more general class of piecewise continuous cost functions. Our motivation behind piecewise-definite functions is the possibility they offer to derive, under mild conditions for the cost function, 
tight bounds to the corresponding value function, which enable us to perform the \ac{FPI} step exactly. 
Our developments depart from previous studies by proposing 
a comprehensive implementation of \ac{FPI} in continuous spaces with cost functions of {any} general kind.

\emph{Outline.} The paper is structured as follows. Section~\ref{section:preliminaries} contains a more detailed presentation of~\ac{FPI} 
and introduces the value function as the solution of  Poisson's equation.
This equation is solved in Section~\ref{section:closedformVF} from the viewpoint of complex analysis (\ref{section:complexcharacterization}); complex analysis which allows us to derive the value function of the M/G/1 queue for cost functions of the type $\costx{\bl}=\bl^{\nn} \exp{-\sing \bl}$ (\ref{section:smoothcf}), and to provide a method of solution for piecewise-defined costs (\ref{section:piecewisedefined}). Various solutions previously reported in the literature
are then reconciled through basic case studies (Appendix~\ref{appendix:examples}).
In Section~\ref{section:approximatevaluefunctions} 
we consider cost functions given as convergent series: successively, Taylor series  (\ref{section:entirecostfunctions}), Bernstein polynomials (\ref{section:bersteinpolynomials}),
trigonometric sums (\ref{section:trigonometricsum}), 
and near-optimal polynomials (\ref{section:nearoptimalpolynomials}); and we propose  an algorithm for computing \ac{FPI} policies based on approximations of the cost functions.  We conclude with a full implementation of the \ac{FPI} dispatcher for the cost function $ \costx{\bl}={\bl^2}/({\sing^2+\bl^2}) $, picked for illustrative purposes, in the case of a two-server system with exponentially distributed service times.


\obsolete{
The design of multiple-server dispatching techniques based on policy iteration requires the accurate computation of a so-called {value function}. The \emph{value function} of a queueing system under a given dispatching policy assigns to each state an expression of the future costs to be expected when the system is initiated at that state. In a multiple-server setting, a given initial dispatching policy can be improved based on predictions, at each candidate server, of the variations in the value function under the considered policy. The value function is then recalculated for the improved policy, and the procedure is iterated towards the optimal dispatching policy~\cite{bertsekas07}.
It is known that fine dispatching policies may be obtained after a single iteration of the policy improvement procedure (see e.g.~\cite{krishnan87,wijngaard79,ott92,sassen97,bhulai06} or \cite[\S{}7.5]{tijms03}). 
One interesting aspect of {one-step policy improvement} is its computational efficiency in certain settings where explicit expressions for the value function are available under specific initial policies. 
Random, state-independent dispatching of Poisson arrivals, in particular, decompose an~$\nn$-server setting into~$\nn$ queueing systems fed by independent Poisson processes|these processes can be analyzed separately as the value-function of the multiple-server system separates additively at the server level~\cite{hyytia-peva-2017}. 
%
%

}%


 \section{Preliminaries} \label{section:preliminaries}

\subsection{Policy iteration and \acl{FPI}.} \label{section:FPI}

Consider the system depicted in Figure~\ref{figure:dispatcher}, where jobs, arriving according to a Poisson process with rate~$\ar$, are dispatched upon arrival towards one of the~$\nbservers$ servers ($1,\dots,\nbservers$) selected by a (possibly random) dispatching policy $\initialpolicyxy{\bl}{\st}$, where~$\bl=\vect{\bli{1},\dots,\bli{\nbservers}}\in\Realpluszero^\nbservers$ denotes the server backlog vector and~$\st=\vect{\sti{1},\dots,\sti{\nbservers}}\in\Realpluszero^\nbservers$ are the prospective service times of an incoming job at the servers. 
By taking snapshots at intitial time~$\tim=0$ and at the job arrival times ($\tim=1,2,3,\dots$), the continuous-time system reduces to a \ac{MDP}, $(\MDPpolicy{\initialpolicyfunction}{\tim})_{n\in\Natural}$, with state $\MDPpolicy{\initialpolicyfunction}{\tim} = \vect{ \stochbli{\tim}  , \stochstk{\tim} } \in \MDPspace\equiv  \Realpluszero^\nbservers\times\Realpluszero^\nbservers$, where~$\stochstk{\tim}$ is the service time vector of the $\tim$th job and~$\stochbli{\tim}$ is the backlog of the system  at the time of arrival, and with transition probability kernel $\MDPkernelfunction=\vect{\MDPkernelifunction{1},\dots,\MDPkernelifunction{\nbservers}}$ such that, 
for any~$\tim$ and every $\vect{\bl,\st}\in\MDPspace$, $\MDPsubspace\subset\MDPspace$,   
\begin{equation} \label{transitionkernel}\notag
 \begin{array}{c} 
\MDPkerneli{\server}{\bl,\st}{\MDPsubspace} 
\defeq
\proba{\MDPpolicy{\initialpolicyfunction}{\tim+1}\in \MDPsubspace | \MDPpolicy{\initialpolicyfunction}{\tim} =
\vect{ \bl,\st},\, 
\initialpolicyxy{\bl}{\st} 
= \server}
\qquad\qquad \\ \hfill 
=
\MDPkerneli{i}{\vect{\bli{1},\dots,\bli{\server}+\st,\dots,\bli{\nbservers}} , 0}{\MDPsubspace} 
.
\end{array}
\end{equation}  
%
%
Assume that the performance of the system is measured by a  cost function $\costfunction=\vect{\costifunction{1},\dots,\costifunction{\nbservers}}$, where 
$\costsx{\server}{\bl,\st}  \equiv \costix{\server}{\bli{\server}}  \, \indicator{\Realplus}{x} $ models a penalty incurred when a job with service time~$\st$ joins server~$\server$, given backlog state $\bl=\vect{\bli{1},\dots,\bli{\nbservers}}$.
We would like to minimize the expected total cost,  defined by
\begin{equation}\notag
 J_\initialpolicyfunction
 \obsolete{
 = 
 \limsup_{\Tim\to\infty} \frac{1}{\Tim} \sum_{\tim=1}^{\Tim} 
\condExpectation{  
\Costix{
\initialpolicys{\MDPpolicy{\initialpolicyfunction}{\tim}} 
}{\MDPpolicy{\initialpolicyfunction}{\tim} }}{\MDPpolicy{\initialpolicyfunction}{0} =\vect{ \bl,\st} }
}
=
\limsup_{N\to\infty} \frac{1}{\Tim} \sum_{\tim=1}^{\Tim} 
\Expectation{  
\Costix{
\initialpolicys{\MDPpolicy{\initialpolicyfunction}{\tim}} 
}{\MDPpolicy{\initialpolicyfunction}{\tim} } }
,
\end{equation}
 independently of
 ~$\MDPpolicy{\initialpolicyfunction}{0}$.
The optimality equations of the system are 
\newcommand{\eqrefOE}{\text{(OE)}}%
\begin{align}
\tag{OEa}
\label{OE1}
\altgenericx{\bl,\st} 
&
\textstyle
= 
\min_\server \left[    
\costix{\server}{\bli{\server}}
+  \MDPkernelifunction{\server} \altgenericx{\bl,\st}\right] - \varmeancost
,
\\
\tag{OEb}
\label{OE2}
 \osipolicyx{\bl,\st} 
 &
 \textstyle
\in\arg\min_\server \left[     
 \costix{\server}{\bli{\server}}
 +  \MDPkernelifunction{\server} \altgenericx{\bl,\st}\right] 
,
\end{align}
where 
 $ \MDPkernelifunction{\server} \altgenericx{\bl,\st} 
  =
\int_\MDPspace
 \altgenericx{\altbl,\altst}  \MDPkerneli{\server}{\bl,\st}{d\vect{\altbl,\altst}}
\equiv
 \MDPkernelifunction{\server} \altgenericx{\vect{\bli{1},\dots,\bli{\server}+\sti{\server},\dots,\bli{\nbservers}},0} 
.
$
%
If one can find  $\varmeancost^*>0$, a  policy~$\osipolicyfunction^*$, and an integrable function~$\altgenericfunction$ 
such that~$\vect{\altgenericfunction,\varmeancost^*}$ solves~\eqref{OE1} 
and
$\osipolicyfunction^*$  satisfies~\eqref{OE2},
then~$\osipolicyfunction^*$ is the {optimal policy} and
$
\textstyle
\varmeancost^*
=
\lim_{\Tim\to\infty} ( 1/ \Tim)
\sum_{\tim=1}^{\verticalshiftby[-5pt]{\scriptstyle{\Tim}}} 
\Expectation{  \Costix{\osipolicyfunction^*(\MDPpolicy{\osipolicyfunction^*}{\tim})}{\MDPpolicy{\osipolicyfunction^*}{\tim} } }
$ 
is the {optimal cost} of the system, \cite{sennott89,arapostathis93,meyn96,meyn97}.
%
The \emph{policy iteration} algorithm for solving~\eqrefOE{} can be described as follows, \cite{howard60}. Given an initial policy~$\osipolicyn{0}$, find, for~$\alttim\geq 0$, a function $\altgenericnfunction{\alttim}$, a mean cost $\varmeancostn{\alttim}$, and  a policy $\osipolicyn{\alttim+1}$ satisfying
\newcommand{\eqrefPI}{\text{(PI)}}%
\begin{align}
\textstyle
\altgenericnx{\alttim}{\bl,\st} &
\textstyle
=   \costsx{  \osipolicynx{\alttim}{\bl,\st}  }{\bl,\st}  +  \MDPkernelifunction{\osipolicynx{\alttim}{\bl,\st}} \altgenericnx{\alttim}{\bl,\st}  - \varmeancostn{\alttim}
\label{PI1}\tag{PIa}
\\
\textstyle
 \osipolicynx{\alttim+1}{\bl,\st} &
 \textstyle
\in\arg\min_\server \left[    { \costix{\server}{\bli{\server}} }  +  
 \valuenx{\alttim}{\bli{1},\dots,\bli{\server}+\sti{\server},\dots,\bli{\nbservers}} 
\right] 
\label{PI2}\tag{PIb}
\end{align}
where~\eqref{PI1} is the {policy evaluation} step, \eqref{PI2} is the {policy improvement} step 
and
$
\valuenx{\alttim}{\bl}  
:=
\int_\MDPspace
 \altgenericnx{\alttim}{\altbl,\altst}  \, \MDPkernel{\bl,0}{d\vect{\altbl,\altst}} - \varmeancostn{\alttim}
$ defines the \emph{value function} under policy~$\osipolicyn{\alttim}$. 
%
Under favorable conditions, $\vect{ \osipolicyn{\alttim}, \varmeancostn{\alttim}}$ eventually converges towards a solution~$ \vect{\osipolicyfunction^*,\varmeancost^*}$.
Solving~\eqref{PI2}, however, is generally difficult.  

The first iteration of~\eqrefPI{} may still be implemented easily if the initial policy~$\osipolicyn{0}\equiv\initialpolicyfunction$ is a random Bernoulli-split between the servers. 
In that case, the multiple-server system decomposes into~$\nbservers$ independent M/G/1 queues with arrival rates $\ari{1},\dots,\ari{\nbservers}$ and transition probability kernels $\MDPkernelarfunction{\ari{1}},\dots,\MDPkernelarfunction{\ari{\nbservers}}$, with $\ari{1}+\dots+\ari{\nbservers}=\ar $.
%
%
The \emph{first-policy improvement} approach then consists in stopping the PI algorithm after a single iteration, by solving
\noeqref{1SPI}%
\begin{align}
\altgenericix{\server}{\bli{i},\sti{i}} 
&=
\MDPkernelarfunction{\ari{\server}} 
\altgenericix{\server}{\bli{i},\sti{i}}  +  \costix{\server}{\bli{\server}}  - \varmeancosti{\server} 
,& (\server=1,\dots,\nbservers)
\label{1SPoisson} \tag{FPIa}
\\
\label{1SPI}\tag{FPIb}
\newpolicyx{\bl,\st} 
&\in 
\arg\min_\server 
 \admissioniux{\server}{\bl}{\st}
,&
\end{align}
where
\begin{equation} \label{definitionadmissioncost} \tag{AC}
 \admissioniux{\server}{\bl}{\st}
=  
\costix{\server}{\bli{\server}}  + \valueix{\server}{\bli{\server}+\sti{\server}} - \valueix{\server}{\bli{\server}}      
\end{equation}
is the {admission cost} at server~$\server$, and $
 \valueix{\server}{\altbl}  
 \defeq
 \MDPkernelarfunction{\ari{\server}}   \altgenericix{\server}{\altbl,0} - \varmeancosti{\server}
$ 
is the {value function} at~$\server$.
Observe that~\eqref{1SPoisson} is an instance of the Poisson equation $\altgenericfunction = P\altgenericfunction + f$ under $\int f(\bl)\, \invariantmeasure{d\bl} = 0$, 
where~$\invariantmeasurefunction$ denotes the 
non-trivial measure invariant for the transition kernel~$P$ (i.e., $P \invariantmeasurefunction = \invariantmeasurefunction$), \cite{neveu72,nummelin91,athreya78}. In~\eqref{1SPoisson}, $\invariantmeasurefunction$ coincides with the asymptotic probability measure of the waiting times at server~$i$. 
All integrable solutions of~\eqref{1SPoisson} with respect to the asymptotic waiting time probability measure are equal up to an additive constant, \cite{glynn96}. Besides, due to the existence of a strong law of large numbers and a central limit theorem for the 
costs, \cite{glynn94,glynn96},
$\valueifunction{\server}$ and~$\varmeancosti{\server}$ can be estimated empirically, though at the price of extensive numerical simulations.
Lastly, and preferably, some solutions of~\eqref{1SPoisson} are known to exist in closed form; deriving explicit solutions of this kind is the direction we will explore in this work.

\obsolete{

For a server with Poisson job arrivals with rate~$\ar$ and endowed with a cost function~$\costfunction$, the admission cost  of a job with service time~$\st\in\Realpluszero$ at a server with backlog state~$\bl\in\Realpluszero$ is defined by 
\begin{equation}\label{definitionadmissioncost}
\begin{array}{c}
\admissionfoperator{\ar}{\costfunction}\vect{\bl,\st}
=   
\costx{\bl}+
\valuex{\bl+\st}-\valuex{\bl},
\end{array}
\end{equation}
where~$\valuefunction$ is the value function for~$\costfunction$ under the considered job arrival process.
Consider a $\nbservers$-server system fed with jobs with stochastically independent service times and Poisson arrivals with rate~$\ar$. If the jobs are dispatched to the servers following a  completely random initial policy specified by a random i.i.d. process~$\{\initialpolicy{t}\}$ on $\{1,\dots,\nbservers\}$ such that $\proba{\initialpolicy{t}=i}=\probai{\server}\in[0,1]$ for every arrival time~$t$ and for every $i\in\{1,\dots,\nbservers\}$ ($\sum_{i=1}^\nbservers \probai{\server}=1$), then the arrival process decomposes at the servers  into~$\nbservers$ independent Poisson processes with rates $\ari{1}=\probai{1}\ar,\dots,\ari{\nbservers}=\probai{\nbservers}\ar$. One-step improvement of the inital random policy consists of minimizing  for every new arriving job the global admission cost under the inital random policy, i.e.
\begin{equation}\label{osipolicy}
\osipolicyxy{\bl}{\st} \in \arg \min\nolimits_{i\in\{1,\dots,\nbservers\}} \admissionfoperator{\ari{\server}}{\costifunction{\server}}\vect{\bli{\server},\sti{\server}}
\end{equation}
where~$\bl=\vect{\bli{1},\dots,\bli{\nbservers}}\in\Realpluszero^\nbservers$ is the backlog state of the servers,
the vector $\st=\vect{\sti{1},\dots,\sti{\nbservers}}\in\Realpluszero^\nbservers$ models the prospective service times of the arriving job at the candidate servers, and $\costifunction{1},\dots,\costifunction{\nbservers}$ are given cost functions.

}%


\subsection{Value function of the M/G/1 queue.}

\label{section:valuefunction}

In view of the previous discussion, we consider an individual server modeled by a continuous-state FCFS-M/G/1 queue. The queue is fed with a sequence of jobs with random arrival times modulated by a Poisson point process with rate~$\ar>0$, \cite{gross98,gallager13}.
%
The dynamics of the queue is modeled by the equation
\begin{align}
 \label{dynamics}\tag{Q}
 &
 \stochblk{\tim+1} = \sizedprojection{\big}{ \stochblk{\tim} +  \stochstk{\tim} - \stochTk{\tim+1} },
 &
  \tim\geq 0,
\end{align}
where~$\stochstk{\tim}$ denotes the service time of the $\tim$th incoming job, $\stochblk{\tim}$ is the coinciding queue backlog upon arrival, and $\stochTk{\tim+1}$ is the inter-arrival time for~$\stochstk{\tim+1}$.
\begin{notation}
For any real random variable~$Y$, the probability measure associated with~$Y$, its cumulative probability distribution, and its probability density function are respectively denoted by~$\measureXfunction{Y}$, $\distriXfunction{Y}:\Real\mapsto[0,1]$, and $\ddistriXfunction{Y}:\Real\mapsto[0,+\infty]$, with $\proba{Y\leq y}=\measureXx{Y}{(-\infty,y]}=\distriXx{Y}{y}=\int_{-\infty}^{y}\ddistriXx{Y}{\bl}\,d\bl$.
\end{notation}
The service time of every incoming job is assumed as in~\cite{welch64} to be random, conditioned on the activity of the queue at the time of arrival, and independent of the other factors; it is distributed either like the positive random variable~$\stochst$ if on arrival the queue is busy processing a previous job, or like a second positive random variable~$\stochstzero$ if the queue is idle (empty), where~$\stochstzero$ may differ from~$\stochst$ in distribution, thus accounting for a setup delay that the queue might require to  wake up from its idle state.
%
%
It follows that of the transition kernel 
$
\MDPkernel{\bl,\st}{\MDPsubspace} 
=
\proba{\vect{\stochblk{\tim+1},\stochstk{\tim+1}}\in \MDPsubspace
| \vect{\stochblk{\tim},\stochstk{\tim}} =
\vect{ \bl,\st}}
$ of the \ac{MDP} $(\vect{\stochblk{\tim},\stochstk{\tim}})_{n\in\Natural}$ rewrites as
\begin{equation}\label{fullMDPkernel}
\MDPkernel{\bl,\st}{\MDPblsubspace\times\MDPstsubspace}
=
\left\lbrace
\begin{array}{llc}
\MDPkernel{\bl+\st}{\{0\}} \, \measureXx{\stochstzero}{\MDPstsubspace} 
&
\text{ if } 
&
\MDPblsubspace = \{0\}
\\
\MDPkernel{\bl+\st}{\MDPblsubspace} \, \measureXx{\stochst}{\MDPstsubspace} 
&
\text{ if } 
&
\MDPblsubspace \not\supset\{0\}
\end{array}
\right\rbrace
,
\end{equation}
where, for all $\bl\geq 0$,
\begin{align}
\label{MDPkernel}
 &
 \MDPkernel{\bl}{[0,\altbl]} =  \exp{-\ar(\bl-\altbl)} \ \  \forall\altbl\in[0,\bl] ,
 \quad
 \MDPkernel{\bl}{\Real\setminus[0,\bl]} = 0  
 .
&
\end{align}

Consider a cost function $\costfunction:\Realpluszero\mapsto \CtoR{\Complex}{\Real}$ 
quantifying the (expected) penalty  $\costx{\bl}$ incurred when a job joins the queue at backlog state $\bl\in\Realpluszero$.
%
%
The stability of the queue is guaranteed by a server utilization ratio~$\switchtoload{\ar\expectation{\stochst}}{\load=\ar\expectation{\stochst}}$ less than~$1$, and by a finite mean service time at~$\bl=0$, i.e., $\loadzero=\ar\expectation{\stochstzero}<\infty$.
\begin{assumption}[Stability]\label{assumption:stability}
\switchtoload{
$\ar\expectation{\stochst}<1$, $\expectation{\stochstzero}<\infty$%
}{
$\load<1$, $\loadzero<\infty$%
}.
\end{assumption}
\noindent
All in all, the server model considered throughout the paper is:
\begin{model}
The FCFS-M/G/1 queue~\eqref{dynamics} with arrival rate~$\ar$ and 
service times $\vect{\stochst,\stochstzero}$ satisfying Assumption~\ref{assumption:stability},  endowed with a cost function~$\costfunction$. 
\end{model}
\noindent
We complete our model with assumptions on the costs that guarantee existence of the value function. Some notations are first introduced.
%
%
\begin{notation}
Ergodicity 
implies the existence of a unique asymptotic 
probability distribution~$\distriXfunction{\trueWt}$ for the waiting times, where~$\trueWt$ symbolizes a random variable distributed accordingly. 
A distinction is made between the actual stationary waiting times,  with  service time convention~$\vect{\stochst,\stochstzero}$, and the waiting times that would ensue with the convention~$\vect{\stochst,\stochst}$, modeled by the variable~$\Wt$ with distribution~$\distriXfunction{\Wt}$. 
The Laplace-Stieltjes transforms of~$\stochst$, $\stochstzero$ and~$\Wt$ are denoted 
by~$\LSTstochstfunction$, $\LSTstochstzerofunction$ and~$\LSTWtfunction$, respectively, where $\LSTstochsts{\complex}=\expectation{\exp{-\complex\stochst}}$.
\end{notation}
%
%
\noindent
\begin{assumption}[Cost integrability]\label{assumption:integrablecost}
$\modulus{\costfunction}$ is
$\measureXfunction{\Wt}$- and 
and
$\measureXfunction{\trueWt}$-integrable.
\end{assumption}
\storecompoundcounter{assumption}{assumption:integrablecost}%
\noindent

For any  $\bl\in\Realpluszero$ and any time horizon~$\ti\geq0$, we denote by~$\ValuetxT{\ttt}{\bl}{\ti}$  the (random) total cost incurred over a time interval of the type $[\ttt,\ttt+\ti)$ when the backlog at time~$\ttt$ is~$\bl$. 
%
Under Assumption~\ref{assumption:integrablecost}
, the quantity~$\ValuetxT{\ttt}{\bl}{\ti}$ averaged over the number of arrivals in the time window tends as~$\ti \to \infty$ to the mean cost per job~$\meancost = \expectation{\costx{\trueWt} }$. 
The  \emph{value function}~$\valuefunction:\Realpluszero\mapsto\CtoR{\Complex}{\Real}$ is then defined by
%
\begin{equation}\label{valuefunction} \tag{VF}
\valuex{\bl}=\limBrace{\ti\to\infty}{\expectation{ \ValuetxT{\ttt}{\bl}{\ti} } - \ar\meancost \ti }
,
\quad 
\forall\bl\geq0, 
\end{equation}
\storeuniversalcountertag{valuefunction}{VF}%
 as an expression of the state sensitivity of the costs with respect to the steady-state regime. 
%
%
%
%
In order to compute~\eqref{valuefunction}, we will regard~$\valuefunction$ as a solution of the following Poisson equation, derived in Appendix~\ref{appendix:LST}.

\begin{proposition}
[Poisson equation]
\label{proposition:poissonequation}
Let 
Assumption~\ref{assumption:integrablecost} hold.
%
%
%
The value function~\eqref{valuefunction} rewrites as $
 \valuex{\bl}  
 =
 \altgenericx{\bl,0} - \costx{\bl}
=
\MDPkernelfunction  
 \altgenericx{\bl,0} - \meancost
$ for some~$\altgenericfunction:\Realpluszero\mapsto\CtoR{\Complex}{\Real}$ solution of the Poisson equation
\begin{equation}
\label{MG1Poisson} \tag{PE}
\altgenericx{\bl,\st} 
=
\MDPkernelfunction
\altgenericx{\bl,\st}  +  \costx{\bl}  - \meancost 
,
\end{equation}
\storeuniversalcountertag{MG1Poisson}{PE}%
where 
 $ \MDPkernelfunction\altgenericx{\bl,\st} 
  \defeq
\int
 \altgenericx{\altbl,\altst}  \MDPkernel{\bl,\st}{d\vect{\altbl,\altst}}
 $.
\end{proposition}
\noindent
A general solution to~\eqref{MG1Poisson} was given in~\cite{glynn94} 
under the integral form $\altgenericx{\bl,\st}=\int_0^{+\infty}\costx{\altbl}\, \solutionkernel{\bl,\st}{d\altbl} \, d\altbl$, where $\solutionkernelfunction$ defines the solution kernel of the queue. Although closed-form value functions can be inferred from this integral form, it is impractical for a systematic derivation of solutions.
%
%
In Section~\ref{section:closedformVF} we take a different approach by considering a transform-domain expression of the solutions of~\ref{MG1Poisson}, obtained by complex analysis of the Poisson equation.

\section{Closed-form value functions.}
\label{section:closedformVF}
In this section we develop the tools that will help us compute value functions.
\subsection{Characterization of the value function}
\label{section:complexcharacterization}
Before proceeding, 
recall the Pollaczek-Khintchine formula for the Laplace-Stieltjes transform of~$\Wt$, \cite{pollaczek30,khintchine32},
which we characterize in Appendix~\ref{appendix:LST}:  
\begin{equation} \label{PKformula}
 \LSTWts{\complex} 
=
 \frac{(1-\load )\complex} {\complex- \ar (1- \LSTstochsts{\complex}) }
 .
\tag{PK}
 \end{equation} 
Let~$\dominantpoleX{\Wt}$ denote the  dominant singularity of~$\LSTWtfunction$ which, 
 in view of Proposition~\ref{proposition:analycity}\eqref{analycity:dominantpoles}, 
 is a real negative pole.
%
In the transform domain, $\measureXfunction{\Wt}$-integrability of~$\modulus{\costfunction}$ reduces to a condition on the relative positions 
of the singularities of~$\LSTWts{-\complex}$ and those of $\laplacetransformfs{\costfunction}{\complex} = \int\nolimits_{0}^{\infty}\exp{-\complex\bl}\, \costx{\bl}  \, d\bl
$, the Laplace transform of~$\costfunction$. Concretely, the \acp{ROC} of~$\LSTWts{-\complex}$ and $\laplacetransformfs{\costfunction}{\complex}$ (two open half-planes with normal vectors pointing in opposite directions) should have nonempty intersection.
This condition (Assumption~\ref{assumption:complexintegrablecost}) is illustrated in Figure~\ref{figure:LSTWt} for the case of constant service times.
%
%
%
\begin{figure}
\centering
\begin{tikzpicture}[scale=1.0]
\def\cwint{0.5}
\def\shd{0.7}
\definecolor{cw}{rgb}{\cwint,\cwint,\cwint}
\definecolor{shadecolor}{rgb}{\shd,\shd,\shd}
\def\dotwidth{0.070}
\def\tickwidth{0.05}
\def\gap{-.1}
\def\margin{0.2}
\def\marginsmall{0.1*\margin}
\def\deltaangle{5}
\def\arrowlength{\tauf/4}
\def\arrowangle{70}
\def\Arrowangle{20}
\def\R{2.1}
\def\axislength{2.5}
\def\littlespace{0.02}
\def\ST{1.0}
\def\AR{0.5}
\def\xsc{0.55}
\def\ysc{0.037}
\def\polex{0}
\def\poley{0}
\def\Wmonex{1.25643}
\def\Wmoney{0}
\def\Wzerox{0}
\def\Wzeroy{0}
\def\Wonex{2.789}
\def\Woney{-7.43762}
\def\Wtwox{3.35988}
\def\Wtwoy{-13.8657}
\def\Wthreex{3.72089}
\def\Wthreey{-20.2145}
\def\Wfourx{3.98574}
\def\Wfoury{-26.5361}
\def\Wfivex{4.19505}
\def\Wfivey{-32.8447}
\def\Wsixx{4.36812}
\def\Wsixy{-39.1462}
\def\Wsevenx{4.51566}
\def\Wseveny{-45.4432}
\def\Weightx{4.64424}
\def\Weighty{-51.7372}
\def\Wninex{4.75818}
\def\Wniney{-58.0291}
\def\Wtenx{4.86047}
\def\Wteny{-64.3195}
\def\anchorx{\Wmonex*0.7}
\def\epsil{0.4}
\def\tauf{2.5}
\def\sig{\tauf*.5+\R*.75}
\def\ballsize{.1}
\def\pole{\R*0.44}
\def\npole{9}
\coordinate  (pole)  at  (\polex/\ST*\xsc,\poley/\ST*\ysc);
\coordinate  (Wm1)  at  (\Wmonex/\ST*\xsc,\Wmoney/\ST*\ysc);
\coordinate  (W0)  at  (\Wzerox/\ST*\xsc,\Wzeroy/\ST*\ysc);
\coordinate  (W1)  at (\Wonex/\ST*\xsc,\Woney/\ST*\ysc);
\coordinate  (cW1)  at (\Wonex/\ST*\xsc,-\Woney/\ST*\ysc);
\coordinate  (W2)  at (\Wtwox/\ST*\xsc,\Wtwoy/\ST*\ysc);
\coordinate  (cW2)  at (\Wtwox/\ST*\xsc,-\Wtwoy/\ST*\ysc);
\coordinate  (W3)  at (\Wthreex/\ST*\xsc,\Wthreey/\ST*\ysc);
\coordinate  (cW3)  at (\Wthreex/\ST*\xsc,-\Wthreey/\ST*\ysc);
\coordinate  (W4)  at (\Wfourx/\ST*\xsc,\Wfoury/\ST*\ysc);
\coordinate  (cW4)  at (\Wfourx/\ST*\xsc,-\Wfoury/\ST*\ysc);
\coordinate  (W5)  at (\Wfivex/\ST*\xsc,\Wfivey/\ST*\ysc);
\coordinate  (cW5)  at (\Wfivex/\ST*\xsc,-\Wfivey/\ST*\ysc);
\coordinate  (W6)  at (\Wsixx/\ST*\xsc,\Wsixy /\ST*\ysc);
\coordinate  (cW6)  at (\Wsixx/\ST*\xsc,-\Wsixy /\ST*\ysc);
\coordinate  (W7)  at (\Wsevenx/\ST*\xsc,\Wseveny /\ST*\ysc);
\coordinate  (cW7)  at (\Wsevenx/\ST*\xsc,-\Wseveny /\ST*\ysc);
\coordinate  (W8)  at (\Weightx/\ST*\xsc,\Weighty /\ST*\ysc);
\coordinate  (cW8)  at (\Weightx/\ST*\xsc,-\Weighty /\ST*\ysc);
\coordinate  (W9)  at (\Wninex/\ST*\xsc,\Wniney /\ST*\ysc);
\coordinate  (cW9)  at (\Wninex/\ST*\xsc,-\Wniney /\ST*\ysc);
\coordinate  (W10)  at (\Wtenx/\ST*\xsc,\Wteny /\ST*\ysc);
\coordinate  (cW10)  at (\Wtenx/\ST*\xsc,-\Wteny /\ST*\ysc);
\coordinate (anchor) at (\anchorx/\ST*\xsc,0);
\coordinate (br1) at (\polex/\ST*\xsc+ \littlespace ,-\R-\margin);
\coordinate (br2) at (\Wmonex/\ST*\xsc - \littlespace ,-\R-\margin);
 \draw [white,shading = axis,rectangle, left color=white, right color=shadecolor,shading angle=90, anchor=north, minimum width=\R, minimum height=\R]  (pole) ++ (0,0) ++ (0,-\R-\margin) -- (\anchorx/\ST*\xsc-\axislength,-\R-\margin) -- (\anchorx/\ST*\xsc-\axislength,\axislength) -- (\polex/\ST*\xsc,\axislength) -- cycle;
 \draw [white,shading = axis,rectangle, left color=white, right color=shadecolor,shading angle=270, anchor=north, minimum width=\R, minimum height=\R]  (Wm1) ++ (0,0) ++ (0,-\R-\margin) -- (\anchorx/\ST*\xsc+\axislength,-\R-\margin) -- (\anchorx/\ST*\xsc+\axislength,\axislength) -- (\Wmonex/\ST*\xsc,\axislength) -- cycle;
\draw[very thin] (anchor) ++ (-\marginsmall-\axislength,0) -- (anchor) ++(\marginsmall+\axislength,0)  node[below=0]{$\realpart{\complex}$} -- (anchor);
%
\draw[very thin] (pole) node[below left=-0.06]{\scalebox{0.8}{$
0$}} ++ (0,-\R-\margin) -- (pole) ++(0,\axislength) node[below left=-0.1]{$\imaginarypart{\complex}$} -- (pole);
%
%
\coordinate  (tau)  at  (0,\tauf);
\coordinate  (sig)  at  (0,\sig);
\draw[black,thick]
(anchor) node[below left=-0.06]{\scalebox{0.8}{$\anchor$}} ++ (-\marginsmall,\R) arc (90:270:\R) -- cycle;
\draw[black,->,thin]  (anchor)  ++ (-\marginsmall,0) ++(130-\Arrowangle*0.5:\R+\gap) arc (130-\Arrowangle*0.5:130+\Arrowangle*0.5:\R+\gap) ;
\node[black]  at (\anchorx/\ST*\xsc- \marginsmall+\R*cos 130  +2* \margin*cos 130,\R * sin 130 +\margin*sin 130 ) {$\contoura{\radius}$};
%
\draw[cw,thick, dashed]
(anchor)  ++ (\marginsmall,\R) arc (90:-90:\R) -- cycle;
\draw[cw,->,thin]  (anchor)  ++ (\marginsmall,0) ++(50+\Arrowangle*0.5:\R+\gap) arc (50+\Arrowangle*0.5:50-\Arrowangle*0.5:\R+\gap) ;
\node[cw]  at (\anchorx/\ST*\xsc+ \marginsmall+\R*cos 50  +2* \margin*cos 50,\R * sin 50 +\margin*sin 50 ) {$\contoura{-\radius}$};
%
  \draw[decorate,decoration={brace,amplitude=2pt,raise=0pt,mirror},yshift=0pt] (br1) -- (br2) node [midway,yshift=-8pt]{\scalebox{0.8}{$\ROC{\Blaplacetransformf{\extdpurevaluefunction}}$}};
\makedot{(pole)}
\makedot{(Wm1)}
\makewhitedot{(anchor)}
\draw[very thin,dashed] (Wm1) node[below right=-0.06]{\scalebox{0.8}{$\!\!\,-\dominantpoleX{\Wt}$}} ++ (0,-\R-\margin) -- (Wm1) ++(0,\axislength)  -- (Wm1);
\makedot{(W1)}
\makedot{(W2)}
\makedot{(W3)}
\makedot{(W4)}
\makedot{(W5)}
\makedot{(W6)}
\makedot{(W7)}
\makedot{(W8)}
\makedot{(W9)}
\makedot{(cW1)}
\makedot{(cW2)}
\makedot{(cW3)}
\makedot{(cW4)}
\makedot{(cW5)}
\makedot{(cW6)}
\makedot{(cW7)}
\makedot{(cW8)}
\makedot{(cW9)}
\makedot{(cW10)}
\draw (W9) node[left=-0.05]{\scalebox{0.8}{$-\Polef{\LSTWtfunction}$}};
\lengthmark{\anchorx/\ST*\xsc-\marginsmall}{0}{130}{\R}{\epsil*0.2}{$\radius$}
\end{tikzpicture}
\caption{\label{figure:LSTWt}%
 Convergence of $\Blaplacetransformf{\extdpurevaluefunction}$ for constant service times $\stochst=\st$ and step cost function $\costx{\bl}=  \indicator{[\taufunction,\infty)}{\bl} $, with $\taufunction>0$:
 $\laplacetransformfs{\costfunction}{\complex}= 1 /\complex$ has one pole at~$\complex=0$ with $\ROC{\laplacetransformf{\costfunction}}=\{\complex\in\Complex\setst\realpart\complex>0\}$, while 
 $\LSTWts{-\complex}= {(1-\ar\st )\complex}/[ {\complex+ \ar (1-e^{\complex\st}) }]
$ has an infinity of poles at
 $s=-\ar  [1 + ({1}/{\load})\,\Lambertnx{k}{-\load \exp{-\load } } ]$ for $k\in\Integerzero$, where $\load=\ar\st$ and~$\lambertnfunction{k}$ denotes the $k$th branch of the product logarithm function, with $-\dominantpoleX{\Wt}=-\ar  [1 + ({1}/{\load})\,\Lambertnx{-1}{-\load \exp{-\load } } ]>0$, \cite{corless96}.
 }
\end{figure}%
\storecompoundcounter{figure}{figure:LSTWt}%
\begin{assumption}
\label{assumption:complexintegrablecost}
The cost function~$\costfunction$ satisfies $-\dominantpoleX{\Wt}\in\ROC{\laplacetransformf{\costfunction}}$. 
\end{assumption}
%
%
%
\noindent
For analysis purposes, we now extend the  nonnegative process~\eqref{dynamics} to negative backlog values by presuming of a (fictitious) stochastic process governed by~\eqref{MG1Poisson} over the entire real axis. We set the scene as follows. 

First, we  let $\costx{\bl}=0$ for $\bl<0$, and  we complete~\eqref{MDPkernel} with 
$ \MDPkernel{\bl}{[\altbl,\bl]} =  1-\exp{-\ar(\bl-\altbl)} $ if $\bl<0$, thus 
conjecturing for~\eqref{fullMDPkernel} the $\Realminus$ behaviour 
\begin{align}\label{negativedynamics} \tag{Q${}^-$} 
 &
 \stochblk{\tim+1} =  \stochblk{\tim} +  \stochstk{\tim} - \stochTk{\tim+1} 
 \ \ \ \text{if}\ \ \ \stochblk{\tim} +  \stochstk{\tim} < 0
 ,
 &
 \tim\geq 0
 .
 \end{align}
Observe that the so extended Markov process loses the irreducibility of~\eqref{dynamics}, since the process remains caught in~$\Realpluszero$ once it has occupied a nonnegative state. Otherwise, it is expected to drift towards $\bl=-\infty$, where its chances vanish to ever reach~$\Realpluszero$.
Next, we consider an ancillary, more tractable transition kernel~$\UMDPkernelfunction$ of the type~\eqref{fullMDPkernel} with \emph{uniform} dynamics for the backlogs:
\begin{align}
\label{UMDPkernel}
 &
 \UMDPkernel{\bl}{[\altbl,\bl]} =  1-\exp{-\ar(\bl-\altbl)}  ,
&
\forall \bl\in\Real
.
\end{align}
%
The Poisson equation~\eqref{MG1Poisson} then rewrites as the simple form
\begin{equation}
\label{rMG1Poisson} \tag{PE'}
\altgenericx{\bl,\st} 
=
\UMDPkernelfunction \altgenericx{\bl,\st} + \Ucostxy{\bl}{\st}  
, 
\end{equation}  
where $\Ucostxy{\bl}{\st} \defeq \UDeltax{\bl+\st} +  \costx{\bl} - \meancost \, \indicator{[0,+\infty)}{\bl} $, and  $ \UDeltax{\bl}  \defeq (\MDPkernelfunction - \UMDPkernelfunction) \altgenericx{\bl,0}$. 
Clearly, \eqref{rMG1Poisson} retains the property that its solutions are defined up to a constant. By construction, they  also solve~\eqref{MG1Poisson} on~$\Realpluszero$. The true and virtual parts of these solutions over~$\Realminus$  are identified by Theorem~\ref{theorem:solutionofPoisson}.
%

\begin{theorem}[Extended Poisson equation]\label{theorem:solutionofPoisson}
 Let Assumption~\ref{assumption:complexintegrablecost} hold.
%
Every solution of~\eqref{rMG1Poisson} has the form $ \altgenericx{\bl,\st} = \Uvaluex{\bl+\st} + \costx{\bl} + \anticausalx{\bl+\st} \, \indicator{(-\infty,-\st)}{\bl}$ for some~$\anticausalfunction:\Real\mapsto\Real$ common to all solutions and for~$\Uvaluefunction:\Real\mapsto\Real$ satisfying  
\begin{align}\label{Ustemsolution}\tag{$\makeU{\text{S}}$}
\Uvaluex{\bl}&=\valuex{0} +\Upurevaluex{\bl} -\frac{\ar\meancost }{1-\load} \bl \, \indicator{[0,+\infty)}{\bl}  + \anticausalx{\bl} \, \indicator{(-\infty,0)}{\bl} , &
\forall\bl\in\Real,
\quad
\end{align}
where the two-sided Laplace transform  of the right derivative of~$\Upurevaluefunction$, $\Blaplacetransformfs{\rightderivative\Upurevaluefunction}{\complex}= \int\nolimits_{-\infty}^{\infty}\exp{-\complex\bl}\, \rightderivative\Upurevaluex{\bl}  \, d\bl$, is given on its nonempty \acl{ROC} by
\begin{equation} \label{bilaterallaplacetransformVF} \tag{C}
\storeuniversalcountertag{bilaterallaplacetransformVF}{C}%
 \Blaplacetransformfs{\rightderivative\Upurevaluefunction}{\complex}  
=
\frac{\ar}{(1-\load )} \LSTWts{-\complex}\laplacetransformfs{\costfunction}{\complex} 
.
\end{equation}  
\end{theorem} 
\obsolete{
\begin{theorem}[Extended Poisson equation]\label{theorem:solutionofPoisson}
 Let Assumption~\ref{assumption:complexintegrablecost} hold.
%
Every solution of~\eqref{rMG1Poisson} has the form $ \altgenericx{\bl,\st} = \Uvaluex{\bl+\st} + \costx{\bl} $ for some
\begin{align}\label{Ustemsolution}\tag{$\makeU{\text{S}}$}
\Uvaluex{\bl}&=\Uvaluex{0} +\Upurevaluex{\bl} -\frac{\ar\meancost }{1-\load} \bl \, \indicator{[0,+\infty)}{\bl} , &
\forall\bl\in\Real,
\end{align}
where the right derivative of~$\Upurevaluefunction$ has a  two-sided Laplace transform 
$\Blaplacetransformfs{\rightderivative\Upurevaluefunction}{\complex}=
\int\nolimits_{-\infty}^{\infty}\exp{-\complex\bl}\, \rightderivative\Upurevaluex{\bl}  \, d\bl$ 
%
given on its nonempty \acl{ROC} by
\begin{equation} \label{BlUdpurevaluefunction}
 \Blaplacetransformfs{\rightderivative\Upurevaluefunction}{\complex}  
=
\frac{\ar}{(1-\load )} \LSTWts{-\complex}\laplacetransformfs{\costfunction}{\complex} +\Ujump  +\analyticparts{\complex}
,
\end{equation}  
in which~$\analyticparts{\complex}/\complex$ is analytic 
on~$\{\complex\in\Complex\setst \Realpart{\complex}<- \dominantpoleX{\Wt}\}$ and~$\Ujump$ is common to all solutions, with 
\begin{equation}\label{jump}
\Ujump 
= 
{\Kquantity}/{(1-\load )}-{\MaclaurinLSTstochsts{2}{\ar}\meancost}/{(1-\load)^2} 
,
\end{equation}
where
$
\MaclaurinLSTstochsts{2}{\ar}=1-\ar\expectation{\stochst}+\ar^2\expectation{\stochst^2}/2
$ and~$\Kquantity$ satisfies 
\begin{equation}\label{Kquantity}
\textstyle
\Kquantity
 =
 \bigexpectation{ \altgenericx{0,\stochstzero} 
- \ar  \int_{-\infty}^{0} \altgenericx{\bl,\stochst} \, \exp{\ar\bl} \, d\bl }
.
\end{equation}
%
\end{theorem} 
}%
\noindent
Theorem~\ref{theorem:solutionofPoisson} can be shown by transform-domain analysis of the solutions of~\eqref{rMG1Poisson}. The proofs of all the results given in this section are deferred to Appendix~\ref{appendix:LST}.

The function~$\Uvaluefunction$ in~\eqref{Ustemsolution} is an extension of the value function to the negative backlogs, with $\Uvaluex{\bl}\equiv\valuex{\bl}$ if $\bl\geq 0$.
Theorem~\ref{theorem:solutionofPoisson} suggests that the value function~\eqref{valuefunction} characterizes the M/G/1 queue~\eqref{dynamics} as much as the imaginary  process~\eqref{negativedynamics} taking place in the negative backlog values.
What is more, the hidden negative end of the queue seems to hold the key to solving the associated Poisson equation in the transform domain. 
%

By inverse transformation of~\eqref{Ustemsolution}, we obtain the following results.
\begin{proposition}
[Value function]
\label{proposition:identities}
Let~$\costfunction$ 
satisfy Assumption~\ref{assumption:integrablecost} and be piecewise continuous.

\begin{enumerate}[(i),ref=\roman*,wide,labelwidth=!,labelindent=5pt]
\item
 \label{proposition:differentialequation}
The value function~\eqref{valuefunction} is
continuous, almost everywhere continuously diffe\-ren\-tiable, and semi-diffe\-ren\-tiable with right derivative
\begin{align}\label{deresult}\tag{DE}\storeuniversalcountertag{deresult}{DE}%
 &\rightderivative \valuex{\bl}   
=
 \ar\left(
 \rightcostx{\bl}
-  \meancost
+
\Expectation{\valuex{\bl+\stochst}-  \valuex{\bl}  }\right)
,& 
\forall \bl\in \Realplus,
\end{align}
where $
\rightcostx{\bl}\defeq \lim\nolimits_{\altbl\to\bl^+}\costx{\altbl}$. 
At~$\bl=0$, one has
\begin{align}
\label{boundarycondition}\tag{BCa}\storeuniversalcountertag{boundarycondition}{BCa}%
\valuex{0}
&=
{\costx{0}}-\meancost +\expectation{\valuex{\stochstzero}} ,
\\ 
\label{derivativeatzero} \tag{BCb}
\storeuniversalcountertag{derivativeatzero}{BCb}%
\dvaluex{0}
&=
\ar \left(
\rightcostx{0}-\costx{0}
+
\expectation{\valuex{\stochst}-\valuex{\towelch{\stochdelayx{\stochst}}{\stochstzero}}} 
\right)
.
\end{align}%
%

\item
\label{proposition:vfdifferentiability}
The value function 
is given by
\begin{align}\label{stemsolution}\tag{S} \storeuniversalcountertag{stemsolution}{S}%
\valuex{\bl}&=\valuex{0} +\purevaluex{\bl} -\frac{\ar\meancost }{1-\load} \bl , &
\forall\bl\in\Realpluszero,
\end{align}%
where
~$\purevaluefunction:\Real\mapsto\Real$ is continuous, almost everywhere continuously diffe\-ren\-tiable,  and semi-diffe\-ren\-tiable with right-derivative 
\begin{align}
\label{dwfunction}\tag{\acs{wfunction}}
\storeuniversalcountertag{dwfunction}{\acs{wfunction}}%
&
\rightderivative\purevaluex{\bl}
=
\displaystyle
\frac{\ar}{1-\load}
\Expectation{\costx{\bl+\Wt}}
\obsolete{
\purevaluex{\bl}
=
\frac{\ar}{1-\load}
\int\nolimits_{0}^{\bl}
\Expectation{\costx{\xi+\Wt}}
d\xi
}
,
&
\forall \bl\in\Real
.
\end{align} 
%

\end{enumerate}
\end{proposition}
\storecompoundcounter{proposition}{proposition:identities}%
\noindent
 %

Equation~\eqref{deresult}  in Proposition~\ref{proposition:identities}\eqref{proposition:differentialequation} was  for instance used in~\cite{hyytia-peva-2017} to derive the value function of the M/D/1 queue with a step cost function~$\indicatorfunction{[\taufunction,\infty)}$.
However, the expectation of the random jump~$\valuex{\cdot+\stochst}$, makes~(\ref{deresult}) difficult to solve for~$\valuefunction$ in the general case. 
%
The result reported in~\eqref{proposition:vfdifferentiability} is but the expression taken by the kernel solution of~\cite{glynn94} in the limit case where the invariant measure of the Poisson equation coincides with the stationary measure of the waiting times. 
A relation of duality can be observed between~\eqref{stemsolution}, where  the value function follows by cross-correlation of the cost function with the asymptic waiting times, and~\eqref{deresult}, where  the cost function can be recovered by cross-correlation of the value function and the service times. In fact, \eqref{deresult} and~\eqref{stemsolution} are backlog-domain renditions of the same transform-domain solution~\eqref{bilaterallaplacetransformVF}.

A closer look at~\eqref{stemsolution} tells us  that the computation of the value function~$\valuefunction$ reduces to the derivation through~\eqref{dwfunction} of a related function, denoted~$\purevaluefunction$ in this work and referred to as the \acused{wfunction}\emph{`core' value function} or, more concisely, \acli{wfunction}.
Intuitively, $\purevaluex{\bl}$ corresponds to  the expected total cost experienced by the queue from an initial state~$\bl$ until it returns to the empty state~$0$. By construction, $\purevaluex{0}=0$, and the rest of~$\purevaluex{\bl}$ can be obtained by integration from~$0$ of its right-derivative~$\rightderivative\purevaluefunction$, available via~\eqref{bilaterallaplacetransformVF} or~\eqref{dwfunction}. 
%
Observe that~$\purevaluefunction$ is fully characterized by~$\ar$, $\stochst$ and~$\rightcostfunction$, independently of the parameters~$\stochstzero$ and~$\costx{0}$, which specify the behavior of the queue at $\bl=0$.

  The rest of the study is  principally concerned with the derivation of the \acl{wfunction}, with disregard to the  other two terms in~\eqref{stemsolution}.  Once~$\purevaluefunction$ is known, the mean cost~$\meancost$ can be inferred from~$\stochstzero$ and~$\costx{0}$ on condition that~$\modulus{\costfunction}$ is $\measureXfunction{\trueWt}$-integrable. 
Combining \eqref{derivativeatzero} 
with~\eqref{dwfunction} then yields
\begin{equation}
\label{genericmeancosttwo}%
\meancost
=
\left(\frac{1-\load}{\switchtoload{1+\ar\expectation{\stochstzero-\stochst} }{1-\load+\loadzero }} \right) \big\{  \dpurevaluex{0} /\ar +  \costx{0}-\rightcostx{0}  +     \expectation{  \purevaluex{\stochstzero}} -\expectation{ \purevaluex{\stochst}  }  \big\}
.
\end{equation}
Note that the \acl{wfunction} and the mean cost~\eqref{genericmeancosttwo} are all we need for \acs{FPI}-dispatching, since the admission cost
~\eqref{definitionadmissioncost} reduces to
\begin{equation}
\label{admissioncost}\tag{AC'}
 \admissioniux{\server}{\bl}{\st}
=
\purevalueix{\server}{\bli{\server}+\sti{\server}}-\purevalueix{\server}{\bli{\server}} - \left(\frac{\ari{\server}\meancosti{\server} }{1-\loadi{\server}}\right) \sti{\server}.
\end{equation}
\obsolete{
\begin{equation}\label{meancostperjob}\begin{array}{c}
\meancost= \frac{1-\load}{\switchtoload{1-\ar \expectation{\stochst-\stochstzero} }{1-\load+\loadzero } } \big( \frac{\dpurevaluex{0}}{\ar} - \expectation{ \purevaluex{\stochst} - \purevaluex{\stochstzero}} \big)
,
\end{array}\end{equation}
}%
%

%
%


 In Sections~\ref{section:smoothcf}-\ref{section:piecewisedefined}, we exploit these results and derive the causal part of~$\rightderivative\purevaluefunction$  
 by inverse  transformation of~\eqref{bilaterallaplacetransformVF}.

%

\subsection{Basic solutions: analytic cost functions}
\label{section:smoothcf}

The analysis of~\eqref{bilaterallaplacetransformVF} is straightforward for 
%
the cost functions belonging to the class
$\Excset\defeq\spanx{\{\costanfunction{\sing}{\nn}\setst\sing\in\Complex,\, \nn\in\Natural\}}$, where $\spanx{S}$ denotes the linear span of a set~$S$, and the function~$\costanfunction{\sing}{\nn}$, defined by
$\costanx{\sing}{\nn}{\bl}=\bl^\nn\exp{-\sing\bl}
$, 
is characterized by the {meromorphic} Laplace transform  $\laplacetransformfs{\costanfunction{\sing}{\nn}}{\complex} = \factorial{\nn}/(\complex+\sing)^{\nn+1}$, 
 which is analytic on the complex plane except for a set of  isolated, non-essential singularities, called poles.
%
%
\begin{table}
\caption{Explicit \aclp{wfunction} 
for 
$\costfunction=\costanfunction{\sing}{\nn}$, ($\sing\in\newSing $, $\nn\in\Natural$). 
\label{table:valuefunctions}}
%
\centerline{
\begin{tabular}{lll}
\hline
\vspace{-2mm}
\\
\beginline{} $\costx{\bl}$
\separator{}
$\dpurevaluex{\bl}$
\separator{}
$\purevaluex{\bl}$
\smallskip
\smallskip
  \\
\beginline{} $1$
\separator{}
$\frac{\ar}{1-\load} $
\separator{}
$\frac{\ar}{1-\load}\bl$
\\
\beginline{} $\exp{-\sing\bl}$
\separator{}
$\frac{\ar\scaley{\LSTWts{\sing}}{}}{1-\load}\scaley{}{\LSTWts{\sing}\,} \exp{-\sing\bl} $
\separator{}
$ \frac{\ar\scaley{\LSTWts{\sing}}{}}{1-\load}\scaley{}{\LSTWts{\sing}} \frac{1-\exp{-\sing\bl}}{\sing} $
\\
\beginline{} $\bl^{\nn}$
\separator{}
$ \frac{\ar\, \factorial{\nn}}{1-\load}
\sum\nolimits_{k=0}^{\nn} \specialmatrixnk{\nn+1}{\nn-k}\frac{\bl^k}{\factorial{k}} $
\separator{}
$ \frac{\ar\, \factorial{\nn}}{1-\load}
\sum\nolimits_{k=0}^{\nn}\specialmatrixnk{\nn+1}{\nn-k} \frac{\bl^{k+1}}{\factorial{(k+1)}} $
\\
\beginline{} $\bl^{\nn}\exp{-\sing\bl}$
\separator{}
$ \frac{\ar\factorial{\nn}\scaley{\LSTWts{\sing}}{}}{1-\load}
\sum\nolimits_{k=0}^{\nn}  \poweracoefnk{\sing}{\nn}{\nn-k} \frac{\bl^k \exp{-\sing\bl}}{\factorial{k}} $
\separator{}
$ \frac{\ar\factorial{\nn} \scaley{\LSTWts{\sing}}{}}{1-\load}  \sum\nolimits_{t=0}^{\nn} \left( \sum\nolimits_{k=t}^{\nn}  \frac{\poweracoefnk{\sing}{\nn}{\nn-k}}{\sing^{k+1}} \right) \left( \dirack{t}- \frac{(\sing\bl)^t \exp{-\sing\bl} }{\factorial{t}} \right)  $
\!\!\!\!\! 
\smallskip
\\
\end{tabular}
}
\horizontalline
\smallskip
Coefficients:
\begin{equation}\nocolsep
\label{table:specialmatrixnk}\begin{array}{rll}
\specialmatrixnk{\nn}{0}
&=
1
,
&
\\
\specialmatrixnk{\nn}{k}
&=
[{\ar}/({1-\load})] \sum\nolimits_{t=0}^{k-1} \powerzeroXcoefnk{k-t+1} \specialmatrixnk{\nn}{t}
 , \quad 
(k\geq 1) 
 ,
 \hspace{44.3mm} \
&
 \\
\powerzeroXcoefnk{k}
&=
{1}/({\factorial{k}})\,\expectation{\stochst^{k}}
,
\quad 
(
k\geq 0),
& 
  \end{array}
\end{equation}
\storecompoundcounter{equation}{table:specialmatrixnk}%
%
%
\begin{equation}
\label{table:poweracoefnk}
\nocolsep\begin{array}{rcl}
\poweracoefnk{\sing}{\nn}{0}
&=&
\scaley{1}{\LSTWts{\sing}}
,
\\ 
\poweracoefnk{\sing}{\nn}{1} 
&=& 
\scaley{
\frac{\ar\LSTWts{\sing} }{\left( 1-\load \right)\sing^2}
}{
[{\ar }/({ 1-\load })] \, \big( {\LSTWts{\sing}}/{\sing}\big)^2
}  
\big( 1-
\LSTstochsts{\sing}
-  \sing
\poweraXcoefnk{\sing}{1}
\big)
,
\\
\poweracoefnk{\sing}{\nn}{k} 
&=&
[{\ar\LSTWts{\sing}}/{(1-\load ) \sing}]\,  \big[ ({1/{\ar}- \poweraXcoefnk{\sing}{1} })\, \poweracoefnk{\sing}{\nn-1}{k-1}  - \sum\nolimits_{t=0}^{k-2} \poweraXcoefnk{\sing}{k-t} \poweracoefnk{\sing}{\nn-k+t}{t}  \big] 
,
 \quad 
 (k\geq 2) 
 , 
 \hspace{1mm} \ 
 \\
 \poweraXcoefnk{\sing}{k}
 &=&
 {1}/({\factorial{k}})\,\expectation{\stochst^{k}\exp{-\sing\stochst}},
 \quad
 (
 k\geq 0)
 .
\end{array}\end{equation}
\storecompoundcounter{equation}{table:poweracoefnk}%
\obsolete{ 
\horizontalline 
\smallskip
with $\specialmatrixnk{\nn}{0}=1$ and
\begin{equation}\label{table:specialmatrixnk}\begin{array}{ll}
\specialmatrixnk{\nn}{k}
=
[{\ar}/({1-\load})] \sum\nolimits_{t=0}^{k-1} \powerzeroXcoefnk{k-t+1} \specialmatrixnk{\nn}{t}
 ,& \quad 
(k\geq 1) 
 ,
 \hspace{45mm} \
  \end{array}
\end{equation}
where
$\powerzeroXcoefnk{k}={1}/({\factorial{k}})\,\expectation{\stochst^{k}}$ denotes the $k^{\textup{th}}$ coefficient of the power series of~$\LSTstochsts{-\complex}$ at~$0$, and
%
%
\begin{equation}
\label{table:poweracoefnk}
\nocolsep\begin{array}{rcl}
\poweracoefnk{\sing}{\nn}{0}
&=&
\scaley{1}{\LSTWts{\sing}}
,
\\
\poweracoefnk{\sing}{\nn}{1} 
&=& 
\scaley{
\frac{\ar\LSTWts{\sing} }{\left( 1-\load \right)\sing^2}
}{
[{\ar }/({ 1-\load })] \, \big( {\LSTWts{\sing}}/{\sing}\big)^2
}  
\big( 1-
\LSTstochsts{\sing}
-  \sing
\poweraXcoefnk{\sing}{1}
\big)
,
\\
\poweracoefnk{\sing}{\nn}{k} 
&=&
[{\ar\LSTWts{\sing}}/{(1-\load ) \sing}]\,  \big[ ({1/{\ar}- \poweraXcoefnk{\sing}{1} })\, \poweracoefnk{\sing}{\nn-1}{k-1}  - \sum\nolimits_{t=0}^{k-2} \poweraXcoefnk{\sing}{k-t} \poweracoefnk{\sing}{\nn-k+t}{t}  \big] 
,
 \quad 
 (k\geq 2) 
 , 
 \hspace{1mm} \
\end{array}\end{equation}
where
$\poweraXcoefnk{\sing}{k}={1}/({\factorial{k}})\,\expectation{\stochst^{k}\exp{-\sing\stochst}}$ is the $k^{\textup{th}}$ coefficient of the power series of~$\LSTstochsts{-\complex}$ at~$-\sing$.
}
\horizontalline 
\end{table}
\storecompoundcounter{table}{table:valuefunctions}%
Observe that the condition of existence of the \acl{wfunction}, previously stated in Assumption~\ref{assumption:complexintegrablecost}, reduces for the cost function~$\costanfunction{\sing}{\nn}$ to $\sing\in\newSing$, where we write $\newSing=\{\complex\in\Complex\setst\realpart{\complex}<-\dominantpoleX{\Wt}\}$.

Table~\ref{table:valuefunctions} provides us with the closed-form \aclp{wfunction} for the cost function~$\costanfunction{\sing}{\nn}$, 
obtained after inversion of~\eqref{bilaterallaplacetransformVF} by integration along a vertical axis in the region of absolute convergence of~$\Blaplacetransformf{\extdpurevaluefunction}$, as we proceed to do now. 
Let $\anchor\in(\sing,-\dominantpoleX{\Wt})$, and consider the contour 
$\contoura{\radius}=\{\anchor+\i t \setst t\in[-\radius,\radius]\} \cup \arca{\radius}$,
where 
$\arca{\radius}=\{\anchor+\radius\exp{\i \alpha }\setst \alpha\in[\frac{\pi}{2},\frac{3\pi}{2}]\}$ is an arc centered in~$\anchor$. 
Since $\lim\nolimits_{\complex\to\infty}\modulus{\LSTWts{\complex}}\leq 1$ (cf. Proposition~\ref{proposition:analycity}\eqref{analycity:nopole}), we find 
$
\lim\nolimits_{\radius\to\infty} \LSTWts{-\anchor-\radius\exp{\i \alpha}}\laplacetransformfs{\costanfunction{\sing}{\nn}}{\anchor+\radius\exp{\i \alpha}}= 0
$ 
%
for $\alpha\in\left[\frac{\pi}{2},\frac{3\pi}{2}\right]$,
and the condition of the third Jordan lemma is satisfied~\cite[\S{}3.1.4, Theorem~1]{mitrinovic84}\cite[\S{}88]{brown14}. 
It follows that integration of 
$\Blaplacetransformfs{\extdpurevaluefunction}{\complex}\exp{\complex \bl}$ along the arc~$\arca{\radius}$ vanishes as~$\radius\to\infty$, 
\begin{align}\label{jordanresult}
&\lim_{\radius\to\infty}\int\nolimits_{\arca{\radius}}\LSTWts{-\complex}\laplacetransformfs{\costanfunction{\sing}{\nn}}{\complex}\, \exp{\complex \bl}\, d\complex=0, 
&  \forall \bl\in\Realpluszero,
\end{align}
and counterclockwise integration of
$\Blaplacetransformfs{\extdpurevaluefunction}{\complex}\exp{\complex \bl}$
on the contour~$\contoura{\radius}$ reduces to computing the residue\footnotemark{} at the pole of~$\laplacetransformf{\costanfunction{\sing}{\nn}}$. 
\footnotetext{
Recall that the residue of a meromorphic function~$\genericfunction$ at a pole~$\sing$ of order~$n$ is given by~\cite{brown14}
\begin{equation}\label{residue}\textstyle
 \Residuefx{\genericx\complex}{\complex=\sing} = \frac{1}{\factorial{(n-1)}} \lim_{\complex\to\sing}\frac{d^{n-1}}{d\complex^{n-1}} \big[  (\complex-\sing)^n \genericx{\complex} \big].
\end{equation}
}%
%
The residue theorem gives
\begin{equation}\label{bilaterallaplacetransformcostanfunction}
\begin{array}{rcl}
\dpurevalueanx{\sing}{\nn}{\bl}
&\refereq{\eqref{bilaterallaplacetransformVF}}{=} &
\frac{1}{2\pi\i}\lim\nolimits_{t\to\infty}
\int\nolimits_{\anchor-\i t}^{\anchor+\i t}
\big(
\frac{\ar}{1-\load} 
\LSTWts{-\complex}
\frac{\factorial{\nn}}{(\complex+\sing)^{\nn+1}}
\big)  \exp{\complex\bl}
\, d\complex
\\
 &\refereq{\eqref{jordanresult}}{=} &
 \frac{1}{2\pi\i}
\ointctrclockwise\nolimits_{\contoura{\radius}}
\big(
\frac{\ar}{1-\load} 
\LSTWts{-\complex}
\frac{\factorial{\nn}}{(\complex+\sing)^{\nn+1}}
\big)  \exp{\complex\bl}
\, d\complex
\\
 &= &
\bigresiduefx{
\frac{\ar}{1-\load} \LSTWts{-\complex} \frac{\factorial{\nn}}{(\complex+\sing)^{\nn+1}} \exp{\complex\bl}
}{\complex=-\sing}  
\\ 
 & \refereq{\eqref{residue}}{=}   &
\frac{\ar}{1-\load} 
\lim\nolimits_{\complex\to-\sing}\frac{1}{\factorial{\nn}} \frac{d^{\nn}}{d\complex^{\nn}}  \big[ \factorial{\nn}\, \LSTWts{-\complex}  \exp{\complex\bl} \big]
\\
 &= &
\frac{\ar}{1-\load} 
 \sum\nolimits_{k=0}^{\nn} \binomialcoef{\nn}{k}  \big( (-1)^{\nn-k}\frac{d^{\nn-k} }{d\complex^{\nn-k}} \LSTWts{\sing} \big)  \bl^k \exp{-\sing\bl}
\\
 &= &
\frac{\factorial{\nn}\ar}{1-\load} 
  \sum\nolimits_{k=0}^{\nn}  
  \poweracoefnk{\sing}{\nn}{\nn-k}
  \big(\frac{ \bl^k}{\factorial{k}}\big) \exp{-\sing\bl}
,
\end{array}
\end{equation} 
for all 
$\bl\in\Realpluszero$, in which
\begin{equation}\label{definition:poweracoefnk}\notag
\poweracoefnk{\sing}{\nn}{k}
 =
 \frac{(-1)^{k}}{\factorial{k}}\, \frac{d^{k} }{d\complex^{k}} 
\LSTWts{\sing}
\end{equation}
\storecompoundcounter{equation}{definition:poweracoefnk}%
is the 
$k$th 
coefficient of the Taylor expansion of~$\LSTWts{-\complex}$ at~$\sing$, 
reducing to $\poweracoefnk{0}{\nn}{k}\equiv\specialmatrixnk{}{k}$ if~$\sing=0$.
The coefficients~$\{\specialmatrixnk{}{k}\}$ and~$\{ \poweracoefnk{\sing}{}{k}\}$ will be referred to as the germ of~$\LSTWts{-\complex}$. In~(\ref{table:specialmatrixnk}) and~(\ref{table:poweracoefnk}), they are computed inductively as functions of the coefficients~$\{\powerzeroXcoefnk{k}\}$ and~$\{\poweraXcoefnk{\sing}{k}\}$ of the power series of~$\LSTstochsts{\complex}$. 
As such, they are finite by analycity of~$\LSTstochsts{\complex}$ on~$\newSing$ (cf. Proposition~\ref{proposition:analycity}\eqref{analycity:dominantpoles}).
See also Proposition~\ref{proposition:analycity}\eqref{analycity:centered}-\eqref{analycity:excentered} for a derivation of~(\ref{table:specialmatrixnk}) and~(\ref{table:poweracoefnk}), and
Table~\ref{table:moments} for expressions of~$\{\specialmatrixnk{}{k}\}$ specific to standard service time distributions. 
%
%
%
The final expressions\footnotemark{} for~$\dpurevalueanfunction{\sing}{\nn}$ and~$\purevalueanfunction{\sing}{\nn}$ are reported in Table~\ref{table:valuefunctions}.

\footnotetext{Alternatively, notice that
$
\costanfunction{\sing}{\nn}=(-1)^{\nn} ({\delta^\nn}/{\delta\sing^\nn})\, \costanfunction{\sing}{0}
$ if~$\sing\in\newSing\setminus\{0\}$. It follows
\showhideproposition{%
 from~\eqref{dwfunction} 
 }{%
from~\cite[Proposition~1]{hyytia-vgf-itc-2017} 
  }%
   and
the Leibniz integral rule \hidecalculustheorems{}{(Theorem~\ref{theorem:lir}
) }%
that, for $\sing\in\newSing\setminus\{0\}$ and $n>0$, 
$
\dpurevalueanx{\sing}{\nn}{\bl}
=
(-1)^{\nn} ({\delta^\nn}/{\delta\sing^\nn}) \dpurevalueanx{\sing}{0}{\bl}
=
(-1)^{\nn} [{\ar}/({1-\load})] \, ({\delta^\nn}/{\delta\sing^\nn}) [\LSTWts{\sing} \exp{-\sing\bl} ]
$, 
and the expressions for~$\dpurevalueanfunction{\sing}{\nn}$  
can be derived by successive differentiations of~$\dpurevalueanfunction{\sing}{0}$.
By continuity arguments, we also find, for $n>0$, 
$
\dpurevalueanx{0}{\nn}{\bl}
=
(-1)^{\nn}\lim\nolimits_{\sing\to 0} ({\delta^\nn}/{\delta\sing^\nn})[\dpurevalueanx{\sing}{0}{\bl}] 
$. 
}%

Since the operation $\costfunction\mapsto\purevaluefunction$ is a linear map, 
observe that all  cost functions given as linear combinations of~$\costanfunction{\sing}{\nn}$ types are elements of~$\Excset$
enjoying explicit value functions. 
Examples include the trigonometric functions $\cos$ and~$\sin$, which play a part in the developments of Section~\ref{section:trigonometricsum}, or the set of incomplete gamma functions $\{\incompletegamma{\nn+1}{\sing\cdot}\setst\nn\in\Natural,\,\sing\in\Complex\}$, which spans~$\Excset$ completely.
\obsolete{
In particular, $\Excset\equiv\spanx{\{\incompletegamma{\nn+1}{\sing\cdot}\setst\nn\in\Natural,\,\sing\in\Complex\}}$, where%
~$\incompletegammafunction$ denotes the incomplete gamma function.
For $\nn\in\Natural$, $\sing\in\Complex$, and the cost function $\costx{\bl}=\incompletegamma{\nn+1}{\sing \bl} = \factorial{\nn}\, \sum\nolimits_{j=0}^{\nn} [ (\sing\bl)^j/\factorial{j} ] \, e^{-\sing \bl} $, we find, from Table~\ref{table:valuefunctions}, 
\begin{align}
\label{incompletegammadpurevaluex}
&
\textstyle
\dpurevaluex{\bl} 
 =
\frac{\ar\scaley{\LSTWts{\sing}}{}\factorial{\nn}}{1-\load}   \sum\nolimits_{k=0}^{\nn}  
[\sum\nolimits_{q=0}^{\nn-k} \sing^{q} \poweracoefnk{\sing}{}{q}] \frac{(\sing\bl)^k }{\factorial{k}} \exp{-\sing\bl} ,
&  \forall  \bl\in\Realpluszero . 
\end{align}
If $ \sing\neq 0$, the \acl{wfunction} is then given by
\begin{align}\label{incompletegammapurevaluex}
&\textstyle
\purevaluex{\bl} 
=
\frac{\ar\scaley{\LSTWts{\sing}}{}\factorial{\nn}}{\sing(1-\load)}   \sum\nolimits_{k=0}^{\nn}  
[\sum\nolimits_{q=0}^{\nn-k} \sing^{q} \poweracoefnk{\sing}{}{q}] [1-\frac{1}{\factorial{k}}\incompletegamma{\nn+1}{\sing \bl}] ,
& \forall  \bl\in\Realpluszero. 
\end{align}
}%

\subsection{Piecewise-defined cost functions}
\label{section:piecewisedefined}

Let~$\costifunction{0},\costifunction{1}\in\Excset$, and assume  the cost function is given by $\costfunction = \pwcostifunction{0} 
\, \indicatorfunction{[0,\tau)} + \pwcostifunction{1} 
\, \indicatorfunction{[\tau,\infty)}$, where~$\indicatorfunction{\cdot}$ denotes the indicator function, or, equivalently, 
\begin{equation}\label{piecewisecost}
\begin{array}{ll}
\costx{\bl} = \pwcostix{0}{\bl}  + \pwdeltax{\bl} \, \indicator{[\taufunction,\infty)}{\bl} 
,
& \quad \forall\bl\in\Realpluszero.
\end{array}
\end{equation} 
where $\pwdeltafunction=\costifunction{1}-\costifunction{0}$.
Since the Laplace transform of~$\fx{\bl}=\bl^\nn\exp{-\sing\bl}\, \indicator{[\taufunction,\infty)}{\bl} 
$ is given on the half-plane $\realpart{\complex}>\realpart{-\sing}$ by
\begin{equation}\label{pwLT}
\textstyle
\laplacetransformfs{\functionfunction}{\complex} 
=
\int\nolimits_{\taufunction}^{\infty}\bl^{\nn} e^{-(\complex+\sing)\bl}\, d\bl 
=
\factorial{\nn} \exp{-\sing\taufunction} \sum\nolimits_{q=0}^{\nn} \frac{\taufunction^q}{\factorial{q}(\complex+\sing)^{\nn-q+1}} \, \exp{-\complex\taufunction}
,
\end{equation}
we can find~$\pwshiftedlaplacefunction$ such that $\laplacetransformfs{\costfunction}{\complex} = \laplacetransformfs{\costifunction{0}}{\complex} + \pwshiftedlaplacest{\complex}{\taufunction} e^{-\complex\taufunction}
$
with
$\lim\nolimits_{\complex\to\infty}\pwshiftedlaplacest{\complex}{\taufunction}=0$.
If we place~$\anchor$ in the half-plane~$\Realpart{\complex}>0$ between~$-\dominantpoleX{\Wt}$ and the poles of $\pwcostifunction{0},\pwcostifunction{1}$, \eqref{jordanresult} becomes in the present setting, %
\begin{equation}\label{jordanpwresult}\notag
\begin{array}{ll}
\lim\nolimits_{\radius\to\infty}\int\nolimits_{\arca{\radius}}\LSTWts{-\complex}\laplacetransformfs{\costifunction{0}}{\complex}\, \exp{\complex \bl}\, d\complex=0, 
& \quad \forall \bl\in\Realpluszero,
\end{array}
\end{equation}
for the first term, and
\begin{equation}\label{jordanpiecewisel}\notag
\begin{array}{ll}
\lim\nolimits_{\radius\to\infty}\int\nolimits_{\arca{\radius}}\LSTWts{-\complex}%
\pwshiftedlaplacest{\complex}{\taufunction} 
\,\exp{\complex(\bl-\taufunction)} \, d\complex=0,
& \forall \bl\in(\taufunction,\infty) , 
\\
\lim\nolimits_{\radius\to\infty}\int\nolimits_{\arca{-\radius}}\LSTWts{-\complex}
\pwshiftedlaplacest{\complex}{\taufunction} 
\,\exp{\complex(\bl-\taufunction)} \, d\complex=0,
\quad & \forall \bl\in[0,\taufunction) , 
\end{array}
\end{equation}
for the second term.
Thus, inverse transformation by {counterclockwise integration} along~$\contoura{\radius}$ still applies for all backlog values $\bl>\taufunction $, where 
\begin{equation} \label{solutionFNPlarge}
\textstyle
\rightderivative\purevaluex{\bl}
= \frac{\ar}{1-\load}
\sum_{\pole\in\Polef{\laplacetransformf{\costfunction}}}\bigresiduefx{
 \LSTWts{-\complex} \laplacetransformfs{\costfunction}{\complex} \, e^{\complex\bl} }{\complex=\pole} 
 ,
 \quad
 \forall \bl\in (\taufunction,\infty)
 .
\end{equation}
It is clear from~\eqref{dwfunction} that the derivative~$\rightderivative\purevaluex{\bl}$ for $\bl>\taufunction$ does {not} depend on the values of the cost function on the interval $(0,\taufunction)$. It is therefore equal over $(\taufunction,\infty)$ to the derivative of the \acl{wfunction} for the analytic cost~$\costfunction=\costifunction{1}$, and it can equivalently be derived from~\eqref{bilaterallaplacetransformcostanfunction} (or, alternatively, inferred from Table~\ref{table:valuefunctions}) for the cost function~$\costfunction=\costifunction{1}$.

For $\bl<\taufunction $, however, the terms~$\costifunction{0}$ and~$\pwdeltafunction\indicatorfunction{[\taufunction,\infty)}$  in~\eqref{piecewisecost} must be treated separately:    $\costifunction{0}$ by simple inspection of Table~\ref{table:valuefunctions}, and $\pwdeltafunction\indicatorfunction{[\taufunction,\infty)}$ by {clockwise integration} along the contour $\contoura{-\radius}=\{\anchor+\i t \setst t\in[-\radius,\radius]\} \cup \arca{-\radius}$.  
The success of this last operation is conditioned by the singularities of~$\LSTWts{-\complex}$, all contained in the interior of~$\contoura{-\radius}$ as $\radius\to\infty$. In our discussion we consider separately the service time distributions for which~$\LSTWtfunction$ has a finite set of poles~$\Polef{\LSTWtfunction}$ (e.g. exponential or Erlang service time distributions), and those for which~$\LSTWtfunction$ has infinitely many poles (as in discrete service time distributions).

If~$\Polef{\LSTWtfunction}$ is finite,  
the clockwise integral along~$\contoura{-\radius}$ yields~$\cardinality{\Polef{\LSTWtfunction}}$ residues at the poles of~$\LSTWts{-\complex}$, and we find
\begin{equation} \label{solutionFNP}
\textstyle
\begin{array}{c} 
\rightderivative\purevaluex{\bl}
= \frac{\ar}{1-\load}
\sum_{\pole\in\Polef{\laplacetransformf{\costifunction{0}}}}\bigresiduefx{
 \LSTWts{-\complex} \laplacetransformfs{\costifunction{0}}{\complex} \, e^{\complex\bl} }{\complex=\pole} 
 \hspace{2.5cm} \\  \hfill
 -
\frac{\ar}{1-\load}
\sum_{\pole\in\Polef{\LSTWtfunction}}\bigresiduefx{
 \LSTWts{-\complex} \pwshiftedlaplacest{\complex}{\taufunction} \, e^{\complex(\bl-\taufunction)} }{\complex=-\pole} 
 , \  \
 \forall \bl\in (0,\taufunction)
 .
\end{array}
\end{equation}

If otherwise~$\Polef{\LSTWtfunction}$ is infinite, then  
the clockwise integral along~$\contoura{-\radius}$ cannot be computed directly by the residue theorem, which would issue an infinite sum. This difficulty can nevertheless be overcome whenever~$\LSTWtfunction$ rewrites as 
\begin{equation}\label{generaltrick}
\LSTWtfunction=\iLSTWtxfunction{\bl}+\fLSTWtxfunction{\bl}, 
\quad \forall \bl\in(0,\taufunction),
\end{equation} 
where~$\iLSTWtxfunction{\bl}$ and~$\fLSTWtxfunction{\bl}$ are meromorphic, 
$\cardinality{\Polef{\fLSTWtxfunction{\bl}}}$ is finite, and
\begin{equation}\notag
\nocolsep
 \begin{array}{c}
\lim\limits_{\radius\to\infty}\int\nolimits_{\arca{-\radius}} 
\fLSTWtxs{\bl}{-\complex}  \pwshiftedlaplacest{\complex}{\taufunction} \,
  \exp{\complex(\bl-\taufunction)}
 \, d\complex
 =
\lim\limits_{\radius\to\infty}\int\nolimits_{\arca{\radius}} 
\iLSTWtxs{\bl}{-\complex} \pwshiftedlaplacest{\complex}{\taufunction} 
 \, \exp{\complex(\bl-\taufunction)}
 \, d\complex=0
.
\end{array}
\end{equation}
 \obsolete{
\begin{align}
\label{jordanpiecewisegeneraltrickone}
\begin{array}{c}
\lim\nolimits_{\radius\to\infty}\int\nolimits_{\arca{\radius}} 
\iLSTWtxs{\bl}{-\complex} \pwshiftedlaplacest{\complex}{\taufunction} 
 \, \exp{\complex(\bl-\taufunction)}
 \, d\complex=0, 
\end{array}
\\
\label{jordanpiecewisegeneraltricktwo}
\begin{array}{c}
\lim\nolimits_{\radius\to\infty}\int\nolimits_{\arca{-\radius}} 
\fLSTWtxs{\bl}{-\complex}  \pwshiftedlaplacest{\complex}{\taufunction} 
 \, \exp{\complex(\bl-\taufunction)}
 \, d\complex=0. 
\end{array}
\end{align}
}%
Then, if 
we choose $\anchor \in\ROC{\Blaplacetransformf{\extdpurevaluefunction}}$
%
and  consider the pole sets 
$\iPolex{\bl} = \{\pole\in \Polef{\iLSTWtxs{\bl}{-\cdot} \, \pwshiftedlaplacest{\cdot}{\taufunction}},\realpart{\pole}<\anchor\}$ 
and 
$\fPolex{\bl}=\{ \pole\in\Polef{\fLSTWtxs{\bl}{-\cdot} \, \pwshiftedlaplacest{\cdot}{\taufunction}},\realpart{\pole}>\anchor \}$,
both finite in cardinality, 
we find  
\begin{equation} \label{solutionINP}
\nocolsep  
\begin{array}{rl}
\rightderivative\purevaluex{\bl}
= \hspace{1mm} &
\frac{\ar}{1-\load}
\sum_{\pole\in\Polef{\laplacetransformf{\costifunction{0}}}}\bigresiduefx{
 \LSTWts{-\complex} 
\laplacetransformfs{\costifunction{0}}{\complex} \, e^{\complex\bl}}{\complex=\pole}  
 \hspace{1cm} \\  
 +&
\frac{\ar}{1-\load} \sum_{\pole\in
\iPolex{\bl}} \bigresiduefx{   \iLSTWtxs{\bl}{-\complex} \, \pwshiftedlaplacest{\complex}{\taufunction} \, e^{\complex(\bl-\taufunction)} }{\complex=\pole} 
 \hspace{1cm} \\  
 -&
\frac{\ar}{1-\load}\sum_{\pole\in\fPolex{\bl}}\bigresiduefx{ \fLSTWtxs{\bl}{-\complex} \pwshiftedlaplacest{\complex}{\taufunction} \, e^{\complex(\bl-\taufunction)} }{\complex=\pole} 
 , \quad
 \forall \bl\in (0,\taufunction)
 .
\end{array}
\end{equation}

As we see below, the decomposition proposed in~\eqref{generaltrick} is relevant in particular in the case of discrete service time distributions.

\paragraph{Discrete service time distributions.}\label{section:discretecase}
Consider the M/D/1 queue, where all the jobs have equal service time~$\st$. In this scenario, $\LSTWtfunction$ is given by~\eqref{PKformula} with $\LSTstochsts{\complex}=\exp{-\complex\st}$, 
\obsolete{
\begin{equation}
\label{cstPKformula}
\textstyle
\LSTWts{\complex} 
=
 \frac{(1-\ar\st )\complex} {\complex- \ar (1-{\exp{-\complex\st}}) },
\end{equation}
}%
and rewrites as 
\begin{equation}\label{trick}
\begin{array}{ll}
 \LSTWts{\complex} = [\pwfactors{-\complex}]^m \LSTWts{\complex} +\frac{(1-\ar\st)\complex }{\complex- \ar} \sum\nolimits_{k=0}^{m-1}   [\pwfactors{-\complex}]^k
 , & \quad  \forall m\in\Naturalpos,
\end{array}
\end{equation}
where 
$\pwfactors{\complex}
\defeq 
[{\ar}/({\complex+\ar})]\,\exp{\complex\st}$.
It can be seen that~\eqref{generaltrick} holds 
if
\begin{equation}\label{generaltrickdiscrete}\notag
\textstyle
\iLSTWtxs{\bl}{\complex} = [\pwfactors{-\complex}]^ {\trickx{\bl}}  \LSTWts{\complex}
,\quad
\fLSTWtxs{\bl}{\complex} = \frac{\complex(1-\load) }{\complex-\ar} \sum\nolimits_{k=0}^{\trickx{\bl}-1}   [\pwfactors{-\complex}]^k
,
\end{equation}
with
$
 \trickx{\bl}=\Ceil{({\taufunction-\bl})/{\st}}
$.

\paragraph{Degenerate cases.}\label{section:degeneratecase}
The decomposition scheme~(\ref{trick}) is not possible for all discrete service time distributions.
Consider for instance the geometric service time distribution  
$
\distriXx{\stochst}{\bl}= (\exp{\stgrate}-1) \sum\nolimits_{k=1}^{\infty} \exp{- k\stgrate} \stepx{\bl-k\st}
$ 
for $
\bl\in\Realpluszero
$, 
where $\st>0$ and $\ar<(1-\exp{-\stgrate})/\st$.
We have
$\expectation{ \stochst} =
\st/(1-\exp{-\stgrate})$, 
%
$\LSTstochsts{\complex}
= (\exp{\stgrate}-1)/(\exp{\stgrate +\complex\st}-1) $,
and~$\LSTWtfunction$ degenerates into
\begin{equation} \label{LSTWtsinp}
\LSTWts{\complex}  
=
\frac{  \fraccst   (\exp{\stgrate +\complex\st}-1) \complex}{(\complex- \ar ) \exp{\stgrate+\complex\st} -\complex+\ar \exp{\stgrate} }
=
 \fraccst  -\complex(\exp{\stgrate +\complex\st}-1)   \, \degfunctions{\complex} 
,
\end{equation} 
where $\fraccst= ({1-\exp{-\stgrate}-\ar\st })/({1-\exp{-\stgrate}})$ and $\degfunctions{\complex} = [(\ar-\complex ) \exp{\stgrate+\complex\st} +\complex-\ar \exp{\stgrate} ]^{-1} $. Although~$\degfunctions{\complex}$ decreases like $\magnitude{\radius^{-1}}$ as $\modulus{\complex}\to\infty$ (i.e., not fast enough for counterclockwise integration along~$\contoura{\radius}$), it decomposes as follows:
 \begin{equation}\label{degfunctionstwo}
 \textstyle
\degfunctions{\complex} 
=
 [\degpwfactors{-\complex}]^{m } \degfunctions{\complex}   +   \sum\nolimits_{k=1}^{m} [\degpwfactors{-\complex}]^{k} /({\ar \exp{\stgrate}-\complex}),
\quad
 (m=1,2,\dots),
\end{equation}
where 
$
\textstyle
\degpwfactors{\complex}= 
 ({ \complex+\ar \exp{\stgrate}})/{(\complex+ \ar ) } \, \exp{\complex\st-\stgrate} 
$ 
is~$\magnitude{ \exp{\realpart{\complex}\st} }$ with just one pole at~$-\ar$.
%
By distributing~\eqref{LSTWtsinp} and  using~\eqref{degfunctionstwo} twice with parameters~$m+1$ and~$m$, we find, after computations,
that~\eqref{generaltrick} holds if we set
\obsolete{
\begin{equation}\label{LSTWtsinptwo}
\begin{array}{rcl}
\LSTWts{\complex}  
\obsolete{
&=&
\fraccst  \exp{\stgrate +\complex\st}  -\complex\{ [\degpwfactors{-\complex}]^{m+1 } \degfunctions{-\complex}   +  \frac{1}{-\complex+\ar \exp{\stgrate}} \sum\nolimits_{k=1}^{m+1} [\degpwfactors{-\complex}]^{k} \}
\\&&\hspace{30mm}
 -  \fraccst  -\complex \{ [\degpwfactors{-\complex}]^{m } \degfunctions{-\complex}   +  \frac{1}{-\complex+\ar \exp{\stgrate}} \sum\nolimits_{k=1}^{m} [\degpwfactors{-\complex}]^{k} \}
}
&=&
 \frac{  \complex - \fraccst }{\complex-\ar} 
 +  \ar  \fraccst \frac{ (\exp{\stgrate}-1 )  -\complex }{(\complex-\ar) (\complex-\ar \exp{\stgrate})} \sum\nolimits_{k=1}^{m}    \big(\frac{ \complex-\ar \exp{\stgrate}}{\complex-\ar} \big)^{k} \exp{-k(\complex\st+\stgrate)} 
 \\&&\hspace{20mm}
+ \ar  \fraccst  \frac{( \exp{\stgrate}-1  )( \ar \exp{\stgrate}-\complex)^m}{(\ar-\complex)^{m+1}  }  \big(\frac{  \complex}{ (\complex- \ar )   \exp{\stgrate+\complex\st} -\complex+\ar \exp{\stgrate} } \big)\exp{-m(\complex\st+\stgrate)}
,
\end{array}
\end{equation} 
where the first two terms contain a finite number of poles, while the third term has only one candidate pole for counterclockwise integration: $\ar$, with degree $m+1$. 
It follows that~\eqref{generaltrick} holds  for $\bl\in(0,\taufunction)$ if we set
}%
\begin{equation}\label{generaltrickdegenerate}\notag
\begin{array}{rcl}
\iLSTWtxs{\bl}{\complex} 
&=&
\ar  \fraccst  \frac{\complex( \exp{\stgrate}-1  )( \ar \exp{\stgrate}-\complex)^{\trickix{l}{\bl}-1}}{(\ar-\complex)^{\trickx{\bl}}\, [ \complex- \ar -(\complex-\ar \exp{\stgrate}) \exp{-(\complex\st+\stgrate)} ]  }    \, \exp{-\trickx{\bl}(\complex\st+\stgrate)}
, 
\\ 
\fLSTWtxs{\bl}{\complex} 
&=&
\frac{\complex-\fraccst}{\complex-\ar} 
 +  \ar  \fraccst \frac{ (\exp{\stgrate}-1 )  -\complex}{(\complex-\ar) (\complex-\ar\exp{\stgrate})} \sum\nolimits_{k=1}^{\trickx{\bl}-1}    \big(\frac{\complex-\ar \exp{\stgrate}}{\complex-\ar} \big)^{k} \exp{-k(\complex\st+\stgrate)} 
 ,
\end{array}
\end{equation}
with 
$
 \trickx{\bl}=\ceil{({\taufunction-\bl})/{\st}}
$.

  See Example~\ref{example:piecewisecst} in the appendix for a step-by-step derivation of the \acl{wfunction} in the case of jobs with identical service times.


\section{Value function approximations}
\label{section:approximatevaluefunctions}
 
In the absence of exact expressions for the value functions, the \ac{FPI} step can still be carried out based on value function bounds. Suppose that lower and upper bounds, $\lcostfunction$ and~$\ucostfunction$, are available for~$\costfunction$ with explicitly computable \aclp{wfunction}, denoted by $\lpurevaluefunction$ and~$\upurevaluefunction$, respectively. 
%
Using the interval arithmetic notation\footnotemark{}, we write
 $\costfunction\in\intervx{\costfunction}\equiv[\lcostfunction,\ucostfunction]$ and, 
 by linearity  of the map $\costfunction\mapsto\purevaluefunction$,  
 we find in $\intervx{\purevaluefunction}\equiv[\lpurevaluefunction,\upurevaluefunction]$ a bounding interval for the \acl{wfunction}, while~\eqref{genericmeancosttwo} provides the bounds $\intervx{\meancost}\equiv[\lmeancost,\umeancost]$  for the mean cost~$\meancost$.
 
In the $\nbservers$-server system of Figure~\ref{figure:dispatcher} with arrival rates $\ari{1},\dots,\ari{\nbservers}$ and cost functions 
bounded by~${\intervx{\costifunction{1}}},\dots, {\intervx{\costifunction{\nbservers}}}$, the admission cost~\eqref{admissioncost} inherits the bounds
\begin{equation}\label{admissioncostbounds}\tag{[AC]}
\obsolete{
\admissionfoperator{\ar}{\intervx{\costfunction}}\vect{\bl,\st}
=
\intervx{\purevaluefunction}\vect{\bl+\st}
-
\intervx{\purevaluefunction}\vect{\bl}
-\left(\frac{\ar\st}{1-\load} \right)
\intervx{\meancost}
,
& \quad
\forall \vect{\bl,\st} \in \Realpluszero \times \Realpluszero,
}
\intervx{\admissionicoperator{\server}{\intervx{\costifunction{\server}}}}\vect{\bl,\st}
=  
\intervx{\purevalueifunction{\server}}\vect{\bli{\server}+\sti{\server}}-\intervx{\purevalueifunction{\server}}\vect{\bli{\server}} -\left(\frac{\ari{\server}\intervx{\meancosti{\server}}}{1-\loadi{\server}}\right) \sti{\server} 
,
%
\end{equation}
where ${\intervx{\purevalueifunction{1}}},\dots, {\intervx{\purevalueifunction{\nbservers}}}$ and ${\intervx{\meancosti{1}}},\dots, {\intervx{\meancosti{\nbservers}}}$ are the corresponding interval bounds for the \acl{wfunction} and mean costs.
The \ac{FPI} decision at state~$\vect{\bl,\st}$  can be made in favor of a server $i\in\{1,\dots,\nbservers\}$ iff
\begin{equation}\label{decision}\tag{D}
\begin{array}{ll}
\intervx{\admissionicoperator{\server}{\intervx{\costifunction{\server}}}}\vect{\bl,\st}
\leq
\intervx{\admissionicoperator{\altserver}{\intervx{\costifunction{\altserver}}}}\vect{\bl,\st}
,& \quad
\forall 
\altserver\neq\server
.
\end{array}\end{equation}
If otherwise no server satisfies~\eqref{decision}, the precision of the interval bounds for the cost functions   must be improved  until a decision can be made.
\footnotetext{%
\begin{inlinenotationwithtitle}{Interval arithmetic}{notation:intervalarithmetic}  
We use~$\intervx{x}\equiv[x_1,x_2]$ to represent an interval on~$\Real$. We write~$\intervx{x}\in\intervx{\Real}$ where $\intervx{\Real}=\{[x_1,x_2]\setst x_1\leq x_2;\, x_1,x_2\in\Real\}$, $a\in\intervx{x} $ iff $x_1\leq a \leq x_2$,  $\modulus{\intervx{x}}=x_2-x_1$, and  $-\intervx{x}
=
[-x_2,-x_1]
$. For~$\intervx{x},\intervx{y}\in\intervx{\Real}$ we have  $\intervx{x}+\intervx{y}
=
[x_1+y_1,x_2+y_2]
$,  $\intervx{x}<\intervx{y}$ iff 
$x_2<y_1$,
and $\intervx{x}\leq\intervx{y}$, $\intervx{x}>\intervx{y}$ and $\intervx{x}\geq\intervx{y}$ are defined similarly.
\end{inlinenotationwithtitle}
}%

In the rest of this section we discuss various cost approximation schemes.


\subsection{Analytic cost functions and Taylor series} \label{section:entirecostfunctions}
Due to the availability of explicit value functions for the type $\costx{\bl}=\bl^nn$, Taylor/Maclaurin series have been cited as natural candidates for the approximation of analytic cost functions, \cite{hyytia-peva-2014}.  
Let~$\costfunction$ be  an infinitely smooth real function on~$\Realpluszero$ with $k$-th derivative~$\dncostfunction{k}$. For~$\nn\in\Natural$, consider an interval~$\intervx{\remaindercostnfunction{\nn}}$ such that $\costfunction \in \polycostnfunction{\nn} + \intervx{\remaindercostnfunction{\nn}} $, where
$
\polycostnx{\nn}{\bl} = \sum\nolimits_{k=0}^{\nn} 
\, \dncostx{k}{0}
{\bl^k}/{\factorial{k}}
$
is the Taylor polynomial of order~$n$. 
If~$\polypurevaluenfunction{\nn}$ denotes the \acl{wfunction} associated with~$\polycostnfunction{\nn}$, and~$\intervx{\complexboundpurevaluenfunction{\nn}}$ is a bounding interval covering the \aclp{wfunction} for all cost functions comprised in~$\intervx{\remaindercostnfunction{\nn}}$, then using Table~\ref{table:valuefunctions} we find 
\begin{equation} \label{polypurevaluenx}\notag
\textstyle
\polypurevaluenx{\nn}{\bl}
= 
\frac{\ar}{1-\load} \sum\nolimits_{k=0}^{\nn} \{ \sum\nolimits_{t=0}^{\nn-k} \specialmatrixnk{t+1}{t} \, \dncostx{k+t}{0} \} \frac{ \bl^{k+1}}{\factorial{(k+1)}} , 
\end{equation}
and, by linearity, $\purevaluefunction\in \intervx{\purevaluenfunction{\nn}}$, where $\intervx{\purevaluenfunction{\nn}} =
\polypurevaluenfunction{\nn}+ \intervx{\complexboundpurevaluenfunction{\nn}}$.

If~$\costfunction$ is analytic, then $\intervx{\remaindercostnfunction{\nn}}$ vanishes pointwise near $\bl=0$ as $\nn\to\infty$, and our hopes are that the remainder $ \intervx{\complexboundpurevaluenfunction{\nn}}$ will become small as well, with~$\intervx{\purevaluenfunction{\nn}}$ converging towards~$\purevaluefunction$ in some sense.
The next result, however, claims that a cost function~
$\costfunction$ given as a convergent Taylor series only yields a convergent sequence of \aclp{wfunction} if~$\costfunction$ is entire (i.e., its Taylor series converges everywhere)  with order of growth\footnotemark{} less than the exponential type~$\modulus{\dominantpoleX{\Wt}}$, 
\footnotetext{Recall that the \emph{order of growth}  of an entire function~$\genericfunction$~\cite{levin96}, defined by $\eorder = \limsup_{\radius\to\infty}\ln\ln{\normi{\genericfunction}{\infty,\radius}}/\ln \radius$, where $\normi{\genericfunction}{\infty,\radius} = \sup_{\complex}\{\modulus{\genericx{\complex}}\setst\modulus{\complex}<\radius\}$, is the infimum of all~$m$ such that $\genericx{\complex}=\magnitude{\exponential(\modulus{\complex}^m)}$, while the \emph{type} of~$\genericfunction$ is defined by $\etype = \limsup_{\radius\to\infty}\ln{\normi{\genericfunction}{\infty,\radius}}/ \radius^\eorder$. If $\eorder=1$, then~$\genericfunction$ is said to be of exponential type~$\etype$.}%
whereas any function~$\costfunction$ falling outside this restrictive category is expected to produce a divergent sequence for~$\purevaluefunction$. A proof of  Theorem~\ref{theorem:convergence} is given in  Appendix~\ref{appendix:complement}.
%
%
%
\begin{theorem}[Taylor series for~$\purevaluefunction$]\label{theorem:convergence}
Let~$\costfunction$
satisfy Assumption~\ref{assumption:complexintegrablecost} and be 
entire with order of growth~$\eorder$ and type~$\etype$, so that
\begin{align}
 \label{entirecost}
&
\costx{\bl} = \sum\nolimits_{\nn=0}^{\infty} 
 [\dncostx{\nn}{\altbl}/{\factorial{\nn}}]\, {(\bl-\altbl)^\nn}
,  &
\forall \bl,\altbl \in\Realpluszero.
\end{align}
For $k\in \Natural$, let $\valuederivativek{k}=\lim_{\nn\to\infty} \valuederivativenk{\nn}{k}$, where
\begin{align}
\label{valuederivativek}
&
\valuederivativenk{\nn}{k}
=[{\ar}/({1-\load})]\, 
 \sum\nolimits_{q=0}^{\nn-k} \specialmatrixnk{q+1}{q} \, \dncostx{k+q}{0}
 ,&
 (\nn\in\Natural),
\obsolete{
\valuederivativek{\nn}=\frac{\ar}{1-\load} \sum\nolimits_{q=0}^{\infty} \specialmatrixnk{}{q}
\, \dncostx{\nn+q}{0}
,\quad (\nn\in\Natural),
}
\end{align}
   and define the functions~$\sumsumfunction$ and~$\sumfunction$   as
   \begin{subequations}
\begin{align}
 & 
\label{sumsumx}
\sumsumx{\bl}
 =
%
 \int\nolimits_{0}^{\bl} \sum\nolimits_{k=0}^{\infty} \frac{\valuederivativek{k}}{\factorial{k}}\, { \xi^{k}} \, d\xi ,
&\\
\label{sumx}&
\sumx{\bl}
 =
\frac{\ar}{1-\load}   \int\nolimits_{0}^{\bl}  \sum\nolimits_{q=0}^{\infty}\specialmatrixnk{}{q} 
\, \dncostx{q}{\xi}
\, d\xi  .
&
\end{align}
   \end{subequations}
\begin{enumerate}[(i)]
\item \label{convergence:yes}
If either~$\eorder<1$ or~$\eorder=1$ and~$\etype<\modulus{\dominantpoleX{\Wt}}$, 
 then the coefficients~$\valuederivativek{\nn}$ are finite for all~$\nn$, \eqref{sumsumx} and~\eqref{sumx} converge on~$\Realpluszero$, and~$\sumsumfunction=\sumfunction=\purevaluefunction$.
\item \label{convergence:no}
If either~$\eorder>1$ or~$\eorder=1$ and~$\etype>\modulus{\dominantpoleX{\Wt}}$, 
 then~\eqref{valuederivativek}  diverges for all~$k$. 
%
\end{enumerate} 
\end{theorem}
Equation~\eqref{sumsumx} 
 is the Taylor series (in convergence conditions) of~$\purevaluefunction$ at $\bl=0$. The coefficients of the series are given by~$\{\valuederivativek{k}\}$,  the sequence of the successive derivatives of~$\dpurevaluex{\bl}$ at~$0$, obtained by cross-correlation between the sequence~$\{\dncostx{k}{0}\}$ of the  derivatives of~$\costfunction$ at $\bl=0$ and~$\{\specialmatrixnk{}{k}\}$, the germ of~$\LSTWtfunction$ at the origin, given in~\eqref{table:specialmatrixnk}. Equation~\eqref{sumsumx} may be understood as an extension of~\eqref{valuederivativek} to  $\bl\geq 0$, in the sense that $\dpurevaluex{\bl}$ is computed directly by cross-correlation of the cost derivatives at~$\bl$ with the the germ of~$\LSTWtfunction$ at~$0$.

%
%
\begin{figure}
 \centering
\hspace{-7mm}
\subfigure[$\sing=0.5\, (\strate-\ar)$]{
        \centering
\includegraphics[scale=1.0]{\setpathfigs%
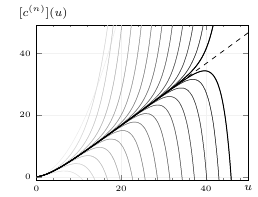}
        } 
\hspace{-7mm}
    \subfigure[$\sing=0.8\, (\strate-\ar)$]{
        \centering
\includegraphics[scale=1.0]{\setpathfigs%
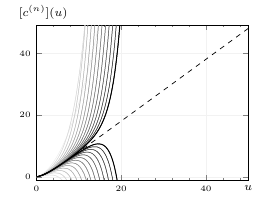}
        } 
\hspace{-7mm}
    \subfigure[ $\sing=1.0\, (\strate-\ar)$]{
        \centering
\includegraphics[scale=1.0]{\setpathfigs%
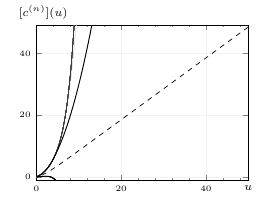}
        }
\hspace{-3mm}
\caption{(Example~\ref{example:exponentialtwo}) Taylor series for the \ac{wfunction} with exponential service times 
($\stochst\sim\exponentialdistri{\strate}$, with~$\strate=2\ar$), 
and cost function~$\costx{\bl}=1-\exp{-\sing \bl}$ of exponential type $\modulus{\sing}$ : $\purevaluefunction$ 
(dashed line) 
and~$
\intervx{\purevaluenfunction{\nn}}
$ for $\nn=1,\dots,25$.
%
The series only converges if $\modulus{\sing}<\modulus{\dominantpoleX{\Wt}}=\strate-\ar$.
\label{figure:expexample}}
\end{figure}

The message of Theorem~\ref{theorem:convergence} is illustrated by Figure~\ref{figure:expexample}, which exposes through an elementary problem the hazards of processing cost functions as Taylor series. See Example~\ref{example:exponentialtwo} in the appendix for computational details.
%

\paragraph{System interpretation.}
%
For $\nn\in\Natural$, let $\costderivativevectorn{\nn}=\vect{\costderivativevectornk{\nn}{0},
\dots, \costderivativevectornk{\nn}{\nn-1}   }$, with~$\costderivativevectornk{\nn}{k} ={\dncostx{k}{0}}$, 
and  
$\valuederivativevectorn{\nn}=\vect{\valuederivativenk{\nn}{0},
\dots,\valuederivativenk{\nn}{\nn-1}}$,
where~$\valuederivativenk{\nn}{k}$ is defined by~\eqref{valuederivativek}. 
%
Using the matrix inversion lemma, one shows that the Toeplitz, upper triangular matrices
\begin{equation}\label{specialmatricesn}\notag \textstyle
\uppertriangularn{\nn}=
\scalebox{1.0}{$
\left(
\begin{array}{ccccc}
\powerzeroXcoefnk{1} & \powerzeroXcoefnk{2} &\powerzeroXcoefnk{3}&\cdots & \powerzeroXcoefnk{\nn}
\\ 
0 & \powerzeroXcoefnk{1} &\powerzeroXcoefnk{2} & \hdots & \powerzeroXcoefnk{\nn-1}
\\
0 & 0& \powerzeroXcoefnk{1} &\hdots&\powerzeroXcoefnk{\nn-2}
\\
\vdots&\vdots&\vdots&\ddots&\vdots
\\
0 & 0& 0&\hdots&\powerzeroXcoefnk{1}
\end{array}
\right)
$}
,
\qquad
\specialmatrixn{\nn}=\frac{\ar}{1-\load}
\scalebox{1.0}{$
\left(
\begin{array}{ccccc}
\specialmatrixnk{\nn+1}{0} & \specialmatrixnk{\nn+1}{1} &\specialmatrixnk{\nn+1}{2}&\cdots & \specialmatrixnk{\nn+1}{\nn-1}
\\
0 & \specialmatrixnk{\nn+1}{0} &\specialmatrixnk{\nn+1}{1} & \hdots & \specialmatrixnk{\nn+1}{\nn-2}
\\
0 & 0& \specialmatrixnk{\nn+1}{0} &\hdots&\specialmatrixnk{\nn+1}{\nn-3}
\\
\vdots&\vdots&\vdots&\ddots&\vdots
\\
0 & 0& 0&\hdots&\specialmatrixnk{\nn+1}{0}
\end{array}
\right)
$}
.
\end{equation}
satisfy $
\specialmatrixn{\nn}({\identityn{\nn}}/{\ar}-\uppertriangularn{\nn}) = \identityn{\nn}
$ for all~$\nn$, 
where~$\identityn{\nn}$ denotes the identity matrix. 
Besides, \eqref{valuederivativek} rewrites in matrix form as
$
\valuederivativevectorn{\nn}=\specialmatrixn{\nn} 
\costderivativevectorn{\nn}
$. 
As $\nn\to\infty$, $\{\valuederivativevectornk{\nn}{t}\}$ 
becomes the output (at nonnegative times) of the cross-correlation of~$\{\costderivativevectornk{\nn}{t}\}$  with the sequence defined by $\impulsex{t}= {\ar}/({1-\load})\specialmatrixnk{}{t} $  for $t\geq 0$, 
thus giving us  an interpretation for analytic functions 
\showhideproposition{%
of   Proposition~\ref{proposition:identities}\eqref{proposition:vfdifferentiability}, where~$\dpurevaluefunction$ was obtained by cross-correlation of~$\costx{\bl}$ with $\ar/(1-\load) \ddistriXx{\Wt}{\bl}$. 
}{%
of~\cite[Proposition~1]{hyytia-vgf-itc-2017}, where~$\dpurevaluefunction$ is obtained by cross-correlation of~$\costx{\bl}$ with $\ar/(1-\load) \ddistriXx{\Wt}{\bl}$. 
}%
%
\obsolete{
In the time-domain, we get, for $\nn\in\Natural$,
\begin{align}
\label{forwardconvolution}\begin{array}{ll}
\valuederivativek{k} = \frac{\ar}{1-\load} \sum\nolimits_{t=k}^{\nn-1} \specialmatrixnk{t+1}{t-k} \costderivativevectornk{\nn}{t},
\end{array}
& k=0,1,\dots,\nn-1,
\\
\label{backwardconvolution}\begin{array}{ll}
\costderivativevectornk{\nn}{k} =  \frac{\valuederivativek{k}}{\ar} - \sum\nolimits_{t=k}^{\nn-1}  \frac{\expectation{\stochst^{t+1-k}}}{\factorial{(t+1-k)}} \valuederivativek{t},
\end{array}
& k=0,1,\dots,\nn-1,
\end{align}
}%
Similarly, (\ref{sumx}) expresses~$\dpurevaluex{\bl}$ as the cross-correlation of the sequence of derivatives of~$\costfunction$ at~$\bl$ with the sequence $\impulsex{t}$.

From our observation follows that the Z-transforms of the sequences satisfy
\begin{equation}\label{convolution}
\Ztransformfz{\valuederivativevectorn{\infty}}{\zcomplex}
=
\Ztransformfz{\impulsefunction}{{1}/{\zcomplex}}\,
\Ztransformfz{\costderivativevectorn{\infty}}{\zcomplex}
,
\end{equation}
where 
$
\Ztransformfz{\genericsequencefunction}{\zcomplex}
=
\sum\nolimits_{k=0}^{\infty} \genericsequence{k} \zcomplex^{-k}$ denotes the Z-transform of a sequence~$\genericsequence{t}$.
\obsolete{
$\Ztransformfz{\costderivativevectorn{\infty}}{\zcomplex}
=
\sum\nolimits_{k=0}^{\infty} \costderivativevectornk{\infty}{k} \zcomplex^{-k}$,
$\Ztransformfz{\impulsefunction}{\zcomplex}
=
{\ar}/({1-\load})\sum\nolimits_{k=0}^{\infty} \specialmatrixnk{}{k} \zcomplex^{-k}$, 
$\Ztransformfz{\valuederivativevectorn{\infty}}{\zcomplex} 
=
\sum\nolimits_{k=0}^{\infty} \valuederivativevectornk{\infty}{k} \zcomplex^{-k}$.
}%
The vector~$\valuederivativevectorn{\infty}$ can be recovered\footnotemark{}  from~\eqref{convolution} by inverse Z-transform provided that the regions of convergence of~$\Ztransformfz{\costderivativevectorn{\infty}}{\zcomplex}$ and~$\Ztransformfz{\impulsefunction}{{1}/{\zcomplex}}$ intersect on a non-empty circular band|
this condition is to be linked to those of Theorem~\ref{theorem:convergence}\ref{convergence:yes}.
\obsolete{
Without known expression for~$\Ztransformf{\impulsefunction}$, the coefficients~$\{\valuederivativek{\nn}\}$ can be derived directly from~\eqref{valuederivativek}.
}%
\obsolete{
Conversely, $
\Ztransformfz{\costderivativevectorn{\infty}}{\zcomplex}
=
(\Ztransformfz{\impulsefunction}{{1}/{\zcomplex}})^{-1}
\Ztransformfz{\valuederivativevectorn{\infty}}{\zcomplex}
$, where $(\Ztransformfz{\impulsefunction}{\zcomplex})^{-1}$
is the Z-transform of $\dirack{t}/\ar - \expectation{\stochst^{t+1}}/\factorial{(t+1)} $. 
}

\footnotetext{%
Conversely, inverting~$\specialmatrixn{\nn}$ yields $\costderivativevectorn{\nn}=({\identityn{\nn}}/{\ar}-\uppertriangularn{\nn})
\valuederivativevectorn{\nn}$ and, as $\nn\to\infty$, 
\begin{equation}
 \label{costderivativek}
 \textstyle
\costderivativevectornk{\infty}{k} =   {\valuederivativek{k}}/{\ar} - \sum\nolimits_{t=0}^\infty
[{\valuederivativek{t+k}}/{\factorial{(t+1)}}]\, \expectation{\stochst^{t+1}}
,
\end{equation}
which provides us with a converse for Theorem~\ref{theorem:convergence}, where the source cost function of a given  \acl{wfunction} with germ~$\{\valuederivativevectornk{\nn}{k}\}$ can be recovered from~$\purevaluefunction$ through~\eqref{costderivativek},
on the condition that~$\purevaluefunction$ grows slower than the exponential type~$\modulus{\dominantpoleX{\stochst}}$|where~$\dominantpoleX{\stochst}$  denotes the dominant pole of~$\stochst$|, in which case $({\identityn{\nn}}/{\ar}-\uppertriangularn{\nn})\,\valuederivativevectorn{\nn}$  converges as~$\nn\to\infty$.
%
Similarly, $
\Ztransformfz{\costderivativevectorn{\infty}}{\zcomplex}
=
(\Ztransformfz{\impulsefunction}{{1}/{\zcomplex}})^{-1}
\Ztransformfz{\valuederivativevectorn{\infty}}{\zcomplex}
$, where $(\Ztransformfz{\impulsefunction}{\zcomplex})^{-1}$
is the Z-transform of $\dirack{t}/\ar - \expectation{\stochst^{t+1}}/\factorial{(t+1)} $. 
}%

The conclusions of Theorem~\ref{theorem:convergence} lead us to consider, in the rest of Section~\ref{section:approximatevaluefunctions}, approximations schemes no longer on~$\Realpluszero$, where the growth  of the functions~$\{\bl^\nn\}$ as $\bl\to\infty$ causes divergence, but on finite intervals where the series converge safely.

\obsolete{
\subsection{Periodic cost functions} 
\label{section:periodiccostfunctions}

First suppose that the cost function has a periodic component designed, for instance, in correlation with the hour of arrival. Let
$\costfunction= 
 \periodcostfunction\, \indicatorfunction{(0,\infty)} $,
where
$\periodcostfunction:\Real\mapsto\Real$ is periodic with period~\moveperiod{$\period$}{$2\period$} and continuous.
The {Weierstrass approximation theorem} (see, e.g., \cite[Weierstrass first theorem]{korovkin60}, \cite[Theorem~1.1]{rivlin69}, \cite[Theorem~2.7]{koralov07}) claims that the periodic, continuous function~$\periodcostfunction$  can be approximated by a trigonometric sum with arbitrary precision with respect to the uniform norm $\norm{\functionfunction}=\sup\nolimits_{\bl\in\moveperiod{[-\period/2,\period/2]}{[-\period,\period]}}\modulus{\fx{\bl}}
$.
This implies
that for any~$\epsilon>0$ one can find $\order<\infty$ and a trigonometric sum~$\tsnfunction{\order}$ such that $\errn{\order} = \norm{\periodcostx{\bl}-\tsnx{\order}{\bl}}<\epsilon$.
\obsolete{
where the uniform norm of a function $\functionfunction:\Real\mapsto\Real$ with period~$\period$ is defined by
\begin{equation}\label{uniformnormperiodic}
\norm{\functionfunction}=\sup\nolimits_{\bl\in\moveperiod{[-\period/2,\period/2]}{[-\period,\period]}}{\modulus{\fx{\bl}}}.
\end{equation} 
}%
 It then holds that $\costfunction\in\intervx{\costfunction}$, where we set
\begin{align}\label{approximatedcostfunctionperiodic}
&
\intervxy{\costfunction}{\bl} =
\tsnx{\order}{\bl}+ 
[-\errn{\order},\errn{\order}] 
,
&
\forall \bl\in\Realpluszero.
\end{align}
Such a trigonometric sum 
is given the partial Fourier series
\begin{equation}\label{partialfourierseries}
\tsnx{\order}{\bl}=\sum\nolimits_{k=-\order}^{\order} \fouriercoefk{k}\, \exp{\i\frac{ \moveperiod{2}{} k \pi \bl }{\period}} ,
\end{equation}
where the Fourier coefficients $ \fouriercoefk{k}$ satisfy
\begin{align}\label{fouriercoefficients} 
&
\fouriercoefk{k} = \moveperiod{ \frac{1}{\period} \int\nolimits_{-\frac{\period}{2}}^{\frac{\period}{2}} \periodcostx{\bl} \exp{-\i\frac{ 2 k \pi \bl }{\period}} }{ \frac{1}{2\period} \int\nolimits_{-{\period}}^{{\period}} \periodcostx{\bl} \exp{-\i\frac{  k \pi \bl }{\period}} } \, d\bl ,
&  (k\in\Integer).
\end{align}
Since $\periodcostfunction$ is real, $\fouriercoefk{0}$ is  real and  $\fouriercoefk{-k}$ and~$\fouriercoefk{k}$ are complex conjugate for all~$k$, so that~$\costnfunction{\order}$ has~$2\order+1$ real parameters.
\obsolete{
The Fourier series~\eqref{partialfourierseries}   converges towards~$\periodcostfunction$ with rate 
$ \magnitude{\errn{\order}} = \log(\order)\,  \continuitymodulusSfx{\Real}{\periodcostfunction}{1/n}$~\cite[\S{}21]{jackson41}, where
the \emph{modulus of continuity} of a function~$f:\Real\to\Real$ on an interval~$I\subset\Real$ is defined by
\begin{equation}\label{continuitymodulus}\continuitymodulusSfx{I}{f}{\delta}=\sup\limits_{\substack{ x_1,x_2  \in I,\\ \modulus{x_1-x_2}\leq\delta}} \modulus{f(x_1)-f(x_2)}.
\end{equation}
If~$\periodcostfunction$ satisfies the $\hoeldernexponent$-H\"oldern condition $\modulus{\periodcostx{\bli{1}}-\periodcostx{\bli{2}}}\leq \hoeldernconstant  \modulus{\bli{1}-\bli{2}}^\hoeldernexponent  \ \forall \bli{1},\bli{2} \in \Real$ ($0<\hoeldernexponent \leq 1$), then  $\continuitymodulusSfx{\Real}{\periodcostfunction}{\delta}\leq  \hoeldernconstant \delta^\hoeldernexponent$, and
~\eqref{partialfourierseries}  converges uniformly. 
}%
The Fourier series~\eqref{partialfourierseries}   converges towards~$\periodcostfunction$ with rate 
$ \magnitude{\errn{\order}} = \log(\order)\,  \continuitymodulusSfx{\period}{\periodcostfunction}{\period/(n\pi)}$, where
$
\continuitymodulusSfx{\period}{\periodcostfunction}{\delta}
=
\sup \{ \modulus{\periodcostx{\bli{1}}-\periodcostx{\bli{2}}}
 :  \bli{1},\bli{2}  \in \Real, \modulus{\bli{1}-\bli{2}}\leq\delta\} 
%
$
defines the \emph{modulus of continuity} of the periodic function, \cite[\S{}21]{jackson41}. 
In particular, if for some $\hoeldernexponent\in(0,1]$ the function $\periodcostfunction$ satisfies the $\hoeldernexponent$-H\"oldern condition $\modulus{\periodcostx{\bli{1}}-\periodcostx{\bli{2}}}\leq \hoeldernconstant  \modulus{\bli{1}-\bli{2}}^\hoeldernexponent  
$ for all $\bli{1},\bli{2} \in \Real$,
then  
$\continuitymodulusSfx{\period}{\periodcostfunction}{\delta}\leq  \hoeldernconstant \delta^\hoeldernexponent$, 
and
~\eqref{partialfourierseries}  converges uniformly towards~$\periodcostfunction$. 
%
%
%
 
A faster convergence rate can be obtained by slightly modifying the Fourier coefficients in~\eqref{partialfourierseries}. For this, consider the trigonometric sum 
\begin{equation}
\label{improvedpartialfourierseries}
\itsnx{\order}{\bl}=\sum\nolimits_{k=-\order}^{\order} \itscoefnk{\order}{\modulus{k}} \, \fouriercoefk{k} \, \exp{\i\frac{  \moveperiod{2}{} k \pi \bl }{\period}} ,
\end{equation}
where $\itscoefnk{\order}{0},\dots,\itscoefnk{\order}{\order}\in\Real$. The choice of parameters proposed in~\cite[\S{}3]{korovkin60},
\obsolete{
\begin{equation}\label{itscoefnk}\begin{array}{ll}
\itscoefnk{\order}{0} =1,
&\\
\itscoefnk{\order}{1} = \cosx{\frac{\pi}{\order+2}},
&\\
\itscoefnk{\order}{k} = \frac{\sum\nolimits_{q=0}^{\order-k}\Sinx{\frac{q+1}{\order+2}\pi} \Sinx{\frac{q+k+1}{\order+2}\pi} }{ \sum\nolimits_{q=0}^{\order} \Sinsquaredx{\frac{q+1}{\order+2}\pi}   } ,
&\quad (k=2,\dots,\order),
\end{array}
\end{equation}
}%
\begin{equation}\label{itscoefnk}\textstyle
\itscoefnk{\order}{0} {\,=\,} 1,
\
\itscoefnk{\order}{1} {\,=\,} \cosx{\frac{\pi}{\order+2}},
\
\itscoefnk{\order}{k} {\,=\,} \frac{\sum\nolimits_{q=0}^{\order-k}\Sinx{\frac{q+1}{\order+2}\pi} \Sinx{\frac{q+k+1}{\order+2}\pi} }{ \sum\nolimits_{q=0}^{\order} \Sinsquaredx{\frac{q+1}{\order+2}\pi}   } 
\textup{ for } k=2,\dots,\order,
\end{equation}
grants~\eqref{improvedpartialfourierseries} the convergence rate
\begin{equation}\label{jacksontrigonometric}
\begin{array}{l}
\errn{\order} \leq 6\,\ContinuitymodulusSfx{\period}{\periodcostfunction}{\frac{\tau}{\order\pi}},
\end{array}
\end{equation}
(see \cite[\emph{first Jackson Theorem}]{korovkin60}, or~\cite[Theorem~1.3]{rivlin69}).
%
%

Inequality~\eqref{jacksontrigonometric} provides us with interval bounds for the cost function, from which we infer bounds for the value function.
\begin{proposition}[Periodic cost]\label{proposition:periodic}
Consider the server model of Section~\ref{section:valuefunction} with  cost function~$\costfunction= 
 \periodcostfunction\,  \indicatorfunction{(0,\infty)} $ 
meeting Assumption~\ref{assumption:integrablecost}, where $\periodcostfunction:\Real\mapsto\Real$ is continuous and $ \moveperiod{}{2}\period$-periodic. 
The \acl{wfunction} satisfies $\purevaluefunction\in\intervx{\purevaluefunction}$, where
\begin{equation}\begin{array}{ll}
\intervxy{\purevaluefunction}{\bl} =
[{\ar}/({1-\load})]\, \left\{\left({\fouriercoefk{0}} + 6\,\ContinuitymodulusSfx{\period}{\periodcostfunction}{{\tau}/({\order\pi})} [-1,1] \right) \bl \right. 
&
\\ \hspace{10mm}
\left.
+ [{ \moveperiod{}{2}\period}/({  k \pi})]\, \sum\nolimits_{k=1}^{\order} \itscoefnk{\order}{{k}} \,  \imaginarypart{ \fouriercoefk{-k}
\LSTWts{\i{  \moveperiod{2}{} k \pi}/{\period}} (1-\exp{-\i{ \moveperiod{2}{} k \pi\bl}/{\period}})} \right\}
,
&\quad \forall\bl\in\Realpluszero,
\end{array}
\end{equation}
with $\itscoefnk{\order}{{k}},\dots,\itscoefnk{\order}{{k}}$  given by~\eqref{itscoefnk}, and $\fouriercoefk{-\order},\dots,\fouriercoefk{0}$ by~\eqref{fouriercoefficients}.
%
\end{proposition}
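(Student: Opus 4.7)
The plan is to combine the Jackson–Korovkin trigonometric approximation with the linearity and comparison properties of the balanced value function (Corollary~\refeq{vfcorollary}), and to assemble the bounds term-by-term from Table~\refeq{table:valuefunctions}.

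First I would invoke~\eqref{jacksontrigonometric} for the trigonometric sum~$\itsnfunction{\order}$ defined in~\eqref{improvedpartialfourierseries} with the coefficients~\eqref{itscoefnk}, giving $\errn{\order}\leq 6\,\continuitymodulusSfx{\period}{\periodcostfunction}{\period/(\order\pi)}$. Because~$\itsnfunction{\order}$ is a bounded trigonometric polynomial and~$\periodcostfunction$ is continuous, both~$\periodcostfunction$ and $\periodcostfunction\pm 6\,\continuitymodulusSfx{\period}{\periodcostfunction}{\period/(\order\pi)}$ satisfy Assumption~\refeq{assumption:integrablecost}, so on~$\Realpluszero$ one has $\costx{\bl}\in\itsnx{\order}{\bl}\indicator{(0,\infty)}{\bl} + 6\,\continuitymodulusSfx{\period}{\periodcostfunction}{\period/(\order\pi)}[-1,1]$ in the pointwise interval sense, and Corollary~\refeq{vfcorollary}-(\refeq{lemma:comparison}) transfers this enclosure to~$\purevaluefunction$.

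Next I would split~$\itsnfunction{\order}$ by linearity (Corollary~\refeq{vfcorollary}-(\refeq{lemma:additivity})) into the constant~$\fouriercoefk{0}$ and the~$2\order$ complex exponentials~$\itscoefnk{\order}{|k|}\fouriercoefk{k}\exp{\i k\pi\bl/\period}$ ($k=\pm 1,\dots,\pm\order$). Writing $\exp{\i k\pi\bl/\period}=\exp{-(-\i k\pi/\period)\bl}$ with $-\i k\pi/\period\in\newSing$ (since its real part vanishes while $\dominantpoleX{\Wt}<0$ under Assumption~\refeq{assumption:stability}), the second line of Table~\refeq{table:valuefunctions} applies, yielding
\begin{equation*}
\purevalueanx{-\i k\pi/\period}{0}{\bl}=\frac{\ar}{1-\load}\LSTWts{-\i k\pi/\period}\,\frac{1-\exp{\i k\pi\bl/\period}}{-\i k\pi/\period}=\frac{\ar}{1-\load}\,\frac{\i\period}{k\pi}\,\LSTWts{-\i k\pi/\period}\bigl(1-\exp{\i k\pi\bl/\period}\bigr).
\end{equation*}
The constant term~$\fouriercoefk{0}$ and the error interval~$6\,\continuitymodulusSfx{\period}{\periodcostfunction}{\period/(\order\pi)}[-1,1]$ are handled by the first line of Table~\refeq{table:valuefunctions}, producing the linear contribution $(\fouriercoefk{0}+6\,\continuitymodulusSfx{\period}{\periodcostfunction}{\period/(\order\pi)}[-1,1])\,\ar\bl/(1-\load)$.

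The crux is then the pairing of the~$k$ and~$-k$ terms. Since~$\periodcostfunction$ is real, $\fouriercoefk{-k}=\overline{\fouriercoefk{k}}$; since~$\Wt$ is real, $\LSTWts{\i k\pi/\period}=\overline{\LSTWts{-\i k\pi/\period}}$; the factors $\itscoefnk{\order}{|k|}$, $\i\period/(k\pi)$ and $1-\exp{\i k\pi\bl/\period}$ conjugate into themselves under $k\mapsto -k$ (the last with a sign coming from the exponent swap). Hence the $(-k)$-contribution is the complex conjugate of the $k$-contribution, and their sum equals twice the real part of the $k$-term; using $\realpart{\i z}=-\imaginarypart{z}$ and then replacing the whole argument by its conjugate turns the expression into $\frac{\ar}{1-\load}\frac{2\period}{k\pi}\itscoefnk{\order}{k}\,\imaginarypart{\fouriercoefk{-k}\LSTWts{\i k\pi/\period}(1-\exp{-\i k\pi\bl/\period})}$, which is exactly the summand displayed in the proposition. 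Summation over $k=1,\dots,\order$ together with the $k=0$ and error contributions closes the argument.

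The main obstacle will be the sign/conjugation bookkeeping in collapsing the $\pm k$ pair into a real imaginary-part expression; everything else is a direct application of Table~\refeq{table:valuefunctions} and Corollary~\refeq{vfcorollary}, with the only verification needed being that $-\i k\pi/\period\in\newSing$, which is immediate from $\dominantpoleX{\Wt}<0$.
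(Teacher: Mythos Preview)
Your proposal is correct and follows essentially the same route as the paper: apply the Jackson--Korovkin bound~\eqref{jacksontrigonometric} to obtain the cost interval~\eqref{approximatedcostfunctionperiodic}, transfer it to~$\purevaluefunction$ via Corollary~\refeq{vfcorollary}, and read off the individual contributions from Table~\refeq{table:valuefunctions} for $\costx{\bl}=1$ and $\costx{\bl}=\exp{-\sing\bl}$ with $\sing=-\i k\pi/\period$. The paper's proof is terser and does not spell out the $\pm k$ conjugation pairing, which you carry out correctly.
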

\begin{proof}
The result follows by computation of the \acl{wfunction} of~\eqref{approximatedcostfunctionperiodic} with the modified trigonometric sum~\eqref{improvedpartialfourierseries} and an uniform error bound~$\errn{\order}$ satisfying~\eqref{jacksontrigonometric}. We used $\LSTWts{0}=1$, and  the results of Table~\ref{table:valuefunctions}  for $\costx{\bl}=1$, and for $\costx{\bl}=\exp{-\sing\bl}$ with $\sing=-\i\, ({  \moveperiod{2}{} k \pi}/{\period})$ and $k=-\order,\dots,-1,1,\dots,\order$.
\qed\end{proof}


}

\subsection{Continuous cost functions} 
\label{section:continuouscostfunctions}

Assume now that the cost function~$\costfunction$ is continuous\footnote{Piecewise continuous functions can be treated similarly by partitioning~$\Realplus$ into as many intervals as required by their discontinuities.}, and partition the backlog axis into an interval~$(0,\taufunction)$ 
where~$\costfunction$ is approximated precisely (in virtue of the Weierstrass approximation theorem) with respect to the uniform norm 
$
\norm{\functionfunction}=\sup\nolimits_{\bl\in[0,\taufunction]}{\modulus{\fx{\bl}}}
$
%
\obsolete{
\begin{equation}\label{uniformnorminterval}
\norm{\functionfunction}=\sup\nolimits_{\bl\in[0,\taufunction]}{\modulus{\fx{\bl}}}
\end{equation} 
}%
by a finite sum~$\polycostnfunction{\order}$ 
of degree~$\order$, 
\obsolete{ 
\begin{equation}\label{polynomial}
\begin{array}{ll}
\polycostnx{\order}{\bl}=  \sum\nolimits_{k=0}^{\order} 
\polynomialcoef{\order}{k}
 \bl^k,
 &\forall \bl\in[0,\taufunction]
 ,
 \end{array}
\end{equation}
}%
 and its complement~$(\taufunction,\infty)$, 
 where unrefined bounds in~$\Excset$ are chosen for~$\costfunction$. Bounds for the \acl{wfunction} can be inferred from the developments of Section~\ref{section:piecewisedefined}. 
  \begin{notation}[$\makepurevffunction$]\label{notation:Makepurevf}
  Given $\taufunction>0$ and $\polycostfunction,\excfunction\in\Excset$, we denote by $\makepurevf{\polycostfunction}{\excfunction}$ the  \ac{wfunction}  relative to the cost function $ \costfunction = \polycostfunction \, \indicatorfunction{(0,\taufunction)} +  \excfunction \, \indicatorfunction{(\taufunction,\infty)} $.
 \end{notation}
\begin{proposition}[Continuous cost]\label{proposition:polynomial}
Consider the server model of Section~\ref{section:valuefunction} with   a cost function~$\costfunction$ 
meeting Assumption~\ref{assumption:complexintegrablecost}, continuous on a nonempty interval~$(0,\taufunction)$, 
and such that $\costfunction\in\intervx{\costfunction}$, where 
\begin{equation}\label{approximatedcostfunction}
\begin{array}{r}
\intervx{\costfunction} =
 \{ \polycostnfunction{\order}+
[-\errn{\order},\errn{\order}] \} \, \indicatorfunction{(0,\taufunction)}
+  \intervx{\excfunction} \, \indicatorfunction{(\taufunction,\infty)}
 ,
\end{array} 
\end{equation}
in which
$ \polycostnfunction{\order}$ is a finite sum of degree~$\order\in\Natural$,
$\errn{\order}\geq 0$,
$\intervx{\excfunction}\equiv[\lexcfunction,\uexcfunction]$, 
and $\polycostnfunction{\order},\lexcfunction,\uexcfunction$ are real elements of~$\Excset$.
The \acl{wfunction} 
satisfies 
\begin{equation}\notag\begin{array}{l}
\purevaluefunction \in \left[ \makepurevf{\polycostnfunction{\order}- \errn{\order}}{\lexcfunction} ,  \makepurevf{\polycostnfunction{\order}+ \errn{\order}}{\uexcfunction} 
\right]
 ,
\end{array}
\end{equation}
where $\makepurevffunction$ (cf. Notation~\ref{notation:Makepurevf})  is computed as in Section~\ref{section:piecewisedefined}.
%
\end{proposition}
\obsolete{
 \begin{notation}[$\makepurevffunction$]\label{notation:Makepurevf}
We denote by $\Makepurevf{\pwkjin{0} ,\dots,\pwkjin{\nn}}{\excfunction}$ the \acl{wfunction} for 
\begin{equation}\textstyle
 \costx{\bl} =   \sum\nolimits_{j=0}^{\nn} \pwkjin{j}  \bl^{j} \, \indicator{(0,\taufunction)}{\bl}
  + \excx{\bl}  \, \indicator{(\taufunction,\infty)}{\bl}
,
\end{equation}
where $\taufunction>0$, $\nn\in\Natural$, $\pwkjin{1},\dots,\pwkjin{\nn}\in\Real$, and $\excfunction\in\Excset$. 
 \end{notation}
\begin{proposition}[Continuous cost]\label{proposition:polynomial}
Consider the server model of Section~\ref{section:valuefunction} with   a cost function~$\costfunction$ 
meeting Assumption~\ref{assumption:complexintegrablecost}, continuous on a nonempty interval~$(0,\taufunction)$, 
and such that $\costfunction\in\intervx{\costfunction}$, where 
\begin{equation}\label{approximatedcostfunction}
\begin{array}{r}
\intervx{\costfunction} =
 \{ \polycostnfunction{\order}+
[-\errn{\order},\errn{\order}] \} \, \indicatorfunction{(0,\taufunction)}
+  \intervx{\excfunction} \, \indicatorfunction{(\taufunction,\infty)}
 ,
\end{array} 
\end{equation}
$ \polycostnx{\order}{\bl}=  \sum\nolimits_{k=0}^{\order} 
\polynomialcoef{\order}{k}
 \bl^k$ is a polynomial of degree~$\order\in\Natural$,
 \obsolete{
 given by
\begin{equation}\label{polynomial}
\begin{array}{ll}
\polycostnx{\order}{\bl}=  \sum\nolimits_{k=0}^{\order} 
\polynomialcoef{\order}{k}
 \bl^k,
 &\quad \forall \bl\in[0,\taufunction]
 ,
 \end{array}
\end{equation}
}%
$\errn{\order}\geq 0$,
and~$\intervx{\excfunction}\equiv[\lexcfunction,\uexcfunction]:\Realpluszero\mapsto\intervx{\Real}$, with  $\lexcfunction,\uexcfunction\in\Excset$.
The \acl{wfunction} 
satisfies 
%
\begin{equation}\begin{array}{l}
\purevaluefunction \in \left[ \makepurevf{\polynomialcoef{\order}{0}   - \errn{\order},\polynomialcoef{\order}{1},\dots,\polynomialcoef{\order}{\order}}{\lexcfunction} ,  \makepurevf{\polynomialcoef{\order}{0}   + \errn{\order},\polynomialcoef{\order}{1},\dots,\polynomialcoef{\order}{\order}}{\uexcfunction} 
\right]
 ,
\end{array}
\end{equation}
\obsolete{
\begin{equation}\begin{array}{ll}
\intervxy{\purevaluefunction}{\bl} =
\frac{\ar}{1-\load}
\left\{
\left(   \translatedpolcoscoef{\order}{0}   + 
6\, \continuitymodulussIfx{\taufunction}{[0,\taufunction]}{\costfunction}{\frac{\taufunction}{2\order}} [-1,1]
 \right) \bl
\vphantom{\frac{1^1}{1}} \right. 
&
\\ \hspace{35mm}
\left.
+        \sum\nolimits_{k=1}^{\order} \factorial{k} \, \translatedpolcoscoef{\order}{k} 
  \sum\nolimits_{t=0}^{k}\specialmatrixnk{k+1}{k-t} \frac{\bl^{t+1}}{\factorial{(t+1)}}
\right\} ,
& \forall\bl\in\Realpluszero,
\end{array}
\end{equation}
}%
where $\makepurevffunction$ (cf. Notation~\ref{notation:Makepurevf})  is computed as in Section~\ref{section:piecewisedefined}.
%
\end{proposition}
}%
 \obsolete{
 Let 
\begin{equation}\label{uniformnormtwo}
\begin{array}{l}
\errn{\order}=\norm{\costfunction-\polycostnfunction{\order}} = \sup\nolimits_{\bl\in[0,\period]}{\modulus{\costx{\bl}-\polycostnx{\order}{\bl}}},
\end{array}
\end{equation}
and denote by~$\intervx{\excfunction}\equiv[\lexcfunction,\uexcfunction]\in\Realpluszero\mapsto\intervx{\Real}$ the bounding interval on~$(\taufunction,\infty)$, so that $\costx{\bl}\in[\costnx{\order}{\bl}-\errn{\order},\costnx{\order}{\bl}+\errn{\order}]$ for $\bl\in(0,\taufunction)$ and $\costx{\bl}\in\intervx{\excx{\bl}}$ for $\bl\in(\taufunction,\infty)$. The cost function thus satisfies $\costfunction\in\intervx{\costfunction}$, where we define
\begin{equation}\label{approximatedcostfunction}
\begin{array}{r}
\intervx{\costfunction} =
 \left( \costnfunction{\order}+
[-\errn{\order},\errn{\order}] \right) \, \indicatorfunction{(0,\taufunction)}
+  \intervx{\excfunction} \, \indicatorfunction{(\taufunction,\infty)}
. 
\end{array}
\end{equation}
%
}%
%

%

\noindent
The \ac{FPI} step can be implemented based on the interval bounds~\eqref{admissioncostbounds} for~$\intervx{\costfunction}$, in place of the actual admission cost~\eqref{admissioncost}, 
provided that the parameters~$\taufunction$ and~$\nn$ chosen for the servers allow for it. Otherwise, the parameter values should be refined (by increasing~$\taufunction$ and~$\nn$) until decision~\eqref{decision} can be made.  

A pseudocode for the resulting procedure
is given in Algorithm~\ref{algorithm:dispatching}, where the cost function~$\costifunction{\server}$ of each server~$\server$ is supplied with a continuum of bounding interval functions~$\intervx{\excifunction{\server}}$ such that, for every~$\taufunction>0$, $\costix{\server}{\bli{\server}}\in\intervxy{\excit{\server}{\taufunction}}{\bli{\server}}$ if $\bli{\server}>\taufunction$.
 Algorithm~\ref{algorithm:dispatching} infers the \ac{FPI} decision~$ \newpolicyx{\bl,\st} $ at any state~$\vect{\bl,\st}$ by gradually decreasing the error tolerance~$\epsilon_t$ of the admission cost bounds at each server, computed by~\eqref{approximatedcostfunction}. To guarantee the error margin~$\epsilon_t$ at a server~$\server$, the parameter~$\taui{\server}$ is first taken large enough for the  approximation error in the $\bl>\taui{\server}$ window to be less than~${\epsilon_t}/{2}$ (line~\ref{line:one}), then the sum~$\polycostnfunction{\order}$ 
 is given enough terms for the approximation error in the $0<\bl<\taui{\server}$ window to be less than~${\epsilon_t}/{2}$ (line~\ref{line:two}), so that the overall precision~$\epsilon_t$ is secured for the bounds~$\intervx{\polycostifunction{\server}}$ (line~\ref{line:three}). All servers with exceeding admission costs  will be ignored (line~\ref{line:four}) for the rest of the procedure, which resumes with a smaller margin $\epsilon_{t+1}$.
\begin{algorithmenv}[t]
    \caption{\ac{FPI} with interval value functions\protect\footnotemark
    \label{algorithm:dispatching}}
\horizontalline
\renewcommand{\makenindex}[2]{{#1}^{\scalebox{0.8}{\tiny$(#2)$}}}%
\centering 
\begin{algorithm}[H]
\def\inlineIf{\textbf{If }}
\def\inlineThen{\textbf{ then }}
\def\inlineWith{\textbf{ with }}
\def\inlineHoldsFor{\textbf{ holds for }}
\def\inlineST{\textbf{ such that }}
\def\And{\textup{\textbf{ and }}}
\SetKwFor{While}{While}{do}{fintq}%
\SetKwFor{For}{For}{do}{fintq}%
    \SetKwInOut{Input}{Input}
    \SetKwInOut{Output}{Output}
\SetKwInput{KwInit}{Initialization}
\KwData{$\{\vect{\ari{1},\costifunction{1},\intervx{\excifunction{1}}},\dots,\vect{\ari{\nbservers},\costifunction{\nbservers},\intervx{\excifunction{\nbservers}}}\}$, $\{\epsilon_t\} $ with $\epsilon_t\downarrow 0$
}
\smallskip
 \Input{
 $\vect{\bl,\st}\in\Realpluszero^\nbservers\times\Realpluszero^\nbservers$
 }
 \Output{$\decisionset\subset\{1,\dots,\nbservers\}$}
 \smallskip 
 \KwInit{$t\leftarrow0$,  $\decisionset \leftarrow \{1,\dots,\nbservers\}$,  $\intervx{\polycostifunction{\server}} \leftarrow (-\infty,\infty) $ for all $ i\in\decisionset$}
\nonl\vspace{-4pt}\rule{0.95\linewidth}{.2pt} \; \vspace{-9pt}
\nonl%
\While{ $t\leq t_\textup{max}$\And
$\modulus{\decisionset}>1$}{
 \nonl \For{$i\in\decisionset$}{
   $\taui{\server}\leftarrow \arg \inf\nolimits_{\taufunction\geq 0} \big\{   
\bigmodulus{\intervadmissionicostux{\server}{\intervx{\excit{\server}{\taufunction}} \, \indicatorfunction{(\taufunction,\infty) }}{\bl}{\st}} 
   \leq {\epsilon_t}/{2} \big\} $ 
   \;        \label{line:one} 
   $\orderi{\server} \leftarrow \arg\min\nolimits_{\order} \big\{ 
      \bigmodulus{\admissionicostux{\server}{ \indicatorfunction{(0,\taui{\server}) }}{\bl}{\st}}
    \leq {\epsilon_t}/({4\errn{\order}}) \big\}$ 
\;     \label{line:two}
$\intervx{\polycostifunction{\server}}
   \leftarrow
 \big\{ \polycostinfunction{\server}{\orderi{\server}}+
[-\errn{\orderi{\server}},\errn{\orderi{\server}}] \big\} \, \indicatorfunction{(0,\taui{\server})}
+  \intervx{\excit{\server}{\taui{\server}}} \, \indicatorfunction{(\taui{\server},\infty)} $\;\label{line:three}
}
 \nonl \For{$i\in\decisionset$}{
  \inlineIf{$\exists j\in \decisionset {\setminus}\{i\}$}\inlineST{$
\intervadmissionicostux{\server}{\intervx{\polycostifunction{\server}}}{\bl}{\st} > \intervadmissionicostux{\altserver}{\intervx{\polycostifunction{\altserver}}}{\bl}{\st}
$
  }\inlineThen{$\decisionset \leftarrow \decisionset {\setminus}\{i\} $\;  \label{line:four}
  }
  }
}
\end{algorithm}
\horizontalline
\end{algorithmenv}
\footnotetext{In Algorithm~\ref{algorithm:dispatching}, the first argument of $\admissionicostux{\server}{\genericfunction}{\cdot}{\cdot}$ (or $\intervadmissionicostux{\server}{\intervx\genericfunction}{\cdot}{\cdot}$) indicates the cost function~$\genericfunction$ (resp. the interval function~$\intervx\genericfunction$) for which the admission cost at server~$\server$ is computed.}

\obsolete{
Recalling~\eqref{admissioncost}, we find a bounding interval function~$\admissionfoperator{\ar}{\intervx{\costfunction}}$ for the admission cost of a job with service time $\st\in\Realpluszero$ for the considered queueing sytem at state $\bl\in\Realpluszero$, defined by
\begin{equation}\label{admissioncostbounds}
\begin{array}{ll}
\admissionfoperator{\ar}{\intervx{\costfunction}}\vect{\bl,\st}
=
\intervx{\purevaluefunction}\vect{\bl+\st}
-
\intervx{\purevaluefunction}\vect{\bl}
-\frac{\ar\st}{1-\load} 
\intervx{\meancost}
,
&
\forall \vect{\bl,\st} \in \Realpluszero \times \Realpluszero,
\end{array}
\end{equation}
where~$\intervx{\meancost}$ are the bounds
on the mean cost per job computed by~\eqref{genericmeancosttwo} for the bounds~\eqref{approximatedcostfunction}.
In the presence of~$\nbservers$ queueing systems with cost functions~$\costifunction{1},\dots, \costifunction{\nbservers}$ respectively approximated as above by the interval cost functions~${\intervx{\costifunction{1}}},\dots, {\intervx{\costifunction{\nbservers}}}$, 
the first-step dispatching decision can be made at state~$\vect{\bli{1},\dots,\bli{\nbservers}}$ iff there is an $i\in\{1,\dots,\nbservers\}$ such that $\admissionfoperator{\ar}{\intervx{\costifunction{i}}}\vect{\bli{i},\st}\leq\admissionfoperator{\ar}{\intervx{\costifunction{j}}}\vect{\bli{j},\st}$ for every $j\in\{1,\dots,i-1,i+1,\dots,\nbservers\}$.
Otherwise, the approximation of the cost function should be improved by increasing~$\taufunction$ and/or~$\order$ until the condition is met.
}%
%
%


 In Sections~\ref{section:bersteinpolynomials} and~\ref{section:nearoptimalpolynomials} we discuss the methods for deriving the finite sum~$\intervx{\polycostnfunction{\nn}} $ when the cost function~$\costfunction$ is continuous on any support~$[0,\taufunction]$.
\subsubsection{Bernstein polynomials}
\label{section:bersteinpolynomials}

The function~$\costfunction$ can be approximated on~$[0,\taufunction]$ by the  \emph{Bernstein polynomial}~\cite{bernstein12}
\begin{align}\label{bernsteinnx}
&
\bernsteinnx{\order}{\bl}
 = 
  \sum\nolimits_{\lind=0}^{\order}   \Bigbinomialcoef{\order}{\lind}  \, \Big(\frac{\bl}{\taufunction}\Big)^\lind \Big(1-\frac{\bl}{\taufunction}\Big)^{\order-\lind} \Costx{{\lind\taufunction}/{\order}}
 ,
  & \forall \bl\in[0,\taufunction].
  \quad
\end{align}
Notice that~\eqref{bernsteinnx} rewrites as $\bernsteinnx{\order}{\bl}
 = 
  \expectation{ \costx{{K\taufunction}/{\order}} }$, where the random variable~$K\sim \textup{B}(\order,\bl/\taufunction)$
  is distributed according to   the binomial distribution with~$\order$ trials and success probability~$\bl/\taufunction$.
  The quantity~$K/\order$ has mean~$\bl/\taufunction$ and variance~$({\bl}/{\taufunction})(1-{\bl}/{\taufunction})/\order\leq 1/(4\order)$, which vanishes uniformly on~$[0,\taufunction]$. It follows
  from continuity arguments that~$\expectation{ \costx{{K\taufunction}/{\order}} }$ converges uniformly towards~$\costx{\bl}$ on that interval, \cite[proof of Theorem~2.7]{koralov07}.
So does~\eqref{bernsteinnx}, with rate  
\begin{equation}\label{bernsteinrate}
\norm{\costfunction-\bernsteinnfunction{\order}}
\leq 
 ({3}/{2})\, \bigcontinuitymodulussIfx{\taufunction}{[0,\taufunction]}{\costfunction}{{\taufunction}/{\sqrt{\order}}} ,
\end{equation}
\cite[Theorem~1.2]{rivlin69},  where
%
\begin{equation}\label{continuitymodulusbis}\notag
\textstyle
\continuitymodulussIfx{\taufunction}{[0,\taufunction]}{\costfunction}{\delta} 
=
\sup \{ \modulus{\costx{\bli{1}}-\costx{\bli{2}}}
 :  \bli{1},\bli{2}  \in [0,\taufunction], \modulus{\bli{1}-\bli{2}}\leq\delta\} 
%
\end{equation}
defines the \emph{modulus of continuity} of~$\costfunction$ on the interval~$[0,\taufunction]$, \cite[\S{}21]{jackson41}.
%
  %
  \obsolete{
\begin{equation}\label{continuitymodulusbis}
\textstyle
\continuitymodulussIfx{\taufunction}{[0,\taufunction]}{\costfunction}{\delta} 
=
\sup\limits_{\substack{\bli{1},\bli{2}  \in [0,\taufunction],\\ \modulus{\bli{1}-\bli{2}}\leq \delta}} \modulus{\costx{\bli{1}}-\costx{\bli{2}}}
\end{array}
\end{equation}
defines the modulus of continuity of~$\costfunction$ on the interval~$[0,\taufunction]$. Note that~$\costfunction$ is uniformly continuous on the interval iff $\lim\nolimits_{\delta\downarrow 0}\continuitymodulussIfx{\taufunction}{[0,\taufunction]}{\costfunction}{\delta} =0 $.
  }%
To conform with~\eqref{approximatedcostfunction}, we rewrite~\eqref{bernsteinnx} as\footnotemark
\begin{align}\label{bernsteinnxexpanded}
&
\bernsteinnx{\order}{\bl}
= 
\sum\nolimits_{\kind=0}^{\order} 
 \bernsteincoef{\order}{\kind}
 \bl^{\kind} 
  ,
  &
  \forall \bl\in[0,\taufunction],
\end{align}
where
$
\textstyle
\bernsteincoef{\order}{\kind} 
=
(-\taufunction)^{-\kind} \binomialcoef{\order}{\kind} \sum\nolimits_{\lind=0}^{\kind} 
 \binomialcoef{\kind}{\lind}   
(-1)^{\lind}  \costx{{\lind\taufunction}/{\order}} 
$, for $\kind=0\dots,\order$. 
\footnotetext{
The coefficients~$\bernsteincoef{\order}{0},\dots,\bernsteincoef{\order}{\order}$ can be computed recursively. Indeed, one show by induction that $\bernsteincoef{\order}{\kind}=({1}/{\factorial{\kind}})\, \bernsteinderivx{\order}{\kind}{0}$, where, for $\kind=1,\dots,\order$,  
\begin{equation}\label{bernsteinderivx}
\bernsteinderivx{\order}{0}{\lind}=\costx{{\lind\taufunction}/{\order}},
\quad
\bernsteinderivx{\order}{\kind}{\lind}=[({\order-\kind+1})/{\taufunction} ] \, \big[\bernsteinderivx{\order}{\kind-1}{\lind+1}-\bernsteinderivx{\order}{\kind-1}{\lind} \big] , \quad  (\lind=0,\dots,\order-\kind
).
\end{equation}
\noeqref{bernsteinderivx}
\storecompoundcounter{equation}{bernsteinderivx}%
}%

\obsolete{
Observe that the coefficients~$\bernsteincoef{\order}{0},\dots,\bernsteincoef{\order}{\order}$ satisfy~$\bernsteincoef{\order}{k}=\bernsteinderivx{\order}{k}{0}/\factorial{k}$, where we compute, by induction\footnotemark{},  
\begin{equation}\label{bernsteinderivx}
\begin{array}{ll}
\bernsteinderivx{\order}{0}{t}=\Costx{\frac{t\taufunction}{\order}},
&(t=0,\dots,\order),
\\
\bernsteinderivx{\order}{k}{t}=\frac{(\order-k+1) \left[\bernsteinderivx{\order}{k-1}{t+1}-\bernsteinderivx{\order}{k-1}{t} \right] }{\taufunction} ,\quad&  (t=0,\dots,\order-k,\ k=1,\dots,\order).
 \end{array}
\end{equation}

\footnotetext{
Indeed, $\bernsteincoef{\order}{0}=\bernsteinderivx{\order}{0}{0}/\factorial{0}$ is immediate. Suppose  now that
\begin{equation}\label{bernsteinderivxextended}
\begin{array}{ll}
\bernsteinderivx{\order}{k}{l} = \frac{\factorial{k}}{
(-\taufunction)^{k} } \binomialcoef{\order}{k} \sum\nolimits_{t=0}^{k} 
  \binomialcoef{k}{t}   
(-1)^{t}  \Costx{\frac{t\taufunction}{\order}} 
  ,
 & (l=0,\dots,\order-k),
 \end{array}
\end{equation}
holds for $k=1,\dots,q-1$, where $1\leq q \leq \order-1$. Then, for $l=0,\dots,\order-q$,
\begin{equation}\label{inductionbernstein}
\begin{array}{l}
\bernsteinderivx{\order}{q}{l}
\refereq{\eqref{bernsteinderivx}}{=}
\frac{(\order-q+1)}{\taufunction} \left[  \bernsteinderivx{\order}{q-1}{l+1}-\bernsteinderivx{\order}{q-1}{l} \right] 
\\
\refereq{\eqref{bernsteincoef}}{=}
\frac{\order-q+1}{\taufunction} \left(\frac{\factorial{(q-1)}}{(-\taufunction)^{q-1}} \right) \binomialcoef{\ \order}{q-1}\sum\nolimits_{t=0}^{q-1} 
  \binomialcoef{q-1}{\ t} 
(-1)^{t} \left[  \Costx{\frac{(t+l+1)\taufunction}{\order}} -   \Costx{\frac{(t+l)\taufunction}{\order}} \right]
\\
=
\left( \frac{\factorial{q}}{ (-\taufunction)^{q}} \right) \frac{ (\order-q+1) }{q}   \binomialcoef{\ \order}{q-1} \left[  \sum\nolimits_{t=1}^{q}
 \binomialcoef{q-1}{t-1} 
(-1)^{t} \Costx{\frac{(t+l)\taufunction}{\order}} +  \sum\nolimits_{t=0}^{q-1} 
\binomialcoef{q-1}{\ t} 
(-1)^{t} \Costx{\frac{(t+l)\taufunction}{\order}} \right]  
\\
=
\left( \frac{\factorial{q}}{ (-\taufunction)^{q}} \right) \binomialcoef{\order}{q} \left\{   \Costx{\frac{l\taufunction}{\order}}  +  (-1)^{q} \Costx{\frac{(q+l)\taufunction}{\order}} 
+ \sum\nolimits_{t=1}^{q-1}\left[
  \binomialcoef{q-1}{t-1} 
+ 
 \binomialcoef{q-1}{\ t} 
 \right] 
 (-1)^{t} \Costx{\frac{(t+l)\taufunction}{\order}} \right\}    
\\
\obsolete{
=
 \factorial{q} (\order-q+1) (-\taufunction)^{-q} \left[   \binomialcoef{\order}{q-1} /q   \Costx{\frac{l\taufunction}{\order}} + \sum\nolimits_{t=1}^{q-1} [ \binomialcoef{\order}{t-1}  \binomialcoef{\order+1-t}{q-t}   + \binomialcoef{\order}{t}  \binomialcoef{\order-t}{q-1-t}  ]/q  (-1)^{t}\Costx{\frac{(t+l)\taufunction}{\order}} + \binomialcoef{\order}{q-1}    /q (-1)^{q} \Costx{\frac{(q+l)\taufunction}{\order}}  \right]
\\
=
 \factorial{q} (\order-q+1) (-\taufunction)^{-q}   \left[   \binomialcoef{\order}{q} /(\order-q+1)   \Costx{\frac{l\taufunction}{\order}} + \sum\nolimits_{t=1}^{q-1}  \binomialcoef{\order}{t}  \binomialcoef{\order-t}{q-t}  /(\order+(q-1))  (-1)^{t}\Costx{\frac{(t+l)\taufunction}{\order}} + \binomialcoef{\order}{q}    /(\order-q+1) (-1)^{q} \Costx{\frac{(q+l)\taufunction}{\order}}  \right]
\\
=
 \factorial{q} (-\taufunction)^{-q} \left[ \binomialcoef{\order}{0}    \binomialcoef{\order-0}{q-0}   \Costx{\frac{(l+0)\taufunction}{\order}} + \sum\nolimits_{t=1}^{q-1}  \binomialcoef{\order}{t}  \binomialcoef{\order-t}{q-t}   (-1)^{t}\Costx{\frac{(t+l)\taufunction}{\order}} + \binomialcoef{\order}{q}  \binomialcoef{\order-q}{q-q}    (-1)^{q} \Costx{\frac{(q+l)\taufunction}{\order}}  \right]
 \\
 }%
=
\frac{\factorial{q}}{(-\taufunction)^{q}}  \binomialcoef{\order}{q}  \sum\nolimits_{t=0}^{q} 
  \binomialcoef{q}{t}   
 (-1)^{t}\Costx{\frac{(t+l)\taufunction}{\order}} 
 \end{array}
\end{equation}
and~\eqref{bernsteinderivxextended} holds for~$k=q$. By induction, \eqref{bernsteinderivxextended} is true for $0\leq k \leq \order$. By setting~$l=0$ in~\eqref{bernsteinderivxextended}, we infer from~\eqref{bernsteincoef} that 
 $\bernsteincoef{\order}{k}=\bernsteinderivx{\order}{k}{0}/\factorial{k}$ for~$0\leq k \leq \order$.
}%
}%

From~\eqref{bernsteinrate} and~\eqref{bernsteinnxexpanded}, we infer bounds for the value function.
\begin{corollary}[Bernstein polynomials]
Proposition~\ref{proposition:polynomial} holds for $\polycostnfunction{\order}\equiv\bernsteinnfunction{\order}$ defined by~\eqref{bernsteinnxexpanded} with the uniform error bound~$\errn{\order}= {3}/{2}\,\continuitymodulussIfx{\taufunction}{[0,\taufunction]}{\costfunction}{{\taufunction}/{\sqrt{\order}}}$.   
\end{corollary}

\subsubsection{Approximation by trigonometric sums}
\label{section:trigonometricsum}

A better convergence rate for~$\polycostnfunction{\order}$ can be obtained using trigonometric sums; we refer to~\cite[\S{}1.1]{rivlin69} for details on this topic. 
Consider the continuous, $2\taufunction$-periodic function $\mirrorcostfunction:\Real\mapsto\Real$  defined on~$[-\taufunction,\taufunction]$ by $\mirrorcostx{\bl}=\costx{\modulus{\bl}} $.
The {Weierstrass approximation theorem} (see, e.g., \cite[Weierstrass first theorem]{korovkin60}, \cite[Theorem~1.1]{rivlin69}, \cite[Theorem~2.7]{koralov07}) claims that
~$\mirrorcostfunction$  can be approximated by a trigonometric sum with arbitrary precision with respect to the uniform norm $\norm{\functionfunction}=\sup\nolimits_{\bl\in\moveperiod{[-\period/2,\period/2]}{[-\period,\period]}}\modulus{\fx{\bl}}
$.
This implies
that for any~$\epsilon>0$ one can find $\order<\infty$ and a trigonometric sum~$\tsnfunction{\order}$ such that $\errn{\order} = \norm{\mirrorcostx{\bl}-\tsnx{\order}{\bl}}<\epsilon$.
%
It then follows that $
\costfunction\in\intervx{\costfunction}
=
\tsnfunction{\order}+ 
[-\errn{\order},\errn{\order}] 
$.
\obsolete{ 
It then holds that $\costfunction\in\intervx{\costfunction}$, where we set
\begin{align}\label{approximatedcostfunctionperiodic}
&
\intervxy{\costfunction}{\bl} =
\tsnx{\order}{\bl}+ 
[-\errn{\order},\errn{\order}] 
,
&
\forall \bl\in\Realpluszero.
\end{align}
}%
Such a trigonometric sum 
is given by the partial Fourier series, which for the real, even function~$\mirrorcostfunction$ reduces to
\begin{equation}\label{mirrorpartialfourierseries}
\tsnx{\order}{\bl}=\mirrorfouriercoefk{0} + 2 \sum\nolimits_{k=1}^{\order} \mirrorfouriercoefk{k}\, \cosx{{ \moveperiod{2}{} k \pi \bl }/{\period}} ,
\end{equation}
where 
\begin{align}\label{mirrorfouriercoefficients} 
&
\mirrorfouriercoefk{k} = \moveperiod{ \frac{1}{\period} \int\nolimits_{-\frac{\period}{2}}^{\frac{\period}{2}} \mirrorcostx{\bl} \exp{-\i\frac{ 2 k \pi \bl }{\period}} }{ \frac{1}{\period} \int\nolimits_{0}^{{\period}} \costx{{\bl}} \cosx{{  k \pi \bl }/{\period}} } \, d\bl ,
&  (k\in\Natural).
\end{align}
 are the Fourier coefficients.
\obsolete{
\begin{equation}\label{partialfourierseries}
\tsnx{\order}{\bl}=\sum\nolimits_{k=-\order}^{\order} \mirrorfouriercoefk{k}\, \exp{\i\frac{ \moveperiod{2}{} k \pi \bl }{\period}} ,
\end{equation}
where 
$
\mirrorfouriercoefk{k} = \moveperiod{ \frac{1}{\period} \int\nolimits_{-\frac{\period}{2}}^{\frac{\period}{2}} \mirrorcostx{\bl} \exp{-\i\frac{ 2 k \pi \bl }{\period}} }{ ({1}/{2\period}) \int\nolimits_{-{\period}}^{{\period}} \periodcostx{\bl} \exp{-\i\frac{  k \pi \bl }{\period}} } \, d\bl ,
$ are the the Fourier coefficients.
Since $\periodcostfunction$ is real, $\mirrorfouriercoefk{0}$ is  real and  $\mirrorfouriercoefk{-k}$ and~$\mirrorfouriercoefk{k}$ are complex conjugate for all~$k$, so that~$\costnfunction{\order}$ has~$2\order+1$ real parameters.
}%
With the {modulus of continuity} of
~$\mirrorcostfunction$ defined by
%
\begin{equation}\label{continuitymodulus}
\continuitymodulusSfx{\period}{\mirrorcostfunction}{\delta}
=
\sup \{ \modulus{\mirrorcostx{\bli{1}}-\mirrorcostx{\bli{2}}}
 :  \bli{1},\bli{2}  \in \Real, \modulus{\bli{1}-\bli{2}}\leq\delta\} ,
%
\end{equation}
the Fourier series~\eqref{mirrorpartialfourierseries}   converges towards the periodic function~$\mirrorcostfunction$ with rate 
$ 
\magnitude{\errn{\order}} 
= 
\log(\order)\,  \continuitymodulusSfx{\period}{\mirrorcostfunction}{\period/(n\pi)}
$, 
\cite[\S{}21]{jackson41}. 
%
%
%
%
Faster convergence can be obtained by slightly modifying the Fourier coefficients in~\eqref{mirrorpartialfourierseries}. For this, consider 
\begin{equation} 
\label{improvedmirrorpartialfourierseries}
\mirroritsnx{\order}{\bl}= \itscoefnk{\order}{0} \, \mirrorfouriercoefk{0} + 2 \sum\nolimits_{k=0}^{\order} \itscoefnk{\order}{k} \, \mirrorfouriercoefk{k} \, \cosx{{  \moveperiod{2}{} k \pi \bl }/{\period}} ,
\end{equation}
where $\itscoefnk{\order}{0},\dots,\itscoefnk{\order}{\order}\in\Real$. The choice of parameters proposed in~\cite[\S{}3]{korovkin60},
\obsolete{
\begin{equation}\label{itscoefnk}\begin{array}{ll}
\itscoefnk{\order}{0} =1,
&\\
\itscoefnk{\order}{1} = \cosx{\frac{\pi}{\order+2}},
&\\
\itscoefnk{\order}{k} = \frac{\sum\nolimits_{q=0}^{\order-k}\Sinx{\frac{q+1}{\order+2}\pi} \Sinx{\frac{q+k+1}{\order+2}\pi} }{ \sum\nolimits_{q=0}^{\order} \Sinsquaredx{\frac{q+1}{\order+2}\pi}   } ,
&\quad (k=2,\dots,\order),
\end{array}
\end{equation}
}%
\begin{equation}\label{itscoefnk}\textstyle
\itscoefnk{\order}{0} {\,=\,} 1,
\
\itscoefnk{\order}{1} {\,=\,} \cosx{\frac{\pi}{\order+2}},
\
\itscoefnk{\order}{k} {\,=\,} \frac{\sum\nolimits_{q=0}^{\order-k}\Sinx{\frac{q+1}{\order+2}\pi} \Sinx{\frac{q+k+1}{\order+2}\pi} }{ \sum\nolimits_{q=0}^{\order} \Sinsquaredx{\frac{q+1}{\order+2}\pi}   } 
\textup{ for } k=2,\dots,\order,
\end{equation}
lends~\eqref{improvedmirrorpartialfourierseries} the convergence rate
\begin{equation}\label{mirrorjacksontrigonometric}
\begin{array}{l}
\errn{\order} \leq 6\,\ContinuitymodulusSfx{\period}{\mirrorcostfunction}{\frac{\tau}{\pi\order}},
\end{array}
\end{equation}  
(see \cite[\emph{first Jackson Theorem}]{korovkin60}, or~\cite[Theorem~1.3]{rivlin69}).
Since $\mirroritsnfunction{\order}\in\Excset$ and by construction $ \continuitymodulusSfx{\period}{\mirrorcostfunction}{\cdot} \equiv \continuitymodulussIfx{\taufunction}{[0,\taufunction]}{\costfunction}{\cdot} $, $\mirroritsnfunction{\order}$ is a candidate finite sum for  Proposition~\ref{proposition:polynomial} and \eqref{mirrorjacksontrigonometric} gives us bounds for the value function.
 
\begin{corollary}[Trigonometric sums]\label{corollary:fourier}
Proposition~\ref{proposition:polynomial} holds for $\polycostnfunction{\order}\equiv\bernsteinnfunction{\order}$ defined by~\eqref{bernsteinnxexpanded} with the uniform error bound~$\errn{\order}= {3}/{2}\,\continuitymodulussIfx{\taufunction}{[0,\taufunction]}{\costfunction}{{\taufunction}/{\sqrt{\order}}}$.   
\end{corollary}
\noindent
In particular, if for some $\hoeldernexponent\in(0,1]$ the cost function satisfies the $\hoeldernexponent$-H\"oldern condition $\modulus{\costx{\bli{1}}-\costx{\bli{2}}}\leq \hoeldernconstant  \modulus{\bli{1}-\bli{2}}^\hoeldernexponent  
$ for all $\bli{1},\bli{2} \in [0,\taufunction]$,
then  
$\continuitymodulussIfx{\taufunction}{[0,\taufunction]}{\costfunction}{\delta} \leq  \hoeldernconstant \delta^\hoeldernexponent$, 
and
~\eqref{mirrorpartialfourierseries}  converges uniformly towards~$\costfunction$ on~$[0,\taufunction]$ with $\errn{\order} = \magnitude{({\taufunction}/\order)^\hoeldernexponent}$. 
If~$\costfunction$ is Lipschitz continuous on~$[0,\taufunction]$ with modulus~$\Lipschitz$, then $\errn{\order} < 2 \Lipschitz {\taufunction}/\order$.

\subsubsection{Near-optimal polynomial approximation}
\label{section:nearoptimalpolynomials}

\obsolete{%
A better convergence rate for~$\polycostnfunction{\order}$ can be obtained using trigonometric sums, which we first briefly discuss,  referring to~\cite[\S{}1.1]{rivlin69} for details on the topic.

Let $\thetacostfunction:\Real\mapsto\Real$ be continuous, periodic with period~$2\pi$, and defined  by $\thetacostx{\theta}=\costx{\bltheta{\theta}} $ where $\bltheta{\theta} =  ({\taufunction}/{2})\, (1 + \cosx{\theta})$.
The {Weierstrass approximation theorem} (see, e.g., \cite[Weierstrass first theorem]{korovkin60}, \cite[Theorem~1.1]{rivlin69}, \cite[Theorem~2.7]{koralov07}) claims that
~$\periodcostfunction$  can be approximated by a trigonometric sum with arbitrary precision with respect to the uniform norm $\norm{\functionfunction}=\sup\nolimits_{\bl\in\moveperiod{[-\period/2,\period/2]}{[-\period,\period]}}\modulus{\fx{\bl}}
$.
This implies
that for any~$\epsilon>0$ one can find $\order<\infty$ and a trigonometric sum~$\tsnfunction{\order}$ such that $\errn{\order} = \norm{\periodcostx{\bl}-\tsnx{\order}{\bl}}<\epsilon$.
%
It then holds that $
\costfunction\in\intervx{\costfunction}
=
\tsnfunction{\order}+ 
[-\errn{\order},\errn{\order}] 
$.
\obsolete{ 
It then holds that $\costfunction\in\intervx{\costfunction}$, where we set
\begin{align}\label{approximatedcostfunctionperiodic}
&
\intervxy{\costfunction}{\bl} =
\tsnx{\order}{\bl}+ 
[-\errn{\order},\errn{\order}] 
,
&
\forall \bl\in\Realpluszero.
\end{align}
}%
Such a trigonometric sum 
is given the partial Fourier series
\begin{equation}\label{partialfourierseries}
\tsnx{\order}{\bl}=\sum\nolimits_{k=-\order}^{\order} \fouriercoefk{k}\, \exp{\i\frac{ \moveperiod{2}{} k \pi \bl }{\period}} ,
\end{equation}
where the Fourier coefficients $ \fouriercoefk{k}$ satisfy
\begin{align}\label{fouriercoefficients} 
&
\fouriercoefk{k} = \moveperiod{ \frac{1}{\period} \int\nolimits_{-\frac{\period}{2}}^{\frac{\period}{2}} \periodcostx{\bl} \exp{-\i\frac{ 2 k \pi \bl }{\period}} }{ \frac{1}{2\period} \int\nolimits_{-{\period}}^{{\period}} \periodcostx{\bl} \exp{-\i\frac{  k \pi \bl }{\period}} } \, d\bl ,
&  (k\in\Integer).
\end{align}
Since $\periodcostfunction$ is real, $\fouriercoefk{0}$ is  real and  $\fouriercoefk{-k}$ and~$\fouriercoefk{k}$ are complex conjugate for all~$k$, so that~$\costnfunction{\order}$ has~$2\order+1$ real parameters.
The Fourier series~\eqref{partialfourierseries}   converges towards~$\periodcostfunction$ with rate 
$ \magnitude{\errn{\order}} = \log(\order)\,  \continuitymodulusSfx{\period}{\periodcostfunction}{\period/(n\pi)}$, where
%
\begin{equation}\label{continuitymodulus}
\continuitymodulusSfx{\period}{\periodcostfunction}{\delta}
=
\sup \{ \modulus{\periodcostx{\bli{1}}-\periodcostx{\bli{2}}}
 :  \bli{1},\bli{2}  \in \Real, \modulus{\bli{1}-\bli{2}}\leq\delta\} 
%
\end{equation}
is the {modulus of continuity} of the periodic function, \cite[\S{}21]{jackson41}. 
In particular, if for some $\hoeldernexponent\in(0,1]$ the function $\periodcostfunction$ satisfies the $\hoeldernexponent$-H\"oldern condition $\modulus{\periodcostx{\bli{1}}-\periodcostx{\bli{2}}}\leq \hoeldernconstant  \modulus{\bli{1}-\bli{2}}^\hoeldernexponent  
$ for all $\bli{1},\bli{2} \in \Real$,
then  
$\continuitymodulusSfx{\period}{\periodcostfunction}{\delta}\leq  \hoeldernconstant \delta^\hoeldernexponent$, 
and
~\eqref{partialfourierseries}  converges uniformly towards~$\periodcostfunction$. 
%
%
%
%
Faster convergence can be obtained by slightly modifying the Fourier coefficients in~\eqref{partialfourierseries}. For this, consider the sum 
\begin{equation}
\label{improvedpartialfourierseries}
\itsnx{\order}{\bl}=\sum\nolimits_{k=-\order}^{\order} \itscoefnk{\order}{\modulus{k}} \, \fouriercoefk{k} \, \exp{\i\frac{  \moveperiod{2}{} k \pi \bl }{\period}} ,
\end{equation}
where $\itscoefnk{\order}{0},\dots,\itscoefnk{\order}{\order}\in\Real$. The choice of parameters proposed in~\cite[\S{}3]{korovkin60},
\obsolete{
\begin{equation}\label{itscoefnk}\begin{array}{ll}
\itscoefnk{\order}{0} =1,
&\\
\itscoefnk{\order}{1} = \cosx{\frac{\pi}{\order+2}},
&\\
\itscoefnk{\order}{k} = \frac{\sum\nolimits_{q=0}^{\order-k}\Sinx{\frac{q+1}{\order+2}\pi} \Sinx{\frac{q+k+1}{\order+2}\pi} }{ \sum\nolimits_{q=0}^{\order} \Sinsquaredx{\frac{q+1}{\order+2}\pi}   } ,
&\quad (k=2,\dots,\order),
\end{array}
\end{equation}
}%
\begin{equation}\label{itscoefnk}\textstyle
\itscoefnk{\order}{0} {\,=\,} 1,
\
\itscoefnk{\order}{1} {\,=\,} \cosx{\frac{\pi}{\order+2}},
\
\itscoefnk{\order}{k} {\,=\,} \frac{\sum\nolimits_{q=0}^{\order-k}\Sinx{\frac{q+1}{\order+2}\pi} \Sinx{\frac{q+k+1}{\order+2}\pi} }{ \sum\nolimits_{q=0}^{\order} \Sinsquaredx{\frac{q+1}{\order+2}\pi}   } 
\textup{ for } k=2,\dots,\order,
\end{equation}
lends~\eqref{improvedpartialfourierseries} the convergence rate
\begin{equation}\label{jacksontrigonometric}
\begin{array}{l}
\errn{\order} \leq 6\,\ContinuitymodulusSfx{\period}{\periodcostfunction}{\frac{\tau}{\order\pi}},
\end{array}
\end{equation}
(see \cite[\emph{first Jackson Theorem}]{korovkin60}, or~\cite[Theorem~1.3]{rivlin69}).
}%

Alternatively, the convergence rate of Corollary~\ref{corollary:trigonometric} can be obtained using polynomials. 
Set~$\thetacostfunction:\Real\mapsto\Real$ to be the continuous, $2\pi$-periodic function defined  by $\thetacostx{\theta}=\costx{\bltheta{\theta}} $, where $\bltheta{\theta} =  ({\taufunction}/{2})\, (1 + \cosx{\theta})$.
%
%
%
It follows  from the $({\taufunction}/{2})$-Lipschitz continuity of~$\blthetafunction$ and the definition~\eqref{continuitymodulus} of the modulus of continuity, that
$
\continuitymodulusSfx{\pi}{\thetacostfunction}{\delta}
\leq
\continuitymodulussIfx{\taufunction}{[0,\taufunction]}{\costfunction}{{\tau\delta}/{2}} 
$ for $\delta>0$.
Proceeding as in~\eqref{improvedmirrorpartialfourierseries}, we consider the trigonometric sum for~$\thetacostfunction$ given by the modified Fourier series
\begin{align}\label{improvedpartialfourierseriesbis}
&
\itsnx{\order}{\theta}= 
\sum\nolimits_{k=0}^{\order} \itscoefnk{\order}{k}  \realfouriercoefk{k} \cosx{  k \theta} ,
&
\forall \theta\in\Real,
\end{align}
where 
$  \realfouriercoefk{0} = 
\fouriercoefk{0}$,  $\realfouriercoefk{k} 
= 2 \fouriercoefk{k}$ if $k\geq 1$, and
%
%
%
%
%
$
\fouriercoefk{k} = \moveperiod{ \frac{2}{\pi} \int\nolimits_{0}^{\frac{\pi}{2}} \periodcostx{\theta} \cosx{ 2 k  \theta } }{ ({1}/{\pi}) \int\nolimits_{0}^{{\pi}} \periodcostx{\theta} \cosx{k\theta}} \, d\theta 
 $ for $k\in\Natural$.
Defining~$\itscoefnk{\order}{k}$ as in~\eqref{itscoefnk},
yields the uniform 
convergence rate  
\begin{equation}\label{jacksontrigonometricbis}
\norm{\costfunction-\itsnfunction{\order}}
=
\refereq{\eqref{mirrorjacksontrigonometric}}{\leq}
6\,\ContinuitymodulusSfx{\pi}{\thetacostfunction}{  1 / \order} 
\leq
6\, \ContinuitymodulussIfx{\taufunction}{[0,\taufunction]}{\costfunction}{{\taufunction}/({2\order})} .
\end{equation} 
 \obsolete{
 $  \cosx{  k \theta} = \Realpart{\cosx{  k \theta}+ \i\sinx{  k \theta}} = \Realpart{(\cosx{   \theta}+ \i\sinx{   \theta})^k} = \Realpart{\sum\nolimits_{t=0}^{k}  \binomialcoef{k}{t} \i^{t}\cosnx{k-t}{   \theta} \sinnx{t}{   \theta}} = \sum\nolimits_{t=0}^{\floor{k/2}}  \binomialcoef{k}{2t}   (-1)^{t}\cosnx{k-2t}{   \theta} \sinnx{2t}{   \theta} =  \sum\nolimits_{t=0}^{\floor{k/2}}   \binomialcoef{k}{2t}   (-1)^{t}\cosnx{k-2t}{   \theta} (1-\cosnx{2}{   \theta}  )^t =  \sum\nolimits_{t=0}^{\floor{k/2}}   \binomialcoef{k}{2t}    (-1)^{t}  \cosnx{k-2t}{   \theta}   \sum\nolimits_{l=0}^{t}  \binomialcoef{t}{l} (-1)^{t-l} \cosnx{2(t-l)}{   \theta} =  \sum\nolimits_{t=0}^{\floor{k/2}}   \binomialcoef{k}{2t}         \sum\nolimits_{l=0}^{t}  \binomialcoef{t}{l} (-1)^{l} \cosnx{k-2l}{   \theta} $
}%
It remains to rewrite~\eqref{improvedpartialfourierseriesbis} as a polynomial in~$\bl$ by returning to the backlog domain. For this, we develop $ \cosx{  k \theta}  = \Realpart{(\cosx{   \theta}+ \i\sinx{   \theta})^k}$ and find  $\cosx{  k \theta} =\polcosnx{k}{\cosx{\theta}}$, where the polynomial~$ \polcosnx{ k}{x} $, characterized by its~$k$ real roots located in~$(-1,1)$, is defined by
\obsolete{
$\polcosnx{ k}{x} =  \sum\nolimits_{q=0}^{{k}/{2}}   \coscoef{k}{q} 
 x^{2q} $ if~$k$ is even and
$\polcosnx{ k}{x} = \sum\nolimits_{q=0}^{({k-1})/{2}}  \coscoef{k}{q}  x^{2q+1} $ if~$k$ is odd.
}%
\obsolete{
 \begin{equation}\label{polycostnx}
 \cosx{  k \theta} = \left|
 \begin{array}{ll}
 \sum\nolimits_{q=0}^{{k}/{2}} 
 \coscoef{k}{q} 
 \cosnx{2q}{\theta}           
 ,
 & \textup{if~$k$ is even,}
\\
 \sum\nolimits_{q=0}^{({k-1})/{2}} 
 \coscoef{k}{q}
  \cosnx{2q+1}{\theta}           
,
 & \textup{if~$k$ is odd,}
 \end{array}\right.
\end{equation}
}%
\begin{equation}\label{polycostnx}
 \polcosnx{ k}{x} = \left\{
 \begin{array}{ll} 
 \sum\nolimits_{q=0}^{{k}/{2}} 
  \coscoef{k}{q}  
 x^{2q}           
 ,
 &\quad \textup{if~$k$ is even}
\\
 \sum\nolimits_{q=0}^{({k-1})/{2}} 
 \coscoef{k}{q}
  x^{2q+1}           
,
 & \quad\textup{if~$k$ is odd}
 \end{array}\right\}
 ,
\end{equation}
\storecompoundcounter{equation}{polycostnx}%
where~$\coscoef{0}{0}=1$, and
\begin{equation}\label{coscoef}\notag
\textstyle
\coscoef{k}{q}
 =
  (-1)^{\floor{\frac{k}{2}}-q}  \sum\nolimits_{t=0}^{q} \Binomialcoef{\ \ \ \ k}{2(\floor{\frac{k}{2}}-t)} \Binomialcoef{\floor{\frac{k}{2}}-t}{\floor{\frac{k}{2}}-q},
  \quad(q=0\dots,\Floor{\frac{k}{2}},\, k\in\Naturalpos ).  
\end{equation}
%
Since $\thetacostx{\theta}=\costx{\taufunction (1 + \cosx{\theta})/2} $, a polynomial approximation of~$\costfunction$ on~$[0,\taufunction]$ is obtained by setting $\cosx{  k \theta} =\polcosnx{k}{2\bl/\taufunction-1}$  in~\eqref{improvedpartialfourierseriesbis}, and we find, after straightforward computations,
\begin{align}\label{improvedpartialfourierseriester}
&
\ptsnx{\order}{\bl}
\obsolete{
=
\sum\nolimits_{k=0}^{\order} \itscoefnk{\order}{k}  \realfouriercoefk{k} \, \polcosnx{k}{\frac{2\bl}{\taufunction}-1} 
}
  =
  \sum\nolimits_{k=0}^{\order} 
\translatedpolcoscoef{\order}{k}
 \bl^k
  ,
  & \forall \bl\in[0,\taufunction],
\end{align}
where we define
\begin{align}\label{translatedpolcoscoef}\notag
&\translatedpolcoscoef{\order}{k} = 
({2}/{\taufunction})^k \, \sum\nolimits_{t=0}^{\order-k} \Bigbinomialcoef{t+k}{\ k} \, (-1)^{t} \, \polcoscoef{\order}{t+k}
  ,&
  (k=0\dots,\order).
\end{align}
and
$
\polcoscoef{\order}{t} = 
\sum\nolimits_{k\in\funnyset{\order}{t}}  \itscoefnk{\order}{k} \, \realfouriercoefk{k} \,   \coscoef{k}{\floor{t/2}} 
$ for $t=0,\dots,\order$, 
in which 
$\funnyset{\order}{t}  = \{ t,t+2,t+4,\dots,\order \}$ if~$\order-t$ is even, and $\funnyset{\order}{t}  = \{ t,t+2,t+4,\dots,\order-1 \}$ otherwise ($0\leq t\leq\order$). 
As for the Fourier coefficients of~$\thetacostfunction$, they reduce to
\obsolete{
\begin{equation}\label{realfouriercoefficients} \begin{array}{ll}
\realpart{\fouriercoefk{k} }\refereq{\eqref{fouriercoefficients}}{=}
\frac{1}{\pi} \int\nolimits_{0}^{{\pi}} \periodcostx{\theta} \cosx{ k  \theta}\, d\theta
,
&\quad ( k\in\Naturalnn).
\end{array}
\end{equation}
The Fourier coefficients~\eqref{realfouriercoefficients} reduce, on the other hand, to
}%
\begin{equation}
\label{realfouriercoefficientstwo} 
\fouriercoefk{k}
=
\obsolete{
\frac{1}{\pi} \int\nolimits_{0}^{{\pi}} \periodcostx{\theta} \cosx{ k  \theta}\, d\theta
 =
}
\frac{1}{\pi} \int\nolimits_{0}^{{\pi}} 
\costx{\bltheta{\theta}}  
\cosx{ k  \theta}\, d\theta
\refereq{\eqref{polycostnx}}{=}
 \frac{1}{\pi} \int\nolimits_{0}^{\taufunction}  \costx{\bl}    \frac{\polcosnx{k}{\frac{2\bl}{\taufunction}-1}}{\sqrt{\bl(\taufunction-\bl)}} \, d\bl
,
\end{equation}
\storecompoundcounter{equation}{realfouriercoefficientstwo}%
where we have used the change of variable  
$\bl=\bltheta{\theta}$. 
For many cost functions, the  coefficients~$\{\fouriercoefk{k} \}$ can be derived exactly. 
See Lemma~\ref{lemma:fouriercoefsforquotientofpolynomials} for expressions of these coefficients  in the case when $\costfunction$ is given as a quotient of polynomials.
%
%

From~\eqref{jacksontrigonometricbis}, we infer the following bounds for the value function.

%
\begin{corollary}[Near-optimal polynomials]\label{corollary:trigonometric}
Proposition~\ref{proposition:polynomial} holds for~$\polycostnfunction{\order}\equiv\ptsnfunction{\order}$ defined  by~\eqref{improvedpartialfourierseriester} with  uniform error bound~$\errn{\order} = 6\, \continuitymodulussIfx{\taufunction}{[0,\taufunction]}{\costfunction}{{\taufunction}/({2\order})}$.
\end{corollary}

\noindent
Without further assumptions on~$\costfunction$, the convergence rate~$\magnitude{\continuitymodulussIfx{\taufunction}{[0,\taufunction]}{\costfunction}{{\taufunction}/({2\order})}}$ guaranteed by~\eqref{improvedpartialfourierseriester} is non-improvable. 
The performance of~$\mirroritsnfunction{\order}$ 
and~$\ptsnfunction{\order}$ 
in Corollaries~\ref{corollary:fourier} and~\ref{corollary:trigonometric} are then really close,  and the choice of either approach (Section~\ref{section:trigonometricsum} or~\ref{section:nearoptimalpolynomials}),  mostly dependent on the computability of the Fourier coefficients~\eqref{mirrorfouriercoefficients} or~\eqref{realfouriercoefficientstwo}, respectively,
is left to the appreciation of the reader.
 The second approach nevertheless prevails in the event  the cost function 
has a $k$th derivative~$\dncostfunction{k}$ on~$[0,\taufunction]$. Then, the convergence rate in Corollary~\ref{corollary:trigonometric} can be lowered to~$\magnitude{\order^{-k}\continuitymodulussIfx{\taufunction}{[0,\taufunction]}{\dncostfunction{k}}{{\taufunction}/[{2(\order-k)}]}}$ by using the derivatives as the targets of approximation, \cite[Theorem~1.5]{rivlin69}. 
This distinguishing property of approach~\ref{section:nearoptimalpolynomials} stems from the fact that~$\thetacostx{\theta}$ retains the smoothness of the cost function, whereas~$\mirrorcostx{\bl}$ shows irregularities at $\bl=(2k+1)\taufunction$.
We refer to~\cite[\S{}1.1]{rivlin69}  and references therein for further considerations on the optimality  of~\eqref{jacksontrigonometricbis} as a convergence rate for polynomial approximations.

\paragraph{Case study: quotient cost function.}
Let
$
\costx{\bl}={\bl^2}/({\sing^2+\bl^2}) 
$, 
where $\sing>0$ is a positive parameter.
The Fourier coefficients~\eqref{realfouriercoefficientstwo} for~$\costfunction$ are given by  Lemma~\ref{lemma:fouriercoefsforquotientofpolynomials} 
with $\polydegreek{k}\equiv k$. After computation of the residues at the complex conjugate poles~$\i\sing$ and~$-\i\sing$,  \eqref{solutionrealfouriercoefk} reduces to
\begin{equation} \label{solutionrealfouriercoefkexample}\notag
\begin{array}{ll}
\fouriercoefk{k}
=
\sqrt{\pi} \sum\nolimits_{q=0}^{k}     \frac{ \laurentcoef{-q}\,  (-\taufunction)^{q} }{ \factorial{q} \Gammax{\frac{1}{2}-q}} 
- 
 \frac{\sqrt{\sing}}{\sqrt[4]{\sing^2+\taufunction^2}} \sum\nolimits_{q=0}^{\floor{\frac{k}{2}}}  
 \evenoddcoefksingtauq{k}{\sing}{\taufunction}{q} \, \coscoef{k}{q},
 & \quad (k\in\Naturalnn),
\end{array}
\end{equation}
 where~$\{\laurentcoef{-q}\}_{q=0}^{k}$ are  the first~$k+1$  coefficients 
 (i.e., those associated with nonnegative powers) 
 of the Laurent series at~$+\infty$ of~$\costx{\bl} \, \polcosnx{k}{{2\bl}/{\taufunction}-1} $, 
 equal in this example  to
\begin{equation}\label{laurentcoefexample}\notag
\laurentcoef{-q}
\refereq{\eqref{laurentcoef}}{=}
\left\{
 \begin{array}{ll}
 (\frac{-2}{\taufunction})^{q}  \sum\nolimits_{l=\ceil{\frac{q}{2}}}^{\frac{k}{2}} 
 \big[
 \sum\nolimits_{t= 0 }^{l-\ceil{\frac{q}{2}}} 
  \binomialcoef{\, 2l}{q+2t}  
\big(\frac{-4\sing^{2}}{\taufunction^2}\big)^{t}   
 \big]
 \coscoef{k}{l}    
 ,
 &\textup{if~$k$ even }
\\
 - (\frac{-2}{\taufunction})^{q}  
  \sum\nolimits_{l=\ceil{\frac{q-1}{2}}}^{\frac{k-1}{2}} 
 \big[
 \sum\nolimits_{t= 0}^{l-\ceil{\frac{q-1}{2}}} 
   \binomialcoef{2l+1}{q+2t}   
\big(\frac{-4\sing^{2}}{\taufunction^2}\big)^{t} 
   \big]
    \coscoef{k}{l}
, \quad
 &\textup{if $k$ odd}
 \end{array}\right\}
 \end{equation}
 \obsolete{
 \begin{equation}\label{laurentcoefexample}
\laurentcoef{-q}
\refereq{(\eqref{laurentcoef})}{=}
\left|
 \begin{array}{ll}
 (\frac{-2}{\taufunction})^{q}  \sum\nolimits_{j=0}^{
 \floor{\frac{k-q}{2}}
 } 
 \left[
 \sum\nolimits_{t= 0 }^{j} 
  \binomialcoef{2\ceil{\frac{q}{2}}+2j}{\ \, q+2t}  
\left(\frac{-4\sing^{2}}{\taufunction^2}\right)^{t}   
 \right]
 \Coscoef{k}{j+\ceil{\frac{q}{2}}}    
 ,
 &\textup{if~$k$ even, }
\\
 - (\frac{-2}{\taufunction})^{q}  
  \sum\nolimits_{j=0}^{
\floor{\frac{k-q}{2}}
  } 
 \left[
 \sum\nolimits_{t= 0}^{j} 
   \binomialcoef{2\ceil{\frac{q-1}{2}}+2j+1}{\ \ \ \, q+2t}   
\left(\frac{-4\sing^{2}}{\taufunction^2}\right)^{t} 
   \right]
    \Coscoef{k}{j+\ceil{\frac{q-1}{2}}}
, \quad
 &\textup{if $k$ odd},
 \end{array}\right. 
 \end{equation}
 }%
where we used~$\{{\evenoddcoefksingtauq{k}{\sing}{\taufunction}{q} }\}_{q=0}^{\floor{\frac{k}{2}}}$, defined for~$k$ even and $q=0,\dots,\frac{k}{2}$ by
\begin{equation}\notag
\textstyle
\evenoddcoefksingtauq{k}{\sing}{\taufunction}{q} 
 =
 \bigcosx{\frac{\anglesingtau{\sing}{\tau}}{2}} \sum\nolimits_{t=0}^{q} \binomialcoef{2q}{2t}
 \big(\frac{-4\sing^2}{\taufunction^2}\big)^{t}    
- \frac{2\sing}{\taufunction} \bigsinx{\frac{\anglesingtau{\sing}{\tau}}{2}}  \sum\nolimits_{t=0}^{q-1} \binomialcoef{2q}{2t+1}\big(\frac{-4\sing^2}{\taufunction^2}\big)^{t}
,
\end{equation}
 and for~$k$ odd and $q=0,\dots,\frac{k-1}{2}$ by
\begin{equation}\notag
\textstyle
 \evenoddcoefksingtauq{k}{\sing}{\taufunction}{q}
 {=}{-}\bigcosx{\!\frac{\anglesingtau{\sing}{\tau}}{2}\!} \!\sum\nolimits_{t=0}^{q}\! \binomialcoef{{2q{+}1}}{\ \, 2t}
 \big(\!\frac{{-}4\sing^2}{\taufunction^2}\!\big)^{t} 
  {+}{\frac{2\sing}{\taufunction}} \bigsinx{\!\frac{\anglesingtau{\sing}{\tau}}{2}\!}  \! \sum\nolimits_{t=0}^{q}\! \binomialcoef{2q{+}1}{2t{+}1}\big(\!\frac{-4\sing^2}{\taufunction^2}\!\big)^{t}
  ,
\end{equation}
in which $ \anglesingtau{\sing}{\tau} = 
 \atanx{\frac{\taufunction}{\sing}}  
  $. 
 %
%
\begin{figure}[!t]
\centering
\subfigure[Cost function intervals.\label{figure:costfinterval}]{
        \centering
        \includegraphics{\setpathfigs%
        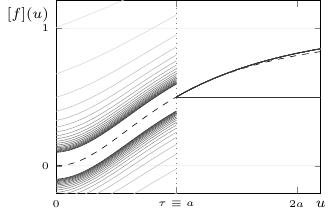}
}%
%
\subfigure[Value function intervals (exponentially distributed service times; $\strate=2$, $\ar=1$).\label{figure:vfinterval}]{
        \centering
        \includegraphics{\setpathfigs%
        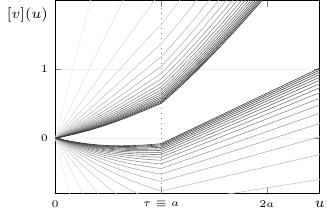}
}%
\caption{%
%
Intervals for $\costx{\bl}={\bl^2}/({\sing^2+\bl^2})$ as per  Corollary~\ref{corollary:trigonometric} (
$\order=1,\dots,20$).
} \label{figure:interval}
\end{figure}%
%
%
\obsolete{
$\continuitymodulussIfx{\taufunction}{[0,\taufunction]}{\costfunction}{\delta} :$

$ d/d\bl  [ \frac{(\bl+\delta)^2}{\sing^2+(\bl+\delta)^2}  \frac{\bl^2}{\sing^2+\bl^2}   ]  
=
2\frac{(\bl+\delta)[\sing^2+(\bl+\delta)^2]-(\bl+\delta)^2(\bl+\delta)}{[\sing^2+(\bl+\delta)^2]^2} - 2\frac{\bl[\sing^2+\bl^2]-\bl^2 \bl}{[\sing^2+\bl^2]^2}
=
2\frac{\sing^2(\bl+\delta)]}{[\sing^2+(\bl+\delta)^2]^2} - 2\frac{\sing^2\bl}{[\sing^2+\bl^2]^2}
$

$
=
\frac{2}{[\sing^2+\bl^2]^2[\sing^2+(\bl+\delta)^2]^2}
\left(
\sing^2(\bl+\delta)] [\sing^2+\bl^2]^2- \sing^2\bl [\sing^2+(\bl+\delta)^2]^2
\right)
$

$ \sing^2(\bl+\delta) [\sing^2+\bl^2]^2- \sing^2\bl [\sing^2+(\bl+\delta)^2]^2
=
\sing^2\delta [\sing^2+\bl^2]^2 + \sing^2\bl \{ [\sing^2+\bl^2]^2 - [\sing^2+\bl^2 + (2\delta\bl+\delta^2)]^2\}
=
\sing^2\delta [\sing^2+\bl^2]^2 - \sing^2\bl [ 2\sing^2+2\bl^2 + 2\delta\bl+\delta^2 ] [2\delta\bl+\delta^2]  
=
- \sing^2\bl  [2\delta\bl+\delta^2 ]^2 + 
[\sing^4\delta - 3\sing^2 \delta\bl^2  - 2\sing^2 \delta^2\bl  ]
[\sing^2+\bl^2]
=
 - \sing^2  \delta^4 \bl - 4 \sing^2 \delta^3\bl^2  - 4\sing^2 \delta^2\bl^3   
+ \sing^6\delta  -3 \sing^4\delta  \bl^2 - 2 \sing^4\delta^2\bl  
+ \sing^4\delta\bl^2  -3 \sing^2\delta  \bl^4 - 2 \sing^2\delta^2 \bl^3
$

$
=
- 3 \sing^2\delta [
\bl^4
+2 \delta \bl^3  
+2 (\sing^2+ 2 \delta^2)/3    \bl^2  
 + ( \delta^3 + 2 \sing^2\delta)/3  \bl 
 - \sing^4 /3 
 ]
$

Consider : $ ( (\bl+\delta/2)^2)^2 + \alpha (\bl+\delta/2)^2 + \beta $

$= 
\bl^4+2\delta\bl^3+(\alpha+3\delta^2/2)\bl^2+ (\alpha\delta+\delta^3/2)\bl + \beta +\alpha\delta^2/4+\delta^4/16$

Take $(\alpha+3\delta^2/2)=2 (\sing^2+ 2 \delta^2)/3 $
or $\alpha=(4 \sing^2-\delta^2)/6= 2(\sing^2-(\delta/2)^2)/3$

Then $\alpha\delta+\delta^3/2=(4 \sing^2\delta-\delta^3)/6+3\delta^3/6=(\delta^32\sing^2\delta)/3$ OK!

Lastly $\beta +\alpha\delta^2/4+\delta^4/16= - \sing^4 /3 $
or
$\beta =- [ (\delta/2)^4 + 2\sing^2(\delta/2)^2 +  \sing^4 ]/3 = - [\sing^2+(\delta/2)^2]^2/3 $

Compute the delta:
$\Delta = \alpha^2 - 4 \beta =  4(\sing^2-(\delta/2)^2)^2/9 + 12 [\sing^2+(\delta/2)^2]^2/9 =   [4 \sing^4 - 8 \sing^2 (\delta/2)^2 + 4 (\delta/2)^4   + 12 \sing^4   + 24\sing^2 (\delta/2)^2 + 12(\delta/2)^4 ] /9
=
  [16 \sing^4 +16 \sing^2 (\delta/2)^2 + 16 (\delta/2)^4  ] /9
  =
  16 [ \sing^4 + \sing^2 (\delta/2)^2 +  (\delta/2)^4  ] /9
$

$\Delta = \alpha^2 - 4 \beta =\frac{4 ( \sing^2-\delta^2/4)^2 + 12 (\delta^2/4 + \sing^2)^2}{9} = \frac{ (4 \sing^2-\delta^2)^2 + 3 (\delta^2 + 4 \sing^2)^2}{36} $
$
=  \frac{ 64 \sing^4 +16 \sing^2\delta^2+4\delta^4 }{36}
= \frac{ 16 \sing^4 +4 \sing^2\delta^2+\delta^4 }{9}
$ >0

There is a negative solution and a larger solution

$\frac{-\alpha+\sqrt{\Delta}}{2}=\frac{-\left( \sing^2-\left(\frac{\delta}{2}\right)^2\right)+2\sqrt{  \sing^4 + \sing^2\left(\frac{\delta}{2}\right)^2+\left(\frac{\delta}{2}\right)^4 }}{3}$

We find
$(\bl+\delta/2)^2 = =\frac{-\left( \sing^2-\left(\frac{\delta}{2}\right)^2\right)+2\sqrt{  \sing^4 + \sing^2\left(\frac{\delta}{2}\right)^2+\left(\frac{\delta}{2}\right)^4 }}{3}$
or
$(\bl+\delta/2)^2  =\frac{-\left( \sing^2-\left(\frac{\delta}{2}\right)^2\right)+2\sqrt{  \sing^4 + \sing^2\left(\frac{\delta}{2}\right)^2+\left(\frac{\delta}{2}\right)^4 }}{3}$
or
\begin{equation}\label{inflexion}\begin{array}{c}
\inflexionpoint{\delta} = 
\sqrt{\frac{\left(\frac{\delta}{2}\right)^2 - \sing^2+2\sqrt{  \sing^4 + \sing^2\left(\frac{\delta}{2}\right)^2+\left(\frac{\delta}{2}\right)^4 }}{3} } -\frac{\delta}{2}
.
\end{array}\end{equation}
}%
%
%
%
The continuity modulus of~$\costfunction$ on~$[\delta/2,\taufunction-\delta/2]$ is given, for $\delta \in[0,{\taufunction}/{2}]$, by
$\continuitymodulussIfx{\taufunction}{[0,\taufunction]}{\costfunction}{\delta} = \costx{\inflexionpoint{\delta}+\delta/2} - \costx{\inflexionpoint{\delta}-\delta/2} $,
where
\begin{equation}\label{inflexion}\notag\textstyle
\inflexionpoint{\delta} = 
\min\Big\{
\sqrt{\frac{(\frac{\delta}{2})^2 - \sing^2+2\sqrt{  \sing^4 + \sing^2(\frac{\delta}{2})^2+(\frac{\delta}{2})^4 }}{3} } 
,\taufunction-\frac{\delta}{2}
\Big\}
\end{equation}
\obsolete{
$\left(\frac{\delta}{2}\right)^2 - \sing^2+2\sqrt{  \sing^4 + \sing^2\left(\frac{\delta}{2}\right)^2+\left(\frac{\delta}{2}\right)^4 } < 3 \left(\frac{\delta}{2}\right)^2$

$ \sing^4 + \sing^2\left(\frac{\delta}{2}\right)^2+\left(\frac{\delta}{2}\right)^4  < \left( \left(\frac{\delta}{2}\right)^2 + \sing^2/2 \right)^2$

$ \sing^4 + \sing^2\left(\frac{\delta}{2}\right)^2+\left(\frac{\delta}{2}\right)^4  <  \left(\frac{\delta}{2}\right)^4 + \left(\frac{\delta}{2}\right)^2 \sing^2 + \sing^4  /4$

$ \sing^4  <  \sing^4  /4$ impossible. So square root $>\delta/2$.
}%
satisfies ${\delta}/{2}< \inflexionpoint{\delta} \leq \taufunction -{\delta}/{2}$. The cost function~$\costfunction$ is approximated by~\eqref{approximatedcostfunction}, with~$\polycostnfunction{\order}\equiv\ptsnfunction{\order}$ given by~\eqref{improvedpartialfourierseriester}
and~$\intervx{\excfunction}$ set to
\begin{equation}\label{excfunctionpolynomialexample}\notag
\begin{array}{ll}
\intervxy{\excfunction}{\bl}=\bigintervx{\costx{\taufunction},-\left(1-\costx{\taufunction}\right) \exponential\big\{{-\big(\frac{\dcostx{\taufunction}}{1-\costx{\taufunction}}\big) (\bl-\taufunction)}\big\}},
&\quad \bl\geq\taufunction,
\end{array}\end{equation}
in which $\dcostx{\taufunction}=2\sing^2\taufunction(\sing^2+\taufunction^2)^{-2}$.
The intervals produced by Corollary~\ref{corollary:trigonometric} for~$\costfunction$, and for its value function in the presence of jobs with exponentially-distributed service times are displayed in  Figure~\ref{figure:interval}, for fixed~$\taufunction$ and~$\order=1,\dots,20$. 
The value function intervals shown in Figure~\ref{figure:vfinterval} followed from
~(\ref{improvedpartialfourierseriester}) 
and  the developments of  Examples~\ref{example:piecewiseanalyticexp} and~\ref{example:piecewiseexp}.
The interval gaps can be arbitrarily reduced by increasing both~$\order$ and~$\taufunction$, as in Algorithm~\ref{algorithm:dispatching}.

Consider a system of two parallel servers~$1$ and~$2$, with server~$1$  twice faster than~$2$. Feed the system a sequence of jobs with arrival rate~$\ar=3/2$ and service times exponentially distributed with parameters  $ \vect{\strate_{1},\strate_{2}}=\vect{2,1}$. Assume that the workload is initially balanced between the two servers, i.e.,  $\vect{\ari{1},\ari{2}}=\vect{1,1/2}$, and let $\costx{\bl}={\bl^2}/({\sing^2+\bl^2})$. 
\begin{figure}[!t]
\centering
\subfigure[$\osipolicyxy{\bl}{\st}$ for $\st=\vect{1,2}$.\label{figure:policyservers}]{
        \centering
        \includegraphics{\setpathfigs%
        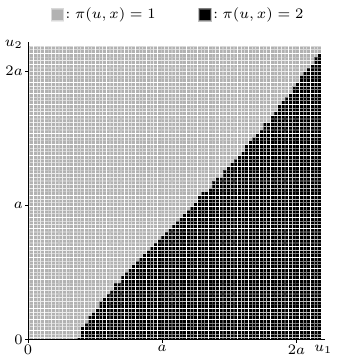}
\obsolete{
\begin{tikzpicture}[scale=1]
\def\sizex{5}
\def\sizey{5}
\def\minx{0}
\def\dx{0.1}
\def\miny{0}
\def\dy{0.1}
\def\xstep{0.2}
\def\ystep{0.2}
\def\greymin{0.1}
\def\greymax{1.0}
\def\colone{0.7}
\def\coltwo{0.0}
\def\colframe{1.0}
\definecolor{serverone}{gray}{\colone};
\definecolor{servertwo}{gray}{\coltwo};
\definecolor{frame}{gray}{\colframe};
\def\edge{0.4}
\tikzfading[name=fade left, left color=transparent!0,right color=transparent!100]
\tikzfading[name=fade right, right color=transparent!0,left color=transparent!100]
\tikzfading[name=fade bottom, bottom color=transparent!0,top color=transparent!100]
\tikzfading[name=fade top, top  color=transparent!0,bottom color=transparent!100]
\makesquare{0}{0}{\dx/2}{\dy/2}{\edge}{serverone}{frame}
\makesquare{\dx}{0}{\dx/2}{\dy/2}{\edge}{serverone}{frame}
\makesquare{\dy}{\dy}{\dx/2}{\dy/2}{\edge}{serverone}{frame}
\makesquare{0}{\dy}{\dx/2}{\dy/2}{\edge}{servertwo}{frame}
\makesquare{2*\dx}{\dy}{\dx/2}{\dy/2}{\edge}{servertwo}{frame}
\end{tikzpicture}
}%
}%
\subfigure[Minimum order~$\order^\star \vect{\bl,\st} $ needed in~(\ref{improvedpartialfourierseriester}) for decision, $\st=\vect{1,2}$.\label{figure:policyn}]{
        \centering
        \includegraphics{\setpathfigs%
        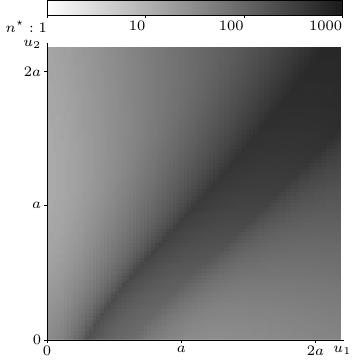}
\obsolete{
\begin{tikzpicture}[scale=1]
\def\sizex{5}
\def\sizey{5}
\def\minx{0}
\def\dx{1}
\def\miny{0}
\def\dy{1}
\def\xstep{0.2}
\def\ystep{0.2}
\def\greymin{0.1}
\def\greymax{1.0}
\def\colone{0.45}
\def\coltwo{0.85}
\input{approximatevaluefunction-korovkin-n}
\end{tikzpicture}
}%
}%
\caption{%
One-step policy improvement for a two-server system~$\vect{1,2}$ with arrival rates~$ \vect{\ari{1},\ari{2}}=\vect{1,1/2}$, exponentially distributed service times with parameters $ \vect{\strate_{1},\strate_{2}}=\vect{2,1}$, and cost function $\costx{\bl}={\bl^2}/({\sing^2+\bl^2})$. 
%
} \label{figure:policy}
\end{figure}%
Figure~\ref{figure:policyservers} depicts, for a particular job with service times $\vect{\sti{1},\sti{2}}=\vect{1,2}$ and for various backlog~$\bl=\vect{\bli{1},\bli{2}}$, the \ac{FPI} policy~$\osipolicyxy{\bl}{\st}$ issued by  Algorithm~\ref{algorithm:dispatching}. 
%
The quantity~$\order^\star\vect{\bl,\st}$ displayed in Figure~\ref{figure:policyn} is the minimum order~$\order$ required in~(\ref{improvedpartialfourierseriester})  for  dispatching at~$\vect{\bl,\st}$. This quantity was estimated by reporting the minimum order that allowed for dispatching for a coarse grid of values of the parameter~$\taufunction$. It can be seen that~$\order^\star\vect{\bl,\st}$  grows with the distance to the origin~$\bl=\vect{0,0}$, and increases abruptly near the frontiers of the dispatching policy~$\osipolicyfunction$.  The relatively high orders rendered by  Figure~\ref{figure:policyn}  are due to the  conservativeness of the uniform error bound~$\errn{\order}\equiv 6\, \continuitymodulussIfx{\taufunction}{[0,\taufunction]}{\costfunction}{{\taufunction}/({2\order})}$ for this particular choice of the cost function (cf. Figure~\ref{figure:costfinterval}). In practice, more accurate estimates of the error bound would contribute to reducing the estimation orders. More generally, building the function approximations from the $k$ first derivatives of~$\costfunction$, as previously suggested, will significantly accelerate convergence.
\obsolete{
for $0<\modulus{\complex}<\sing$, that
\begin{equation}\label{partlaurentone}\begin{array}{rcl}
\frac{1}{(\sing\complex)^2+1}
&=&
\left(\frac{1}{1-[-(\sing\complex)^2]}\right)
 \\
 &=&
\sum\nolimits_{t=0}^{\infty} [-(\sing\complex)^2]^t 
\\
 &=&
\sum\nolimits_{t=0}^{\infty} (-1)^t (\sing\complex)^{2t}   .
 \end{array}
\end{equation}
Hence,
\begin{equation}\begin{array}{rcl}
\lim\nolimits_{\complex\to 0^+}  \frac{d^{t}}{d\complex^t} \left[ \Costx{\frac{1}{\complex}} \right]
&=&
 \left|
 \begin{array}{ll}
  \factorial{t}\,  (-1)^{\frac{t}{2}} \sing^{t}   ,\quad&\text{if }t\text{ is even},
\\
 0, &\text{otherwise.} 
 \end{array}\right.
 \end{array}
\end{equation}
Besides,
\begin{equation}\label{partlaurenttwo}\begin{array}{l}
\lim\nolimits_{\complex\to 0^+}  \frac{d^{t}}{d\complex^t}  \left[\complex^{k}  \Polcosnx{k}{\frac{2}{\taufunction\complex}-1}\right]
\\ \quad \quad
\refereq{\eqref{polycostnx}}{=}
\obsolete{
\lim\nolimits_{\complex\to 0^+}  \frac{d^{t}}{d\complex^t} 
\left|
 \begin{array}{ll}
 \sum\nolimits_{l=0}^{\frac{k}{2}} 
  \coscoef{k}{l} 
 \sum\nolimits_{j=0}^{2l} \binomialcoef{2l}{\, j} (\frac{2}{\taufunction})^j (-1)^{2l-j}  \complex^{k-j}         
 ,
 &\quad \textup{if~$k$ is even,}
\\
 \sum\nolimits_{l=0}^{\frac{k-1}{2}} 
 \coscoef{k}{l} \sum\nolimits_{j=0}^{2l+1} \binomialcoef{2l+1}{\ j}
  (\frac{2}{\taufunction})^j (-1)^{2l+1-j}   \complex^{k-j}        
,
 & \quad\textup{if~$k$ is odd,}
 \end{array}\right.
}%
\left|
 \begin{array}{ll}
  \factorial{t}\, (\frac{-2}{\taufunction})^{k-t}   \sum\nolimits_{l=\ceil{\frac{k-t}{2}}}^{\frac{k}{2}} 
 \binomialcoef{\, 2l}{k-t}  \, \coscoef{k}{l}    
 ,
 &\textup{if~$k$ even, $t \leq k$, }
\\
 - \factorial{t} \, (\frac{-2}{\taufunction})^{k-t} \sum\nolimits_{l=\ceil{\frac{k-1-t}{2}}}^{\frac{k-1}{2}} 
   \binomialcoef{2l+1}{\, k-t}    \, \coscoef{k}{l}
, \quad
 &\textup{if $k$ odd, $t \leq k$,}
 \\ 0,&\text{otherwise}.
 \end{array}\right. 
 \end{array}
\end{equation}
The coefficients~\eqref{laurentcoef} follow from~\eqref{partlaurentone}, \eqref{partlaurenttwo} and Leibniz's product rule.

for $0\leq q \leq k$:
\obsolete{
\begin{equation}\label{laurentcoefexample}\begin{array}{rcl} 
\laurentcoef{-q}
 &= &
\frac{1}{\factorial{(k-q)}} \lim\nolimits_{\complex\to 0}
\frac{d^{k-q}}{d\complex^{k-q}} \left[\complex^{k} \Costx{k}{\frac{1}{\complex}}  \Polcosnx{k}{\frac{2}{\taufunction\complex}-1} \right]
\\
&=&
\left|
 \begin{array}{ll}
 \sum\nolimits_{t=
 0
 }^{\frac{k}{2}-\ceil{\frac{q}{2}}} 
(-1)^t \sing^{2t} (\frac{-2}{\taufunction})^{q+2t}   \sum\nolimits_{l=t+\ceil{\frac{q}{2}}}^{\frac{k}{2}} 
 \binomialcoef{\, 2l}{q+2t}  \, \coscoef{k}{l}    
 ,
 &\textup{if~$k$ even, }
\\
 -  \sum\nolimits_{t=
 0
 }^{\frac{k-1}{2}-\ceil{\frac{q-1}{2}}} 
(-1)^t \sing^{2t} (\frac{-2}{\taufunction})^{q+2t} \sum\nolimits_{l=t+\ceil{\frac{q-1}{2}}}^{\frac{k-1}{2}} 
   \binomialcoef{2l+1}{q+2t}    \, \coscoef{k}{l}
, \quad
 &\textup{if $k$ odd}.
 \end{array}\right. 
\\
&=&
%
%
%
\left|
 \begin{array}{ll}
 (\frac{-2}{\taufunction})^{q}  \sum\nolimits_{l=\ceil{\frac{q}{2}}}^{\frac{k}{2}} 
 \left[
 \sum\nolimits_{t= 0 }^{l-\ceil{\frac{q}{2}}} 
  \binomialcoef{\, 2l}{q+2t}  
(\frac{-4\sing^{2}}{\taufunction^2})^{t}   
 \right]
 \coscoef{k}{l}    
 ,
 &\textup{if~$k$ even, }
\\
 - (\frac{-2}{\taufunction})^{q}  
  \sum\nolimits_{l=\ceil{\frac{q-1}{2}}}^{\frac{k-1}{2}} 
 \left[
 \sum\nolimits_{t= 0}^{l-\ceil{\frac{q-1}{2}}} 
   \binomialcoef{2l+1}{q+2t}   
(\frac{-4\sing^{2}}{\taufunction^2})^{t} 
   \right]
    \coscoef{k}{l}
, \quad
 &\textup{if $k$ odd}
 \end{array}\right. 
\obsolete{
\\
&=&
\left|
 \begin{array}{ll}
 (\frac{-2}{\taufunction})^{q}   
 \sum\nolimits_{j=0}^{\floor{\frac{k-q}{2}}} 
 \left[
 \sum\nolimits_{t= 0 }^{j} \binomialcoef{2j+2\ceil{\frac{q}{2}}}{q+2t}  
(\frac{-4}{\taufunction^2})^{t}   
 \right]
 \coscoef{k}{j+\ceil{\frac{q}{2}}}    
 ,
 &\textup{if~$k$ even, }
\\
 -  (\frac{-2}{\taufunction})^{q}   
 \sum\nolimits_{j=0}^{\floor{\frac{k-q}{2}}} 
 \left[
 \sum\nolimits_{t= 0}^{j} 
   \binomialcoef{2j+2\ceil{\frac{q-1}{2}}+1}{q+2t}   
(\frac{-4}{\taufunction^2})^{t} 
   \right]
    \coscoef{k}{j+\ceil{\frac{q-1}{2}}}
, \quad
 &\textup{if $k$ odd}.
 \end{array}\right. 
 }
\end{array}\end{equation}
}%
\begin{equation}\label{laurentcoefexample}
\laurentcoef{-q}
=
\left|
 \begin{array}{ll}
 (\frac{-2}{\taufunction})^{q}  \sum\nolimits_{l=\ceil{\frac{q}{2}}}^{\frac{k}{2}} 
 \left[
 \sum\nolimits_{t= 0 }^{l-\ceil{\frac{q}{2}}} 
  \binomialcoef{\, 2l}{q+2t}  
(\frac{-4\sing^{2}}{\taufunction^2})^{t}   
 \right]
 \coscoef{k}{l}    
 ,
 &\textup{if~$k$ even, }
\\
 - (\frac{-2}{\taufunction})^{q}  
  \sum\nolimits_{l=\ceil{\frac{q-1}{2}}}^{\frac{k-1}{2}} 
 \left[
 \sum\nolimits_{t= 0}^{l-\ceil{\frac{q-1}{2}}} 
   \binomialcoef{2l+1}{q+2t}   
(\frac{-4\sing^{2}}{\taufunction^2})^{t} 
   \right]
    \coscoef{k}{l}
, \quad
 &\textup{if $k$ odd}
 \end{array}\right. 
 \end{equation}

The Fourier coefficients~\eqref{realfouriercoefficientstwo} for this cost function are given, for~$=0,1,\dots$, by
\begin{equation} \label{solutionrealfouriercoefkexample}\begin{array}{c}
\realpart{\fouriercoefk{k}}
=
\sqrt{\pi} \sum\nolimits_{q=0}^{\polydegree}     \frac{ \laurentcoef{-q}\,  (-\taufunction)^{q} }{ \factorial{q} \Gammax{\frac{1}{2}-q}} 
- 
\sum\nolimits_{\sing=\i\sing,-\i\sing} \Residuefx{
\allpolycostnx{k}{\complex}  \,  \complexexponent{\complex}{-\frac{1}{2}}{-\pi} \complexexponent{(\complex-\taufunction)}{-\frac{1}{2}}{-\pi}
}{\complex=\sing},
\end{array}
\end{equation}

 $\i-\taufunction= \sqrt{1+\taufunction^2} \, \exp{i\left(\pi-\Atanx{\frac{1}{\taufunction}}\right)}$

 $(\i-\taufunction)^{\alpha}=\exp{ \alpha \ln \left[ \sqrt{1+\taufunction^2} \, \exp{i\left(\pi-\Atanx{\frac{1}{\taufunction}}\right)} \right] }   = \exp{ \alpha \left(  \ln (\sqrt{1+\taufunction^2} +i  \left(\pi-\Atanx{\frac{1}{\taufunction}}\right)  \right) } = (1+\taufunction^2)^{\frac{\alpha}{2}}  \exp{ i \alpha \left(\pi-\Atanx{\frac{1}{\taufunction}}\right)  }    =  (1+\taufunction^2)^{\frac{\alpha}{2}} \left[ \Cosx{\alpha \left(\pi-\Atanx{\frac{1}{\taufunction}}\right) } + \i \Sinx{\alpha \left(\pi-\Atanx{\frac{1}{\taufunction}}\right) } \right]   $

 $(\i-\taufunction)^{-\frac{1}{2}}
=  (1+\taufunction^2)^{-\frac{1}{4}} \exp{- \i \frac{\pi-\Atanx{\frac{1}{\taufunction}}}{2}  }
  $
 
 $\i^{-\frac{1}{2}}=     \frac{\sqrt{2}}{2} (1- \i) =  \Cosx{-\frac \pi 4} + \i \Sinx{-\frac \pi 4} = \exp{-\i \frac \pi 4}$

 $\i^{-\frac{1}{2}} (\i-\taufunction)^{-\frac{1}{2}}
 =   (1+\taufunction^2)^{-\frac{1}{4}} \exp{ \i \left( \frac{\Atanx{\frac{1}{\taufunction}}}{2} - \frac{3\pi}{4} \right) }
  =    (1+\taufunction^2)^{-\frac{1}{4}} \left[ \Cosx{\frac{\Atanx{\frac{1}{\taufunction}}}{2} - \frac{3\pi}{4}  } + \i \Sinx{\frac{\Atanx{\frac{1}{\taufunction}}}{2} - \frac{3\pi}{4}  } \right]    $

$(\i\sing)^{-\frac{1}{2}} (\i\sing-\taufunction)^{-\frac{1}{2}}
 =
 \frac{1}{\sing}  \i^{-\frac{1}{2}} \left(\i-\frac{\taufunction}{\sing}\right)^{-\frac{1}{2}}
 =  \frac{1}{\sing}  (1+\left(\frac{\taufunction}{\sing}\right)^2)^{-\frac{1}{4}} \exp{ \i \left( \frac{\Atanx{\frac{\sing}{\taufunction}}}{2} - \frac{3\pi}{4} \right) }
=  \frac{\left(1+\left(\frac{\taufunction}{\sing}\right)^2\right)^{-\frac{1}{4}}}{\sing }   \exp{ \i \left( \frac{\Atanx{\frac{\sing}{\taufunction}}}{2} - \frac{3\pi}{4} \right) }
  $

 \begin{equation}\begin{array}{c}
 \Residuefx{
\allpolycostnx{k}{\complex}  \,  \complexexponent{\complex}{-\frac{1}{2}}{-\pi} \complexexponent{(\complex-\taufunction)}{-\frac{1}{2}}{-\pi}
}{\complex=\pm\i\sing}
\\=
\lim\nolimits_{\complex\to\pm \i\sing} (\complex\mp\i\sing) \left( \frac{\complex^2}{(\complex+\i\sing)(\complex-\i\sing)} \right)  \Polcosnx{k}{\frac{2\complex}{\taufunction}-1} \,  \complexexponent{\complex}{-\frac{1}{2}}{-\pi} \complexexponent{(\complex-\taufunction)}{-\frac{1}{2}}{-\pi}
\\=
 \left( \frac{\pm\i\sing}{2} \right)  \Polcosnx{k}{\frac{\pm2\i\sing}{\taufunction}-1} \,   \frac{\left(1+\left(\frac{\taufunction}{\sing}\right)^2\right)^{-\frac{1}{4}}}{\sing }   \exp{ \pm \i \left( \frac{\Atanx{\frac{\sing}{\taufunction}}}{2} - \frac{3\pi}{4} \right) }
\\=
 \frac{ \sqrt{\sing} \, \Polcosnx{k}{\frac{\pm2\i\sing}{\taufunction}-1}}{2 \left(\sing^2+\taufunction^2\right)^{\frac{1}{4}}}  \,  \exp{\pm \i \left( \frac{\Atanx{\frac{\sing}{\taufunction}}}{2} - \frac{\pi}{4} \right) }
\end{array}\end{equation}

 \begin{equation}\label{cosxkthetacomplex}\begin{array}{l}
 \Polcosnx{k}{\frac{\pm2\i\sing}{\taufunction}-1} 
 \\
 \obsolete{
 = \left|
 \begin{array}{ll}
 \sum\nolimits_{q=0}^{\frac{k}{2}} 
 \coscoef{k}{q} 
 \left(\frac{\pm2\i\sing}{\taufunction}-1\right)^{2q}           
 ,
 &\quad \textup{if~$k$ is even,}
\\
 \sum\nolimits_{q=0}^{\frac{k-1}{2}} 
 \coscoef{k}{q}
  \left(\frac{\pm2\i\sing}{\taufunction}-1\right)^{2q+1}           
,
 & \quad\textup{if~$k$ is odd,}
 \end{array}\right.
 \\
 }
  = \left|
 \begin{array}{r}
 \sum\nolimits_{q=0}^{\frac{k}{2}}  
\left[ \sum\nolimits_{t=0}^{q} \binomialcoef{2q}{2t}
 \left(\frac{-4\sing^2}{\taufunction^2}\right)^{t}     \mp \i \left(\frac{2\sing}{\taufunction}\right) \sum\nolimits_{t=0}^{q-1} \binomialcoef{2q}{2t+1}
  \left(\frac{-4\sing^2}{\taufunction^2}\right)^{t}    \right]  \coscoef{k}{q}      
 ,\quad  \quad  \quad  
 \\ \textup{if~$k$ is even,} 
\\
 \sum\nolimits_{q=0}^{\frac{k-1}{2}} 
\left[-\sum\nolimits_{t=0}^{q} \binomialcoef{2q+1}{2t}
 \left(\frac{-4\sing^2}{\taufunction^2}\right)^{t} \pm \i \left(\frac{2\sing}{\taufunction}\right)  \sum\nolimits_{t=0}^{q} \binomialcoef{2q+1}{2t+1}
 \left(\frac{-4\sing^2}{\taufunction^2}\right)^{t}           
 \right]   \coscoef{k}{q}
, \quad 
\\ \textup{if~$k$ is odd,}
 \end{array}\right.
 \end{array}
\end{equation}

\begin{equation}\begin{array}{c}
{\frac{2}{\sqrt{\sing}} \, \left(\sing^2+\taufunction^2\right)^{\frac{1}{4}}} \, 2\Realpart{\Residuefx{
\allpolycostnx{k}{\complex}  \,  \complexexponent{\complex}{-\frac{1}{2}}{-\pi} \complexexponent{(\complex-\taufunction)}{-\frac{1}{2}}{-\pi}
}{\complex=\i} } 
\\
=
\left|
 \begin{array}{r}
 2\sum\nolimits_{q=0}^{\frac{k}{2}}  
\left[ \Cosx{ \anglesingtau{\sing}{\tau}} \sum\nolimits_{t=0}^{q} \binomialcoef{2q}{2t}
 \left(\frac{-4\sing^2}{\taufunction^2}\right)^{t}     + \left(\frac{2\sing}{\taufunction}\right) \Sinx{ \anglesingtau{\sing}{\tau}} \sum\nolimits_{t=0}^{q-1} \binomialcoef{2q}{2t+1}
  \left(\frac{-4\sing^2}{\taufunction^2}\right)^{t}    \right]  \coscoef{k}{q}      
 ,\quad  \quad  \quad  
 \\ \textup{if~$k$ is even,} 
\\
 -2 \sum\nolimits_{q=0}^{\frac{k-1}{2}} 
\left[\Cosx{ \anglesingtau{\sing}{\tau}} \sum\nolimits_{t=0}^{q} \binomialcoef{2q+1}{2t}
 \left(\frac{-4\sing^2}{\taufunction^2}\right)^{t} + \Sinx{ \anglesingtau{\sing}{\tau}} \left(\frac{2\sing}{\taufunction}\right)  \sum\nolimits_{t=0}^{q} \binomialcoef{2q+1}{2t+1}
 \left(\frac{-4\sing^2}{\taufunction^2}\right)^{t}           
 \right]   \coscoef{k}{q}
, \quad 
\\ \textup{if~$k$ is odd,}
 \end{array}\right.
 \\
=
\left|
 \begin{array}{r}
 2\sum\nolimits_{q=0}^{\frac{k}{2}}  
 \evencoefsingtauq{\sing}{\taufunction}{q} \, \coscoef{k}{q}      
 ,\quad \textup{if~$k$ is even,} 
\\
 2 \sum\nolimits_{q=0}^{\frac{k-1}{2}} 
 \oddcoefsingtauq{\sing}{\taufunction}{q}  \, \coscoef{k}{q}    
, \quad \textup{if~$k$ is odd,}
 \end{array}\right.
 \\
=
 2\sum\nolimits_{q=0}^{\floor{\frac{k}{2}}}  
 \evenoddcoefksingtauq{k}{\sing}{\taufunction}{q} \, \coscoef{k}{q}      
 \end{array}
\end{equation}
where we define
\begin{equation}
 \begin{array}{rl}
 \evenoddcoefksingtauq{k}{\sing}{\taufunction}{q} 
 =
 \Cosx{ \anglesingtau{\sing}{\tau}} \sum\nolimits_{t=0}^{q} \binomialcoef{2q}{2t}
 \left(\frac{-4}{\taufunction^2}\right)^{t}    
 \hspace{20mm}
 &
 \\
  + \left(\frac{2}{\taufunction}\right) \Sinx{ \anglesingtau{\sing}{\tau}} \sum\nolimits_{t=0}^{q-1} \binomialcoef{2q}{2t+1}
  \left(\frac{-4}{\taufunction^2}\right)^{t}
,\quad &
(k \text{ even},\ q=0,\dots,\frac{k}{2}),
 \end{array}
\end{equation}
 and
\begin{equation}
 \begin{array}{rl}
 \evenoddcoefksingtauq{k}{\sing}{\taufunction}{q}
 =
 - \Cosx{ \anglesingtau{\sing}{\tau}} \sum\nolimits_{t=0}^{q} \binomialcoef{2q+1}{2t}
 \left(\frac{-4}{\taufunction^2}\right)^{t} 
 \hspace{10mm}
 &
 \\
  - \Sinx{ \anglesingtau{\sing}{\tau}} \left(\frac{2}{\taufunction}\right)  \sum\nolimits_{t=0}^{q} \binomialcoef{2q+1}{2t+1}
 \left(\frac{-4}{\taufunction^2}\right)^{t}     
,\quad &
(k \text{ odd}, \ q=0,\dots,\frac{k-1}{2}),
 \end{array}
\end{equation}
where $ \anglesingtau{\sing}{\tau} = [ \atanx{{\sing}/{\taufunction}}]/2 - {\pi}/{4} =1/2 [ \atanx{{\sing}/{\taufunction}}- {\pi}/{2} ] =-1/2 [ \atanx{{\taufunction}/{\sing}} ]$ for positive $\taufunction,\sing$.
 
}%
 %
 %

\obsolete{
----------

 take trigonometric approx of $g(\theta)$ : $a_0  /2+...+\cosx{n \theta}=P_n(\cosx{\theta})  =P_n(2\bl/\taufunction-1) = P^\prime_n(\bl)$

 show that the scaled sup error of $\costfunction$ on $[0,\taufunction] $ is more than the modulus of continuity of $g$, which is periodic and even, thus the modulus of continuity is realised by two points of $[0,\pi]$

 do the same the other way round starting from the modulus of c. of $g$

 \begin{equation}
 \sup\limits_{\substack{\bli{1},\bli{2}  \in [0,\taufunction],\\ \modulus{\bli{1}-\bli{2}}\leq 2\delta/\taufunction }} \modulus{\periodcostx{\bli{1}}-\periodcostx{\bli{2}}} 
 =
 \modulus{\periodcostx{\bl^\star}-\periodcostx{\bl^\star+\delta^\star}} 
\end{equation}

take $\theta^\star$ and  $\theta^\star+\xi^\star$ such that $\bl^\star=\taufunction/2 \times (1 + \cosx{\theta^\star})$ and $\bl^\star+\delta^\star = \taufunction/2 \times (1 + \cosx{\theta^\star+\xi^\star}) $, both points in $[0,\pi]$. Then $\modulus{2\delta^\star/\taufunction} = \modulus{\cosx{\theta^\star+\xi^\star}-\cosx{\theta^\star}} = 2 \modulus{\sinx{\xi^\star/2} \sinx{\theta^\star+\xi^\star/2} } \leq \modulus{ \xi^\star }$

 \begin{equation}
 \sup\limits_{\substack{\bli{1},\bli{2}  \in [0,\taufunction],\\ \modulus{\bli{1}-\bli{2}}\leq 2\delta/\taufunction }} \modulus{\periodcostx{\bli{1}}-\periodcostx{\bli{2}}} 
 =
 \modulus{\periodcostx{\bl^\star}-\periodcostx{\bl^\star+\delta^\star}} 
 =
 \modulus{ g(\theta^\star) - g(\theta^\star+\xi^\star) }
\end{equation}

Consider the periodic function with period~$2\pi$ obtained by mirroring~$\costx{\pi\bl/\taufunction}\, \indicatorfunction{(0,\pi\bl/\taufunction)}$ 
around~$0$, and replicating the $[-\pi,\pi]$ section

 it's periodic and continuous, take trigonometric approximation: consider variable $\altbl$ such that  $\bl = \taufunction \cosx{2\pi\altbl/T}$, and function $g$ such that $g(\altbl) = \costx{\taufunction\cosx{2\pi\altbl/T}}= \costx{\bl}$ on  $[0,\pi]$ and $g(\altbl)=g(-\altbl)$ on $[-\pi,0]$, then $g$ is continuous and  periodic on $[-\pi,\pi]$ with $T=2\pi$. Take trigonometric approximation of $g(\altbl)$ in the form  $a_0/2+...+\cosx{n \altbl}=P_n(\cosx{\altbl})=P\prime_n(\taufunction\cosx{\altbl})$, then we have an approximation of $ \costx{\taufunction\cosx{2\pi\altbl/T}}$ in the polynomial $P\prime_n(\bl)$.
 }




\section{Discussion}\label{section:discussion}

Integral transformations of the Poisson equation $\altgenericfunction = P\altgenericfunction + \genericfunction$ have the quality of simplifying the analysis, as they provide a principled framework for the systematic derivation of solutions.
Although it is known that the candidate functions for closed-form solutions form a dense set where any~$\genericfunction$ can be approximated with arbitrary precision, one  should be cautious that a   convergent series for~$\genericfunction$ does not always produce a convergent series for~$\altgenericfunction$; 
Taylor series of~$\genericfunction$, in particular,  are subject to tail effects and most likely to diverge after $\measureXfunction{\Wt}$-integration with respect to the stationary probability measure of the waiting times. 

In the context of \acl{FPI}, such tail effects can be avoided  by considering approximations of~$\genericfunction$ on finite supports---preferably trigonometric sums, which for Lipschitz-continuous~$\genericfunction$ achieve the convergence rate~$\magnitude{\taufunction/\order}$ in the number~$\order$ of approximation terms, improvable to $\magnitude{{\taufunction}/[(\order-k)\order^{k}]}$ if~$\genericfunction$ is $k$-times continuously differentiable---, while using tractable bounds for the larger backlog values. 
The availability of closed forms for bounding intervals of this type with a diversity of service time distribution models gives the green light to a systematized implementation of the \ac{FPI} step.

We believe that the techniques developed in this study, combined with well-chosen supervised learning methods,  make it possible, in large multiple-server systems, to devise efficient online algorithms for learning \ac{FPI} policies gradually, as the incoming jobs are  dispatched and the (possibly high-dimensional) state space is visited. The design and assessement of \ac{FPI} dispatching policies in such systems is left to future work.

\obsolete{%
In the M/G/1 queue, value functions derived from Taylor series expansions of the cost function are only expected to converge for a narrow range of cost functions, namely, entire functions with asymptotic rates of growth less than the exponential type~$\modulus{\dominantpoleX{\Wt}}$, corresponding to the asymptotic decreasing rate of the limiting distribution of the waiting times in the queue. 
%
Tail effects of the waiting time distribution can be avoided by instead considering function approximations on finite intervals~$[0,\taufunction]$, while using tractable bounds for the large-backlog section of the cost function. 
The availability of closed-form expressions to bounding intervals of this type 
for a diversity of service time distribution models gives the green light to a systematic implementation of \acf{FPI}  in the context of multiple-server dispatching.

The \ac{FPI}  policy in a multiple-server system can be inferred from such approximation models for the value functions of the individual servers on condition that the function approximimation algorithms converge fast enough on their intervals. 
%
%
Bernstein polynomials offer a flexible and easily implementable solution, though their relatively slow convergence rate, $
\magnitude{\taufunction/{\sqrt{\order}}}
$ in the number~$\order$ of approximation terms under Lipschitz-continuous cost, is prohibitive in practice. 
Trigonometric sums on the other hand achieve the non-improvable convergence rate~$
\magnitude{\taufunction/\order}
$, which can be further lowered to $\magnitude{{\taufunction}/[(\order-k)\order^{k}]}$ in the case of $k$-times continuously differentiable cost by a certain class of polynomials derived from trigonometric sums.
%

We believe that the techniques developed in this study, combined with well-chosen supervised learning methods, will make it possible, in large systems with many servers, to devise online algorithms for learning \ac{FPI} policies gradually, as the incoming jobs are  dispatched and the (possibly high-dimensional) state space is visited. The actual design, implementation and assessement of \ac{FPI} dispatching policies in such systems is left to future work.
}%

\obsolete{
\section*{Acknowledgements}
This study was supported in part by the Academy of Finland within the 
FQ4BD project (grant no.\ 296206).
}%


\bibliographystyle{spmpsci}      

\bibliography{\setpath%
QUESTA20_bibliography}

    
    \counterwithin{table}{section}

\appendix 

\numberwithin{theorem}{section}
\numberwithin{proposition}{section}
\numberwithin{lemma}{section}
\numberwithin{corollary}{section}
\numberwithin{result}{section}
\numberwithin{definition}{section}
\numberwithin{example}{section}

\section{Characterization of the value function}

\label{appendix:LST}

Before showing Propositions~\ref{proposition:poissonequation}-\ref{proposition:identities}, and Theorem~\ref{theorem:solutionofPoisson}, we  characterize~$\LSTWtfunction$ in the complex plane. 
%
\begin{proposition}[Analycity of~$\LSTWtfunction$ and pole location] \label{proposition:analycity} 
Under Assumption~\ref{assumption:stability}:

\begin{enumerate}[(i),ref=\roman*,wide,labelwidth=!,labelindent=5pt]
\item \label{analycity:dominantpoles}
%
The dominant singularity~$\dominantpoleX{\Wt}$ of~$\LSTWtfunction$  (i.e.,  that with largest real value)
 is a pole with degree~$1$ lying on the negative real axis~$\Realminus$. The dominant singularity~$\dominantpoleX{\stochst}$ of~$\LSTstochstfunction$ is real, negative (possibly infinite) and satisfies~$\dominantpoleX{\stochst}<\dominantpoleX{\Wt}$. $\LSTstochstfunction$ is analytic on~$\{\complex\in\Complex\setst \Realpart{\complex}> \dominantpoleX{\Wt}\}$.
\item \label{analycity:nopole}
$\LSTWtfunction$ is analytic on~$\{\complex\in\Complexzero\setst \Realpart{\complex}> \dominantpoleX{\Wt}\}$, 
where
$\lim\nolimits_{\complex\to\infty}\modulus{\LSTWts{\complex}}\leq 1$.

\item \label{analycity:isolatedpole}
%
%
One can find~$\epsilon>0$ such that $\LSTWtfunction$ is analytic on~$\{\complex\in\Complexzero\setminus\{\dominantpoleX{\Wt}\}\setst \realpart{\complex}>\dominantpoleX{\Wt}-\epsilon\}$. 

\item \label{analycity:centered}
$\LSTWtfunction$ is analytic in a neighborhood of~$0$, where it rewrites as the series
\noeqref{PKformulaseriescentered}%
\begin{equation} 
\label{PKformulaseriescentered}
\begin{array}{ll}
\LSTWts{\complex}
=
\sum\nolimits_{k=0}^\infty
\specialmatrixnk{}{k}\, (-\complex)^k
,
& \forall\complex\in\{\altcomplex\in\Complexzero:\modulus{\altcomplex-\sing} < \modulus{\dominantpoleX{\Wt}} \},
\end{array}
\end{equation}
\storecompoundcounter{equation}{PKformulaseriescentered}%
in which the coefficients~$\{\specialmatrixnk{}{k}\}$ are given by~\eqref{table:specialmatrixnk}  in Table~\ref{table:valuefunctions}, 
and satisfy~$\specialmatrixnk{}{k}=\expectation{\Wt^k}/ \factorial{k}$, for $k\in\Natural$.
The series~$\{\specialmatrixnk{}{k}\}$ is asymptotically geometric with asymptotic rate~$\modulus{\dominantpoleX{\Wt}}^{-1}$.

\item \label{analycity:excentered} At any point~$\sing\in\Complexzero$ where~$\LSTWtfunction$ is analytic, $\LSTWtfunction$ rewrites as the series
\noeqref{PKformulaseriesexcentered}%
\begin{equation}
\label{PKformulaseriesexcentered}
\begin{array}{ll}
\LSTWts{\complex}
= 
\scaley{\LSTWts{\sing}}{}
 \sum\nolimits_{k=0}^{\infty}  \poweracoefnk{\sing}{}{k}   (\sing-\complex)^{k} ,
 &
 \forall\complex\in\{\altcomplex\in\Complex:\modulus{\altcomplex-\sing} < \radiussing{\sing} \} ,
 \end{array}
\end{equation}
\storecompoundcounter{equation}{PKformulaseriesexcentered}%
where~$\radiussing{\sing}$ 
denotes the distance from~$\sing$ to the closest singularity of~$\LSTWtfunction$. The coefficients~$\{\poweracoefnk{\sing}{}{k}\}$ are given by~\eqref{table:poweracoefnk} in Table~\ref{table:valuefunctions}.
\end{enumerate}
\end{proposition}
\storecompoundcounter{proposition}{proposition:analycity}%
%


Next, we derive the identities of Section~\ref{section:complexcharacterization} for the value function. 
\obsolete{
The result first reported in Lemma~\ref{lemma:linear} is well known. It is a consequence of the law of large numbers.

\newcommand{\eqreflinearresultsT}{\noeqref{linearresults}\text{(\ref{linearresults}a)}}%
\newcommand{\eqreflinearresultsN}{\noeqref{linearresults}\text{(\ref{linearresults}b)}}%
\begin{lemma}\label{lemma:linear}
Let $0\leq \bli{2} \leq \bli{1}$ and
suppose that the M/G/1 queue, initially at backlog state~$\bli{1}$, reaches state~$\bli{2}$ for the first time after a random period of time~$\stochT$, in which~$\stochN$ jobs have arrived. Then, 
\begin{equation}\label{linearresults}\begin{array}{rcl}
(a)\quad
\Expectation{\stochT}
=
 \frac{1}{1-\load}\, ({\bli{1}-\bli{2}})
,
\qquad
&
(b) \quad
\Expectation{\stochN}
=
\frac{\ar}{1-\load} \, (\bli{1}-\bli{2})
.
\end{array}\end{equation} 
\storecompoundcounter{equation}{linearresults}%
\end{lemma}
\storecompoundcounter{lemma}{lemma:linear}%
\obsolete{%
\begin{proof}
Lemma~\ref{lemma:linear} is a consequence of the law of large numbers. Consider~$n$ realizations of the setting, and denote by~$\stochTk{1},\dots,\stochTk{n}$ the random values observed for the variable~$\stochT$, by~$\stochNk{1},\dots,\stochNk{n}$ those observed for the variable~$\stochN$, and by~$\{\stochstkl{k}{l}\}_{l=1}^{\stochNk{k}},\dots,\{\stochstkl{k}{l}\}_{l=1}^{\stochNk{n}}$. Since the rate of the Poisson process is equal to the density of arrivals per unit of time,  
\begin{equation}\label{poissondensity}
\textstyle
\ar
=
\lim\limits_{n\to\infty}\frac{\sum\nolimits_{k=1}^{n}\stochNk{k}}{\sum\nolimits_{k=1}^{n}\stochTk{k}}
=
\Big(\lim\limits_{n\to\infty}\frac{\sum\nolimits_{k=1}^{n}\stochNk{k}}{n}\Big) \Big(\lim\limits_{n\to\infty} \frac{n}{\sum\nolimits_{k=1}^{n}\stochTk{k}}\Big)
=
\frac{\Expectation{\stochN}}{\Expectation{\stochT}} .
\end{equation}
Similarly, 
\begin{equation}\label{stsumdensity}\textstyle
\lim\limits_{n\to\infty}\frac{1}{n}\sum\nolimits_{k=1}^{n}\sum\nolimits_{l=1}^{\stochNk{k}}\stochstkl{k}{l}
=
\Big(\lim\limits_{n\to\infty}\frac{\sum\nolimits_{k=1}^{n}\sum\nolimits_{l=1}^{\stochNk{k}}\stochstkl{k}{l}}{\sum\nolimits_{k=1}^{n}\stochNk{k}}\Big) \Big(\lim\limits_{n\to\infty} \frac{\sum\nolimits_{k=1}^{n}\stochNk{k}}{n}\Big)
=
\Expectation{\stochst}\Expectation{\stochN}.
\end{equation}
By definition of the variables, we also have
$
\bli{2}= \bli{1} +\sum\nolimits_{l=1}^{\stochNk{k}}\stochstkl{k}{l}-\stochTk{k}
$
for $k=1,\dots,n$, and it follows that
\begin{equation}\label{linearidentity}\begin{array}{l}
\Expectation{\stochT}
=
\lim\nolimits_{n\to\infty}\frac{1}{n}\sum\nolimits_{k=1}^{n}\stochTk{k} 
=
\bli{1}-\bli{2}+\lim\nolimits_{n\to\infty}\frac{1}{n}\sum\nolimits_{k=1}^{n}\sum\nolimits_{l=1}^{\stochNk{k}}\stochstkl{k}{l} 
\hspace{20mm}\\\hfill
\refereq{\eqref{stsumdensity}}{=} \bli{1}-\bli{2} + \Expectation{\stochst}\Expectation{\stochN}
\refereq{\eqref{poissondensity}}{=}
 \bli{1}-\bli{2} + \load\Expectation{\stochT}
\end{array}
\end{equation}
which yields~(\ref{linearresults}a), and~(\ref{linearresults}b) follows from~\eqref{poissondensity}.
\end{proof}
}%
 
}%

\begin{proof}[Proposition~\ref{proposition:poissonequation}]
Start the queue at state~$\bl$. The quantity~$ \ValuetxT{\ttt}{\bl}{\ti} $ appearing in~\eqref{valuefunction} rewrites, for any~$\stochT\geq0$ and for~$\ti$ large enough, as $ \ValuetxT{\ttt}{\bl}{\ti} = \ValuetxT{\ttt}{\bl}{\stochT} +  \ValuetxT{\ttt+\stochT}{\stochblt{\stochT}}{\ti-\stochT} $, where~$\stochblt{\stochT}$ denotes the backlog observed after time~$\stochT$.
It follows from the Markov property of the system and from the the definition~\ref{valuefunction} of the value function that 
\begin{equation}\label{Markovproperty} 
 \valuex{\bl} 
 =
 \expectation{ \ValuetxT{\ttt}{\bl}{\stochT}  - \ar\meancost \stochT  } + \expectation{ \valuex{\stochblt{\stochT}} }
.
\end{equation} 

\renewcommand{\event}{Y}%
%
Now, consider the function 
\begin{equation}\label{MDPvaluefunction}
\textstyle
\altgenericx{\bl,\st}
=
\limbracenolimits{\Tim\to\infty}{ \sum_{\tim=1}^{\Tim} 
\Expectation{  
\costx{\stochblk{\tim} } -\meancost} \setst \vect{\stochblk{1},\stochstk{1}}=\vect{\bl,\st}
} + \meancost
,
\quad
\forall \bl,\st\in\Realpluszero,
\end{equation}
which can be verified to satisfy Equation~\eqref{MG1Poisson} by application of the Markov property to the \ac{MDP}.
The function~$\altgenericfunction$ defined by~\eqref{MDPvaluefunction} can be seen as a discrete-time counterpart of the value function~\eqref{valuefunction}, which 
follows  from~\eqref{MDPvaluefunction} by using the convention $\vect{\stochblk{0},\stochstk{0}}=\vect{\bl,0}$ and setting $\valuex{\bl}\equiv\MDPkernelfunction\altgenericx{\bl,0} - \meancost$ or, equivalently, from~\eqref{Markovproperty} by defining~$\stochT$ as the arrival time of the first job so that $\expectation{ \ValuetxT{\ttt}{\bl}{\stochT}  - \ar\meancost \stochT  } = -\meancost$ and $\expectation{ \valuex{\stochblt{\stochT}} } \equiv \MDPkernelfunction   \altgenericx{\bl,0} $.
\qed\end{proof}

\newcommand{\ignorealphafactor}[1]{}%
\begin{proof}[Theorem~\ref{theorem:solutionofPoisson}]
A simple calculation reveals that $\UDeltax{\bl}=0$ if $\bl<0$, and  
$
\label{UDeltax}
\textstyle
\UDeltax{\bl}
=
(\MDPkernelfunction - \UMDPkernelfunction) \altgenericx{\bl,0}
= \Kquantity  \exp{-\ar\bl} 
$, 
for $\bl\geq0$, with~$\Kquantity$ satisfying by
\begin{equation}\label{Kquantity}
\textstyle
\Kquantity
 =
 \bigexpectation{ \altgenericx{0,\stochstzero} 
- \ar  \int_{-\infty}^{0} \altgenericx{\bl,\stochst} \, \exp{\ar\bl} \, d\bl }
.
\end{equation}
We characterize the extended value function
$
\Uvaluefunction:\bl\in\Real\mapsto\Uvaluex{\bl}=\altgenericx{\bl,0} - \costx{\bl} \ignorealphafactor{+ \alphafactor\bl   \,\stepx{\bl} }  
$
associated with some~$\altgenericfunction$ solution of~\eqref{rMG1Poisson}.
Note that, by construction, $\Uvaluefunction$ coincides with the value function on~$\Realpluszero$, i.e., $\Uvaluex{\bl} \equiv \valuex{\bl}$ if $\bl\geq 0$.
Once~$\Uvaluefunction$ is known, it will be possible to recover~$\altgenericfunction$  using 
\begin{equation} \label{altgenericxidentity}
 \altgenericx{\bl,\st} 
 = 
 \altgenericx{\bl+\st,0} - \costx{\bl+\st} + \costx{\bl}  
 =
 \Uvaluex{\bl+\st} + \costx{\bl} \ignorealphafactor{- \alphafactor(\bl+\st)   \,\stepx{\bl+\st} } 
 .
\end{equation}
Consider $\complex$  
in the region of absolute convergence of~$\Blaplacetransformf\Uvaluefunction$, where the orders of integration in our developments may be permuted. The two-sided Laplace transform of~$\Uvaluefunction$ is given by
\begin{equation}\notag
 \nocolsep
 \begin{array}{rcl}
 \Blaplacetransformfs{\Uvaluefunction}{\complex} 
 &=&
 \int_{-\infty}^{+\infty} \Uvaluex{\bl}\, \exp{-\complex\bl} \, d\bl
\\
&=&
\int_{-\infty}^{+\infty} [ \altgenericx{\bl,0} - \costx{\bl} \ignorealphafactor{+   \alphafactor\bl   \,\stepx{\bl}}
] \, \exp{-\complex\bl} \, d\bl
\\
&\refereq{\eqref{rMG1Poisson}}{=}&
\int_{-\infty}^{+\infty} [ \UMDPkernelfunction \altgenericx{\bl,0} + \UDeltax{\bl} -\meancost \,\indicator{[0,+\infty)}{\bl}  ]\, \exp{-\complex\bl} \, d\bl \ignorealphafactor{ + \frac{\alphafactor}{\complex^2} }
\\  
&=& 
\int_{-\infty}^{+\infty} \exp{-\complex\bl} \, d\bl \int \altgenericx{\altbl,\st} \UMDPkernel{\bl,0}{d\vect{\altbl,\st}}  + \int_{0}^{+\infty} \Kquantity  \exp{-(\complex+\ar)\bl} \, d\bl - \frac{\meancost}{\complex} \ignorealphafactor{ + \frac{\alphafactor}{\complex^2} }
\\
&\refereq{\eqref{UMDPkernel}}{=}&
 \ar \bigexpectation{ \int_{-\infty}^{+\infty} \exp{-\complex\bl} \, d\bl \,  \int_{-\infty}^{\bl}  \altgenericx{\altbl,\stochst} \exp{-\ar(\bl-\altbl)}  \,d\altbl     }  - \frac{\meancost}{\complex}  + \frac{\Kquantity}{\complex+\ar} \ignorealphafactor{+ \frac{\alphafactor}{\complex^2} }
\\
&=&
\ar  \bigexpectation{ \int_{-\infty}^{+\infty}   \altgenericx{\altbl,\stochst}\, \exp{\ar\altbl} \,d\altbl \int_{\altbl}^{+\infty} \exp{-(\complex+\ar)\bl}     \, d\bl      } - \frac{\meancost}{\complex} + \frac{\Kquantity}{\complex+\ar}   \ignorealphafactor{+ \frac{\alphafactor}{\complex^2} }
\\
&\refereq{\eqref{altgenericxidentity}}{=}&
 \frac{\ar}{\complex+\ar}  \big\{ \bigexpectation{ \int_{-\infty}^{+\infty}    \Uvaluex{\altbl+\stochst}\, \exp{-\complex\altbl} \,d\altbl  \ignorealphafactor {-  \int_{-\stochst}^{+\infty}    \alphafactor(\altbl+\stochst) \,    \exp{-\complex\altbl} \,d\altbl  }  }    +  \int_{0}^{+\infty}  \costx{\altbl}  \,    \exp{-\complex\altbl} \,d\altbl   \big\}  - \frac{\meancost}{\complex} + \frac{\Kquantity}{\complex+\ar} \ignorealphafactor{+ \frac{\alphafactor}{\complex^2} }
\\
&=& 
\frac{\ar}{\complex+\ar}  \big\{\LSTstochsts{-\complex} \,    \Blaplacetransformfs{\Uvaluefunction}{\complex}    +  \laplacetransformfs{\costfunction}{\complex}   - \frac{ \ignorealphafactor{\ar}\meancost \ignorealphafactor{- \alphafactor(1-\load)\nearones{\complex} }}{\ignorealphafactor{\ar}\complex}+ \frac{\Kquantity-\meancost}{\ar}   \big\}
.
\end{array}
\end{equation}
\ignorealphafactor{where $\nearones{\complex}=[1-(\ar/\complex)(\LSTstochsts{-\complex}-1)]/(1-\load)=1+\magnitude{\complex}$.}
Solving the above equation for~$\Blaplacetransformfs{\Uvaluefunction}{\complex}$ and using $  \LSTWts{-\complex} =  1+{\ar\expectation{\stochst^2}}/[{2(1-\load)}]\complex+\zero{\complex}$ yields, after computations,
\begin{equation} \label{BlUpurevaluefunction}
\nocolsep
\begin{array}{rcl}
 \Blaplacetransformfs{\Uvaluefunction}{\complex}  
 &\refereq{\eqref{PKformula}}{=}&
 \frac{\ar\LSTWts{-\complex} }{(1-\load )\complex}\left[\laplacetransformfs{\costfunction}{\complex} -\frac{\ignorealphafactor{\ar}\meancost \ignorealphafactor{-\alphafactor(1-\load)\nearones{\complex}}}{\ignorealphafactor{\ar}\complex} + \frac{\Kquantity-\meancost}{\ar}  \right]
 \\
 &=&
  \frac{\ar}{(1-\load )\complex} \LSTWts{-\complex}\laplacetransformfs{\costfunction}{\complex}  -  \frac{\ar\meancost}{(1-\load )\complex^2} +  \frac{\Ujump}{\complex} +
  \frac{\analyticparts{\complex}}{\complex}
  ,
\end{array}
\end{equation} 
where $\analyticparts{\complex}$ has no singularities on~$\{\complex\in\Complex\setst \Realpart{\complex}<- \dominantpoleX{\Wt}\}$, and~$\Ujump$ is given by 
\begin{equation}\label{jump}
\Ujump 
= 
{\Kquantity}/{(1-\load )}-{\MaclaurinLSTstochsts{2}{\ar}\meancost}/{(1-\load)^2} 
,
\end{equation}
with $
\MaclaurinLSTstochsts{2}{\ar}\defeq 1-\ar\expectation{\stochst}+\ar^2\expectation{\stochst^2}/2
$.
Since~$\Uvaluefunction$ is expected to be asymptotically flat for $\bl\to-\infty$, the $- {\ar\meancost}/[{(1-\load )\complex^2}] $ term in~\eqref{BlUpurevaluefunction} is necessarily due to a term $ -{\ar\meancost }/({1-\load})\, \bl$ on~$\Realpluszero$ in the backlog domain. 
By inverse transformation of~\eqref{BlUpurevaluefunction}, we find 
\begin{align}\label{earlyUstemsolution}\tag{$\makeU{\text{S}}$}
\textstyle
\Uvaluex{\bl}
&\textstyle=
\Uvaluex{0} +\Upurevaluex{\bl} -\frac{\ar\meancost }{1-\load} \bl \, \indicator{[0,+\infty)}{\bl}  + \anticausalx{\bl} \, \indicator{(-\infty,0)}{\bl}
, &
\forall\bl\in\Real,
\end{align}
where~$\anticausalfunction$ satisfies 
$\laplacetransformfs{\anticausalx{-\cdot}}{\complex} =   -[{\analyticparts{-\complex}+\Ujump}]/{\complex}$.
The general form for~$\altgenericfunction$ follows from~\eqref{altgenericxidentity}, \eqref{earlyUstemsolution} and $\Uvaluex{\bl} \equiv \valuex{\bl}$ on~$\Realpluszero$. 
 The non-empty \ac{ROC} of~$\Blaplacetransformf{\rightderivative\Uvaluefunction}$ is the consequence of Assumption~\ref{assumption:complexintegrablecost}.

It remains to show that the function~$\anticausalfunction$ is identical for all solutions or, equivalently, that the quantity~$\Kquantity$  in~\eqref{Kquantity} is the same for all~$\altgenericfunction$.
%
%
%
%
%
%
%
%
%
%
 \newcommand{\harmonic}{\alpha}%
 \newcommand{\basisjump}{\bar\Ujump}%
 \newcommand{\probapos}{p^+}%
 \newcommand{\probaneg}{p^-}%
To see this, consider a solution~$\altgenericifunction{1}$ of~\eqref{rMG1Poisson} with associated value function~$\Uvalueifunction{1}$ and jump~$\Ujumpi{1}$ at $\bl=0$. The value function for every other solution~$\altgenericifunction{2}$ rewrites as $\Uvalueifunction{2}=\harmonic + (\Uvalueifunction{1} - \Ujumpi{2} + \Ujumpi{1} ) \indicatorfunction{\Realminus}+ \Uvalueifunction{1} \indicatorfunction{\Realpluszero}$, where $\Ujumpi{1}$, $\Ujumpi{2}$ and~$\harmonic$ are constants. We show that $\Ujumpi{1}=\Ujumpi{2}$. If we successively compute the expression~\eqref{jump} for~$\Ujumpi{1}$ and~$\Ujumpi{2}$, using~\eqref{Kquantity}, \eqref{altgenericxidentity} and the extension of~\eqref{MG1Poisson}, we get, after simplifications, $
 \Ujumpi{2} = 
 \Ujumpi{1}  - (  \Ujumpi{1} - \Ujumpi{2} ) \LSTstochsts{\ar}    /{(1-\load )}
 $. Exploiting twice the strict convexity of~$\exp{-x}$, we find $
\LSTstochsts{\ar} = \expectation{ \exp{-\lambda\stochst} } >  \exp{-\lambda\expectation{\stochst} }  > 1 - \lambda\expectation{\stochst} = 1-\load
$. Hence, $\LSTstochsts{\ar}    /{(1-\load )} \neq 1$ and, consequently, $\Ujumpi{1}=\Ujumpi{2}$.
\qed\end{proof}

\begin{proof}[Proposition~\ref{proposition:identities}]
\eqref{proposition:differentialequation}
Consider~$\complex$  
in the region of absolute convergence of~$\Blaplacetransformf\Uvaluefunction$. 
Since $\Blaplacetransformfs{\rightderivative\Upurevaluefunction}{\complex}=\complex\Blaplacetransformfs{\Upurevaluefunction}{\complex}$,   \eqref{bilaterallaplacetransformVF} rewrites as
\begin{equation} \label{dualBlUdpurevaluefunction}\notag
[\complex+ \ar (1- \LSTstochsts{-\complex}) ] \, \Blaplacetransformfs{\Upurevaluefunction}{\complex}
\refereq{\eqref{PKformula}}{=}
\ar\laplacetransformfs{\costfunction}{\complex} + \complex \laplacetransformfs{\anticausalx{-\cdot}}{-\complex} 
\left[1+ \ar/\complex (1- \LSTstochsts{-\complex}) \right]
,
\end{equation}
while  transformation of~\eqref{Ustemsolution} gives
$
\Blaplacetransformfs{\Upurevaluefunction}{\complex}  = \Blaplacetransformfs{\Uvaluefunction}{\complex}
 + {\ar\meancost }/[{(1-\load)\complex^2}]  
$. 
Besides,
\begin{equation}\notag
\nocolsep
\begin{array}{l}
\LSTstochsts{-\complex} \Blaplacetransformfs{\Uvaluefunction}{\complex} 
=
\expectation{\exp{\complex\stochst}} \int_{-\infty}^{+\infty} \Uvaluex{\bl} \exp{-\complex\bl} d\bl   
=
\expectation{ \int_{-\infty}^{+\infty} \Uvaluex{\bl} \exp{-\complex(\bl-\stochst)} d\bl }   
\\\qquad\qquad\qquad\qquad
=
\expectation{ \int_{-\infty}^{+\infty} \Uvaluex{\altbl+\stochst} \exp{-\complex\altbl} d\altbl }   
=
\int_{-\infty}^{+\infty} \expectation{ \Uvaluex{\altbl+\stochst} } \exp{-\complex\altbl} d\altbl    
=
\Blaplacetransformfs{\expectation{\Uvaluex{\cdot+\stochst}}}{\complex}.
\end{array}
\end{equation}
Combining the above with $\Blaplacetransformfs{\rightderivative\Uvaluefunction}{\complex}=\complex\Blaplacetransformfs{\Uvaluefunction}{\complex}$, we get, after computations,
\begin{equation} \label{dualBlUdvaluefunction}
\Blaplacetransformfs{\rightderivative\Uvaluefunction}{\complex}      
= \ar\left[\laplacetransformfs{\costfunction}{\complex} -\meancost/\complex + \Blaplacetransformfs{\expectation{\Uvaluex{\cdot+\stochst}}}{\complex}   - \Blaplacetransformfs{\Uvaluefunction}{\complex}\right] + \altanalyticparts{\complex} 
,
\end{equation}
where~$\altanalyticparts{\complex}$ shows no singularity on~$\{\complex\in\Complex\setst \Realpart{\complex}<- \dominantpoleX{\Wt}\}$, and we have used Proposition~\refeq{proposition:analycity}\eqref{analycity:dominantpoles} and $ 1+ \ar/\complex (1- \LSTstochsts{-\complex})
= 1-\load +\zero{1} $.
Inverse Laplace transformation of~\eqref{dualBlUdvaluefunction} then gives, at every $\bl\geq0$ where~$\Uvaluefunction$ is differentiable,
\begin{align}\label{Uderesult}
\dUvaluex{\bl}   
=
 \ar\left(
 \costx{\bl}
-  \meancost
+
\Expectation{\Uvaluex{\bl+\stochst}-  \Uvaluex{\bl}  }\right)
,
\end{align}
which holds for almost every~$\bl>0$ by piecewise continuity of~$\costfunction$. 
Since by construction $\valuex{\bl}\equiv\Uvaluex{\bl}$ for $\bl\geq 0$, we find~\eqref{deresult}.
From Theorem~\ref{theorem:solutionofPoisson}, we have
\begin{equation} \label{Uvaluexzero}
 \textstyle
\Uvaluex{0} =  \altgenericx{0,0} - \costx{0}  \refereq{\eqref{rMG1Poisson}}{=} 
\MDPkernelfunction\altgenericx{0,0}    - \meancost 
\refereq{\eqref{fullMDPkernel}}{=}
\expectation{\Uvaluex{\stochstzero}} + \costx{0} - \meancost 
,
\end{equation}
which yields~\eqref{boundarycondition}.
Finally, we find~\eqref{derivativeatzero} by taking the limit of~\eqref{Uderesult} as $\bl\to0^+$,
\begin{align}\notag
\begin{array}{l}
\dUvaluex{0}   
=
 \ar\left(
 \rightcostx{0}
-  \meancost
+
\Expectation{\Uvaluex{\stochst}-  \Uvaluex{0}  }\right)
\refereq{\eqref{Uvaluexzero}}{=}
 \ar\left( \rightcostx{0}  + \Expectation{\Uvaluex{\stochst}}-  \expectation{\Uvaluex{\stochstzero}}  - \costx{0}  \right)
 .
\end{array}
 \end{align}

%
%

\eqref{proposition:vfdifferentiability}
Equation~\eqref{stemsolution} follows directly from~\eqref{Ustemsolution} and the fact that $\valuex{\bl}\equiv\Uvaluex{\bl}$ for $\bl\geq 0$.
It remains to compute~$\rightderivative\Upurevaluefunction$.
From Theorem~\ref{theorem:solutionofPoisson}, we get
%
\begin{equation} \label{altBlUdpurevaluefunction}
\nocolsep
\begin{array}{l}
 \Blaplacetransformfs{\rightderivative\Upurevaluefunction}{\complex} 
\refereq{\eqref{bilaterallaplacetransformVF}}{=}
\frac{\ar}{(1-\load )} \LSTWts{-\complex}\laplacetransformfs{\costfunction}{\complex} 
=
\frac{\ar}{(1-\load )} \expectation{\exp{\complex\Wt}}\int_{-\infty}^{+\infty}\costx{\bl}\exp{-\complex\bl}\, d\bl 
\qquad\qquad\\\hfill
=
\frac{\ar}{(1-\load )} \expectation{\int_{-\infty}^{+\infty}\costx{\bl}\exp{-\complex(\bl-\Wt)}\, d\bl } 
=
\frac{\ar}{(1-\load )} \int_{-\infty}^{+\infty}\expectation{\costx{\altbl+\Wt}}\exp{-\complex(\altbl)}\, d\altbl  
\\\hfill
=
\frac{\ar}{(1-\load )} \Blaplacetransformfs{\expectation{\costx{\cdot+\Wt}}}{\complex} 
,
\end{array}
\end{equation}  
where we have used 
$\costx{\bl}=0$ if $\bl<0$.
%
Equation~\eqref{bilaterallaplacetransformVF} follows by inversion of~\eqref{altBlUdpurevaluefunction}.
\qed\end{proof}

\obsolete{
Corollary~\ref{vfcorollary} follows directly from Proposition~\ref{proposition:identities}\eqref{proposition:vfdifferentiability}.
\begin{corollary}\label{vfcorollary} 
In the server model of Section~\ref{section:valuefunction}, 
let~
$\costifunction{1},\costifunction{2}:\Realpluszero\mapsto\CtoR{\Complex}{\Real}$ be two cost functions satisfying Assumption~\ref{assumption:complexintegrablecost}, and denote the corresponding value functions, mean costs per job, and \aclp{wfunction} 
by~$\valueifunction{1}$, $\meancosti{1}$, $\purevalueifunction{1}$ and~$\valueifunction{2}$, $\meancosti{2}$, $\purevalueifunction{2}$, respectively. 
\begin{enumerate}[(a),ref=\roman*,wide,labelwidth=!,labelindent=5pt]
\item \label{lemma:equality} If $\costix{1}{\bl}=\costix{2}{\bl}$ almost everywhere on~$\Realpluszero$ with respect to the probability measure~$\measureXfunction{\Wt}$ (\lq{}$\measureXfunction{\Wt}$-a.e.\rq{}), then $\meancosti{1}=\meancosti{2}$,   $\purevalueifunction{1}=\purevalueifunction{2}$, and~$\valueifunction{1}=\valueifunction{2}$. 
\item \label{lemma:additivity}
If $\valuefunction$, $\meancost$, $\purevaluefunction$ are the value function, the mean cost 
and the \acl{wfunction} associated with the cost function~$\costifunction{1}+\costifunction{2}$, then $\meancost=\meancosti{1} + \meancosti{2}$%
, $ \purevaluefunction=\purevalueifunction{1} + \purevalueifunction{2}$, and
$\valuefunction-\valuex{0}=\valueifunction{1}-\valueix{1}{0} + \valueifunction{2}-\valueix{2}{0} 
$.
\obsolete{
and 
\begin{equation}\label{equationsum}\begin{array}{rcll}
\purevaluex{\bl}&=& \purevalueix{1}{\bl} + \purevalueix{2}{\bl}
,& \quad\forall\bl\in\Realpluszero,
\\
\valuex{\bl}-\valuex{0}&= &\valueix{1}{\bl}-\valueix{1}{0} + \valueix{2}{\bl}-\valueix{2}{0} 
,& \quad\forall\bl\in\Realpluszero.
\end{array}
\end{equation}
}%
\item \label{lemma:comparison}
If \CtoR{%
$\costifunction{1},\costifunction{2}:\Realpluszero\mapsto\Real$ and
}{%
}%
$\costix{1}{\bl}\leq\costix{2}{\bl}$ 
$\measureXfunction{\Wt}$-a.e. on~$\Realpluszero$,
then $\meancosti{1} \leq \meancosti{2}$ and 
$\purevalueifunction{1}\leq \purevalueifunction{2}$.
\obsolete{
\begin{equation}\label{equationcomparison}\begin{array}{ll}
\purevalueix{1}{\bl} \leq \purevalueix{2}{\bl}
,& \quad\forall\bl\in\Realpluszero.
\end{array}
\end{equation}
}%
\item \label{lemma:comparisoncomplex}
If \CtoR{%
$\costifunction{2}:\Realpluszero\mapsto\Realpluszero$ and
}{%
$\costifunction{2}\geq 0$ and
}
$\modulus{\costix{1}{\bl}}\leq\costix{2}{\bl}$ $\measureXfunction{\Wt}$-a.e. on~$\Realpluszero$,
then $\modulus{\meancosti{1} } \leq \meancosti{2}$ and
$
\modulus{\purevalueifunction{1} }\leq \purevalueifunction{2}
$.
\obsolete{
\begin{equation}\label{equationcomparisoncomplex}\begin{array}{ll}
\Modulus{\purevalueix{1}{\bl} }\leq \purevalueix{2}{\bl}
,&\quad \forall\bl\in\Realpluszero.
\end{array}
\end{equation}
}%
\end{enumerate}
\end{corollary}
 }%
 
 \obsolete{
\begin{proof}[Proposition~\ref{proposition:bilateraltransformdWF}]
In the region of absolute convergence of~$\Blaplacetransformf{\extdpurevaluefunction}$, we find, 
\begin{equation}\label{bilaterallaplace}
\begin{array}{rcl}
\Blaplacetransformfs{\extdpurevaluefunction}{\complex}
&\refereq{\eqref{dwfunction}}{=}&
\frac{\ar}{1-\load}
\int\nolimits_{-\infty}^{+\infty}
\exp{-\complex\bl} 
\, \expectation{
\costx{\bl+\Wt}\, \stepx{\bl+\Wt}}
\, d\bl
\\&\hidecalculustheorems{=}{\refereq{\eqref{fubini}}{=}}&
\frac{\ar}{1-\load}
\expectation{
\int\nolimits_{-\infty}^{+\infty}
\exp{-\complex\bl}\,  
\costx{\bl+\Wt}\, \stepx{\bl+\Wt}
\, d\bl
}
\\&=&
\frac{\ar}{1-\load}
\expectation{
\int\nolimits_{-\infty}^{+\infty}
\exp{-\complex(\altbl-\Wt)} 
 \costx{\altbl}\, \stepx{\altbl}
\, d\altbl
}
\\&=&
\frac{\ar}{1-\load}
 \expectation{\exp{\complex\Wt} }
\int\nolimits_{0}^{+\infty}
\exp{-\complex\altbl}
\costx{\altbl}
\, d\altbl
\obsolete{
\\&=&
\frac{\ar}{(1-\load)}
\LSTWts{-\complex}
\laplacetransformfs{\costfunction}{\complex}
}
,
\end{array}
\end{equation}
where the Laplace-Stieltjes transform of~$\Wt$,
$\LSTWts{\complex} 
 =
 \expectation{\exp{-\complex\Wt}}
 $ is  given by the Pollaczek-Khinchin formula~\eqref{PKformula}, \cite{pollaczek30,khintchine32,gross98}. 
Then, the nonemptiness of the \ac{ROC} of~$\Blaplacetransformf{\extdpurevaluefunction}$ is a direct consequence of Assumption~\ref{assumption:complexintegrablecost}.
\qed\end{proof} 
}%


\obsolete{
\begin{proof}[Equation~\eqref{BlDEdpurevaluefunction}]
We characterize a solution
$
\DEvaluefunction:\bl\in\Real\mapsto\DEvaluex{\bl}=\altgenericx{\bl,0} - \costx{\bl} \ignorealphafactor{+ \alphafactor\bl   \,\stepx{\bl} }  
$
of an extension of~\eqref{deresult} to negative backlogs.
In the region of absolute convergence of~$\Blaplacetransformf\DEvaluefunction$,
\begin{equation}   
 \nocolsep
 \begin{array}{rcl}
 \complex \Blaplacetransformfs{\DEvaluefunction}{\complex}
& \refereq{\eqref{deresult}}{=}&
  \int_{-\infty}^{+\infty} \ar  \big\{ \rightcostx{\bl}-  \meancost \, \indicator{\Realpluszero}{\bl} +\Expectation{\DEvaluex{\bl+\stochst}-  \DEvaluex{\bl}  }\big\} \, \exp{-\complex\bl} \, d\bl
\\
&=&
  \ar  \big\{ \laplacetransformfs{\costfunction}{\complex} - \frac{\meancost}{\complex} + \int_{-\infty}^{+\infty}  \Expectation{\DEvaluex{\bl+\stochst}-  \DEvaluex{\bl}  } \, \exp{-\complex\bl} \, d\bl \big\}
\\
&=&
  \ar  \big\{ \laplacetransformfs{\costfunction}{\complex} - \frac{\meancost}{\complex} +  \bigexpectation{\int_{-\infty}^{+\infty} [\DEvaluex{\bl+\stochst}-  \DEvaluex{\bl}  ] \, \exp{-\complex\bl} \, d\bl} \big\}
\\
&=&
  \ar  \big\{ \laplacetransformfs{\costfunction}{\complex} - \frac{\meancost}{\complex} +  (\expectation{\exp{\complex\stochst}}-  1  )  \Blaplacetransformfs{\DEvaluefunction}{\complex} \big\}
,
\end{array}
\end{equation}
where $\expectation{\exp{\complex\stochst}}=\LSTstochsts{-\complex}$. We find~\eqref{BlDEdpurevaluefunction} by solving the above for~$\Blaplacetransformfs{\DEvaluefunction}{\complex}$, and using~\eqref{PKformula} and $  \LSTWts{-\complex} =  1+{\ar\expectation{\stochst^2}}/[{2(1-\load)}]\complex+\zero{\complex}$.
\qed\end{proof}
}%


\section{Moments of the asymptotic waiting times and rates of growth}
\label{appendix:rates}
In Table~\ref{table:moments} we derive the coefficient sequence~$\{\specialmatrixnk{}{k}\}$ for standard service time distributions (constant, exponential, Erlang), and study its asymptotic growth. 
The moments of~$\Wt$ and their growth rates can be inferred from those of~$\{\specialmatrixnk{}{k}\}$ using the identity $\expectation{\Wt^k}=\factorial{k}\specialmatrixnk{}{k}$.
%
%
%

\begin{table}
 \caption{Germ of~$\LSTWts{-\complex}$ at $\complex=0$ for constant (M/D/1), exponentially-distributed (M/M/1) and Erlang-distributed (M/E\textsubscript{$\shape$}/1) service times.
\label{table:moments}}
\horizontalline
\smallskip
\textbf{M/D/1 (constant):} $\stochst=\st$, with~$\st>0$; $\expectation{\stochst^{k}}=\st^k$; $\LSTWts{\complex} 
=
 \frac{(1-\ar\st )\complex} {\complex- \ar (1-{\exp{-\complex\st}}) }$; $\Polef{\LSTWtfunction}=\big\{\ar  [1 + \frac{1}{\ar\st}\Lambertnx{k}{-\ar\st \exp{-\ar\st } } ] \setst k\in\Integerzero\big\}$; $-\dominantpoleX{\Wt}=-\ar  [1 + \frac{1}{\ar\st}\Lambertnx{-1}{-\ar\st \exp{-\ar\st } } ]$;
%
\begin{equation}\label{cstspecialmatrixnk}\begin{array}{ll}
%
\obsolete{
\specialmatrixnk{\nn}{1}
=
\phi \frac{\st^{2}}{\factorial{2}} 
&
\\
\specialmatrixnk{\nn}{2}
=
\phi \frac{\st^{3}}{\factorial{3}} + \phi^2 (\frac{\st^{2}}{\factorial{2}})^2 
&
\\
\specialmatrixnk{\nn}{3}
=
\phi ( \frac{\st^{4}}{\factorial{4}} + \frac{\st^{3}}{\factorial{3}}\phi \frac{\st^{2}}{\factorial{2}} + \frac{\st^{2}}{\factorial{2}}(\phi \frac{\st^{3}}{\factorial{3}} + \phi^2 (\frac{\st^{2}}{\factorial{2}})^2 ) )
=
\phi  \frac{\st^{4}}{\factorial{4}} +2 \phi^2 \frac{\st^{2}}{\factorial{2}} \frac{\st^{3}}{\factorial{3}} + \phi^3 (\frac{\st^{2}}{\factorial{2}})^3
&
\\
\specialmatrixnk{\nn}{4}
= 
\phi  \frac{\st^{5}}{\factorial{5}}     +  \phi^2 \left(2\frac{\st^{2}}{\factorial{2}}  \frac{\st^{4}}{\factorial{4}} + ( \frac{\st^{3}}{\factorial{3}})^2 \right)   +3 \phi^3( \frac{\st^{2}}{\factorial{2}})^2 \frac{\st^{3}}{\factorial{3}} + \phi^4  (\frac{\st^{2}}{\factorial{2}})^4
&
\\
\specialmatrixnk{\nn}{5}
= 
\phi (   \frac{\st^{6}}{\factorial{6}} + \frac{\st^{5}}{\factorial{5}}( \phi \frac{\st^{2}}{\factorial{2}}  ) + ( \phi \frac{\st^{3}}{\factorial{3}} + \phi^2 (\frac{\st^{2}}{\factorial{2}})^2  ) + \frac{\st^{3}}{\factorial{3}} ( \phi  \frac{\st^{4}}{\factorial{4}} +2 \phi^2 \frac{\st^{2}}{\factorial{2}} \frac{\st^{3}}{\factorial{3}} + \phi^3 (\frac{\st^{2}}{\factorial{2}})^3 ) +  \frac{\st^{2}}{\factorial{2}} ( \phi  \frac{\st^{5}}{\factorial{5}}     +  \phi^2 \left(2\frac{\st^{2}}{\factorial{2}}  \frac{\st^{4}}{\factorial{4}} + ( \frac{\st^{3}}{\factorial{3}})^2 \right)   
+3 \phi^3( \frac{\st^{2}}{\factorial{2}})^2 \frac{\st^{3}}{\factorial{3}} 
+ \phi^4  (\frac{\st^{2}}{\factorial{2}})^4 )  )
&
\\
= 
  \phi  \frac{\st^{6}}{\factorial{6}} 
+ 2 \phi^2 \left(
 \frac{\st^{2}}{\factorial{2}}  \frac{\st^{5}}{\factorial{5}} 
+   \frac{\st^{3}}{\factorial{3}} \frac{\st^{4}}{\factorial{4}}
\right)
+3 \phi^3 \left( 
 (\frac{\st^{2}}{\factorial{2}})^2 \frac{\st^{4}}{\factorial{4}}  
+ \frac{\st^{2}}{\factorial{2}} (\frac{\st^{3}}{\factorial{3}})^2
\right)
+ 4 \phi^4  (\frac{\st^{2}}{\factorial{2}})^3  \frac{\st^{3}}{\factorial{3}}    
+ \phi^5  (\frac{\st^{2}}{\factorial{2}})^5 
&
\\
}%
\specialmatrixnk{\nn}{k}
= 
 \big[ \sum\nolimits_{t=1}^{k}\big(\frac{\ar\st}{1-\ar\st}\big)^t  
 \normscenarios{t}{k+t}
 \big]\, \st^{k}
, &\quad (k\geq 1),
 \end{array}
\end{equation}
\noeqref{cstspecialmatrixnk}%
\storecompoundcounter{equation}{cstspecialmatrixnk}%
where~$\lambertnfunction{n}$ denotes the $n$th branch of the  product logarithm function, \cite{corless96}, and
\begin{subequations}
\begin{align}
\label{scenariostrivial}\textstyle
\normscenarios{1}{n}
&\textstyle
=\frac{1}{\factorial{n}},
&&
(n \geq 2) ,
\\ 
\label{scenariosidentity}
\textstyle
\normscenarios{m+1}{n}
&\textstyle
=\sum\nolimits_{p=2m}^{n-2} \frac{\normscenarios{m}{p}}{\factorial{(n-p)}} 
,
&& \textstyle
(m=1,\dots,\floor{\frac{n-2}{2}}, n\geq 2), 
\\ \label{scenariossimplify}
\textstyle
\normscenarios{m}{n+1}
&\textstyle
=\frac{m}{n+1}[\normscenarios{m}{n}+\normscenarios{m-1}{n-1}],
\hspace{-5mm}
&& \textstyle
(m=2,\dots,\floor{\frac{n}{2}},\, n\geq 4).
\end{align}
\end{subequations}
\noeqref{scenariostrivial}%
\noeqref{scenariosidentity}%
\noeqref{scenariossimplify}%
\storesubequationtriple{scenariostrivial}{scenariosidentity}{scenariossimplify}
%
\horizontalline
\smallskip
\textbf{M/M/1 (exponential):}
%
$\ddistriXx{\stochst}{\st}=
\strate\exp{-\strate \st} $ with rate
 $\strate>\ar$; $\expectation{\stochst^{k}}=\factorial{k}\, \strate^{-k}$; $
\LSTWts{\complex} 
=
 \frac{(\strate-\ar)(\complex+\strate)} {\strate(\complex+\strate-\ar)}
 $;
 $\Polef{\LSTWtfunction}=\{\ar-\strate\}$; $-\dominantpoleX{\Wt}=\strate-\ar$;
\begin{equation}\label{expspecialmatrixnk}\begin{array}{ll}
%
%
%
\obsolete{
\specialmatrixnk{\nn}{k}
=
\frac{\ar\strate}{\strate-\ar} \sum\nolimits_{t=0}^{k-1} \frac{1}{\strate^{k-t+1}}\specialmatrixnk{\nn}{t}
, & k=1,\dots,\nn,
\\
\specialmatrixnk{\nn}{1}
=
\frac{\ar\strate}{\strate-\ar}  \frac{1}{\strate^{2}}
&
\\
\specialmatrixnk{\nn}{2}
=
\frac{\ar\strate}{\strate-\ar} ( \frac{1}{\strate^{3}} + \frac{1}{\strate^{2}} \frac{\ar\strate}{\strate-\ar}  \frac{1}{\strate^{2}})
=
\frac{\ar\strate}{\strate-\ar}  \frac{1}{\strate^{3}} + \left(\frac{\ar\strate}{\strate-\ar} \right)^2 \frac{1}{\strate^{4}}
&
\\
\specialmatrixnk{\nn}{3}
=
\frac{\ar\strate}{\strate-\ar} ( \frac{1}{\strate^{4}} + \frac{1}{\strate^{3}} \frac{\ar\strate}{\strate-\ar}  \frac{1}{\strate^{2}} + \frac{1}{\strate^{2}} \frac{\ar\strate}{\strate-\ar} ( \frac{1}{\strate^{3}} + \frac{1}{\strate^{2}} \frac{\ar\strate}{\strate-\ar}  \frac{1}{\strate^{2}}))
=
\frac{\ar\strate}{\strate-\ar}  \frac{1}{\strate^{4}} + 2 \left( \frac{\ar\strate}{\strate-\ar} \right)^2 \frac{1}{\strate^{5}}  +  \left(\frac{\ar\strate}{\strate-\ar} \right)^3 \frac{1}{\strate^{6}}
&
\\
\specialmatrixnk{\nn}{4}
=
\frac{\ar\strate}{\strate-\ar}  (\frac{1}{\strate^{5}} + \frac{1}{\strate^{4}} \frac{\ar\strate}{\strate-\ar}  \frac{1}{\strate^{2}} + \frac{1}{\strate^{3}} \frac{\ar\strate}{\strate-\ar} ( \frac{1}{\strate^{3}} + \frac{1}{\strate^{2}} \frac{\ar\strate}{\strate-\ar}  \frac{1}{\strate^{2}}) + \frac{1}{\strate^{2}} \frac{\ar\strate}{\strate-\ar} ( \frac{1}{\strate^{4}} + \frac{1}{\strate^{3}} \frac{\ar\strate}{\strate-\ar}  \frac{1}{\strate^{2}} + \frac{1}{\strate^{2}} \frac{\ar\strate}{\strate-\ar} ( \frac{1}{\strate^{3}} + \frac{1}{\strate^{2}} \frac{\ar\strate}{\strate-\ar}  \frac{1}{\strate^{2}})) )
&
\\
=
\frac{\ar\strate}{\strate-\ar} \frac{1}{\strate^{5}} 
+ 
3 \left(\frac{\ar\strate}{\strate-\ar}\right)^2  \frac{1}{\strate^{6}} 
 + 
 3 \left(\frac{\ar\strate}{\strate-\ar}\right)^3  \frac{1}{\strate^{7}} 
 + 
  \left(\frac{\ar\strate}{\strate-\ar}\right)^4 \frac{1}{\strate^{8}}  
&
\\
\specialmatrixnk{\nn}{5}
=
\frac{\ar\strate}{\strate-\ar}\frac{1}{\strate^{6}}
+
4\left(\frac{\ar\strate}{\strate-\ar}\right)^2 \frac{1}{\strate^{7}}  
 + 
6 \left(\frac{\ar\strate}{\strate-\ar} \right)^3 \frac{1}{\strate^{8}}
 +  
 4 \left(\frac{\ar\strate}{\strate-\ar} \right)^4 \frac{1}{\strate^{9}}
 + 
 \left(\frac{\ar\strate}{\strate-\ar}\right)^5 \frac{1}{\strate^{10}}  
&
\\
}
\specialmatrixnk{\nn}{k}
=
\frac{\ar}{\strate(\strate-\ar)^k}  
, &\quad 
( k\geq 1 ).
  \end{array}
\end{equation}
\storecompoundcounter{equation}{expspecialmatrixnk}%
\horizontalline
\smallskip
\textbf{
M/E\textsubscript{$\shape$}/1 (Erlang):%
}
$
\ddistriXx{\stochst}{\st}=
\frac{\strate}{\factorial{(\shape-1)}} (\strate \st)^{\shape-1} \exp{-\strate \st}
$, with shape 
$\shape\geq 1$ and rate 
$\strate>\shape \ar$; $\expectation{\stochst^{k}}=\frac{\factorial{(k+\shape-1)}}{\factorial{(\shape-1)}}  \,\strate^{-k}$; 
$
\LSTWts{\complex} 
 =
 \frac{(1-\ar\st )(\complex+\strate)^\shape } {\strate^{\shape} \big[ (\frac{\complex}{\strate}+1)^\shape - \frac{\ar}{\strate}  \sum\nolimits_{k=0}^{\shape-1} \binomialcoef{\ \shape}{k+1} (\frac{\complex}{\strate})^k \big]  }
 $; $\cardinality{\Polef{\LSTWtfunction}}=\shape$;
\begin{equation}\label{erlspecialmatrixnk}\begin{array}{ll}
%
\specialmatrixnk{\nn}{k}
= 
 \frac{1}{\strate^k}  \sum\nolimits_{t=1}^{k} \big(\frac{\ar}{\strate-\shape\ar}\big)^t \scenariosshape{\shape}{t}{k+t}  
, & \quad (k\geq 1),
 \end{array}
\end{equation}
\noeqref{erlspecialmatrixnk}%
\storecompoundcounter{equation}{erlspecialmatrixnk}%
where 
\begin{subequations}
 \begin{align}
\label{scenariosshapetrivial}
\textstyle
\scenariosshape{\shape}{1}{n} 
&\textstyle 
=
\binomialcoef{n+\shape-1}{ \ \ \shape-1},
&&\textstyle 
(n\geq 2),
\\ 
\label{scenariosshapeidentity}
\textstyle 
\scenariosshape{\shape}{m+1}{n} 
&\textstyle 
=
\sum\nolimits_{p=2m}^{n-2} \binomialcoef{\shape+n-p-1}{\ \ \ \shape-1}\,\scenariosshape{\shape}{m}{p}
,
\hspace{-7mm}
&&\textstyle 
(m=1,\dots,\floor{\frac{n-2}{2}},
  \ n\geq 2).
\end{align}
\end{subequations}
\noeqref{scenariosshapetrivial}%
\noeqref{scenariosshapeidentity}%
\storesubequationdouble{scenariosshapetrivial}{scenariosshapeidentity}%
\horizontalline 
\end{table}
\storecompoundcounter{table}{table:moments}%
 
\obsolete{   
 \begin{proof}[Derivation of Table~\ref{table:moments}]
  The expression for~$\{\specialmatrixnk{\nn}{k}\}$ were derived
  either by combination of~\eqref{PKformula} and~\eqref{definition:poweracoefnk}, or by computation of the moments~$\expectation{\stochst^{k}}$ and development of \eqref{table:specialmatrixnk}.
  
  {(M/D/1)} 
  Before showing~\eqref{cstspecialmatrixnk}, observe from~\eqref{scenariostrivial}-\eqref{scenariosidentity} that the quantity~$\factorial{n}\,\normscenarios{m}{n}$ is in fact the number of possible scenarios that may occur  when placing~$n$ distinct objects  (unordered) into~$m$ numbered urns so that each urn contains at least two objects ($n\geq 2m$).
  \noindent
  For~$k=1$, \eqref{cstspecialmatrixnk} follows directly from~\eqref{table:specialmatrixnk}.
If we suppose that~\eqref{cstspecialmatrixnk} holds for~$k=1,\dots,p$, then, using $\expectation{\stochst^{k}}=\st^k$,
\begin{equation}\label{proofonstantservicetimes}\nocolsep\begin{array}{rcl}
\specialmatrixnk{\nn}{p+1}
&\refereq{\eqref{table:specialmatrixnk}}{=} &
\frac{\ar}{1-\ar\st}
\big[\frac{\st^{p+2}}{\factorial{(p+2)}}  +  \sum\nolimits_{t=1}^{p} \frac{\st^{p-t+2}}{\factorial{(p-t+2)}}  \specialmatrixnk{\nn}{t} \big]
\\
&\refereq{\eqref{cstspecialmatrixnk}}{=} &
\frac{\ar}{1-\ar\st}
\big[\frac{\st^{p+2}}{\factorial{(p+2)}}  +  \sum\nolimits_{t=1}^{p} \frac{\st^{p-t+2}}{\factorial{(p-t+2)}}  \sum\nolimits_{q=1}^{t}(\frac{\ar}{1-\ar\st})^q
\,\normscenarios{q}{t+q}\,
\st^{t+q} \big]
\\
&=&
\frac{\ar}{1-\ar\st}
\frac{\st^{p+2}}{\factorial{(p+2)}}
+
  \sum\nolimits_{q=1}^{p}
(\frac{\ar}{1-\ar\st})^{q+1}
\sum\nolimits_{t=q}^{p} 
\normscenarios{q}{t+q}
\frac{\st^{p+q+2}}{\factorial{(p-t+2)}}
\\
&\refereq{\eqref{scenariostrivial}}{=}&
\frac{\ar}{1-\ar\st}
\normscenarios{1}{p+2}
\,\st^{p+2}
+
  \sum\nolimits_{q=2}^{p+1}
(\frac{\ar}{1-\ar\st})^{q}
\big[
\sum\nolimits_{l=q-1}^{p} 
\frac{\normscenarios{q-1}{q-1+l}}{\factorial{(p-l+2)}}
 \big]
\st^{p+1+q} 
\\
&\refereq{\eqref{scenariosidentity}}{=} & 
 \sum\nolimits_{q=1}^{p+1}
\big(\frac{\ar}{1-\ar\st}\big)^{q}
\normscenarios{q}{p+1+q}
\, \st^{p+1+q}
 \end{array}
\end{equation}
and~\eqref{cstspecialmatrixnk} holds for all~$k$ by induction.
We obtain in~\eqref{scenariostrivial}-\eqref{scenariosidentity} a recusrsive procedure for computing the values of~$\{\normscenarios{m}{n}\}$. Identity~\eqref{scenariossimplify}, which follows from~\eqref{scenariostrivial}-\eqref{scenariosidentity}, tends to simplify and accelerate the process in practice.
\noindent
As for the poles of~$\LSTWtfunction$ (cf. Figure~\ref{figure:LSTWt}), they are the (nonzero) solutions of $\complex- \ar (1-{\exp{-\complex\st}})=0$ or, equivalently,  $(\complex-\ar)\st \exp{(\complex-\ar)\st} =-\ar\st\exp{-\ar\st}$, which is an instance of the equation $ \zcomplex e^\zcomplex = \sing $, the solutions of which are given by $\Lambertnx{n}{\sing}$, where~$\lambertnfunction{n}$ is the $n$th branch of the  product logarithm function~$\lambertnfunction{n}$~\cite{corless96}.


(M/M/1)
$\LSTWtfunction$ follows from~\eqref{PKformula} and $\LSTstochsts{\complex}={\strate}/({\complex+\strate})$. 
After successive derivations of~$\LSTWts{-\complex} $, which rewrites as
$
\LSTWts{-\complex} 
 =
 [({\strate-\ar})/{\strate}]\,[1+ {\ar}/({\strate-\ar-\complex})] 
 $, we compute~\eqref{definition:poweracoefnk} and find $\poweracoefnk{\sing}{.}{k} = [({\strate-\ar})/{\strate}]\,[\dirack{k}+{\ar}{(\strate-\ar-\sing)^{-k-1}}]$, which reduces to~\eqref{expspecialmatrixnk} when~$\sing=0$.

 (M/E\textsubscript{$\shape$}/1)
 In~\eqref{scenariosshapetrivial}-\eqref{scenariosshapeidentity}, $\scenariosshape{\shape}{m}{n}$ can be understood, for $\shape\geq1$ and $n\geq 2m$, as the number of possible outcomes when~$m$ ordered collections of~$n_1,n_2,\dots,n_m$ unordered objects are  respectively picked out of~$m$ urns~$1,2,\dots,m$ so that~$n_1+\dots+n_m=n$ and $n_p\geq 2$ for $p=1,\dots,m$, where each urn~$p$ is initially assumed to contain~$\shape-1$ distinct objets plus~$n_p$ objects randomly drawn (without repetition) from a common set of~$n$ additional distinct objects ($p=1,\dots,m$).
 We now turn to show~\eqref{erlspecialmatrixnk}.
 \noindent
  It is easy to verify that~\eqref{erlspecialmatrixnk} is true for $k=1$.  Suppose now that~\eqref{erlspecialmatrixnk} holds for $k=1,\dots,p$, then
\begin{equation}\label{prooferlservicetimes}\begin{array}{rcl}
\specialmatrixnk{\nn}{p+1}
&\refereq{\eqref{table:specialmatrixnk}}{=} &
\frac{\ar\strate}{\strate-\shape\ar}
\big[ \binomialcoef{p+\shape+1}{\ \shape-1}\strate^{-(p+2)}+   \sum\nolimits_{t=1}^{p} \binomialcoef{p-t+\shape+1}{\ \ \ \shape-1}\strate^{-(p-t+2)}\specialmatrixnk{}{t} \big]
\\
&\refereq{\eqref{erlspecialmatrixnk}}{=} &
\frac{\ar}{\strate^{p+1}(\strate-\shape\ar)}
\big[ \binomialcoef{p+\shape+1}{\ \shape-1}+   \sum\nolimits_{t=1}^{p} \binomialcoef{p-t+\shape+1}{\ \ \ \shape-1} \strate^{t} \big(  
  \sum\nolimits_{l=1}^{t} (\frac{\ar}{\strate-\shape\ar})^l \frac{ \scenariosshape{\shape}{l}{t+l} }{\strate^t}  \big)\big]
\\
&=&
\frac{\ar}{\strate^{p+1}(\strate-\shape\ar)}
\big[ \binomialcoef{p+\shape+1}{\ \shape-1}+   \sum\nolimits_{l=1}^{p}(\frac{\ar}{\strate-\shape\ar})^l\sum\nolimits_{t=l}^{p} \binomialcoef{p-t+\shape+1}{\ \ \ \shape-1}  
    \scenariosshape{\shape}{l}{t+l} \big]
  \\
&\refereq{\eqref{scenariosshapetrivial}}{=}&
\frac{\ar}{\strate^{p+1}(\strate-\shape\ar)}
 \scenariosshape{\shape}{1}{p+2}
\\&&\hfill
+
\frac{\ar}{\strate^{p+1}(\strate-\shape\ar)}
      \sum\nolimits_{l=1}^{p}(\frac{\ar}{\strate-\shape\ar})^l\big(\sum\nolimits_{m=2l}^{l+p} \binomialcoef{\shape+(p+l+2)-m-1}{\ \ \ \shape-1}  
    \scenariosshape{\shape}{l}{m}\big)
\\
&\refereq{\eqref{scenariosshapeidentity}}{=} &
\frac{1}{\strate^{p+1}}
      \sum\nolimits_{l=1}^{p+1}(\frac{\ar}{\strate-\shape\ar})^{l}
    \scenariosshape{\shape}{l}{p+1+l}
 \end{array}
\end{equation}
and~\eqref{erlspecialmatrixnk} holds for all~$k$ by induction.

 
 \end{proof}

}%

\section{Computation of \aclp{wfunction}: examples}

\label{appendix:examples}

\begin{example}[Step cost function and identical service times]\label{example:piecewisecst}
The  cost function $\costfunction=\indicatorfunction{[\taufunction,\infty)}$  is  considered with constant service times $\stochst=\st>0$. The value function to this problem was derived in~\cite{hyytia-peva-2017} as a solution of~\eqref{deresult}.
We have $\costifunction{0}=0$, $\costifunction{1}=1$, $\laplacetransformfs{\costfunction}{\complex}=({1}/{\complex})\, e^{-\complex\taufunction}$, $\pwshiftedlaplacest{\complex}{\taufunction}={1}/{\complex}$, $\Polef{\laplacetransformf{\costfunction}}=\{0\}$, and $\dominantpoleX{\Wt} = -\ar\,   [1 + ({1}/{\ar\st})\, \Lambertnx{-1}{-\ar\st \exp{-\ar\st } } ]<0 $, as detailed in Appendix~\ref{appendix:rates}.

For $\bl\in(\taufunction,\infty)$, we find,  
\begin{equation}\notag
\textstyle
\rightderivative\purevaluex{\bl}
\refereq{\eqref{solutionFNPlarge}}{=}
\frac{\ar}{1-\ar\st}
\bigresiduefx{
 \frac{1-\ar\st } {\complex+ \ar (1-{\exp{\complex\st}}) } e^{\complex(\bl-\taufunction)} }{\complex=0} 
\refereq{\eqref{residue}}{=}
\frac{\ar}{1-\ar\st}.
\end{equation}

\obsolete{
Using~\eqref{bilaterallaplacetransformVF}, \eqref{piecewisecost}, and~\eqref{PKformula}, we find
\begin{equation}\label{cstexampleBlaplace}
\textstyle
\Blaplacetransformfs{\extdpurevaluefunction}{\complex}
 = 
\frac{\ar}{1-\ar\st} 
\big(
\frac{(1-\ar\st )\complex} {\complex+ \ar (1-{\exp{\complex\st}}) }
\big)
\frac{\exp{-\complex\taufunction}}{\complex}
 .
\end{equation} 
%
%
}%

For~$\bl\in(0,\taufunction)$, we inspect the positions of the poles and set $\anchor\in(0,-\dominantpoleX{\Wt})$.
Decomposing~$\LSTWts{-\complex}$ as in~\eqref{trick}, we find $\Polef{\pwshiftedlaplacest{\cdot}{\taufunction}}=\{0\}$,
$\iPolex{\bl} = \{ - \ar \}$, and $\fPolex{\bl}=\emptyset$.
Since~$\costifunction{0}=0$ and~$\fPolex{\bl}$ is empty, the first and third terms in~\eqref{solutionINP} both vanish and~$ \fLSTWtxfunction{\bl}$ needs not be considered.
%
%
We find,
\begin{equation}\label{cstexampledbvfres}
\begin{array}{rl}
\dpurevaluex{\bl}  
& \refereq{\eqref{solutionINP}}{=}
 \sum_{\pole\in\{-\ar,0\}} \bigresiduefx{   
 \frac{\ar^{m+1} \, \exp{\complex (\bl+m\st-\taufunction)} } {(\complex+\ar)^m [\complex+ \ar (1- \exp{\complex\st})] }
}{\complex=\pole}  
\end{array} 
\end{equation}
%
%
\obsolete{
\begin{equation}\label{cstexampleBlaplace}
\begin{array}{c}
\Blaplacetransformfs{\extdpurevaluefunction}{\complex} \exp{\complex\bl}
 = 
\frac{\ar}{(\complex+ \ar)^{m}} \left( \frac{ \ar^{m}\exp{\complex (\bl+m\st-\taufunction)} } {\complex+ \ar (1-{\exp{\complex\st}}) } 
\right) 
+
\left(
\frac{ \complex}{\complex+ \ar} \sum\nolimits_{k=0}^{m-1} \frac{ \ar^k\exp{\complex k\st}}{(\complex+ \ar)^k}
\right)
\frac{\exp{\complex(\bl-\taufunction)}}{\complex} 
 ,
\end{array}
\end{equation} 
}%
where $m=\trickix{1}{\bl}=\ceil{({\taufunction-\bl})/{\st}}$, 
We let~$\pwynk{\nn}{k}=\bl+(m-1-\nn+k)\st-\taufunction$ for all~$\nn,k\in\Natural$, and set
\obsolete{
\begin{equation}\label{cstexamplepwKs}
\begin{array}{c}
\pwKs{\complex}
 = 
\frac{ 
\exp{\complex 
(\bl+m\st-\taufunction)
} 
}{\complex+ \ar (1-{\exp{\complex\st}}) } .
\end{array}
\end{equation}   
}%
$
\pwKs{\complex}
 = 
{ 
\exp{\complex 
(\bl+m\st-\taufunction)
} 
}/[{\complex+ \ar (1-{\exp{\complex\st}}) }]
$,
the derivatives of which can be computed by induction. At~$-\ar$, we find the derivatives
\begin{equation}\label{cstexamplederivativepwK}
\textstyle
\pwdnKs{\nn}{-\ar}
 = 
- ({\factorial{\nn}}/{\ar^{\nn+1}})\,\sum\nolimits_{k=0}^{\nn} [{(\ar\pwynk{\nn}{k})^{k}}/{\factorial{k}} ] \, \exp{-\ar\pwynk{\nn}{k}},
  \quad  (\nn\in\Natural)
 ,
\end{equation} 
\storecompoundcounter{equation}{cstexamplederivativepwK}%
%
%
\obsolete{
The derivatives of~$\pwKs{\complex}$ at~$-\ar$ are given by\footnotemark{}
\begin{equation}\label{cstexamplederivativepwK}
\begin{array}{ll}
\pwdnKs{\nn}{-\ar}
 = 
- \frac{\factorial{\nn}}{\ar^{\nn+1}}\sum\nolimits_{k=0}^{\nn} \frac{(\ar\pwynk{\nn}{k})^{k}}{\factorial{k}}  \exp{-\ar\pwynk{\nn}{k}},
&  \quad  (\nn\in\Natural)
 .
\end{array}
\end{equation} 
By integration of~$\Blaplacetransformfs{\extdpurevaluefunction}{\complex}\exp{\complex\bl}$ along the vertical axis at~$\anchor$, we find, using the contour~$\contoura{\radius}$, the residue theorem at the poles~$0$ and~$-\ar$, \eqref{jordanpiecewisenull} and~\eqref{cstexampleBlaplace}:
}%
and~\eqref{cstexampledbvfres} reduces, for $\bl\in[0,\taufunction)$, to
\begin{equation}\label{cstexampledbvf}\notag
\begin{array}{ll}
\dpurevaluex{\bl} &
\refereq{\eqref{residue}}{=}
\lim\nolimits_{\complex\to 0}\complex \left( \frac{\ar^{m+1} }{(\complex+ \ar)^{m}} \pwKs{\complex} \right)
  + \frac{1}{\factorial{(m-1)}}\lim\nolimits_{\complex\to -\ar}  \left( \ar^{m+1}  \pwdnKs{m-1}{\complex} \right)
\\
  &
  \refereq{\eqref{cstexamplederivativepwK}}{=} 
\frac{\ar}{1-\ar\st}
 -  \ar \sum\nolimits_{k=0}^{m-1} \frac{(\ar( \bl+k\st-\taufunction))^{k}}{\factorial{k}} 
 \exp{-\ar( \bl+k\st-\taufunction)}
 .
\end{array}
\end{equation}
Integrating the last expression from~$\taufunction$ to~$\bl$ gives, for~$\bl\in[0,\taufunction)$,
\begin{equation}\label{cstexamplebvf}\notag
\begin{array}{rcl}
\purevaluex{\bl}
&= &
\purevaluex{\taufunction} + \frac{\ar(\bl-\taufunction)}{1-\ar\st} - \ar \sum\nolimits_{k=0}^{m-1}  \int\nolimits_{\taufunction-k\st}^{\bl}  \frac{(\ar( \altbl+k\st-\taufunction))^{k}}{\factorial{k}} \exp{-\ar( \altbl+k\st-\taufunction)} \, d\altbl
\\
&= &
\purevaluex{\taufunction} + \frac{\ar(\bl-\taufunction)}{1-\ar\st}  - \ar \sum\nolimits_{k=0}^{m-1}  \int\nolimits_{0}^{\bl+k\st-\taufunction}  \frac{(\ar\xi)^{k}}{\factorial{k}} \exp{-\ar\xi} \, d\xi
\\
&= &
\purevaluex{\taufunction} + \frac{\ar(\bl-\taufunction)}{1-\ar\st} +  \sum\nolimits_{k=0}^{m-1}   \left( \exp{-\ar(\bl+k\st-\taufunction)} 
\sum\nolimits_{q=0}^{k}  \frac{(\ar(\bl+k\st-\taufunction))^{q}}{\factorial{q}} -1 \right)
\\
&= &
\purevaluex{\taufunction} + \frac{\ar(\bl-\taufunction)}{1-\ar\st} - \trickix{1}{\bl} +  \sum\nolimits_{k=0}^{\trickix{1}{\bl} -1}   \exp{-\ar(\bl+k\st-\taufunction)} 
\sum\nolimits_{q=0}^{k}  \frac{(\ar(\bl+k\st-\taufunction))^{q}}{\factorial{q}} ,
\end{array}
\end{equation}  
where $\purevaluex{\taufunction}=  \frac{\ar\taufunction}{1-\ar\st} + \trickix{1}{0} - \sum\nolimits_{k=0}^{\trickix{1}{0}-1}   \exp{-\ar(k\st-\taufunction)} 
\sum\nolimits_{q=0}^{k}  {(\ar(k\st-\taufunction))^{q}}/{\factorial{q}}$, and $\trickix{1}{\altbl}=\ceil{({\taufunction-\altbl})/{\st}}$. Our result is coherent with~\cite[Theorem~2]{hyytia-peva-2017}.
\obsolete{
Indeed, it is  straightforward to verify that~(\ref{cstexamplederivativepwK}) holds for~$\nn=0$ and~$\nn=1$. For~$\nn\geq 2$, we proceed by induction.
observe that \eqref{cstexamplepwKs} rewrites as $ (\complex+ \ar -\ar \exp{\complex\st} ) \pwKs{\complex}
 = \exp{\complex \pwynk{\nn-1}{\nn}} 
$, which gives, after~$\nn$ differentiations at~$-\ar$:
\obsolete{%
\begin{equation}\label{cstexampleimplicit}
\begin{array}{c}
-\ar \exp{-\ar\st} \pwdnKs{\nn}{-\ar}
+ \nn(1-\ar \st \exp{-\ar\st} ) \pwdnKs{\nn-1}{-\ar}
-\sum\nolimits_{k=0}^{\nn-2}
\binomialcoef{\nn}{k}
\ar \st^{\nn-k} \exp{-\ar\st} 
\pwdnKs{k}{-\ar}
 = \barbli{m}^\nn \exp{-\ar \pwynk{\nn-1}{\nn}} 
\end{array}
\end{equation} 
 }%
\begin{equation}\label{cstexampleimplicit}
\begin{array}{c}
 \pwdnKs{\nn}{-\ar}
 =  
 \obsolete{
\frac{\nn}{\ar} (\exp{\ar\st}-\ar \st  ) \pwdnKs{\nn-1}{-\ar}
-  \sum\nolimits_{k=0}^{\nn-2}
\binomialcoef{\nn}{k}
\st^{\nn-k} 
\pwdnKs{k}{-\ar}
- \frac{ \pwynk{\nn-1}{\nn}^\nn}{\ar} \exp{-\ar \pwynk{\nn}{\nn}} 
\\=
}
\frac{\nn}{\ar} \exp{\ar\st}\pwdnKs{\nn-1}{-\ar}
-  \sum\nolimits_{k=0}^{\nn-1}
\binomialcoef{\nn}{k}
\st^{\nn-k} 
\pwdnKs{k}{-\ar}
- \frac{ \pwynk{\nn-1}{\nn}^\nn}{\ar} \exp{-\ar  \pwynk{\nn}{\nn}} .
\end{array}
\end{equation}  
 \obsolete{
\begin{equation}\label{cstexampleimplicit}
\begin{array}{c}
 \pwdnKs{2}{-\ar}
=
   -2\frac{1}{\ar^3} \exp{-\ar\barbli{m-3}} 
 -  2  \frac{\barbli{m-2}}{\ar^2} \exp{-\ar \barbli{m-2}} 
-  \frac{\barbli{m-1}^2}{\ar}  \exp{-\ar \barbli{m-1}}
\end{array}
\end{equation} 
 }%
Assuming that~(\ref{cstexamplederivativepwK}) holds for~$\nn=0,1,\dots,{p}-1$, the second term of the second member of~\eqref{cstexampleimplicit} reduces for~$\nn={p}$ to
 \begin{equation}\label{cstexamplederivativepwKproof}
\begin{array}{rcl}
 \sum\nolimits_{k=0}^{{p}-1}
\binomialcoef{{p}}{k}
\st^{{p}-k} 
\pwdnKs{k}{-\ar}
\hspace{-30mm}&&
\\
&\refereq{\eqref{cstexamplederivativepwK}}{=}&
-
\sum\nolimits_{q=0}^{{p}-1}
\binomialcoef{{p}}{q}
\st^{{p}-q} 
%
 \frac{\factorial{q}}{\ar^{q+1}} \sum\nolimits_{l=0}^{q} \frac{(\ar\pwynk{q}{l})^{l}}{\factorial{l}} \exp{-\ar\pwynk{q}{l}}
 \\
& =&
-
\sum\nolimits_{q=0}^{{p}-1}
\frac{\factorial{{p}}}{\factorial{{p}-q}}
\st^{{p}-q} 
%
 \frac{1}{\ar^{q+1}} \sum\nolimits_{l=0}^{q} \frac{(\ar\pwynk{{p}}{{p}+l-q})^{l}}{\factorial{l}} \exp{-\ar\pwynk{{p}}{{p}+l-q}}
\\ 
 &=&
-\frac{\factorial{{p}}}{\ar^{{p}+1}}  
\sum\nolimits_{t=1}^{{p}}
\left( \sum\nolimits_{l=0}^{t-1}
\frac{(\ar\st)^{t-l}}{\factorial{t-l}}
  \frac{(\ar\pwynk{{p}}{t})^{l}}{\factorial{l}} \right) \exp{-\ar\pwynk{{p}}{t}}
\\ 
& =&
-\frac{\factorial{{p}}}{\ar^{{p}+1}}  
\sum\nolimits_{t=1}^{{p}}
\frac{\ar^t}{\factorial{t}}
\left( \sum\nolimits_{l=0}^{t}
\binomialcoef{t}{l}
{\st^{t-l}}
{\pwynk{{p}}{t}^{l}} - \pwynk{{p}}{t}^{t}\right) \exp{-\ar\pwynk{{p}}{t}}
\\ 
 &=&
-\frac{\factorial{{p}}}{\ar^{{p}+1}}  
\sum\nolimits_{t=1}^{{p}}
\frac{\ar^t}{\factorial{t}}
\left( 
{(\st + \pwynk{{p}}{t} )^{t}}
- \pwynk{{p}}{t}^{t}\right) \exp{-\ar\pwynk{{p}}{t}}
\\
 &=&
-\frac{\factorial{{p}}}{\ar^{{p}+1}}  
\sum\nolimits_{t=1}^{{p}}
\frac{\ar^t}{\factorial{t}}
\left( 
{\pwynk{{p}-1}{t}^{t}}
- \pwynk{{p}}{t}^{t}\right) \exp{-\ar\pwynk{{p}}{t}}
 .
\end{array}
\end{equation} 
 Inserting~\eqref{cstexamplederivativepwKproof} into~\eqref{cstexampleimplicit} yields
 \begin{equation}\label{cstexampleprooftwo}
\begin{array}{rcl}
 \pwdnKs{{p}}{-\ar}
 &\refereq{\eqref{cstexamplederivativepwK}}{=} &
- 
 \frac{\factorial{{p}}}{\ar^{{p}+1}} \sum\nolimits_{t=0}^{{p}-1} \frac{(\ar\pwynk{{p}-1}{t})^{t}}{\factorial{t}} \exp{-\ar\pwynk{{p}}{t}}
\\
&&\hspace{15mm}
-  \sum\nolimits_{k=0}^{{p}-1}
\binomialcoef{{p}}{k}
\st^{{p}-k} 
\pwdnKs{k}{-\ar}
- \frac{(\ar\pwynk{{p}-1}{{p}})^{p}}{\ar^{{p}+1}} \exp{-\ar \pwynk{{p}}{{p}}} 
\\
&\refereq{\eqref{cstexamplederivativepwKproof}}{=} &
- 
 \frac{\factorial{{p}}}{\ar^{{p}+1}} \sum\nolimits_{t=0}^{{p}} \frac{(\ar\pwynk{{p}-1}{t})^{t}}{\factorial{t}} \exp{-\ar\pwynk{{p}}{t}}
 \\
 && \hspace{20mm}
 + \frac{\factorial{{p}}}{\ar^{{p}+1}}  
\sum\nolimits_{t=1}^{{p}}
\frac{\ar^t}{\factorial{t}}
\left( 
{\pwynk{{p}-1}{t}^{t}}
- \pwynk{{p}}{t}^{t}\right) \exp{-\ar\pwynk{{p}}{t}}
\\
&=&
  - \frac{\factorial{{p}}}{\ar^{{p}+1}}  
\sum\nolimits_{t=0}^{{p}}
\frac{(\ar\pwynk{{p}}{t})^t}{\factorial{t}}
  \exp{-\ar\pwynk{{p}}{t}}
\end{array}
\end{equation}  
 and~(\ref{cstexamplederivativepwK}) holds for all~$\nn$.
 %
}
\remarkmarker
\end{example}

\begin{example}[\acl{Wfunction} from a Taylor series]
\label{example:exponentialtwo}
Assume that the service times for~$\bl>0$ follow the exponential distribution $\distriXx{\stochst}{x}= 1-\exp{-\strate x}$ discussed in Appendix~\ref{appendix:rates}, where~$\strate>\ar$ in order to satisfy Assumption~\ref{assumption:stability}, and~$\dominantpoleX{\Wt}=\ar-\strate$. Consider the cost function $\costx{\bl}=1-\exp{-\sing\bl}$, with $\realpart{\sing}<\strate-\ar$ (Assumption~\ref{assumption:complexintegrablecost}) and $\sing\neq 0$. This cost function, which is given much attention in~\cite{hyytia-peva-2020}, 
 is entire ($\eorder=1$) of exponential type~$\etype=\modulus{\sing}$. Theorem~\ref{theorem:convergence} claims that the derivation of the value function from a Taylor  series at~$0$ is possible if~$\modulus{\sing}<\modulus{\dominantpoleX{\Wt}}$. 
 This can be verified. Using the notations of Section~\ref{section:entirecostfunctions}, 
we find
$\costderivativevectornk{}{\nn}=\dirack{\nn}-(-\sing)^{\nn}$
for $\nn\in\Natural$ and, with the help of Appendix~\ref{appendix:rates},
\obsolete{
\begin{equation}\label{Ztransformsexample}\begin{array}{rcll}
\Ztransformfz{\costderivativevectorn{\infty}}{\zcomplex}
&=&
\frac{\sing}{\zcomplex+\sing}
, & \makeROC{\modulus{\zcomplex}>\modulus{\sing}}
, \\
\Ztransformfz{\impulsefunction}{\frac1\zcomplex}
&\refereq{\refeq{expspecialmatrixnk}}{=}&
  \frac{\ar(\zcomplex-\strate )}{\zcomplex-(\strate-\ar)}
, & \makeROC{\modulus{\zcomplex} < \strate-\ar}
, \\
\Ztransformfz{\valuederivativevectorn{\infty}}{\zcomplex} &\refereq{\eqref{convolution}}{=}&
\frac{\ar\sing}{\sing+\strate-\ar}\big(\frac{\sing+\strate}{\zcomplex+\sing}-\frac{\ar}{\zcomplex-(\strate-\ar)}\big)
, \quad& 
\makeROC{
\modulus{\sing} < \modulus{\zcomplex} < \strate-\ar
}
,
\end{array}
\end{equation}
}%
\begin{equation}\label{Ztransformsexample}\notag
\textstyle
\Ztransformfz{\costderivativevectorn{\infty}}{\zcomplex}
\,{=}\,
\frac{\sing}{\zcomplex+\sing}
, \quad
\Ztransformfz{\impulsefunction}{\frac1\zcomplex}
\refereq{\!\!\,\eqref{expspecialmatrixnk}\!\!\,}{=}
  \frac{\ar(\zcomplex-\strate )}{\zcomplex-(\strate-\ar)}
, \quad
\Ztransformfz{\valuederivativevectorn{\infty}}{\zcomplex} \refereq{\!\!\,\eqref{convolution}\!\!\,}{=}
\frac{\ar\sing}{\sing+\strate-\ar}\big(\frac{\sing+\strate}{\zcomplex+\sing}-\frac{\ar}{\zcomplex-(\strate-\ar)}\big)
,
\end{equation}
with
$\ROC{\Ztransformf{\costderivativevectorn{\infty}}} = \{ \zcomplex\in\Complex \setst\modulus{\zcomplex}>\modulus{\sing} \}$, $\ROC{\Ztransformf{\impulsefunction}} = \{\zcomplex\in\Complex \setst \modulus{\zcomplex} > \strate-\ar \}$ and, in consequence, $\ROC{\Ztransformf{\valuederivativevectorn{\infty}}} = \{\zcomplex\in\Complex \setst \modulus{\sing} < \modulus{\zcomplex} < \strate-\ar \}$, which, as predicted, is nonempty
%
%
if~$\modulus{\sing}<{\strate-\ar}$ and empty if~$\modulus{\sing}>{\strate-\ar}$.
\obsolete{

Since $\expectation{\exp{-\sing\stochst}}={\strate}/{\strate+\sing}$ and

$ \frac{\ar\strate} {\strate-\ar}\LSTWts{\sing} 
=
\frac{\ar\strate}{\strate-\ar}
 \frac{(\strate-\ar)(\sing+\strate)} {\strate(\sing+\strate-\ar)}
 =
 \frac{\ar(\sing+\strate)} {\sing+\strate-\ar}
 ,$
 
\begin{equation}\label{Carsingexample}\begin{array}{ll}
\Carsing{\ar}{\sing}=\frac{\ar(\sing+\strate)}{\sing(\sing+\strate-\ar)},
&
\frac{\ar}{1-\load} = \frac{\ar\strate}{\strate-\ar},
\end{array}\end{equation}
}%
Picking~$\LSTWtfunction$ from Table~\ref{table:moments}, the inverse Z-transform of~$\Ztransformf{\valuederivativevectorn{\infty}}$ then gives, for~$\nn\in\Natural$, 
\begin{equation}\notag
\textstyle
\valuederivativek{\nn}
=
\frac{\ar(\sing+\strate)}{\sing+\strate-\ar} (\dirack{\nn}-(-\sing)^{\nn}) + \frac{\ar^2\sing}{(\strate-\ar)(\sing+\strate-\ar)}
\dirack{\nn}
=
\frac{\ar}{1-\load} \big(
\dirack{\nn}
-
 \LSTWts{\sing} 
(-\sing)^{\nn} 
\big) ,
\end{equation}
which is the ${\nn}$-th derivative at~$0$ of $
 {\ar}/({1-\load}) (1- \LSTWts{\sing} \exp{-\sing\bl}
)$.
%
%
%
It follows that~\eqref{sumsumx}  converges for~$\bl\in\Realpluszero$, and we find, in accordance with Table~\ref{table:valuefunctions},
\begin{equation} \notag\label{expsumsumx}
\textstyle
\purevaluex{\bl}
 \refereq{\eqref{sumsumx}}{=}
[{\ar}/({1-\load})]\, [ \bl-\LSTWts{\sing} ({1-\exp{-\sing\bl}})/{\sing}]
.
\end{equation}

\obsolete{
Conversely, if~$\purevaluefunction$ is  given by~\eqref{expsumsumx},
we find, successively,
\begin{equation}\label{invZtransformsexample}\begin{array}{rcll}
\Ztransformfz{\valuederivativevectorn{\infty}}{\zcomplex} 
&=&
\frac{\ar\strate}{\strate-\ar}-  \frac{\ar(\sing+\strate)}{(\sing+\strate-\ar)} \frac{\zcomplex}{\zcomplex+\sing}
\obsolete{
=
 \frac{\ar\sing\left(\ar\zcomplex+\strate(\sing+\strate-\ar)\right)}{(\strate-\ar)(\sing+\strate-\ar)(\zcomplex+\sing)}
=
\frac{\ar\sing}{\sing+\strate-\ar} \frac{\ar\zcomplex+\strate(\sing+\strate-\ar)}{(\zcomplex+\sing)(\strate-\ar)}
}
, &  \makeROC{\modulus{\zcomplex} > \modulus{\sing} }
, \\
( \Ztransformfz{\impulsefunction}{-\zcomplex} )^{-1}
&=&
 \frac{1}{\ar} + \frac{1}{\zcomplex-\strate } 
, & \makeROC{ \modulus{\zcomplex} < \strate }
,\\
\Ztransformfz{\costderivativevectorn{\infty}}{\zcomplex}
&\refereq{\eqref{convolution}}{=}&
\frac{\sing}{\zcomplex+\sing}
+
  \frac{\ar\sing}{(\strate-\ar)(\sing+\strate-\ar)}  \frac{\zcomplex}{\zcomplex-\strate} 
, \quad& \makeROC{ \modulus{\sing}<\modulus{\zcomplex}<\strate }
.
%
%
\end{array}
\end{equation}
If~$\modulus{\sing}<\modulus{\dominantpoleX{\stochst}}=\strate$,  we recover
$
\costderivativevectornk{\nn}{\nn}
=
\dirack{\nn}-(-\sing)^\nn 
$ for all~$\nn$ 
by inverting~$\Ztransformf{\costderivativevectorn{\infty}}$. 
}%

\noindent
\emph{Interval bounds.}
Notice that $\costfunction \in \polycostnfunction{\nn} + \intervx{\remaindercostnfunction{\nn}} $ holds if we set $
\intervxy{\remaindercostnfunction{\nn}}{\bl}{\,=\,} [0,{\sing^{\nn+1}}/{\factorial{(\nn{\,+\,}1)}}]\,{\bl^{\nn+1}}
$ for~$\nn$ even, and $
\intervx{\remaindercostnfunction{\nn}} = [-{\sing^{\nn+1}}/{\factorial{(\nn+1)}},0]\,{\bl^{\nn+1}}
$ for~$\nn$ odd.
The resulting interval~$\intervx{\complexboundpurevaluenfunction{\nn}}$ follows by inspection of Table~\ref{table:valuefunctions}.
Figure~\ref{figure:expexample} displays the interval bounds
$\intervx{\purevaluenfunction{\nn}} =
\polypurevaluenfunction{\nn}+ \intervx{\complexboundpurevaluenfunction{\nn}}$
obtained for~$\purevaluefunction$ for various real values of~$\sing$.
The  sequence~$\{\purevaluenfunction{\nn}\}$ shows to converge towards~$\purevaluefunction$ 
for as long as~$\sing<\strate-\ar$.  The generation of such a sequence is, however, impossible when ${\sing}\geq{\strate-\ar}$, as the limit coefficients~$\valuederivativek{k}$ are then infinite.
\remarkmarker
\end{example}

In the next two examples, we consider the piecewise cost function
\begin{equation}\label{Makepurevfcost}\notag
\begin{array}{ll}
\costx{\bl} =   \sum\nolimits_{j=0}^{\nn} \pwkjin{j}  \bl^{j} \, \indicator{[0,\taufunction)}{\bl}
  + \excx{\bl}  \, \indicator{[\taufunction,\infty)}{\bl}
,& \quad \forall \bl\in\Realpluszero,
\end{array}
\end{equation} 
where $\excx{\bl}= \pwkout  \bl^{\pwnout} \exp{-\sing\bl}$, $\nn,\pwnout\in\Natural$, and~$\sing\in\Complex$.
%


 %
\begin{example}[Polynomial cost in an interval]\label{example:piecewiseanalyticexp}
Consider service times exponentially distributed with parameter~$\strate>\ar$, i.e $\distriXx{\stochst}{x}= 1-\exp{-\strate x}$, and 
the cost function~\eqref{approximatedcostfunction} with~$\excfunction\equiv 0$. For this problem we have $\costix{0}{\bl}= \sum\nolimits_{j=0}^{\nn} \pwkjin{j}  \bl^{j}$, $\costifunction{1}= 0$ and $\pwdeltafunction=-\costifunction{0}$.
%
Besides,
\begin{equation} \label{transforms:piecewiseanalyticexp}
\textstyle
\laplacetransformfs{\costifunction{0}}{\complex} 
= 
\sum\nolimits_{j=0}^{\nn} \pwkjin{j} \frac{\factorial{j}}{\complex^{j+1}}
,\qquad
\pwshiftedlaplacest{\complex}{\taufunction} 
\refereq{\eqref{pwLT}}{=} 
-
\sum\nolimits_{j=0}^{\nn} \pwkjin{j} \factorial{j}  \sum\nolimits_{q=0}^{j} \frac{\taufunction^q}{\factorial{q}\complex^{j-q+1}} .
\end{equation}

Since $\Polef{\laplacetransformf{\costfunction}}=\emptyset$, \eqref{solutionFNPlarge} gives $\rightderivative\purevaluex{\bl}=0$ for $\bl\in(\taufunction,\infty)$.

For $\bl\in(0,\taufunction)$: using~\eqref{transforms:piecewiseanalyticexp},  $   \Polef{\laplacetransformf{\costifunction{0}}}=\{0\}$,
 $\load=\ar/\strate$, and the expression for~$\LSTWtfunction$ given in Table~\ref{table:moments}, 
\eqref{solutionFNP} reduces, after straightforward computations, to
\begin{equation} \label{sequelexamplefprdpv}
\textstyle
\begin{array}{c}
\rightderivative\purevaluex{\bl}
= 
%
%
\frac{\ar\strate}{\strate-\ar} \sum\limits_{j=0}^{\nn}\factorial{j} \pwkjin{j}  \big[ \bigresiduefx{ \LSTWts{-\complex}  \frac{e^{\complex\bl}}{\complex^{j+1}} }{\complex=0} +  \sum\nolimits_{q=0}^{j} \frac{\taufunction^q}{\factorial{q}} \bigresiduefx{ \LSTWts{-\complex}\frac{ e^{\complex(\bl-\taufunction)}}{\complex^{j-q+1}}   }{\complex=\strate-\ar} \big]
\\ \refereq{\eqref{residue}}{=}
\ar \sum\limits_{j=0}^{\nn} \pwkjin{j} \bl^{j} + \ar^2 \sum\limits_{j=0}^{\nn}  \frac{\factorial{j}\pwkjin{j} }{(\strate-\ar)^{j+1}}   \Big(     \sum_{q=0}^{j}\frac{[(\strate-\ar)\bl]^q}{\factorial{q}}      - e^{-(\strate-\ar)(\taufunction-\bl)}     \sum\nolimits_{q=0}^{j} \frac{[(\strate-\ar)\taufunction]^q}{\factorial{q}}   \Big)
.
\end{array}
\end{equation}
\obsolete{
$$
\LSTWts{-\complex}
=
\frac{(\strate-\ar)(\strate-\complex)} {\strate(\strate-\ar-\complex)}
=
\frac{(\strate-\ar)}{\strate}
[ 1 +
\frac{\ar} {(\strate-\ar-\complex)}
]
$$
$$
\frac{d^n}{d\complex^n}
\LSTWts{-\complex}
=
\frac{(\strate-\ar)}{\strate}[ \dirack{n} +\frac{\ar \factorial{n}} {(\strate-\ar-\complex)^{n+1}}]
$$
}%
Integration of~\eqref{sequelexamplefprdpv} yields, for $\bl\in\Realpluszero$,
\begin{equation}\label{vfexamplefprdpv}\notag
\begin{array}{l}
\purevaluex{\bl}
=
 \ar
\sum\nolimits_{j=0}^{\nn}  \pwkjin{j}  
\frac{ [\min(\bl,\taufunction) ]^{j+1} }{{j+1}} 
\\ \hfill\qquad 
+ \ar^2  
\sum\nolimits_{j=0}^{\nn}\frac{\factorial{j}\,  \pwkjin{j} }{(\strate-\ar)^{j+2}} 
\sum\nolimits_{q=0}^{j} \left\{
  \frac{  [(\strate-\ar)\min(\bl,\taufunction)]^{q+1}  }{\factorial{(q+1)}} 
-
\frac{\exp{(\strate-\ar)\min(\bl,\taufunction)}-1}{\exp{(\strate-\ar)\taufunction}}   \frac{[(\strate-\ar)\taufunction]^{q}}{\factorial{q}} 
 \right\} 
.
\end{array}
\end{equation} 
\end{example}


\begin{example}\label{example:piecewiseexp}
Consider service times exponentially distributed with parameter~$\strate>\ar$, and the cost function $\costx{\bl}=\bl^\pwnout\exp{-\sing\bl} \, \indicator{[\taufunction,\infty)}{\bl}
$, i.e.~\eqref{approximatedcostfunction} with~$\nn=0$, $\pwkjin{0}=0$.
We have $\costix{0}{\bl}= 0$ and $\costifunction{1}=\pwdeltafunction= \bl^\pwnout\exp{-\sing\bl} $, so that
\begin{equation} \label{transforms:piecewiseexp}
\textstyle
\laplacetransformfs{\costifunction{0}}{\complex} 
= 
0
,\qquad
\pwshiftedlaplacest{\complex}{\taufunction} 
\refereq{\eqref{pwLT}}{=} 
\factorial{\pwnout} \exp{-\sing\taufunction} \sum\nolimits_{q=0}^{\pwnout} \frac{\taufunction^q}{\factorial{q}(\complex+\sing)^{\pwnout-q+1}}  .
\end{equation}

For $\bl\in(\taufunction,\infty)$, we 
use~Table~\ref{table:moments} 
and $\Polef{\laplacetransformf{\costfunction}}=\{-\sing\}$, and get   
\begin{equation}\label{example:piecewiseexplarge}
\nocolsep
\begin{array}{rcl}
\rightderivative\purevaluex{\bl}
&\refereq{\eqref{solutionFNPlarge}}{=}&
\frac{\factorial{\pwnout}\ar\strate \, \exp{-\sing\taufunction}}{\strate-\ar}\sum\nolimits_{q=0}^{\pwnout} \frac{\taufunction^{\pwnout-q}}{\factorial{(\pwnout-q)}} \bigresiduefx{ \LSTWts{-\complex} \frac{e^{\complex(\bl-\taufunction)}}{(\complex+\sing)^{q+1}}  }{\complex=-\sing} 
 \\
&\refereq{\eqref{residue}}{=} &
%
%
\ar\bl^\pwnout e^{-\sing\bl} + \frac{ \factorial{\pwnout}\ar^2 }{(\strate-\ar+\sing)^{\pwnout+1}} \{ \sum\nolimits_{q=0}^{\pwnout}  \frac{1}{\factorial{q}} [(\strate-\ar+\sing)\bl]^q  \} \, e^{-\sing\bl} .
 \end{array}
\end{equation}
Alternatively, \eqref{example:piecewiseexplarge} can be derived from~\eqref{bilaterallaplacetransformcostanfunction} with cost function~$\costifunction{1}$, or by inspection of Table~\ref{table:valuefunctions} for~$\costifunction{1}$ via computation of~\eqref{table:poweracoefnk}.
\obsolete{
\begin{equation}
 \textstyle
 \rightderivative\purevaluex{\bl}
\refereq{\eqref{bilaterallaplacetransformcostanfunction}}{=}
%
%
%
%
\factorial{\pwnout}\ar  \frac{ \bl^\pwnout}{\factorial{\pwnout}} \exp{-\sing\bl} + \factorial{\pwnout}\ar^2   \sum\nolimits_{q=0}^{\pwnout} \frac{\bl^q}{\factorial{q}(\strate-\ar+\sing)^{\pwnout-q+1}}   \exp{-\sing\bl}
  \end{equation}
}%

For $\bl\in(0,\taufunction)$, we combine~\eqref{solutionFNP} with~\eqref{transforms:piecewiseexp},  $   \Polef{\laplacetransformf{\costifunction{0}}}=\emptyset$,
  and ~$\LSTWtfunction$ (Table~\ref{table:moments}) 
  to get
\begin{equation}\notag 
\textstyle
\begin{array}{c}
\rightderivative\purevaluex{\bl}
= 
 -
\frac{\ar\strate}{\strate-\ar} \factorial{\pwnout} \exp{-\sing\taufunction} \sum\nolimits_{q=0}^{\pwnout}
\bigresiduefx{\LSTWts{-\complex}  \, \frac{\taufunction^q}{\factorial{q}(\complex+\sing)^{\pwnout-q+1}}  \, e^{\complex(\bl-\taufunction)} }{\complex=\strate-\ar} 
\\
=
%
\frac{\factorial{\pwnout}\ar^2 \exp{-\sing\taufunction} }{ (\strate-\ar+\sing)^{\pwnout+1}} \{ \sum\nolimits_{q=0}^{\pwnout}  \frac{1}{\factorial{q}}[(\strate-\ar+\sing)\taufunction]^q \} \, e^{(\strate-\ar)(\bl-\taufunction)} 
%
  .
\end{array}
\end{equation}
Hence, if $\sing\neq 0$,
\begin{equation}\label{expexamplebvf}\notag
\begin{array}{l}
\purevaluex{\bl}
=
\frac{ \factorial{\pwnout}\ar^2 \,e^{-(\strate-\ar+\sing)\taufunction} }{ (\strate-\ar) (\strate-\ar+\sing)^{\pwnout+1}} \, \{ \sum\nolimits_{q=0}^{\pwnout}  \frac{1}{\factorial{q}}[(\strate-\ar+\sing)\taufunction]^q \} \, (e^{(\strate-\ar)\min(\bl,\taufunction)}-1) 
\\
\ \ +
 \frac{ \factorial{\pwnout}\ar^2 }{\strate-\ar} \big\{ \sum\nolimits_{q=0}^{\pwnout}  \frac{1}{\factorial{q}}
[ \sing^{-(\pwnout-q+1)}-(\strate-\ar+\sing)^{-(\pwnout-q+1)} ] 
   \, [\taufunction^q e^{-\sing\taufunction} -\max(\bl,\taufunction)^q e^{-\sing\max(\bl,\taufunction)} ]  \big\}
 %
,
\end{array}
\end{equation} 
\obsolete{
\begin{equation}\label{expexamplebvfzero}
\begin{array}{l}
\purevaluex{\bl}
=
\frac{\ar^2 \,e^{-(\strate-\ar+\sing)\taufunction} }{ (\strate-\ar) (\strate-\ar+\sing)}  \, (e^{(\strate-\ar)\min(\bl,\taufunction)}-1) 
 +
 \frac{\ar}{\sing}\Big[  1         +     \frac{ \ar}{\strate-\ar+\sing} \Big]  [ e^{-\sing\taufunction}-  e^{-\sing\max(\bl,\taufunction)}] 
,
\end{array}
\end{equation} 
}%
and, if~$\sing=0$,
\begin{equation}\label{expexamplebvfsingzero}\notag
\begin{array}{l}
\purevaluex{\bl}
=
\frac{ \factorial{\pwnout}\ar^2 \,e^{-(\strate-\ar)\taufunction} }{  (\strate-\ar)^{\pwnout+2}} \, \{ \sum\nolimits_{q=0}^{\pwnout}  \frac{1}{\factorial{q}}[(\strate-\ar)\taufunction]^q \} \, (e^{(\strate-\ar)\min(\bl,\taufunction)}-1) 
\\
\quad +
\frac{\ar\taufunction^{k+1}}{\pwnout+1}([\max(\frac\bl\taufunction,1)]^{\pwnout+1}-1)  + \frac{ \factorial{\pwnout}\ar^2 }{(\strate-\ar)^{\pwnout+2}} \big\{ \sum\nolimits_{q=0}^{\pwnout}  \frac{[(\strate-\ar)\taufunction]^{q+1}  }{\factorial{(q+1)}} ([\max(\frac\bl\taufunction,1)]^{q+1}-1)  \big\} 
.
\end{array}
\end{equation} 

\obsolete{

$$\textstyle
\int \altbl^0 e^{-\sing\altbl} d\altbl
=
[- \frac{e^{-\sing\altbl}}{\sing}]
$$

$$\textstyle
\int \altbl^q e^{-\sing\altbl} d\altbl
=
[-\altbl^q \frac{e^{-\sing\altbl}}{\sing}]
+ \frac{q}{\sing}\int  \altbl^{q-1} e^{-\sing\altbl} d\altbl
=
\sum_{j=0}^{q} \left(\frac{q}{\sing}\right)^{q-j}[-\altbl^j \frac{e^{-\sing\altbl}}{\sing}]
$$

$$\textstyle
\int_{\taufunction}^{\max(\bl,\taufunction)} \altbl^q e^{-\sing\altbl} d\altbl
=
\frac{1}{\sing} \sum_{j=0}^{q} \left(\frac{q}{\sing}\right)^{q-j}[\taufunction^j e^{-\sing\taufunction}-[\max(\bl,\taufunction)]^j e^{-\sing\max(\bl,\taufunction)}]
$$

}%

\end{example}


\section{Proofs and auxiliary results}
\label{appendix:complement}

\begin{proof}[Theorem~\ref{theorem:convergence}]
\ref{convergence:yes}
If~$\eorder$ is the order of growth of the entire cost function~$\costfunction$, and~$\etype$ is its type, then for any~$\epsilon>0$, there is~$k_\epsilon<\infty$ such that,~\cite[Lecture~1]{levin96},
\begin{subequations}\label{levinsup}
\begin{align}
\label{levinsuporder}\begin{array}{ll}
\frac{1}{\factorial{{k}}}\modulus{\dncostx{{k}}{0}}
< \big(\frac{e (\eorder+\epsilon)}{{k}}\big)^{\frac{{k}}{\eorder+\epsilon}}
, & \quad \forall {k}>k_\epsilon,
\end{array}\\
\label{levinsuptype}\begin{array}{ll}
\frac{1}{\factorial{{k}}}\modulus{\dncostx{{k}}{0}}
< \big(\frac{e (\etype+\epsilon) \eorder}{{k}}\big)^{\frac{{k}}{\eorder}}
, &\quad \forall {k}>k_\epsilon.
\end{array}
\end{align}
\end{subequations}
Consider the quantity $\valuederivativek{{k}}= \sum\nolimits_{q=0}^{\infty} \specialmatrixnk{q+1}{q} \, \dncostx{k+q}{0}$ introduced in~\eqref{valuederivativek}, as well as
\begin{equation}\label{valueabsolutederivativek}
\begin{array}{ll}
\valueabsolutederivativek{{k}}=\frac{\ar}{1-\load} \sum\nolimits_{q=0}^{\infty} \specialmatrixnk{}{q}
 \Modulus{\dncostx{{k}+q}{0}}
,&\quad \forall{k}\in\Natural.
\end{array}\end{equation}
Recall from Proposition~\ref{proposition:analycity}-\eqref{analycity:centered} in Appendix~\ref{appendix:LST} that $\lim\nolimits_{k\to\infty} \specialmatrixnk{}{k+1}/\specialmatrixnk{}{k} = \modulus{\dominantpoleX{\Wt}}^{-1}$. Besides, it can be seen  (e.g. using Stirling's approximation for the factorial)  that
\begin{equation}\label{testresult}\begin{array}{ll}
\lim\nolimits_{{k}\to\infty} \frac{ \factorial{({k}+1)} \big(\frac{e s r}{{k}+{l}+1}\big)^{\frac{{k}+{l}+1}{r}} }{ \factorial{{k}} \big(\frac{e s r}{{k}+{l}}\big)^{\frac{{k}+{l}}{r}} } 
= \Bigg\lbrace\begin{array}{cl} 
0,&\textup{if } r<1
\\
s,&\textup{if } r=1
\\
\infty,&\textup{if } r>1
\end{array}\Bigg\rbrace
,& \quad \forall {l}\in\Natural.
\end{array}\end{equation}
Equations~\eqref{levinsuporder} and~\eqref{levinsuptype} tell us that, under the assumptions of~\ref{convergence:yes} and by taking~$\epsilon$ sufficiently small, one can find a dominant series for~$\valuederivativek{{k}}$ and~$\valueabsolutederivativek{{k}}$ that successfully passes the ratio test for convergence due to~\eqref{testresult}, so that
both~$\valuederivativek{{k}}$ and~$\valueabsolutederivativek{{k}}$ are finite for all~${k}$. The finiteness of~$\valueabsolutederivativek{{k}}$ allows us to interchange the integration order in the computation of~$\valuederivativek{{k}}$. 
Noting that $\specialmatrixnk{}{q}=\expectation{\Wt^q}/ \factorial{q}$  for all~$q$  (cf. Proposition~\ref{proposition:analycity}-\eqref{analycity:centered}), 
we apply \hidecalculustheorems{Fubini's theorem}{Corollary~\ref{corollary:sumexpectation}}  
and find, for~${k}\in\Naturalnn$,
\begin{equation}\label{switchvaluederivativek}
\nocolsep
\begin{array}{rcl}
\valuederivativek{{k}}
&=&
\frac{\ar}{1-\load} \sum\nolimits_{q=0}^{\infty} \expectation{
\dncostx{{k}+q}{0}
\frac{\Wt^q}{ \factorial{q}} }
\hidecalculustheorems{=}{\refereq{\eqref{sumexpectation}}{=}}
\frac{\ar}{1-\load} \expectation{\sum\nolimits_{q=0}^{\infty}
\dncostx{{k}+q}{0}
\frac{\Wt^q}{ \factorial{q}} }
\refereq{\eqref{entirecost}}{=}
\frac{\ar}{1-\load} \expectation{
\dncostx{{k}}{\Wt}
}
.
\end{array}\end{equation}
Similarly, we introduce, for~${k}\in\Natural$,
\begin{equation}\label{switchvalueabsolutederivativek}
\nocolsep
\begin{array}{l}
\valuemixedderivativek{{k}}
=
\frac{\ar}{1-\load} \Expectation{\Modulus{
\dncostx{{k}}{\Wt}
 }}
\refereq{\eqref{entirecost}}{\leq}
\frac{\ar}{1-\load} \Expectation{
\sum\nolimits_{q=0}^{\infty} \Modulus{
\dncostx{{k}+q}{0}
} \frac{\Wt^q}{ \factorial{q}} }
\hspace{29mm}\\\hfill
\hidecalculustheorems{=}{\refereq{\eqref{sumexpectation}}{=}}
\frac{\ar}{1-\load} \sum\nolimits_{q=0}^{\infty} \specialmatrixnk{}{q}\Modulus{
\dncostx{{k}+q}{0}
}
\refereq{\eqref{valueabsolutederivativek}}{=}
\valueabsolutederivativek{{k}}
.
\end{array}\end{equation}
and~$\valuemixedderivativek{{k}}$ is finite as well.
\obsolete{
\begin{equation}\label{switchvalueabsolutederivativek}
\begin{array}{rcl}
\valueabsolutederivativek{{k}}
&\refereq{\eqref{sumexpectation}}{=}&
\frac{\ar}{1-\load} \Expectation{\Modulus{\frac{d^{{k}}\costx{\Wt}}{d\bl^{{k}}} }}
.
\end{array}\end{equation}
}%
Suppose now that $\modulus{({d^{{k}}}/ {d\bl^{{k}}})\costx{0}}  < {\factorial{{k}}} ({e s r}/{{k}})^{{{k}}/{r}}$ for ${k}>k_\epsilon$|in the case~\ref{convergence:yes}, this holds either for some~$r<1$ or for $r=1$ and some finite~$s$|, and consider the  sequence
\begin{equation}\label{boundingsequence}
\textstyle
\boundvdk{{k}}= \frac{\ar}{1-\load}    \sum\nolimits_{q=0}^{\infty} \factorial{(q+{k})} \, \specialmatrixnk{}{q}  \big(\frac{e s r}{q+{k}}\big)^{\frac{q+{k}}{r}}
, \quad \forall{k}\in\Natural. 
\end{equation}
It is easy to see that the three  sequences $\sum\nolimits_{{k}=0}^{\infty}\valuederivativek{{k}} { \bl^{{k}+1}}/{\factorial{({k}+1)}} $, $\sum\nolimits_{{k}=0}^{\infty}\valueabsolutederivativek{{k}} { \bl^{{k}+1}}/{\factorial{({k}+1)}} $ and $\sum\nolimits_{{k}=0}^{\infty}\valuemixedderivativek{{k}} { \bl^{{k}+1}}/{\factorial{({k}+1)}} $  converge wherever $\sum\nolimits_{{k}=0}^{\infty} \boundvdk{{k}} { \bl^{{k}+1}}/{\factorial{({k}+1)}} $ is convergent. Besides,
\begin{equation}\label{boundingsequencebis}\begin{array}{c}
\boundvdk{{k}+1}\refereq{\eqref{boundingsequence}}{=} \frac{\ar}{1-\load}    \sum\nolimits_{q=0}^{\infty} \Bigg[ \frac{ 
\factorial{(q+{k}+1)}  \, \big(\frac{e s r}{q+{k}+1}\big)^{\frac{q+{k}+1}{r}}
}{
 \factorial{(q+{k})} \, \big(\frac{e s r}{q+{k}}\big)^{\frac{q+{k}}{r}}
} \Bigg]
\factorial{(q+{k})} \, \specialmatrixnk{}{q} \, \big(\frac{e s r}{q+{k}}\big)^{\frac{q+{k}}{r}} .
\end{array}\end{equation}
In the conditions of~\ref{convergence:yes}, we infer  from~\ref{testresult} that the expression between brackets in~\eqref{boundingsequencebis} tends to a finite quantity not larger than~$s$, so that, for any~$\nu>0$ one can find a~$k_\nu$ such that $\boundvdk{{k}+1}\leq (\boundvdk{k_\nu+1}-\boundvdk{k_\nu}) + (s+\nu) \boundvdk{{k}}  $ for ${k}>k_\nu$. 
%
It follows from the ratio test that $\sum\nolimits_{{k}=0}^{\infty} \boundvdk{{k}} { \xi^{{k}}}/{\factorial{{k}}} $ converges for $\xi\in\Realpluszero$, and so do $\sum\nolimits_{{k}=0}^{\infty}\valuederivativek{{k}} { \xi^{{k}}}/{\factorial{{k}}} = \sumsumx{\bl}$, $\sum\nolimits_{{k}=0}^{\infty}\valueabsolutederivativek{{k}} {\xi^{{k}}}/{\factorial{{k}}} $ and $\sum\nolimits_{{k}=0}^{\infty}\valuemixedderivativek{{k}} {\xi^{{k}}}/{\factorial{{k}}} $.  This last conclusion, together with~\eqref{switchvaluederivativek}, \eqref{switchvalueabsolutederivativek}, and 
\hidecalculustheorems{Fubini's theorem applied to set of natural numbers  with the counting measure}{Corollary~\ref{corollary:sumexpectation}}, yields, for~$\bl\in\Realpluszero$,
\begin{equation}\label{switchtotal}\notag
\begin{array}{l}
\sumsumx{\bl} 
\refereq{\eqref{switchvaluederivativek}}{=}
\frac{\ar}{1-\load} \int\nolimits_{0}^{\bl} 
\sum\nolimits_{{k}=0}^{\infty}  \expectation{
\dncostx{{k}}{\Wt}
 \frac{ \xi^{{k}}}{\factorial{{k}}} } \, d\xi
\hidecalculustheorems{=}{\refereq{\eqref{sumexpectation}}{=}}
\frac{\ar}{1-\load} \int\nolimits_{0}^{\bl} 
\expectation{\sum\nolimits_{{k}=0}^{\infty} 
\dncostx{{k}}{\Wt}
  \frac{ \xi^{{k}}}{\factorial{{k}}} } \, d\xi
\hspace{8mm}\\\hfill
\refereq{\eqref{entirecost}}{=}
\frac{\ar}{1-\load} \int\nolimits_{0}^{\bl} 
\Expectation{\costx{\xi+\Wt} } \, d\xi
\showhideproposition{\refereq{\eqref{dwfunction}}{=}}{=}
\purevaluex{\bl} ,
\end{array}\end{equation}
where the last result follows 
\showhideproposition{%
from   Proposition~\ref{proposition:identities}\eqref{proposition:vfdifferentiability}. 
}{%
from~\cite[Proposition~1]{hyytia-vgf-itc-2017}.
}%
Since
\begin{equation}\label{absolutesumsumx}\notag
\textstyle
\sum\nolimits_{{k}=0}^{\infty}\valueabsolutederivativek{{k}} \frac{\xi^{{k}}}{\factorial{{k}}}
=
\frac{\ar}{1-\load}    
\sum\nolimits_{{k}=0}^{\infty} \big(\sum\nolimits_{q=0}^{\infty}\modulus{  \specialmatrixnk{}{q} 
\, \dncostx{{k}+q}{0}
\frac{ \xi^{{k}}}{\factorial{{k}}}}\big) 
<\infty
,   \quad
\forall \xi\in\Realpluszero,
\end{equation}
\hidecalculustheorems{Fubini's theorem applies}{} and one may  interchange the order of summation in~\eqref{sumsumx}:
\begin{equation}\label{switchsumsumx}\notag
\nocolsep
\begin{array}{rcl}
\sumsumx{\bl}
 &\hidecalculustheorems{=}{\!\!\refereq{\eqref{fubini}}{=}\!\!} &
\frac{\ar}{1-\load}   \int\nolimits_{0}^{\bl} \sum\nolimits_{q=0}^{\infty}\specialmatrixnk{}{q}\big( \sum\nolimits_{{k}=0}^{\infty}  
\dncostx{{k}+q}{0}
 \frac{ \xi^{{k}}}{\factorial{{k}}} \big) \, d\xi
=
\frac{\ar}{1-\load}   \int\nolimits_{0}^{\bl}  \sum\nolimits_{q=0}^{\infty}\specialmatrixnk{}{q} 
\, \dncostx{q}{\xi}
\, d\xi
\!\refereq{\eqref{sumx}}{=}\!\!
\sumx{\bl},
\end{array}\end{equation}
which holds for~$\bl\in\Realpluszero$.

\ref{convergence:no}
Similarly, for any~$\epsilon>0$, one can find growing  sequences of naturals~$\{\sspn{{k}}\}$ and~$\{\ssqn{{k}}\}$ such that, \cite[Lecture~1]{levin96},
\begin{subequations}\label{levininf}
\begin{align}
\label{levininforder}\begin{array}{ll}
\frac{\modulus{\dncostx{\sspn{{k}}}{0}} }{\factorial{\sspn{{k}}}}
 > \big(\frac{e (\eorder-\epsilon)}{\sspn{{k}}}\big)^{\frac{\sspn{{k}}}{\eorder-\epsilon}}
, & \quad ({k}\in\Naturalnn),
\end{array}\\
\label{levininftype}\begin{array}{ll}
\frac{\modulus{\dncostx{\ssqn{{k}}}{0}} }{\factorial{\ssqn{{k}}}}
> \big(\frac{e (\etype-\epsilon) \eorder}{\ssqn{{k}}}\big)^{\frac{\ssqn{{k}}}{\eorder}}
, & \quad ({k}\in\Naturalnn).
\end{array}
\end{align}
\end{subequations}
Recall the series~$\valuederivativek{{k}}$ defined  in~\eqref{valuederivativek}.
By taking~$\epsilon$ sufficiently small in~\eqref{levininforder}
  and~\eqref{levininftype} and using~\eqref{testresult}, we find that the asymptotic ratio between the moduli of two terms of~\eqref{valuederivativek} with respective indices~$\sspn{q}-{k},\sspn{q+1}-{k}$  (in the case~$\eorder>1$) or~$\ssqn{q}-{k},\ssqn{q+1}-{k}$ (in the case~$\eorder=1$,  $\etype>\modulus{\dominantpoleX{\Wt}}^{-1}$) is greater than one for~$q$ taken large enough. Hence, one can find a subsequence of terms of~\eqref{valuederivativek} which grows in modulus, and~$\valuederivativek{{k}}$ diverges for all~${k}$.
\qed\end{proof}

\storecounter{figure}{store:figure:contourintegral}%
\begin{figure}[!t]
\centering
\begin{tikzpicture}[scale=0.85]
\def\dotwidth{0.08}
\def\tickwidth{0.05}
\def\gap{-.1}
\def\margin{0.2}
\def\deltaangle{5}
\def\arrowlength{\tauf/4}
\def\arrowangle{70}
\def\Arrowangle{20}
\def\R{2.5}
\def\epsil{0.4}
\def\tauf{2.5}
\def\sig{\tauf*.5+\R*.75}
\def\ballsize{.1}
\def\pole{\R*0.44}
\def\npole{9}
\coordinate  (tau)  at  (0,\tauf);
\coordinate  (sig)  at  (0,\sig);
\draw[fill=black!05!white,thick]  (\tauf*.5+\R,0) arc (0:365:\R) -- cycle;
\draw[->,thin]  (\tauf*.5,0) ++(45-\Arrowangle*0.5:\R+\gap) arc (45-\Arrowangle*0.5:45+\Arrowangle*0.5:\R+\gap) ;
\node  at (\tauf/2+ \R*cos 45  +2* \margin*cos 45,\R * sin 45 +\margin*sin 45 ) {\scalebox{0.8}{$\contoura{1/\epsilon}$}};
\fill[fill=white]  (0,0) ++ (\deltaangle:\epsil) arc (\deltaangle:360-\deltaangle:\epsil) -- cycle;
\draw[thick]  (0,0) ++ (\deltaangle:\epsil) arc (\deltaangle:360-\deltaangle:\epsil) ;
\draw[->,thin]  (0,0) ++ (180+\arrowangle*.5:\epsil-\gap) arc (180+\arrowangle*.5:180-\arrowangle*.5:\epsil-\gap) ;
\node  at (0  +2* \margin,-\epsil -\margin ) {\scalebox{0.8}{$\contouras{\epsilon}{0}$}};
\node  at (\tauf  +2* \margin,-\epsil -\margin ) {\scalebox{0.8}{$\contouras{\epsilon}{\taufunction}$}};
\fill[fill=white]  (\tauf,0) ++ (-180+\deltaangle:\epsil) arc (-180+\deltaangle:180-\deltaangle:\epsil) -- cycle;
\draw[thick]  (\tauf,0) ++ (-180+\deltaangle:\epsil) arc (-180+\deltaangle:180-\deltaangle:\epsil) ;
\draw[->,thin]  (\tauf,0) ++ (\arrowangle*.5:\epsil-\gap) arc (\arrowangle*.5:-\arrowangle*.5:\epsil-\gap) ;
\fill[fill=white]  (0,\epsil*sin \deltaangle) -- (\tauf,\epsil*sin \deltaangle) -- (\tauf,-\epsil*sin \deltaangle) -- (0,-\epsil*sin \deltaangle)  -- cycle;
\draw[thick]  (\epsil*cos \deltaangle,\epsil*sin \deltaangle) -- (\tauf-\epsil*cos \deltaangle,\epsil*sin \deltaangle) ;
\draw[thick] (\tauf-\epsil*cos \deltaangle,-\epsil*sin \deltaangle) -- (\epsil*cos \deltaangle,-\epsil*sin \deltaangle)  ;
\draw[->,thin] (\tauf/2-\arrowlength/2,\epsil*sin \deltaangle-\gap) --  (\tauf/2+\arrowlength/2,\epsil*sin \deltaangle-\gap) ;
\draw[->,thin] (\tauf/2+\arrowlength/2,-\epsil*sin \deltaangle+\gap) --  (\tauf/2-\arrowlength/2,-\epsil*sin \deltaangle+\gap) ;
\makedot{(0,0)}
\makedot{(\tauf,0)}
\makedot{(\sig,0)}
\foreach \i in {1,...,\npole}
{
\pgfmathsetmacro{\CosValue}{cos(180/\npole+360*(\i-1)/\npole)}
\pgfmathsetmacro{\SinValue}{sin(180/\npole+360*(\i-1)/\npole)} 
\makedot{(\pole*\CosValue,\pole*\SinValue)}
}
\node  at (-\margin*3,-\pole+0.2*\margin) {{\scalebox{0.8}{$\Polef{\allpolycostnfunction{k}}$}}};
\draw[very thin] (\tauf*.5-\R-\margin,0) -- (\tauf,0) node[below=0]{\scalebox{0.8}{$\taufunction$}}-- (\sig,0) node[below right=-0.10]{\scalebox{0.8}{$\frac{1}{\polylamb}$}} --(\tauf*.5+\R+3*\margin,0) node[below=0]{$\realpart{\complex}$};
\draw[very thin] (0,-\R-\margin)  -- (0,0) node[below left=-0.07]{\scalebox{0.8}{$0$}}-- (0,\R+\margin) node[right=0]{$\imaginarypart{\complex}$};
\lengthmark{\tauf}{0}{135}{\epsil}{\epsil/4}{\scalebox{0.8}{$\epsilon$}}
\lengthmark{\tauf/2}{0}{90}{\R}{\epsil*0.75}{\scalebox{0.8}{$1/\epsilon$}}
\end{tikzpicture}
\caption{%
Singularities of  $\allpolycostnx{k}{\complex}  \,  \complexexponent{\complex}{-{1}/{2}}{-\pi} \complexexponent{(\complex-\taufunction)}{-{1}/{2}}{-\pi}\,  (1-\polylamb \complex)^{-\polydegreek{k}}$ and computation of~$\realfouriercoefklamb{k}{\polylamb}$ by contour integration.
} \label{figure:contourintegral}
\end{figure}%
%
\begin{lemma}[Coefficients~$\{
\fouriercoefk{k}
\}$ for quotients of polynomials] \label{lemma:fouriercoefsforquotientofpolynomials}
Let~$\npolycostnfunction{\npolydegree}$ and~$\dpolycostnfunction{\dpolydegree}$ be polynomials of degrees~$\npolydegree$ and~$\dpolydegree$, and consider
\begin{equation}\label{polycostfunction}\notag
\textstyle
\costx{\bl}
=
\frac{\npolycostnx{\npolydegree}{\bl}}{\dpolycostnx{\dpolydegree}{\bl}},
\quad \forall \bl \in \Realpluszero.
\end{equation}
For $\taufunction>0$, recall~\eqref{polycostnx} and define $\allpolycostnx{k}{\complex} = 
\costx{\complex} \, \polcosnx{k}{{2\complex}/{\taufunction}-1}
$ 
under the assumption $\Polef{\allpolycostnfunction{k}}\cap[0,\infty)=\emptyset$.
The Fourier coefficients~\eqref{realfouriercoefficientstwo}  
of~$\costfunction$ satisfy, for~$k\geq 0$, 
\begin{equation} \label{solutionrealfouriercoefk}\begin{array}{c}
\fouriercoefk{k}
=
\sqrt{\pi} \sum\nolimits_{q=0}^{\polydegreek{k}}     \frac{ \laurentcoef{-q}\,  (-\taufunction)^{q} }{ \factorial{q} \Gammax{\frac{1}{2}-q}} 
- 
\sum\nolimits_{\sing\in\Polef{\allpolycostnfunction{k}}} \Residuefx{
\allpolycostnx{k}{\complex}  \,  \complexexponent{\complex}{-\frac{1}{2}}{-\pi} \complexexponent{(\complex-\taufunction)}{-\frac{1}{2}}{-\pi}
}{\complex=\sing},
\end{array}
\end{equation}
\storecompoundcounter{equation}{solutionrealfouriercoefk}%
where~$\polydegreek{k}=\max(0,\npolydegree-\dpolydegree+k)$ is   the largest nonnegative integer~$\polydegree$ such that $\lim\nolimits_{\complex\to 0}
\complex^{\polydegree} \Allpolycostnx{k}{{1}/{\complex}} $ is finite, and~$\{\laurentcoef{q}\}$ are the  coefficients of the Laurent series at~$+\infty$ of the analytic continuation of~$\allpolycostnfunction{k}$, i.e.,
\begin{equation}\label{laurentcoef}\begin{array}{ll} 
\laurentcoef{q} = 
\frac{1}{\factorial{\left(\polydegreek{k}+q\right)}} \lim\nolimits_{\complex\to 0}
\frac{d^{\polydegreek{k}+q}}{d\complex^{\polydegreek{k}+q}} \left[\complex^{\polydegreek{k}} \Allpolycostnx{k}{\frac{1}{\complex}} \right]
, &\quad  (q=-\polydegreek{k},\dots,\infty).
\end{array}\end{equation}
\end{lemma}
\storecompoundcounter{lemma}{lemma:fouriercoefsforquotientofpolynomials}%
A suggestion for deriving the coefficients~$\{
\fouriercoefk{k}
\}$ in Lemma~\ref{lemma:fouriercoefsforquotientofpolynomials} is to consider in the complex domain the contour integral  
\begin{equation} \label{realfouriercoefficientspolyIklamb} \notag
\begin{array}{l}
\integralklamb{k}{\polylamb}
=
\frac{1}{\pi} \ointctrclockwise\nolimits_{\contourfunction} \Big(\frac{\npolycostnx{\npolydegree}{\complex} \, \Polcosnx{k}{\frac{2\complex}{\taufunction}-1}}{\dpolycostnx{\dpolydegree}{\complex} \, (1-\polylamb \complex)^%
{\polydegreek{k}}
}\Big) \,  \complexexponent{\complex}{-\frac{1}{2}}{-\pi} \, \complexexponent{(\complex-\taufunction)}{-\frac{1}{2}}{-\pi} \, d\complex  
 \hspace{27mm}\\\hfill
 =
 \frac{1}{\pi} \Big( \ointctrclockwise\nolimits_{\contoura{1/\epsilon}} \!  +   \varointclockwise\nolimits_{\contouras{\epsilon}{0}} \!  + \int\nolimits_{\epsilon}^{\taufunction-\epsilon} \!  + \varointclockwise\nolimits_{\contouras{\epsilon}{\taufunction}} \!  + \int\nolimits_{\taufunction-\epsilon}^{\epsilon} \Big) \frac{\allpolycostnx{k}{\complex}  \,  \complexexponent{\complex}{-\frac{1}{2}}{-\pi} \complexexponent{(\complex-\taufunction)}{-\frac{1}{2}}{-\pi}}{ (1-\polylamb \complex)^{\polydegreek{k}}} \, d\complex ,
 \end{array}\end{equation}
where $\complexexponent{\complex}{\alpha}{-\pi}=\exp{\alpha(\ln{\modulus{\complex}}+\i\complexarg{\complex}{-\pi})}$ denotes the principal branch of the complex exponentiation, and the circles~$\contoura{1/\epsilon}$, $\contouras{\epsilon}{0}$, and~$\contouras{\epsilon}{\taufunction}$ are understood as in Figure~\ref{figure:contourintegral} with $\epsilon>0$ chosen small enough so that~$1/\polylamb$ and the poles of~$\allpolycostnfunction{k}$ all lie  between the outer contour~$\contoura{\epsilon}$ and the inner contour.
\obsolete{
\begin{proof}
We would like to compute \begin{equation}\label{realfouriercoefficientspoly} \begin{array}{c}
\fouriercoefk{k}
\refereq{\eqref{realfouriercoefficientstwo} }{=}
 \frac{1}{\pi} \int\nolimits_{0}^{\taufunction} \frac{\npolycostnx{\npolydegree}{\bl} \, \Polcosnx{k}{\frac{2\bl}{\taufunction}-1}}{\dpolycostnx{\dpolydegree}{\bl} \, \sqrt{\bl(\taufunction-\bl)}} \, d\bl,
\end{array}
\end{equation}
for any~$k\in\Naturalnn$.
For~%
$\polylamb\ll 1/\taufunction$, 
we define 
the altered coefficient
\begin{equation}\label{realfouriercoefficientspolylamb} \begin{array}{c}
\realfouriercoefklamb{k}{\polylamb}
=
 \frac{1}{\pi} \int\nolimits_{0}^{\taufunction} \frac{\npolycostnx{\npolydegree}{\bl} \, \Polcosnx{k}{\frac{2\bl}{\taufunction}-1}}{\dpolycostnx{\dpolydegree}{\bl} \, (1-\polylamb \bl)^{\polydegree} \, \sqrt{\bl(\taufunction-\bl)}} \, d\bl,
\end{array}
\end{equation}
which has the property to converge to~$\realpart{\fouriercoefk{k}}
$ as~$\polylamb\downarrow 0$.  Indeed, since by assumption~$\costfunction$ and~$\polcosnfunction{k}$ are bounded on~$[0,\taufunction]$, the integrand of~\eqref{realfouriercoefficientspoly} is absolutely integrable on the interval. As soon as~$\polylamb\leq 1/(2\taufunction)$, one  has $\modulus{(1-\polylamb \bl)^{-\polydegree}}\leq 2^\polydegree$ and the conditions of Lebesgue's dominated convergence theorem are met.
%
Now, consider the contour integral in the complex domain 
\begin{equation} \label{realfouriercoefficientspolyIklamb} 
\begin{array}{l}
\integralklamb{k}{\polylamb}
=
\frac{1}{\pi} \ointctrclockwise\nolimits_{\contourfunction} \big(\frac{\npolycostnx{\npolydegree}{\complex} \, \Polcosnx{k}{\frac{2\complex}{\taufunction}-1}}{\dpolycostnx{\dpolydegree}{\complex} \, (1-\polylamb \complex)^{\npolydegree-\dpolydegree+k}}\big) \,  \complexexponent{\complex}{-\frac{1}{2}}{-\pi} \, \complexexponent{(\complex-\taufunction)}{-\frac{1}{2}}{-\pi} \, d\complex  
 \hspace{27mm}\\\hfill
 =
 \frac{1}{\pi} \Big( \ointctrclockwise\nolimits_{\contoura{1/\epsilon}} \!  +   \varointclockwise\nolimits_{\contouras{\epsilon}{0}} \!  + \int\nolimits_{\epsilon}^{\taufunction-\epsilon} \!  + \varointclockwise\nolimits_{\contouras{\epsilon}{\taufunction}} \!  + \int\nolimits_{\taufunction-\epsilon}^{\epsilon} \Big) \frac{\allpolycostnx{k}{\complex}  \,  \complexexponent{\complex}{-\frac{1}{2}}{-\pi} \complexexponent{(\complex-\taufunction)}{-\frac{1}{2}}{-\pi}}{ (1-\polylamb \complex)^{\polydegree}} \, d\complex ,
 \end{array}\end{equation}
where $\complexexponent{\complex}{\alpha}{-\pi}=\exp{\alpha(\ln{\modulus{\complex}}+\i\complexarg{\complex}{-\pi})}$ denotes the principal branch of the complex exponentiation, and the circles~$\contoura{1/\epsilon}$, $\contouras{\epsilon}{0}$, and~$\contouras{\epsilon}{\taufunction}$ are understood as in Figure~\ref{figure:contourintegral} with $\epsilon>0$ chosen small enough so that~$1/\polylamb$ and the poles of~$\allpolycostnfunction{k}$ all lie  between the outer contour~$\contoura{\epsilon}$ and the inner contour.
%
%
%

We proceed to compute~$\integralklamb{k}{\polylamb}$ term by term. Let~$\polydegree=\max(0,\npolydegree-\dpolydegree+k)$. First notice that
\begin{equation}\label{grandesencochesprelim}\begin{array}{c}
\lim\nolimits_{\complex\to\infty} \left(\complex -\frac{\taufunction}{2} \right)
\left( \frac{\allpolycostnx{k}{\complex}  }{ (1-\polylamb \complex)^{\polydegree}} \right)   \complexexponent{\complex}{-\frac{1}{2}}{-\pi} \complexexponent{(\complex-\taufunction)}{-\frac{1}{2}}{-\pi} 
\refereq{\eqref{laurentcoef}}{=} \frac{ \laurentcoef{-\polydegree} }{(-\polylamb)^{\polydegree}}  
\end{array}\end{equation}
 where we consider that~$ \laurentcoef{-\polydegree} = 0$   whenever $ \npolydegree-\dpolydegree+k < 0$.
By using  Jordan's second lemma~\cite[\S{}3.1.4, Theorem~2]{mitrinovic84} 
(or, equivalently, by computing the residue at~$\infty$), we  find
\begin{equation}\label{grandesencoches}\begin{array}{c}
\lim\nolimits_{\epsilon\to 0}
 \frac{1}{\pi} \ointctrclockwise\nolimits_{\contoura{1/\epsilon}} \frac{\allpolycostnx{k}{\complex}  \,  \complexexponent{\complex}{-\frac{1}{2}}{-\pi} \complexexponent{(\complex-\taufunction)}{-\frac{1}{2}}{-\pi}}{ (1-\polylamb \complex)^{\polydegree}}
  \, d\complex 
=
\frac{2\i \, \laurentcoef{-\polydegree} }{(-\polylamb)^{\polydegree}} .
\end{array}\end{equation}
Besides,  $\lim\nolimits_{\epsilon\to \sing}(\complex -\sing) \allpolycostnx{k}{\complex}  \,  \complexexponent{\complex}{-\frac{1}{2}}{-\pi} \complexexponent{(\complex-\taufunction)}{-\frac{1}{2}}{-\pi}\,  (1-\polylamb \complex)^{-\polydegree}
 = 0$ for~$\sing=0,\taufunction$ as a consequence of the assumption $0,\taufunction\notin\Polef{\allpolycostnfunction{k}}$. It follows from
Jordan's first lemma~\cite[\S{}3.1.4, Theorem~1]{mitrinovic84} 
that
\begin{equation}\label{petitesencoches}\begin{array}{c}
\lim\nolimits_{\complex\to 0,\taufunction}
 \frac{1}{\pi} \left(  \varointclockwise\nolimits_{\contouras{\epsilon}{0}}   + \varointclockwise\nolimits_{\contouras{\epsilon}{\taufunction}}  \right) \frac{\allpolycostnx{k}{\complex}  \,  \complexexponent{\complex}{-\frac{1}{2}}{-\pi} \complexexponent{(\complex-\taufunction)}{-\frac{1}{2}}{-\pi}}{ (1-\polylamb \complex)^{\polydegree}}
  \, d\complex 
= 0 .
\end{array}\end{equation}
Lastly, by inspection of~$\complexexponent{\complex}{-\frac{1}{2}}{-\pi} \complexexponent{(\complex-\taufunction)}{-\frac{1}{2}}{-\pi}$ right above and below the segment~$(0,\taufunction)$, it can be seen that
\begin{equation}\label{limitcomplexexponentiation}\begin{array}{c}
\lim\nolimits_{\epsilon\to 0}
 \frac{1}{\pi} \left(  \int\nolimits_{\epsilon}^{\taufunction-\epsilon}   + \int\nolimits_{\taufunction-\epsilon}^{\epsilon}  \right) 
 \frac{\allpolycostnx{k}{\complex}  \,  \complexexponent{\complex}{-\frac{1}{2}}{-\pi} \complexexponent{(\complex-\taufunction)}{-\frac{1}{2}}{-\pi}}{ (1-\polylamb \complex)^{\polydegree}} 
 \, d\complex 
= -2\i\, \realfouriercoefklamb{k}{\polylamb} .
\end{array}\end{equation}

On the other hand the residue theorem gives
\begin{equation}\label{realfouriercoefficientspolyIklambresidues} \begin{array}{c}
\hspace{-5mm}
\integralklamb{k}{\polylamb}
= \left(\frac{1}{\pi} \right) \, {2\i\pi} \sum\nolimits_{\sing\in\Polef{\allpolycostnfunction{k}}\cup\{\frac{1}{\polylamb}\}} \Bigresiduefx{
\frac{\allpolycostnx{k}{\complex}  \,  \complexexponent{\complex}{-\frac{1}{2}}{-\pi} \complexexponent{(\complex-\taufunction)}{-\frac{1}{2}}{-\pi}}{ (1-\polylamb \complex)^{\polydegree}} 
}{\complex=\sing}
\end{array}\end{equation}
We draw our attention to the residue at~$1/\polylamb$. 
Using the Taylor development of $(1-x \taufunction )^{-{1}/{2}-j}$ at~$x=0$ we find, for~$\polylamb<\taufunction$ and $t\in\Naturalnn$,
\begin{equation}\label{difficultderiv} \begin{array}{l}
\lim\nolimits_{\complex\to\frac{1}{\polylamb}}
\frac{d^{t}}{d\complex^{t}} \left[   \complexexponent{\complex}{-\frac{1}{2}}{-\pi} \complexexponent{(\complex-\taufunction)}{-\frac{1}{2}}{-\pi} \right] 
 =\polylamb^{t+1} \sum\nolimits_{j=0}^{t} \binomialcoef{t}{j}\frac{  \Gammax{\frac{1}{2}}^2 (1-\polylamb\taufunction)^{\frac{1}{2}-j} }{  \Gammax{\frac{1}{2}-j} \Gammax{\frac{1}{2}-t+j} }
\quad\\ \hfill
=
 \polylamb^{t+1} \sum\nolimits_{q=0}^{\infty} \Big[ \sum\nolimits_{j=0}^{t} \binomialcoef{t}{j} \frac{ \Gammax{\frac{1}{2}}^2 }{  \Gammax{\frac{1}{2}-q-j} \Gammax{\frac{1}{2}-t+j} } \Big]
\frac{(-\polylamb\taufunction)^{q}}{\factorial{q}}
.
\end{array}\end{equation}
For  large~$\complex$, the value of~$\allpolycostnx{k}{\complex} $ is given by the Laurent series expansion of~$\allpolycostnfunction{k}$ at~$+\infty$, i.e.,
\begin{equation}\label{laurentallpolycostnx}\begin{array}{l}
\allpolycostnx{k}{\complex} = \sum\nolimits_{q=-\polydegree}^{\infty} \laurentcoef{q} \, \complex^{-q},
\qquad (0 \ll \modulus{\complex}<\infty  ).
\end{array}\end{equation}
Then,
\obsolete{
\begin{equation}\begin{array}{ll}
 \Residuefx{
\frac{\allpolycostnx{k}{\complex}  \,  \complexexponent{\complex}{-\frac{1}{2}}{-\pi} \complexexponent{(\complex-\taufunction)}{-\frac{1}{2}}{-\pi}}{ (1-\polylamb \complex)^{\polydegree}} 
}{\complex=\frac{1}{\polylamb}} \hspace{-49mm} &
\\
&=
\lim\nolimits_{\complex\to\frac{1}{\polylamb}} \frac{1}{\factorial{(\polydegree-1)}} \frac{d^{\polydegree-1}}{d\complex^{\polydegree-1}} \left[  \left( \complex - \frac 1 \polylamb\right)^{\polydegree} \left(\frac{\allpolycostnx{k}{\complex}  \,  \complexexponent{\complex}{-\frac{1}{2}}{-\pi} \complexexponent{(\complex-\taufunction)}{-\frac{1}{2}}{-\pi}}{ (1-\polylamb \complex)^{\polydegree}} \right) \right]
\\
& \refereq{\eqref{difficultderiv}}{=}
\frac{\pi (-1)^\polydegree}{\factorial{(\polydegree-1)}} \sum\nolimits_{t=0}^{\polydegree-1} \frac{ \binomialcoef{\polydegree-1}{\ t}  \Taylorcoef{\frac{1}{\polylamb}}{\polydegree-1-t} \factorial{(\polydegree-1-t)}
 }{ \polylamb^{\polydegree-1-t} } \sum\nolimits_{q=0}^{\infty} \left[ \sum\nolimits_{j=0}^{t} \frac{\binomialcoef{t}{j} }{  \Gammax{\frac{1}{2}-q-j} \Gammax{\frac{1}{2}-t+j} } \right]
\frac{(\polylamb\taufunction)^{q}}{\factorial{q}}
\\&=
\frac{\pi (-1)^\polydegree}{\factorial{(\polydegree-1)}}  \sum\nolimits_{q=0}^{\infty}  \frac{(\polylamb\taufunction)^{q}}{\factorial{q}} \sum\nolimits_{t=0}^{\polydegree-1} \frac{ \binomialcoef{\polydegree-1}{\ t}  \Taylorcoef{\frac{1}{\polylamb}}{\polydegree-1-t} \factorial{(\polydegree-1-t)}
 }{ \polylamb^{\polydegree-1-t} }  \sum\nolimits_{j=0}^{\polydegree-1} \frac{\binomialcoef{t}{j} }{  \Gammax{\frac{1}{2}-q-j} \Gammax{\frac{1}{2}-t+j} } 
\\&=
\frac{\pi (-1)^\polydegree}{\factorial{(\polydegree-1)}}  \sum\nolimits_{q=0}^{\infty}  \frac{(\polylamb\taufunction)^{q}}{\factorial{q}}   \sum\nolimits_{j=0}^{\polydegree-1} \frac{1}{\Gammax{\frac{1}{2}-q-j} }  \sum\nolimits_{t=j}^{\polydegree-1} \frac{ \binomialcoef{\polydegree-1}{\ t} \binomialcoef{t}{j}  \Taylorcoef{\frac{1}{\polylamb}}{\polydegree-1-t} \factorial{(\polydegree-1-t)}
 }{ \polylamb^{\polydegree-1-t} \Gammax{\frac{1}{2}-t+j} }  
\\&=
\frac{\pi (-1)^\polydegree}{\factorial{(\polydegree-1)}}  \sum\nolimits_{q=0}^{\infty}  \frac{(\polylamb\taufunction)^{q}}{\factorial{q}}   \sum\nolimits_{j=0}^{\polydegree-1} \frac{\binomialcoef{\polydegree-1}{\ j}}{\Gammax{\frac{1}{2}-q-j} }  \sum\nolimits_{t=0}^{\polydegree-1-j} \frac{  \binomialcoef{\polydegree-1-j}{\ \, t}  \Taylorcoef{\frac{1}{\polylamb}}{\polydegree-1-j-t} \factorial{(\polydegree-1-j-t)} }{ \polylamb^{\polydegree-1-t-j} \Gammax{\frac{1}{2}-t} }  
\\&=
\frac{\pi (-1)^\polydegree}{\factorial{(\polydegree-1)}}  \sum\nolimits_{q=0}^{\infty}  \frac{(\polylamb\taufunction)^{q}}{\factorial{q}}   \sum\nolimits_{j=0}^{\polydegree-1} \frac{\binomialcoef{\polydegree-1}{\ j}}{\Gammax{\frac{1}{2}-q-j} }\left( \frac{ \frac{d^{\polydegree-1-j}}{dx^{\polydegree-1-j}} \left[ 
 \frac{\sum\nolimits_{t=0}^{\polydegree-1} \Taylorcoef{\frac 1 \polylamb}{t} \left( x-\frac 1 \polylamb \right)^t}{\sqrt{x}}
 \right]_{x=\frac{1}{\polylamb}} }{\sqrt{\pi}   \polylamb^{\polydegree-\frac{1}{2}-j}} \right)
\\&=
\frac{ \sqrt{\pi} \left(-\frac 1 \polylamb\right)^{\polydegree}}{\factorial{(\polydegree-1)} }  \sum\nolimits_{q=0}^{\infty}  \frac{\taufunction^{q}}{\factorial{q} \, \Gammax{\frac{1}{2}-q}} 
\\  &\hspace{10mm} \sum\nolimits_{t=0}^{\polydegree-1} \Taylorcoef{\frac 1 \polylamb}{t}  \sum\nolimits_{j=0}^{\polydegree-1}\binomialcoef{\polydegree-1}{\ j}  
\left(\frac{  \Gammax{\frac{1}{2}-q} \left(\frac 1 \polylamb\right)^{-\frac{1}{2}-q-j}  }{\Gammax{\frac{1}{2}-q-j} } \right)
 \frac{d^{\polydegree-1-j}}{dx^{\polydegree-1-j}} \left[ 
 \frac{\left( x-\frac 1 \polylamb \right)^t}{\sqrt{x}}
 \right]_{x=\frac{1}{\polylamb}} 
\\&=
\frac{ \sqrt{\pi}}{\factorial{(\polydegree-1)} (-\polylamb)^\polydegree}    \sum\nolimits_{q=0}^{\infty}  \frac{\taufunction^{q}}{\factorial{q}\, \Gammax{\frac{1}{2}-q}}  \sum\nolimits_{t=0}^{\polydegree-1} \Taylorcoef{\frac 1 \polylamb}{t}  \frac{d^{\polydegree-1}}{dx^{\polydegree-1}} \left[  
 \frac{\left( x-\frac 1 \polylamb \right)^t}{x^{q+1}}
 \right]_{x=\frac{1}{\polylamb}} 
\\ &  \refereq{\eqref{laurentallpolycostnx}}{=}
\end{array}\end{equation}
}%
\begin{equation}\label{Residuefxseries}\begin{array}{ll}
&
 \Bigresiduefx{
\frac{\allpolycostnx{k}{\complex}  \,  \complexexponent{\complex}{-\frac{1}{2}}{-\pi} \complexexponent{(\complex-\taufunction)}{-\frac{1}{2}}{-\pi}}{ (1-\polylamb \complex)^{\polydegree}} 
}{\complex=\frac{1}{\polylamb}} 
\\
&=
\lim\nolimits_{\complex\to\frac{1}{\polylamb}} \frac{1}{\factorial{(\polydegree-1)}} \frac{d^{\polydegree-1}}{d\complex^{\polydegree-1}} \Big[  ( \complex - \frac 1 \polylamb)^{\polydegree} \Big(\frac{\allpolycostnx{k}{\complex}  \,  \complexexponent{\complex}{-\frac{1}{2}}{-\pi} \complexexponent{(\complex-\taufunction)}{-\frac{1}{2}}{-\pi}}{ (1-\polylamb \complex)^{\polydegree}} \Big) \Big]
\\
& \! \! \!\, \refereq{\eqref{difficultderiv}}{=} \! \!\,
\frac{ (-\frac 1 \polylamb )^\polydegree}{\factorial{(\polydegree-1)}} \sum\nolimits_{t=0}^{\polydegree-1} \binomialcoef{\polydegree-1}{\ t}  \frac{\dqAllpolycostnx{k}{\polydegree-1-t}{\frac{1}{\polylamb}} }{\
(\frac 1 \polylamb )^{t+1} }
  \sum\nolimits_{q=0}^{\infty} \Big[ \sum\nolimits_{j=0}^{t} \frac{\binomialcoef{t}{j} \, \Gammax{\frac{1}{2}}^2 }{  \Gammax{\frac{1}{2}-q-j} \Gammax{\frac{1}{2}-t+j} } \Big]
\frac{(-\polylamb\taufunction)^{q}}{\factorial{q}}
\\&=
\frac{(-\frac 1 \polylamb )^\polydegree}{\factorial{(\polydegree-1)}}  \sum\nolimits_{q=0}^{\infty}  \frac{(-\polylamb\taufunction)^{q}}{\factorial{q}} \sum\nolimits_{t=0}^{\polydegree-1} \frac{ \binomialcoef{\polydegree-1}{\ t} \dqAllpolycostnx{k}{\polydegree-1-t}{\frac{1}{\polylamb}} 
 }{ (\frac 1 \polylamb )^{t+1} }  \sum\nolimits_{j=0}^{\polydegree-1} \frac{\binomialcoef{t}{j} \, \Gammax{\frac{1}{2}}^2  }{  \Gammax{\frac{1}{2}-q-j} \Gammax{\frac{1}{2}-t+j} } 
\\&=
\frac{ (-\frac 1 \polylamb )^\polydegree}{\factorial{(\polydegree-1)}}  \sum\nolimits_{q=0}^{\infty}  \frac{(-\polylamb\taufunction)^{q}}{\factorial{q}}   \sum\nolimits_{j=0}^{\polydegree-1} \frac{\Gammax{\frac{1}{2}} }{\Gammax{\frac{1}{2}-q-j} }  \sum\nolimits_{t=j}^{\polydegree-1} \frac{ \binomialcoef{\polydegree-1}{\ t} \binomialcoef{t}{j}  \Gammax{\frac{1}{2}} \dqAllpolycostnx{k}{\polydegree-1-t}{\frac{1}{\polylamb}}
 }{ (\frac 1 \polylamb )^{1+t} \Gammax{\frac{1}{2}-t+j} }  
\\&=
\frac{ (-\frac{1}{\polylamb})^\polydegree}{\factorial{(\polydegree-1)}}  \!\!\, \sum\nolimits_{q=0}^{\infty} \!\!\,  \frac{(-\taufunction)^{q}}{\factorial{q}}  \!\!\,  \sum\nolimits_{j=0}^{\polydegree-1} \!\!\, \frac{\binomialcoef{\polydegree-1}{\ j} \Gammax{\frac{1}{2}}   (\frac 1 \polylamb )^{-\frac 1 2 - q -j} }{\Gammax{\frac{1}{2}-q-j}  }  \!\!\, \sum\nolimits_{t=0}^{\polydegree-1-j} \!\!\,  \frac{  \binomialcoef{\polydegree-1-j}{\ \, t} \Gammax{\frac{1}{2}}   \dqAllpolycostnx{k}{\polydegree-1-j-t}{\frac{1}{\polylamb}} }{ (\frac 1 \polylamb )^{\frac 1 2 + t} \Gammax{\frac{1}{2}-t} }  
\\&=
\frac{(-\frac 1 \polylamb )^\polydegree}{\factorial{(\polydegree-1)}}  \sum\nolimits_{q=0}^{\infty}  \frac{\Gammax{\frac{1}{2}} (-\taufunction)^{q}}{\factorial{q} \Gammax{\frac{1}{2}-q}}   \sum\nolimits_{j=0}^{\polydegree-1} 
\binomialcoef{\polydegree-1}{\ j}
\frac{d^{\polydegree-1-j}}{dx^{\polydegree-1-j}} \left[ 
 \frac{x^{-q}}{\sqrt{x}}
 \right]_{x=\frac{1}{\polylamb}}
\frac{d^{\polydegree-1-j}}{dx^{\polydegree-1-j}} \left[ 
 \frac{\allpolycostnx{k}{x}}{\sqrt{x}}
 \right]_{x=\frac{1}{\polylamb}}
\\&=
\frac{(-\frac 1 \polylamb )^\polydegree}{\factorial{(\polydegree-1)}}  \sum\nolimits_{q=0}^{\infty}  \frac{\Gammax{\frac{1}{2}} (-\taufunction)^{q}}{\factorial{q} \Gammax{\frac{1}{2}-q}}   
\frac{d^{\polydegree-1}}{dx^{\polydegree-1}} \left[ 
x^{-(q+1)}\allpolycostnx{k}{x}
 \right]_{x=\frac{1}{\polylamb}} .
 \end{array}\end{equation}
 For small~$\polylamb$, the function $\allpolycostnfunction{k}$ is analytic in a neighborhood of~${1}/{\polylamb}$, and so is~$\complex^{-(q+1)}\allpolycostnx{k}{\complex}$ for any~$q$. 
 It follows that the derivation in~\eqref{Residuefxseries} applies term by term to the Laurent series~\eqref{laurentallpolycostnx}, and we find
 \begin{equation}\label{towardsfubini}\begin{array}{ll}
\Bigresiduefx{
\frac{\allpolycostnx{k}{\complex}  \,  \complexexponent{\complex}{-\frac{1}{2}}{-\pi} \complexexponent{(\complex-\taufunction)}{-\frac{1}{2}}{-\pi}}{ (1-\polylamb \complex)^{\polydegree}} 
}{\complex=\frac{1}{\polylamb}}
 \hspace{-47mm} &
\\
& \refereq{\eqref{laurentallpolycostnx}}{=}
\frac{(-\frac 1 \polylamb )^\polydegree}{\factorial{(\polydegree-1)}}  \sum\nolimits_{q=0}^{\infty}   \Big( \frac{\Gammax{\frac{1}{2}} }{ \Gammax{\frac{1}{2}-q}} \Big) \frac{(-\taufunction)^{q}}{\factorial{q} }   
\frac{d^{\polydegree-1}}{dx^{\polydegree-1}} \left[ 
 \sum\nolimits_{j=-\polydegree}^{\infty} \laurentcoef{j} \, x^{-(j+q+1)} 
 \right]_{x=\frac{1}{\polylamb}}
\\
\obsolete{
& =
\frac{\laurentcoef{\polydegree}}{(-  \polylamb)^\polydegree} 
 -
  \frac{1}{\factorial{(\polydegree-1)}}
 \sum\nolimits_{q=0}^{\infty}   \left( \frac{\Gammax{\frac{1}{2}} }{ \Gammax{\frac{1}{2}-q}} \right) \frac{(-\taufunction)^{q}}{\factorial{q} }  \sum\nolimits_{j=-\min(q,\polydegree)}^{\infty} 
 \frac{\factorial{(\polydegree-1+q+j)}}{\factorial{(q+j)}}
\laurentcoef{j} \, \left(\frac{1}{\polylamb}\right)^{-(j+q)}
\\
}%
& =
\frac{\laurentcoef{-\polydegree}}{(-  \polylamb)^\polydegree} 
 -
 \sum\nolimits_{q=0}^{\infty} \Big( \frac{\Gammax{\frac{1}{2}} }{ \Gammax{\frac{1}{2}-q}} \Big) \frac{(-\polylamb\taufunction)^{q}}{\factorial{q} }   \sum\nolimits_{j=-\min(q,\polydegree)}^{\infty} 
 \binomialcoef{\polydegree-1+q+j}{\ \, \polydegree-1}
\laurentcoef{j} \, \polylamb^{j}.
\end{array}\end{equation}
Observe on the other hand  that, for $q+j\geq0$,
 \begin{equation}\begin{array}{c}
 \Modulus{\binomialcoef{\polydegree-1+q+j}{\ \, \polydegree-1}
}
\leq
\frac{(\polydegree-1+q+j)^{\polydegree-1}}{\factorial{(\polydegree-1)}}
\leq
\sum\nolimits_{t=0}^{\infty}\frac{(\polydegree-1+q+j)^{t}}{\factorial{t}}
\leq
\exp{\polydegree-1+q+j}.
\end{array}\end{equation}
Hence, for small~$\polylamb$,
 \begin{equation}\label{intbound}\begin{array}{c}
 \sum\nolimits_{j=-\min(q,\polydegree)}^{\infty} 
 \Modulus{\binomialcoef{\polydegree-1+q+j}{\ \, \polydegree-1}
\laurentcoef{j} \, \polylamb^{j}}
\leq
\exp{\polydegree-1+q}
 \sum\nolimits_{j=-\polydegree}^{\infty} 
 \modulus{\laurentcoef{j}} \, (\polylamb e)^{j}
 \leq 
 \intbound{\polylamb}{\polydegree} \, \exp{\polydegree-1+q} ,
\end{array}\end{equation}
with $\intbound{\polylamb}{\polydegree} <\infty$ by absolute convergence of the above series. Consequently,
\begin{equation}\begin{array}{l} 
 \sum\limits_{q=0}^{\infty} \sum\limits_{\ j=-\min(q,\polydegree)}^{\infty}  \Modulus{ \Big( \frac{\Gammax{\frac{1}{2}} }{ \Gammax{\frac{1}{2}-q}} \Big) \frac{(-\polylamb\taufunction)^{q}}{\factorial{q} }   
 \binomialcoef{\polydegree-1+q+j}{\ \, \polydegree-1}
\laurentcoef{j} \, \polylamb^{j} }
\refereq{\eqref{intbound}}{\leq} \,
\intbound{\polylamb}{\polydegree}\exp{\polydegree-1}
 \sum\limits_{q=0}^{\infty} \Big( \frac{\Gammax{\frac{1}{2}} }{ \Gammax{\frac{1}{2}-q}} \Big) \frac{(\polylamb\taufunction)^{q}}{\factorial{q} }  \exp{q}\, 
\end{array}\end{equation}
is a finite quantity as it passes the ratio test for~$\polylamb<1 / \taufunction$|this can be shown using Stirling's formula.
It follows from Fubini's theorem \hidecalculustheorems{}{(Theorem~\ref{theorem:fubini}
) }%
that the summation order in~\eqref{towardsfubini} can be permuted. By setting~$t=j+q$ we find
 \begin{equation}\label{afterfubini}\begin{array}{rcl}
 \Bigresiduefx{
\frac{\allpolycostnx{k}{\complex}  \,  \complexexponent{\complex}{-\frac{1}{2}}{-\pi} \complexexponent{(\complex-\taufunction)}{-\frac{1}{2}}{-\pi}}{ (1-\polylamb \complex)^{\polydegree}} 
}{\complex=\frac{1}{\polylamb}}
 \hspace{-40mm} &&
\\
& \hidecalculustheorems{=}{\refereq{\eqref{fubini}}{=}} &
\frac{\laurentcoef{-\polydegree}}{(-  \polylamb)^\polydegree} 
 -
  \sum\nolimits_{q=0}^{\infty} \sum\nolimits_{t=\max(0,q-\polydegree)}^{\infty}\Big[ \laurentcoef{t-q}  \binomialcoef{\polydegree-1+t}{\ \, \polydegree-1} \Big( \frac{\Gammax{\frac{1}{2}} }{ \Gammax{\frac{1}{2}-q}} \Big) \frac{(-\taufunction)^{q}}{\factorial{q} } \Big]   \polylamb^{t} 
\\
& =&
\frac{\laurentcoef{-\polydegree}}{(-  \polylamb)^\polydegree} 
 -  \sum\nolimits_{q=0}^{\polydegree}  \laurentcoef{-q}  \Big( \frac{\Gammax{\frac{1}{2}} }{ \Gammax{\frac{1}{2}-q}} \Big) \frac{(-\taufunction)^{q}}{\factorial{q} }
+ \magnitude{\polylamb}
.
\end{array}\end{equation}
We eventually obtain~\eqref{solutionrealfouriercoefk} by combining~\eqref{realfouriercoefficientspolyIklamb} and~\eqref{realfouriercoefficientspolyIklambresidues}, together with the intermediate results~\eqref{grandesencoches}, \eqref{petitesencoches}, \eqref{limitcomplexexponentiation}, and~\eqref{afterfubini},  and letting~$\polylamb\to 0$.
\end{proof}
}%
%

The computation of the residues in~\eqref{solutionrealfouriercoefk} is straightforward for every pole in~$\Polef{\allpolycostnfunction{k}}$. The final result can be stated as a function of the derivatives of~$\allpolycostnfunction{k}$ and of the function defined by $\irrationalx{\complex} = \complexexponent{\complex}{-{1}/{2}}{-\pi} \complexexponent{(\complex-\taufunction)}{-{1}/{2}}{-\pi}$.
The successive derivatives of~$\irrationalfunction$ can be obtained by induction on~$t\geq 2$, using
\obsolete{
$$\sum\nolimits_{q=0}^{t} \binomialcoef{t}{q}   \dnirrationalx{q}{\complex} \dnirrationalx{t-q}{\complex}  = 
(-1)^t \factorial{t}\sum\nolimits_{j=0}^{t} 
\complex^{-(j+1)}(\complex-\taufunction)^{-(t-j+1)}  
$$

$$\dnirrationalx{t}{\complex}    = \frac{1}{2}\left(
(-1)^t \factorial{t}\sum\nolimits_{j=0}^{t} 
\complex^{-(j+1)}(\complex-\taufunction)^{-(t-j+1)}
- \sum\nolimits_{q=1}^{t-1} \binomialcoef{t}{q}   \dnirrationalx{q}{\complex} \dnirrationalx{t-q}{\complex}  
\right) \irrationalx{\complex}^{-1}
$$
}%
\begin{equation}\notag\begin{array}{rcl}
\dnirrationalx{1}{\complex}  &=&- \big( \frac{2\complex-\taufunction }{2 \complex^{2}(\complex-\taufunction)^{2}} \big) \Complexsqrt{\complex}{-\pi} \Complexsqrt{\complex-\taufunction}{-\pi}  ,
\\
\dnirrationalx{t}{\complex}    &=& \frac{1}{2}\big\{
(-1)^t \factorial{t}\sum\nolimits_{j=0}^{t} 
\complex^{-(j+1)}(\complex-\taufunction)^{-(t-j+1)}
- \sum\nolimits_{q=1}^{t-1} \binomialcoef{t}{q}   \dnirrationalx{q}{\complex} \dnirrationalx{t-q}{\complex}  
\big\} 
\hspace{10mm}
\\&&\hfill
\times \Complexsqrt{\complex}{-\pi} \Complexsqrt{\complex-\taufunction}{-\pi},
\end{array}\end{equation}
which follows from the derivation of~$\irrationalx{\complex} ^2=\complex^{-1}(\complex-\taufunction)^{-1}$ using Leibniz's product rule.



\storecounter{section}{store:section:lastappendix}%
\storecounter{section}{section:lastappendix}%
 

\end{document}